\newtheorem{remark}{Remark}[section]
\newtheorem{definition}{Definition}[section]
\newtheorem{theorem}{Theorem}
\newtheorem{lemma}{Lemma}[section]
\newtheorem{proposition}{Proposition}[section]
\newcommand{\TT}[1]{\mathtt{#1}}
\newcommand{\D}[1]{\mathscr{#1}}
\newcommand{\CC}[1]{\mathcal{#1}}
\newcommand{\BB}[1]{\mathbb{#1}}
\newcommand{\OV}[1]{\overline{#1}}
\newcommand{\F}[1]{\mathfrak{#1}}
\newcommand{\To}{\Rightarrow}
\definecolor{color0}{HTML}{4682B4}
\begin{document}

\title{Proof nets and the instantiation overflow property}         


\author{Paolo Pistone \\ \texttt{paolo.pistone@uniroma3.it}}
\date{}


\maketitle

\begin{abstract}

Instantiation overflow is the property of those second order types for which all instances of full comprehension can be deduced from instances of atomic comprehension. In other words, a type has instantiation overflow when one can type, by atomic polymorphism, ``expansion terms'' which realize instances of the full extraction rule applied to that type. 
This property was investigated in the case of the types arising from the well-known Russell-Prawitz translation of logical connectives into System $F$, but is not restricted to such types. Moreover, it can be related to functorial polymorphism, a well-known categorial approach to parametricity in System $F$.

In this paper we investigate the instantiation overflow property by exploiting the representation of derivations by means of linear logic proof nets. We develop a geometric approach to instantiation overflow yielding a deeper understanding of the structure of expansion terms and Russell-Prawitz types. Our main result is a characterization of the class of types of the form $\forall XA$, where $A$ is a simple type, which enjoy the instantiation overflow property, by means of a generalization of Russell-Prawitz types.

\end{abstract}

\tableofcontents

\section{Introduction}

In his 1903, \emph{Principles of Mathematics}, Bertrand Russell showed that the connectives $\lnot, \land, \lor, \exists$ can be expressed in the $\To,\forall$-fragment of second order logic. Russell's translation was later extended by Prawitz (\cite{Prawitz1965}) to derivations, providing an embedding of full second order logic into its $\To,\forall$-fragment. 
The Russell-Prawitz translation ($RP$ translation for short) can be described as a method which allows to associate with any connective $\star$ defined by natural deduction introduction rules\footnote{The picture can be extended to the case in which the rule $\star$I$_{i}$ discharges a set of hypotheses, see \cite{StudiaLogica}. } 
$$
\AXC{$\Gamma_{i}$}
\RL{\scriptsize$\star$I$_{i}$}
\UIC{$\star$}
\DP
$$
a formula $RP(\star)=\forall X((\Gamma_{1}\To X)\To \dots \To (\Gamma_{n}\To X)\To X)$\footnote{Where $\Gamma_{i}\To X$ indicates the type $B_{1}\To \dots \To B_{p_{i}}\To X$, for $\Gamma_{i}$ the list $B_{1},\dots, B_{p_{i}}$.} belonging to this fragment. 

%
%

When restricting to intuitionistic logic, the $\To,\forall$-fragment of second order logic corresponds to the polymorphic $\lambda$-calculus or System $F$ (\cite{Girard72,Reynolds74}). The most characteristic rule of this system is the $\forall$-elimination rule
\begin{equation}\label{fulle}\AXC{$\forall XA$}\UIC{$ A[B/X]$}\DP\end{equation}
also called extraction rule, which allows to give type $A[B/X]$, for any type $B$, to a term having type $\forall XA$.
This rule expresses an impredicative comprehension principle and is responsible for the failure, in second order logic, of the subformula principle.


A salient feature of the types of the form $RP(\star)$ (let us call them $RP$ types) is the so-called \emph{instantiation overflow} property, first described in \cite{Ferreira2006}. A type of the form $\forall X A$ has this property when any instance of the full extraction rule \ref{fulle} can be deduced in System $F_{at}$ (\cite{Ferreira2013}), that is, the subsystem of $F$ in which rule \ref{fulle} is replaced by the atomic extraction rule below
 \begin{equation}\label{atome}\AXC{$\forall XA$}\UIC{$ A[Y/X]$}\DP\end{equation}More precisely, the type $\forall XA$ has instantiation overflow when for any second order type $B$, there exists an ``expansion term''  $IO_{A}(B)$ which can be given type $\forall XA\To A[B/X]$ in $F_{at}$. In other words, this property amounts to the possibility, for a given type, to deduce full comprehension from atomic, hence predicative, comprehension.

 The instantiation overflow property of $RP$ types was exploited in \cite{Ferreira2006} and \cite{Ferreira2013} to define a variant of the $RP$ translation based on atomic polymorphism. However, instantiation overflow is not restricted to $RP$ types: in \cite{Ferreira2016} is shown that it holds for all types $\forall XA_{n}$, where $A_{0}=Y\To X$ and $A_{n+1}=A_{n}\To X$.

%

In \cite{StudiaLogica1} instantiation overflow was related to functorial polymorphism (\cite{Bainbridge1990}), by exploiting a well-known connection between the $RP$ translation and dinaturality. We recall that functorial polymorphism is the semantics of System $F$ in which types are interpreted as functors (in a generalized, ``multivariant'', sense, see \cite{EKelly1966}) over a cartesian closed category, and well-typed terms as dinatural transformations between such functors. This semantics was proposed as a formalization of parametric polymorphism, one of the most investigated aspects of System $F$ (see \cite{Moggi1996} for an historical survey on parametricity). In particular, all parametric models of System $F$ are dinatural models, as parametricity implies dinaturality (\cite{Plotkin1993}).

The fact that the type $RP(\star)$ preserves all properties of the original connective $\star$ corresponds to the dinaturality condition for the type $RP(\star)$. In categorial terms, this means that the $RP$ translation preserves universal properties of connectives only in parametric models of System $F$ (\cite{Plotkin1993, Hasegawa2009}). In proof-theoretic terms, this means that the $RP$ translation, while preserving $\beta$-equivalence in all models, preserves $\eta$-equivalence and permuting conversions only up to the equational theory generated by dinaturality (\cite{StudiaLogica}).

When $\forall XA$ is a $RP$ type, the expansion terms $IO_{A}(B)$ realizing instantiation overflow can be described in a ``functorial'' way, by considering the fact that $A$ must be of the form $A_{1}\To \dots \To A_{n}\To X$, where the $A_{i}$ only contain positive occurrences of $X$. Such $A_{i}$ correspond then to covariant endofunctors over the category generated by derivations: given a derivation $u$ of hypothesis $B$ and conclusion $C$, one can construct a derivation $A_{i}(u)$, of hypothesis $A_{i}[B/X]$ and conclusion $A_{i}[C/X]$. Then, for any $B$, one can construct a $F_{at}$ derivation of $\forall XA\To A[B/X]$ as illustrated in figure \ref{IOintro}, by exploiting the functoriality of the $A_{i}$ over the derivation $Elim_{B}$ of hypothesis $B, B_{1},\dots, B_{n}$ and conclusion $Y$ (where $B$ is of the form $\forall \OV Y_{1}(B_{1}\To \forall \OV Y_{2}(B_{2}\To \dots\To \forall \OV Y_{n}(B_{n}\To \forall Y_{n+1}Y)\dots ))$), made only of elimination rules. When $\forall XA$ is the $RP$ translation of disjunction, conjunction or absurdity, such derivations correspond exactly to those described in \cite{Ferreira2013}. 
Moreover, in the equational theory generated by dinaturality, the expansion terms just described are equivalent to the derivations consisting only of one instance of the full extraction rule (\cite{StudiaLogica1}). This means in particular that expansion terms and instances of full extractions have the same denotations in all parametric models of System $F$. In other words, atomic polymorphism and full polymorphism for $RP$ types are indistiguishable modulo dinaturality/parametricity.

\begin{figure}
\begin{center}
\resizebox{0.8\textwidth}{!}{
$$
\AXC{$\forall X(A_{1}\To \dots \To A_{n}\To X)$}
\UIC{$A_{1}[Y/X]\To A_{2}[Y/X] \To \dots \To A_{n}[Y/X]\To Y$}
\AXC{$\stackrel{1}{A_{1}[B/X]}$}
\AXC{$\stackrel{p_{1}}{B_{1}},\dots,\stackrel{p_{n}}{B_{n}}$}
\noLine
\BIC{$A_{1}[Elim_{B}]$}
\noLine
\UIC{$A_{1}[Y/X]$}
\BIC{$A_{2}[Y/X]\To \dots \To A_{n}[Y/X]\To Y$}
\noLine
\UIC{$\ddots$}
\noLine
\UIC{$A_{n}[Y/X]\To Y$}
\AXC{$\stackrel{n}{A_{n}[B/X]}$}
\AXC{$\stackrel{p_{1}}{B_{1}},\dots,\stackrel{p_{n}}{B_{n}}$}
\noLine
\BIC{$A_{n}[Elim_{B}]$}
\noLine
\UIC{$A_{n}[Y/X]$}
\BIC{$Y$}
\doubleLine
\RL{$p_{1},\dots, p_{n}$}
\UIC{$B$}
\doubleLine
\RL{$1,\dots, n$}
\UIC{$A_{1}[B/X]\To \dots \To A_{n}[B/X]\To B$}
\DP
$$}
\end{center}
\caption{Instantiation Overflow for Russell-Prawitz types}
\label{IOintro}
\end{figure}

In this paper we investigate the instantiation overflow property by exploiting, in addition to the functorial intuition, the representation of derivations by means of linear logic proof nets. 
Proof nets can be considered as a unified framework for structural and categorial proof theory, as they provide a well-known bridge between the sequent calculus of linear logic and the language of symmetric monoidal closed as well as $^{*}$-autonomous categories (\cite{Seely1987, Blute1993, Blute1996}), refining a paradigm originating in Lambek's investigations on categories as deductive systems \cite{Lambek1969}. Proof nets for Intuitionistic Multiplicative Linear Logic (without units), $\mathit{IMLL^-}$, essentially correspond to Eilenberg-Kelly-MacLane graphs (see \cite{EKelly1966, Blute1993, Hughes2012}), a graphical formalism playing a central role in several coherence theorems (see \cite{MacLane1971, KellyLaplaza80}). Moreover, $\mathit{IMLL^{-}}$ types (that we call linear types) can be described as multivariant functors over the category generated by proof nets/allowable graphs.

We develop a geometric approach to instantiation overflow yielding a deeper understanding of the structure of expansion terms and Russell-Prawitz types.
Our main result is a characterization of instantiation overflow for the types of the form $\forall XA$, where $A$ is a simple type (theorem \ref{expoexpa}):
we define a class of types which generalize the $RP$ translation and we show that, when $A$ is a simple type, $\forall XA$ has instantiation overflow if and only if it is either derivable or logically equivalent to a product of types belonging to this class.

We use proof nets to investigate the \emph{expansion property} for the types of the linear simply typed $\lambda$-calculus $\lambda_{\multimap}$. A linear type is expansible when, for all $B$, there exists a variable $Y$ and a proof net of hypothesis $A[Y/X]$ and conclusion $A[B/X]$, for some variable $Y$. When considering $RP_{X}$\footnote{When $\forall XA$, is a $RP$ type, we say that $A$ is $RP$ \emph{in $X$} (in short, $RP_{X}$).
} types in $\lambda_{\multimap}$ (called linear $RP_{X}$ types), the  expansion terms, as the one in figure \ref{IOintro}, correspond to proof nets called ``simple expansion graphs''. In figure \ref{IOgraph} is shown the simple expansion graph for the type $(A\multimap B\multimap X)\multimap X$ (associated to the $RP$ translation of the multiplicative conjunction $A\otimes B$).  
Similarly to instantiation overflow, the expansion property is not limited to linear $RP_{X}$ types: for instance the types $C=((X\multimap X)\multimap X)\multimap X$ and $D=(X\multimap X)\multimap (X\multimap X)$ are expansible but are not linear Russell-Prawitz types.

Simple expansion graphs can be defined for any type having an equal number of positive and negative occurrences of a variable $X$. However, such graphs need not be proof nets, that is, satisfy the correction criterion. We show that the correctness of such graphs depends on the possibility of pairing the  occurrences of $X$ following a particular pattern (called an internal pairing). 
This property leads to introduce, for any variable $X$, the class of \emph{generalized Russell-Prawitz types in $X$} ($gRP_{X}$ types), which capture the geometrical properties of $RP_{X}$ types. We prove that a linear type is expansible if and only if it is logically equivalent to a $gRP_{X}$ type. For instance,  the type $C$ above and (as soon as intuitionsitic implication is replaced by linear implication) all types $A_{n}$ introduced in \cite{Ferreira2016}) are $gRP_{X}$; the type $D$ above is not $gRP_{X}$, but logically equivalent to the $gRP_{X}$ type $D'=X\multimap X$.  

The result just stated is actually a bit stronger, as it exploits a strict notion of logical equivalence, called \emph{collapse}, related to Craig interpolation: a type $A$ collapses into a type $B$ when $B$ is an interpolant of a  derivation of $A\multimap A$. For instance, the type $D$ above collapses into the type $D'$. 
Proof net interpolation algorithms are known from the literature (\cite{deGroote1996, Carbone1997}). As our results involve the implicational fragment of some intuitionistic systems, we had to consider the well-known fact that such fragments satisfy interpolation in a weaker form  (see \cite{Kanazawa}). To implement weak proof net interpolation in $\lambda_{\multimap}$, we adapted the algorithm in \cite{deGroote1996}.

The characterization of expansible linear types is extended to the simply typed $\lambda$-calculus $\lambda_{\To}$, by exploiting a folklore linearization argument relating simply typed $\lambda$-terms and proof nets. The characterization of expansible simple types is slightly different as one must consider that, if a type is derivable (that is, if there exists a closed term of that type), then, by weakening, it is also expansible, and that weak interpolation for $\lambda_{\To}$ is sensibly more complex than in the case of $\lambda_{\multimap}$. We prove that a simple type is expansible iff it is either derivable or logically equivalent to the product of a finite family of $gRP_{X}$ types.

We finally adapt these results to $F_{at}$: we show that a suitable extension of the expansion property yields a similar characterization of instantiation overflow for the types of the form $\forall XA$, where $A$ is a simple type: as mentioned above, such types are either derivable or logically equivalent (in $F_{at}$) to the product of a finite family of $gRP$ types (i.e. types of the form $\forall XB$, where $B$ is $gRP_{X}$).


There are many natural questions which are left open by the present investigations. In particular, we do not know whether the instantiation overflow property is decidable (as our characterization depends on the notions of derivability and logical equivalence, which are both undecidable in the case of $F_{at}$), nor how the ideas and techniques here presented can be extended to the case of an arbitrary second order type of the form $\forall XA$. Finally, the relation between $gRP$ types and the $RP$ translation should be investigated in more detail.
 We briefly discuss some of these questions at the end of the paper.
   
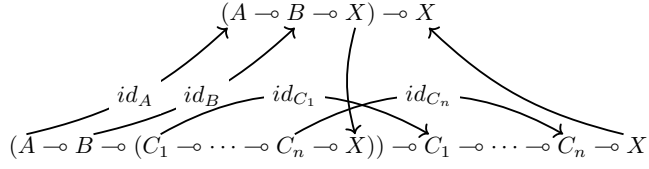
\begin{figure}
\begin{center}
\resizebox{0.6\textwidth}{!}{
\begin{tikzpicture}[every node/.style={fill=white}]

\node(a) at (0,0) {$(A\multimap B\multimap X)\multimap X$};

\node(b) at (0,-2) {$(A\multimap B\multimap (C_{1}\multimap \dots \multimap C_{n}\multimap X))\multimap C_{1}\multimap \dots \multimap C_{n}\multimap X$};

\draw[->, thick] (4.4,-1.8) to [bend left=15] (1.5,-0.2);

\draw[<-, thick] (0.4,-1.8) to [bend left=15] (0.4,-0.2);

\draw[->, thick] (-3.5,-1.8) to [bend right=15] node {$id_{B}$} (-0.5,-0.2);
\draw[->, thick] (-4.5,-1.8) to [bend right=15] node {$id_{A}$} (-1.5,-0.2);

\draw[->, thick] (-2.5,-1.8) to [bend left=30] node {$id_{C_{1}}$} (1.5,-1.8);
\draw[->, thick] (-0.5,-1.8) to  [bend left=30] node {$id_{C_{n}}$} (3.5,-1.8);

\end{tikzpicture}
}
\end{center}
\caption{Simple expansion graph for the $RP$ translation of $A\otimes B$}
\label{IOgraph}
\end{figure}

\paragraph{Related work}
The instantiation overflow phenomenon was first introduced in \cite{Ferreira2006} and later 
investigated in \cite{Ferreira2013} as a property of the Russell-Prawitz translation of disjunction. In particular, it is shown there that the usual Russell-Prawitz translation of logical connectives into $F$ can be transformed into a translation into $F_{at}$ by exploiting instantiation overflow.
Similar results were independently proved in \cite{Sandqvist}. 

The first investigation on the general class of formulas enjoying instantiation overflow is in \cite{Ferreira2016}, where  the ``Prawitz formulas of level $n$'' are introduced. The results are the following: (1) Prawitz formulas of level 2 have instantiation overflow, (2) there exist Prawitz formulas of arbitrary level (the formulas $A_{n}$ mentioned above) having instantiation overflow, (3) the formula $X\To Y$ does not have instantiation overflow. Such results can be deduced from our characterization, since (1) Prawitz formulas of level 2 correspond to Russell-Prawitz types, (2) the formulas $A_{n}$ correspond to generalized Russell Prawitz types and (3) $X\To Y$ is not logically equivalent to any product of generalized Russell-Prawitz types.

As already mentioned, the functorial formulation of instantiation overflow as well as the result that the instantiation overflow derivations are equivalent to instances of full extraction modulo dinaturality first appeared in \cite{StudiaLogica1} and will appear in a sequel paper to the journal version \cite{StudiaLogica}. These papers present the functorial interpretation of Russell-Prawitz types and their relation with dinaturality within a natural deduction frame.

Categories of allowable graphs are well-known in the literature since \cite{MacLane1971} and are used to establish coherence results (see \cite{KellyLaplaza80}). Several proof net formalisms for $\mathit{IMLL}$ have been used to establish coherence for symmetric monoidal categories, $^{*}$-autonomous categories and weakly distributive categories (\cite{Blute1996, Lamarche2004}). The main technical delicacy in such approaches involves the treatment of multiplicative unities $1$ and $\bot$. For this reason we limited ourselves to the system $\mathit{IMLL^-}$ and its $\multimap$-fragment $\lambda_{\multimap}$. \cite{Hughes2012} shows that this approach can be extended to treat $\bot$, yielding a representation of the free $^{*}$-autonomous category. Following \cite{Hughes2017}, allowable graphs for $\mathit{IMLL^{-}}$ should yield a representation of the free \emph{symmetric semi-monoidal closed category}.
Our category $\CC A$ of allowable graphs essentially follows \cite{Hughes2012, Hughes2017}. A major difference is that we define shapes as rooted DAGs, incorporating the correctness criterion for Lamarche essential nets (\cite{Lamarche, Ong}). 


Interpolation for linear logic and proof nets was investigated in \cite{Roorda}, \cite{deGroote1996} and \cite{Carbone1997}. Weak interpolation for the implicational fragment of intuitionistic logic was investigated in \cite{Wronski, Pentus, Kanazawa}. 
Our proof of weak interpolation for $\lambda_{\multimap}$  (in appendix \ref{appA}) is essentially a variant of the one in \cite{deGroote1996}.

\paragraph{Structure of the paper}

The paper can be subdivided in two parts. The first part, from section \ref{sec2} to section \ref{sec5}, is preliminary to the treatment of instantiation overflow: we first introduce type systems, proof-nets and their categorial and functorial interpretations, and then we discuss proof net interpolation and some useful applications. The second part, from section \ref{sec6} to section \ref{sec7}, is devoted to Russell-Prawitz types and the characterization of the expansion property and instantiation overflow.

More in detail, in section \ref{sec2} we recall the four type systems ($\lambda_{\multimap},\lambda_{\To}, F,F_{at}$) used in the paper and we describe the syntactic categories they generate as well as their functorial interpretations. Moreover, we introduce a graphical representation of linear terms through a category of allowable graphs (similarly to \cite{Hughes2012}), corresponding to essential nets (\cite{Lamarche, Ong}). 
In section \ref{sec4} we recall previous results on interpolation in $IMLL^{-}$ and we prove a weak interpolation result for the fragment $\lambda_{\multimap}$. Then we exploit this result to prove the positivity lemma \ref{functor}, a fundamental result which allows to extract, through interpolation, a type containing only positive occurrences of a variable from any type for which a ``functorial'' action on arrows is defined. 
In section \ref{sec5} we extend these results to $\lambda_{\To}$, by exploiting a linearization theorem. 

In section \ref{sec6} we describe instantiation overflow and Russell-Prawitz types and their relationship with functorial polymorphism. We also introduce generalized Russell-Prawitz types and the expansion property, which are investigated in the last two sections.
In section \ref{sec7} we investigate the expansion property for linear types. We prove our first ``density theorem'':  a linear type is expansible iff it collapses into a linear generalized Russell-Prawitz type. This section contains our geometrical investigation of instantiation overflow through simple expansion graphs.
In section \ref{sec8} we prove a similar ``density theorem'' for simple types and we apply it to characterize simple types enjoying instantiation overflow. 

 Finally, in section \ref{sec9} we discuss some open problems and further directions.

\section{$\lambda$-terms, proof nets and categories}\label{sec2}

We recall the type systems which will be used in the paper and we introduce proof nets for {Intuitionistic Multiplicative Linear Logic without units} $\mathit{IMLL^-}$, by defining a category $\CC A$ of \emph{allowable graphs} similarly to \cite{Hughes2012}. Then, we recall the syntactic categories generated by simply typed $\lambda$-terms and System $F$ typable $\lambda$-terms and their functorial interpretation, which will be exploited in section \ref{sec6} to describe the instantiation overflow property for Russell-Prawitz types.

\subsection{Type systems}

We introduce the four type systems which will be used throughout the text:
\begin{itemize}
\item the \emph{simply typed $\lambda$-calculus} $\lambda_{\To}$;
\item the \emph{linear simply typed $\lambda$-calculus} $\lambda_{\multimap}$;
\item the \emph{polymorphic $\lambda$-calculus} or \emph{System $F$} (\cite{Girard72,Reynolds74});
\item the \emph{atomically polymorphic $\lambda$-calculus} or \emph{System $F_{at}$} (\cite{Ferreira2013}).

\end{itemize}

Given a basic set of types $\CC T$, built over a set of variables $\CC V$, we will consider two notions of $\lambda$-terms:
\begin{enumerate}

\item \emph{$\lambda$-terms}, defined by the grammar below
$$
t,u \ := \ x\mid tu \mid \lambda x^{A}.t $$
where $A\in \CC T$; $t$ is linear in $x$ if $x$ occurs exactly once free in $t$;

\item \emph{$\lambda^{2}$-terms}, defined by the grammar below
$$
t,u \ := \ x\mid tu \mid \lambda x^{A}.t \mid \Lambda X.t \mid tA  $$
where $A\in \CC T$ and $X\in \CC V$.

\end{enumerate}
Observe that the definitions above depend on the choice of $\CC T$. This dependence will be often omitted, if it can be deduced from the context. 

$\lambda$-terms and $\lambda^{2}$-terms are considered up to renaming of bound variables, as usual. Given a $\lambda$-term (resp. $\lambda^{2}$-term) $u$, we let $FV(u)$ indicate the set of its free term (resp. term and type) variables, and $BV(u)$ indicate the set of its bound term (resp. term and type) variables.

For $\lambda$-terms and $\lambda^{2}$-terms we let $\simeq_{\beta\eta}$ indicate usual $\beta\eta$-equivalences, generated by the schemas in figure \ref{beta}. By a \emph{normal} $\lambda$-term (resp. $\lambda^{2}$-term) we indicate a term to which no $\beta$-reduction can be applied. Following \cite{Baren95}, by a \emph{$\lambda$-theory} (resp. a $\lambda^{2}$-theory) we indicate any set $T$ of equations over $\lambda$ (resp. $\lambda^{2}$) terms such that $T^{+}=T$, where $T^{+}$ is obtained by adding the $T$ $\beta$ and $\eta$ equivalence as well as the usual axioms and rules of the $\lambda$-calculus.

\begin{figure}[t]
\begin{center}
\resizebox{0.5\textwidth}{!}{
${\begin{matrix}
(\lambda x.t)u \ \simeq_{\beta} \ t[u/x]  &  \ & \ &
(\Lambda X.t)B \ \simeq_{\beta} \ t[B/X] \\ \ \\
\lambda x.tx \ \simeq_{\eta} \ t \ (x\notin FV(t))  &   \ & \ &
\Lambda X.tX \ \simeq_{\eta} \ t \ (X\notin FV(t)) 
\end{matrix}}$}\end{center}
\caption{$\beta$ and $\eta$ equivalences}
\label{beta}
\end{figure}



For any normal $\lambda$-term $u$, we define the set $Subt(u)$ of its subterms as follows: if $u=\lambda x_{1}.\dots.\lambda x_{n}.y$ has no application, then $Subt(u)=\{\lambda x_{i}.\lambda x_{i+1}.\dots.\lambda x_{n}.y\mid 1\leq i\leq n\}$; otherwise, $M=\lambda x_{1}.\dots.\lambda x_{n}. yu_{1}\dots u_{p}$, then $Subt(u)=\{\lambda x_{i}.\lambda x_{i+1}.\dots.\lambda x_{n}.yu_{1}\dots u_{p}\mid i\leq i\leq n\}\cup Subt(u_{1})\cup \dots \cup Subt(u_{p})$. We call a subterm $v\in Subt(u)$ \emph{proper} if $v\neq u$.

We introduce now the type systems. By a \emph{context} in $\CC T$ we indicate a list $\Gamma$ of type declarations $x_{1}:A_{1},\dots, x_{n}:A_{n}$, where the $A_{i}$ are types of  $\CC T$ and the $x_{i}$ are pairwise distinct term variables. We will indicate contexts as $\Gamma,\Delta,\dots$. Concatenation of contexts is indicated by comma $\Gamma,\Delta$.

All systems below include the \emph{exchange rule} $Ex$
$$\AXC{$\Gamma\vdash u:A$}
\RL{$Ex$}
\UIC{$\sigma\Gamma\vdash u:A$}
\DP$$
where $\Gamma$ is a context in $\CC T$, $A$ is a type in $\CC T$ and $\sigma\Gamma$ indicates a context obtained from $\Gamma$ by permuting the order of its elements.

By a \emph{partition} of a context $\Gamma$, we indicate a list $\Gamma_{1},\dots, \Gamma_{p}$ of contexts such that $\Gamma_{1},\dots,\Gamma_{p}=\sigma\Gamma$.

\begin{description}

\item[($\lambda_{\multimap}$)] the set of linear types $\CC L_{\multimap}$ is generated by the grammar $A,B:= X\mid  A\multimap B$; the typing rules for linear $\lambda$-terms are $Ex$ and those shown in figure \ref{stllambda};

\item[($\lambda_{\To}$)] the set of simple types $\CC L_{\To}$ is generated by the grammar $A,B:= X\mid  A\To B$; the typing rules for $\lambda$-terms are $Ex, W$ and those in figure \ref{stlambda};

\item[($F$)] the set of second order types $\CC L_{\To, \forall}$ is generated by the grammar $A,B:= X\mid A\To B\mid \forall XA$; the typing rules for $\lambda^{2}$-terms are those of $\lambda_{\To}$ plus those shown in figure \ref{F};

\item[($F_{at}$)] same types as $F$; the typing rules for $\lambda^{2}$-terms are those of $\lambda_{\To}$ plus those shown in figure \ref{Fat};
\end{description}


%


\begin{figure}
\begin{center}
\begin{subfigure}{0.30\textwidth}
\resizebox{\textwidth}{!}{
$\boxed{\begin{matrix}
\AXC{$x:A\vdash x:A$}\DP \\ \ \\
\AXC{$\Gamma, x:A\vdash u:B$}\AXC{$u$ linear in $x$}
\RL{\scriptsize$\multimap$I}
\BIC{$\Gamma\vdash \lambda x^{A}.u:A\multimap B$}\DP \\ \ \\
\AXC{$\Gamma\vdash u:A\multimap B$}
\AXC{$\Delta\vdash v:A$}\RL{\scriptsize$\multimap$E}
\BIC{$\Gamma, \Delta\vdash uv:B$}\DP 
\end{matrix}
}$}
\caption{System $\lambda_{\multimap}$}
\label{stllambda}
\end{subfigure}
\begin{subfigure}{0.30\textwidth}
\resizebox{\textwidth}{!}{
$
\boxed{\begin{matrix}
\AXC{$\Gamma, x:A\vdash x:A$}\DP \\ \ \\
\AXC{$\Gamma, x:A\vdash u:B$}
\RL{\scriptsize$\To$I}
\UIC{$\Gamma\vdash \lambda x^{A}.u:A\To B$}\DP \\ \ \\
\AXC{$\Gamma\vdash u:A\To B$}
\AXC{$\Gamma\vdash v:A$}
\RL{\scriptsize$\To$E}
\BIC{$\Gamma\vdash uv:B$}\DP 
\end{matrix}
}$}
\caption{System $\lambda_{\To}$}
\label{stlambda}
\end{subfigure}
\end{center}
\begin{center}
\begin{subfigure}{0.30\textwidth}
\begin{center}
\resizebox{0.9\textwidth}{!}{$
\boxed{\begin{matrix}
\AXC{$\Gamma\vdash u:A$} 
\AXC{$X\notin FV(\Gamma)$}
\RL{\scriptsize$\forall I$}
\BIC{$\Gamma\vdash \Lambda X.u:\forall XA$}
\DP 
\\ \ \\
\AXC{$\Gamma\vdash u:\forall XA$}
\RL{\scriptsize$\forall E$}
\UIC{$\Gamma\vdash uB:A[B/X]$}
\DP
\end{matrix}
}
$}\end{center}
\caption{System $F$}
\label{F}
\end{subfigure}
\begin{subfigure}{0.30\textwidth}
\begin{center}
\resizebox{0.9\textwidth}{!}{$
\boxed{
\begin{matrix}
\AXC{$\Gamma\vdash u:A$} 
\AXC{$X\notin FV(\Gamma)$}
\RL{\scriptsize$\forall I$}
\BIC{$\Gamma\vdash \Lambda X.u:\forall XA$}
\DP 
\\ \ \\
\AXC{$\Gamma\vdash M:\forall XA$}
\RL{\scriptsize$F_{at} E$}
\UIC{$\Gamma\vdash uY:A[Y/X]$}
\DP
\end{matrix}
}
$}\end{center}
\caption{System $F_{at}$}
\label{Fat}
\end{subfigure}
\end{center}
\caption{Type systems rules}
\end{figure}

Observe that the usual rules of contraction and weakening are derivable in $\lambda_{\To}, F, F_{at}$.
For any type $A$ in any of the systems above, we let $FV(A)$ (resp. $BV(A))$) indicate the set of its free (resp. bound) variables.
There exist obvious inverse translations from $^{\To}:\CC L_{\multimap}\to \CC L_{\To}$ and $^{\multimap}:\CC L_{\To}\to \CC L_{\multimap}$, given by $X^{\To}=X$, $X^{\multimap}=X$, $(A\multimap B)^{\To}=A^{\To}\To B^{\To}$ and $(A\To B)^{\multimap}=A^{\multimap}\multimap B^{\multimap}$. If $\Gamma\vdash u:A$ is derivable in $\lambda_{\multimap}$, then $\Gamma^{\To}\vdash u:A^{\To}$ is derivable in $\lambda_{\To}$, where $\Gamma^{\To}=x_{1}:A_{1}^{\To},\dots,x_{n}:A_{n}^{\To}$, for $\Gamma=x_{1}:A_{1},\dots, x_{n}:A_{n}$.

 A type $B\in \CC L_{\To, \forall}$ will be generally written $\forall \OV Y_{1}(B_{1}\To \forall \OV Y_{2}(B_{2}\To \dots \To \forall \OV Y_{n}Z))$, where $\forall \OV Y_{i}$ is shorthand for a finite, possibly empty, sequence of quantifications $\forall Y_{i_{1}}\dots \forall Y_{i_{k_{i}}}$. 

Given any of the systems above,
we say that a type $A$ is \emph{derivable} if there exists a closed term $u$ having type $A$. We say that two types $A,B$ are \emph{logically equivalent} if there exist closed terms $u,v$ having type $A\multimap B, B\multimap A$, respectively, in the case of $\lambda_{\multimap}$, and $A\To B, B\To A$, respectively, in all other cases. If, moreover, $\lambda x^{A}.v(ux)\simeq_{\beta\eta} \lambda x^{A}.x $ and $\lambda x^{B}.u(vx)\simeq_{\beta\eta}\lambda x^{B}.x$, then $A$ and $B$ are called \emph{isomorphic}. Finally, given types $A, B_{1},\dots, B_{n}$, we say that $A$ is \emph{logically equivalent to the product of $B_{1},\dots, B_{n}$} when there exist closed terms $u_{1},\dots, u_{n},u$ having types $A\multimap B_{1},\dots, A\multimap B_{n}, B_{1}\multimap \dots \multimap B_{n}\multimap A$, respectively (in the case of $\lambda_{\multimap}$) and types $A\To B_{1},\dots, A\To B_{n}, B_{1}\To \dots \To B_{n}\To A$, respectively, in all other cases.

In any of the systems above, given a normal $\lambda$-term $u$ such that $\Gamma\vdash u:A$, we introduce the following terminology:

\begin{itemize}
\item[$i.$] any variable $x$ occurring free or bound in $u$ is assigned a unique type $A$ that we indicate by $[x]$;
\item[$ii.$] any $v\in Subt(u)$ is assigned a unique type, that we indicate by $[v]$;
\item[$iii.$] $u$ is said in \emph{$\eta$-long normal form} when for any $v\in Subt(u)$, if $[v]=B\To C$ or $[v]=B\multimap C$, then $v=\lambda x^{B}.v'$, for some variable $x$ and term $v'\in Subt(u)$. 

\end{itemize}

Observe that, if $u$ is in $\eta$-long normal form, then for any type $B$ occurring positively (resp. negatively) in $A$ there exists $v\in Subt(u)$ (resp. $x\in BV(u)$) such that $[u]=B$ (resp. $[x]=B$).

%
%
%

\subsection{Proof nets and the category of allowable graphs}\label{sec22}

We introduce proof nets for Intuitionistic Multiplicative Linear Logic without units, $\mathit{IMLL^-}$, that is, the system obtained by adding to $\lambda_{\multimap}$ the $\otimes$ connective. Typed $\lambda$-calculi for $\mathit{IMLL}$ can be found in the literature (see \cite{Abramsky1993, Benton1993}). 

We recall that proof nets for $\mathit{IMLL}$ and its subsystems can be
 considered as a graphical representation of $\lambda$-terms or as a graphical formalism for arrows in free monoidal closed categories. Our definition merges the two viewpoints: on the one hand, our definition corresponds to the usual definition of \emph{essential nets} for $\mathit{IMLL^-}$ (\cite{Lamarche, Ong}), characterizing linearly typable $\lambda$-terms; on the other hand, we introduce proof-structures by means of a category of graphs following \cite{Hughes2012}; in particular, the correction criterion of essential nets generates the sub-category of \emph{allowable graphs}. 
 

We first define a category $\CC G$ of graphs, which are defined as certain morphisms between signed sets, i.e. sets whose elements are assigned a polarity $+,-$. Then we introduce shapes as certain rooted $DAG$s whose leaves form a signed set and we define a category $\CC A$ of allowable graphs, corresponding to graphs in $\CC G$ satisfying the correction criterion. \cite{Hughes2012} shows that this category can be extended to treat $\bot$, yielding a representation of the free $^{*}$-autonomous category. Following \cite{Hughes2017}, $\CC A$ might be seen as a representation of the free \emph{symmetric semi-monoidal closed category}.

%
%
%
%

The objects of $\CC G$ are \emph{signed sets}, i.e. finite sets whose elements are assigned a polarity $\epsilon\in \{+,-\}$ (i.e. edges $(s,\alpha)$, where $\alpha:s\to\{+,-\}$); 
Given $\epsilon\in \{+,-\}$, we let $\OV \epsilon$ be the opposite polarity; given a signed set $s$, we let $\OV s$ be the signed set whose underlying set is the same as $s$ and whose polarities are reversed. Given two signed sets $s,t$, we let $s+t$ denote their disjoint union.

Arrows $f:s\to t$ in $\CC G$, called \emph{graphs}, are bijections $s^{+}+t^{-} \to s^{-} + t^{+}$ (where $+$ indicates disjoint union).
Equivalently, a graph $f:s\to t$ is a set of disjoint \emph{edges}, i.e. disjoint edges of elements of $\OV s+ t$ which can be of three types:
\begin{description}
\item[Type I:] $e=(x^{\epsilon},y^{\OV\epsilon})$, where $x\in s$, $y\in t$, for $\epsilon\in \{+,-\}$;
\item[Type II:] $e=(x^{+},y^{-})$, where $x,y\in \OV s$;
\item[Type III:] $e=(x^{+},y^{-})$, where $x,y\in t$.

\end{description}
A graph can be illustrated as a directed acyclic graph (as in fig \ref{fig2a}) by orienting edges from positive to negative. We will call a graph \emph{pure} when it only consists of type I edges.

Composition of graphs is finite directed path composition (see \cite{Hughes2012}), as illustrated in figure \ref{fig2c}. More precisely, given $f:s\to t$ and $g:t\to u$, $g\circ f$ is the bijection $h:s^{+}+u^{-}\to s^{-}+ u^{+}$ where 
$$
h(x)= 
\begin{cases}
(f\circ g)^{n}f(x) & \text{ if }x\in s^{+} \text{ and } n \text{ minimum s.t.} (f\circ g)^{n}f(x)\in s^{-}+u^{+} \\
(g\circ f)^{n}g(x) & \text{ if }x\in u^{-} \text{ and } n \text{ minimum s.t.} (g\circ f)^{n}g(x)\in s^{-}+u^{+} 
\end{cases}
$$
The definition of $g\circ f$ relies on the following:
\begin{lemma}
If $f:s\to t$ and $g:t\to u$, then for each $x\in s^{+}$ (resp. $y\in u^{-}$) there exists an $n$ such that $(f\circ g)^{n}f(x)\in s^{-}+u^{+}$ (resp. $(g\circ f)^{n}g(y)\in s^{-}+u^{+}$). 

\end{lemma}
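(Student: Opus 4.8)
The plan is to read off, from the recursive definition of $g\circ f$, the explicit alternating path traced from a given boundary vertex, and then to show this path is finite by a counting argument that uses only that $f$ and $g$ are bijections of finite sets.

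Fix $x\in s^{+}$. Put $x_{0}=x$, $x_{1}=f(x_{0})\in s^{-}+t^{+}$, and then recursively: if the current vertex $x_{k}$ lies in $t^{+}$, set $x_{k+1}=g(x_{k})\in t^{-}+u^{+}$; if it lies in $t^{-}$, set $x_{k+1}=f(x_{k})\in s^{-}+t^{+}$; halt as soon as $x_{k}\in s^{-}+u^{+}$. Each step is legitimate, since $\mathrm{dom}(g)\supseteq t^{+}$ and $\mathrm{dom}(f)\supseteq t^{-}$. By construction, every non-halting vertex lies in $t$, with the odd-indexed ones in $t^{+}$ and the even-indexed ones in $t^{-}$; and whenever the path halts it halts in $s^{-}+u^{+}$. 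Reading $f$ and $g$ as the partial maps their domains prescribe, one has $(f\circ g)^{n}f(x)=x_{2n+1}$ as long as $x_{2n+1}$ is defined, and the halting vertex is the first element of $s^{-}+u^{+}$ met along the path; hence the claim for $x$ is precisely that this path is finite. (The assertion for $y\in u^{-}$ is the mirror image, obtained by exchanging the roles of $f,g$, of $s,u$, and of the two polarities, and starting the path with $g(y)$.)

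It remains to prove finiteness. Suppose the path from $x$ were infinite; then $x_{k}\in t$ for all $k\ge 1$, and as $t$ is finite there are indices $1\le i<j$ with $x_{i}=x_{j}$, and we may take $j$ minimal. Since $x_{i}=x_{j}$, both vertices carry the same polarity (the subsets $t^{+}$ and $t^{-}$ of the signed set $t$ are disjoint), so they are produced by the \emph{same} bijection — by $f$ if they lie in $t^{+}$, by $g$ if they lie in $t^{-}$ — and injectivity of that bijection yields $x_{i-1}=x_{j-1}$. If $i\ge 2$ this contradicts minimality of $j$. If $i=1$, then $x_{1}\in t^{+}$, so $j$ is odd and hence $j\ge 3$, and we obtain $x_{0}=x_{j-1}$ with $x_{0}\in s^{+}$ and $x_{j-1}\in t$, impossible since $s^{+}$ and $t$ are disjoint in the disjoint union. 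So the path is finite, hence terminates at a vertex of $s^{-}+u^{+}$, as required.

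The only point requiring care is the polarity bookkeeping: that applying $f$ (resp. $g$, in the symmetric case) drives the path into $t$, that it thereafter alternates strictly between $t^{+}$ and $t^{-}$ until it escapes into $s^{-}+u^{+}$, and — for the $i=1$ case above — that the starting vertex $x_{0}$ lies outside $t$. Once this is set up, termination is a routine use of finiteness together with the injectivity of $f$ and $g$; notice that no acyclicity or correctness-criterion hypothesis on $f,g$ is needed.
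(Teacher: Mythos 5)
Your proof is correct and follows essentially the same route as the paper: trace the alternating $f$/$g$ chain through $t$ and bound its length by the finiteness of $t$. You are in fact more careful than the paper, which simply asserts that any chain has length $<N$, whereas you justify the absence of repetitions via injectivity of $f$ and $g$ (and both you and the paper gloss over the same minor parity/notation issue in reading $(f\circ g)^{n}f(x)$ when the chain exits into $u^{+}$ rather than $s^{-}$).
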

\begin{proof}
By a maximal chain in $f\circ g$ we indicate a sequence $e_{0},\dots, e_{2k-1}$ of even length obtained by alternating a type III edge $e_{2i}\in f$ and a type II edge $e_{2i+1}\in g$ such that, for $i\leq k-1$, if $e_{2i}$ is $(y^{\epsilon'},z^{\epsilon})$, then $e_{2i+1}$ is $(z^{\OV\epsilon}, w^{\epsilon''})$ and, moreover, $e_{2k-1}=(y^{\epsilon}, z^{\epsilon})$, where $z\in s^{-}+u^{+}$. If $N$ is the cardinality of $t^{+}+t^{-}$, then any chain in $f\circ g$ must have length $<N$. Now, if $x\in s^{+}$, then either $f(x)\in s^{-}$ (then put $n=0$), or $f(x)\in t^{+}$ is the start of a chain in $f\circ g$. Then for some $n< N$, the chain ends in some $z\in s^{-}+u^{+}$, and then $(f\circ g)^{n}f(x)=z$.

\end{proof}

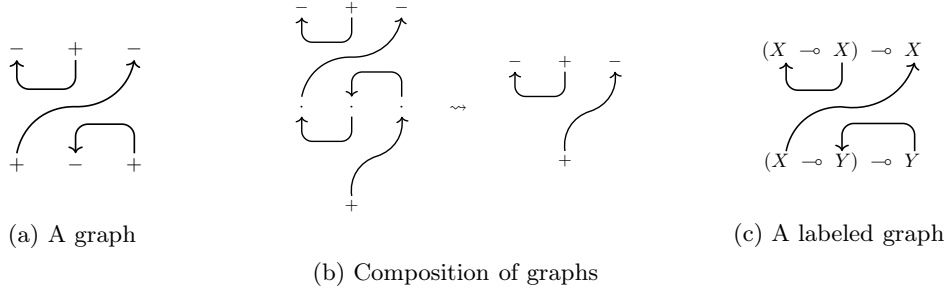
\begin{figure}
%
%
%
%
%
%
%
%
%
%
%
%
%
%
%
%
%
%
%
%
%
%
%
%
%
%
%
%
%
%
%
%
%
%
%
%
%
%
%
%
\begin{center}
\begin{subfigure}{0.23\textwidth}
\begin{center}
\resizebox{0.6\textwidth}{!}{
\begin{tikzpicture}
\node(b) at (-1,0) {$-$};
\node(c) at (0,0) {$+$};
\node(d) at (1,0) {$-$};

\node(b') at (-1,-2) {$+$};
\node(c') at (0,-2) {$-$};
\node(d') at (1,-2) {$+$};

\draw[<-, thick, rounded corners=6pt] (-1,-0.2) to (-1,-0.7) to (0,-0.7) to (0,-0.2);
\draw[<-, thick, rounded corners=6pt] (0,-1.8) to (0,-1.3) to (1,-1.3) to (1,-1.8);
\draw[<-, thick] (1,-0.2) to [bend left=40]  (0,-1) to [bend right=40] (-1,-1.8);

\end{tikzpicture}}
\end{center}
\caption{A graph}
\label{fig2a}
\end{subfigure}
\begin{subfigure}{0.44\textwidth}
\begin{center}
\resizebox{0.7\textwidth}{!}{
$\begin{matrix}
\begin{tikzpicture}[baseline=-14ex]
\node(b) at (-1,0) {$-$};
\node(c) at (0,0) {$+$};
\node(d) at (1,0) {$-$};

\node(b') at (-1,-2) {$\cdot$};
\node(c') at (0,-2) {$\cdot$};
\node(d') at (1,-2) {$\cdot$};

\node(b'') at (0,-4) {$+$};

\draw[<-, thick, rounded corners=6pt] (-1,-0.2) to (-1,-0.7) to (0,-0.7) to (0,-0.2);
\draw[<-, thick, rounded corners=6pt] (0,-1.8) to (0,-1.3) to (1,-1.3) to (1,-1.8);
\draw[<-, thick] (1,-0.2) to [bend left=40]  (0,-1) to [bend right=40] (-1,-1.8);
\draw[<-, thick, rounded corners=6pt] (-1,-2.2) to (-1,-2.7) to (0,-2.7) to (0,-2.2);
\draw[<-, thick] (1,-2.2) to [bend left=40]  (0.5,-3) to [bend right=40] (0,-3.8);

\end{tikzpicture} 
& \ & \leadsto & \ & 
\begin{tikzpicture}[baseline=-20ex]

\node(b') at (-1,-2) {$-$};
\node(c') at (0,-2) {$+$};
\node(d') at (1,-2) {$-$};

\node(b'') at (0,-4) {$+$};

\draw[<-, thick, rounded corners=6pt] (-1,-2.2) to (-1,-2.7) to (0,-2.7) to (0,-2.2);
\draw[<-, thick] (1,-2.2) to [bend left=40]  (0.5,-3) to [bend right=40] (0,-3.8);

\end{tikzpicture} 
\end{matrix}$}
\end{center}
\caption{Composition of graphs}
\label{fig2c}
\end{subfigure}
\begin{subfigure}{0.23\textwidth}
\begin{center}
\resizebox{0.7\textwidth}{!}{
\begin{tikzpicture}
\node(b) at (0,0) {$(X \ \multimap \ X) \ \multimap \ X$};

\node(c) at (0,-2) {$(X \ \multimap \ Y) \ \multimap \ Y$};

\draw[<-, thick, rounded corners=6pt] (-1,-0.2) to (-1,-0.7) to (0,-0.7) to (0,-0.2);
\draw[<-, thick, rounded corners=6pt] (0,-1.8) to (0,-1.3) to (1.3,-1.3) to (1.3,-1.8);
\draw[<-, thick] (1.3,-0.2) to [bend left=40]  (0,-1) to [bend right=40] (-1,-1.8);

\end{tikzpicture}}
\end{center}
\caption{A labeled graph }
\label{labgraph}
\end{subfigure}
\end{center}
\caption{Examples of graphs and labeled graphs}
\end{figure}

In order to introduce allowable graphs, we first define shapes. A shape corresponds to the switching of the syntactic tree of a $\mathit{IMLL}^{-}$ type. Hence, on the one hand the leaves of the shape form a signed set, so that an arrow between two shapes corresponds to a graph between the associated signed sets; on the other hand, by joining the graph with the shapes, we obtain a correction graph on which we can check the essential nets correction criterion (\cite{Ong}).

\begin{definition}[shape]
A shape $S$ is a rooted and labeled DAG whose leaves form a signed set $vS$, called the \emph{variable set} of $S$. The nodes of a shape are either leaves (hence labeled by $+$ or $-$) or labeled by $\multimap^{+}$ (resp. $\otimes^{+}$) or $\multimap^{-}$ (resp. $\otimes^{-}$). The root of $S$ is called the \emph{conclusion} $cS$ of $S$. If $S$ is a shape, by $\OV S$ we indicate the shape obtained from $S$ by reversing the sign of its leaves and replacing all labels $\multimap^{\epsilon}$ - resp. $\otimes^{\epsilon}$-  by $\multimap^{\OV \epsilon}$ - resp. $\otimes^{\OV\epsilon}$. Shapes are defined inductively as follows:

\begin{itemize}
\item $I$ is the shape $+$, $vI=\{+\}$, $cI=+$;

\item if $S,T$ are shapes, $S\multimap^{+} T$ is the shape in figure \ref{shapedef1}, 
$v(S\multimap^{+}T)= v\OV S+vT$ and $c(S\multimap^{+}T)=\multimap^{+}$; $c\OV S$ and $cT$ are called, respectively, \emph{left} and \emph{right premiss} of the node $c(S\multimap^{+}T)$.

\item if $S,T$ are shapes, $S\multimap^{-} T$ is the shape in figure \ref{shapedef2}, 
$v(S\multimap^{-}T)= vS+v\OV T$ and $c(S\multimap^{-}T)=\multimap^{-}$; $cS$ and $c\OV T$ are called, respectively, \emph{left} and \emph{right premiss} of the node $c(S\multimap^{-}T)$.

\item if $S,T$ are shapes, $S\otimes^{+} T$ is the shape in figure \ref{shapedef3}, 
$v(S\otimes^{+}T)= vS+vT$ and $c(S\otimes^{+}T)=\otimes^{+}$; $cS$ and $cT$ are called, respectively, \emph{left} and \emph{right premiss} of the node $c(S\otimes^{+}T)$.

\item if $S,T$ are shapes, $S\otimes^{-} T$ is the shape in figure \ref{shapedef4}, 
$v(S\otimes^{-}T)= v\OV S+ v\OV T$ and $c(S\otimes^{-}T)=\otimes^{-}$; $c\OV S$ and $c\OV T$ are called, respectively, \emph{left} and \emph{right premiss} of the node $c(S\otimes^{-}T)$.

%
%
%
%

\end{itemize}

\end{definition}

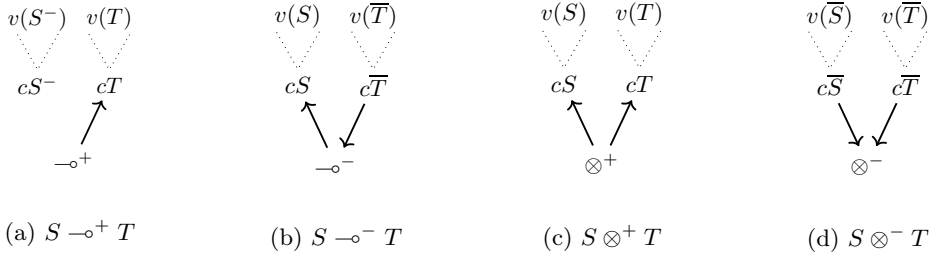
\begin{figure}
\begin{subfigure}{0.23\textwidth}
\begin{center}
\resizebox{0.6\textwidth}{!}{
\begin{tikzpicture}[baseline=2ex, scale=0.8]
\node(a) at (0,-0.5) {$\multimap^{+}$};
\node(b) at (-0.7,1) {$cS^{-}$};
\node(c) at (0.7,1) {$cT$};

\draw[dotted] (-0.3,2) to (-0.7,1.3) to (-1.1,2);

\node(v) at (-0.7,2.3) {$v(S^{-})$};
\node(v') at (0.7,2.3) {$v(T)$};
\draw[dotted] (0.3,2) to (0.7,1.3) to (1.1,2);

\draw[->, thick] (a) to (c);

\end{tikzpicture}
}
\end{center}
\caption{$S\multimap^{+}T$}
\label{shapedef1}
\end{subfigure}
\begin{subfigure}{0.23\textwidth}
\begin{center}
\resizebox{0.6\textwidth}{!}{
\begin{tikzpicture}[baseline=2ex, scale=0.8]
\node(a) at (0,-0.5) {$\multimap^{-}$};
\node(b) at (-0.7,1) {$cS$};
\node(c) at (0.7,1) {$c\OV T$};

\draw[dotted] (-0.3,2) to (-0.7,1.3) to (-1.1,2);

\node(v) at (-0.7,2.3) {$v(S)$};
\node(v') at (0.7,2.3) {$v(\OV T)$};
\draw[dotted] (0.3,2) to (0.7,1.3) to (1.1,2);

\draw[<-, thick] (a) to (c);
\draw[<-, thick] (b) to (a);

\end{tikzpicture}
}
\end{center}
\caption{$S\multimap^{-}T$}
\label{shapedef2}
\end{subfigure}
\begin{subfigure}{0.23\textwidth}
\begin{center}
\resizebox{0.6\textwidth}{!}{
\begin{tikzpicture}[baseline=2ex, scale=0.8]
\node(a) at (0,-0.5) {$\otimes^{+}$};
\node(b) at (-0.7,1) {$cS$};
\node(c) at (0.7,1) {$cT$};

\draw[dotted] (-0.3,2) to (-0.7,1.3) to (-1.1,2);

\node(v) at (-0.7,2.3) {$v(S)$};
\node(v') at (0.7,2.3) {$v(T)$};
\draw[dotted] (0.3,2) to (0.7,1.3) to (1.1,2);

\draw[->, thick] (a) to (c);
\draw[<-, thick] (b) to (a);

\end{tikzpicture}
}
\end{center}
\caption{$S\otimes^{+}T$}
\label{shapedef3}
\end{subfigure}
\begin{subfigure}{0.23\textwidth}
\begin{center}
\resizebox{0.6\textwidth}{!}{
\begin{tikzpicture}[baseline=2ex, scale=0.8]
\node(a) at (0,-0.5) {$\otimes^{-}$};
\node(b) at (-0.7,1) {$c\OV S$};
\node(c) at (0.7,1) {$c\OV T$};

\draw[dotted] (-0.3,2) to (-0.7,1.3) to (-1.1,2);

\node(v) at (-0.7,2.3) {$v(\OV S)$};
\node(v') at (0.7,2.3) {$v(\OV T)$};
\draw[dotted] (0.3,2) to (0.7,1.3) to (1.1,2);

\draw[<-, thick] (a) to (c);
\draw[->, thick] (b) to (a);

\end{tikzpicture}
}
\end{center}
\caption{$S\otimes^{-}T$}
\label{shapedef4}
\end{subfigure}
\caption{Definition of shapes}
\end{figure}

\begin{figure}
\begin{center}
\resizebox{0.4\textwidth}{!}{
\begin{tikzpicture}

\node(rdd) at (4,3) {$+$};
\node(rdsd) at (2,3) {$-$};
\node(rdss) at (0,3) {$-$};
\node(rsd) at (-1,3) {$-$};
\node(rssd) at (-2,3) {$+$};
\node(rsss) at (-4,3) {$+$};

\node(rss) at (-3,2) {$\otimes^{+}$};
\node(rs) at (-2,1) {$\multimap^{-}$};

\node(rds) at (1,2) {$\otimes^{-}$};
\node(rd) at (3,1) {$\multimap^{+}$};
\node(r) at (0,0) {$\multimap^{+}$};

\draw[->, thick] (r) to (rd);
\draw[->, thick] (rd) to (rdd);
\draw[->, thick] (rs) to (rss);
\draw[->, thick] (rdsd) to (rds);
\draw[<-, thick] (rds) to (rdss);
\draw[->, thick] (rsd) to (rs);
\draw[->, thick] (rs) to (rss);
\draw[->, thick] (rss) to (rssd);
\draw[->, thick] (rss) to (rsss);

\draw[->, thick] (rdd) to [bend right=55] (rsd);
\draw[<-, thick] (rdsd) to [bend right=55] (rssd);
\draw[<-, thick] (rdss) to [bend right=55] (rsss);

\end{tikzpicture}}
\end{center}
\caption{$id_{S}\leftrightharpoons ((I\otimes I)\multimap I)\to ((I\otimes I)\multimap I)$}
\label{graph}
\end{figure}

Given shapes $S,T$, by a graph $f:S\to T$ we indicate a graph $f: vS\to vT$. More generally, given a (non-empty) list $\Gamma=\{S_{1},\dots,S_{n}\}$ of shapes and a shape $T$, by a graph $f:\Gamma\to T$ we indicate a graph $f:S_{1}\otimes \dots \otimes S_{n}\to T$. Clearly, for any shape $T$ there exists a pure graph $id_{T}:T\to T$.
Given graphs $f_{1}:\Gamma_{1}\to A_{1},\dots, f_{n}:\Gamma_{n}\to A_{n}$, we let 
$f_{1}\otimes \dots \otimes f_{n}:\Gamma_{1},\dots,\Gamma_{n}\to A_{1}\otimes\dots \otimes A_{n}$ be the graph $f_{1}\cup \dots \cup f_{n}$.

 For any graph $f:S\to T$, the \emph{correction graph} of $f$, noted $f\leftrightharpoons S\multimap T$, is the rooted directed graph $f \cup S\multimap T$, with root $c(S\multimap T)$, called the \emph{conclusion} of $f\leftrightharpoons S\multimap T$.

\begin{definition}[allowable graph]
Let $S,T$ be shapes and $f:S\to T$ be a graph. $f$ is \emph{allowable} (or \emph{correct}) for $S\to T$ if $f\leftrightharpoons S\multimap T$ satisfies:
\begin{description}
\item[(acyclicity)] $f\leftrightharpoons S\multimap T$ is a connected $DAG$;

\item[(functionality)] for every $\multimap^{+}$ node of $f\leftrightharpoons S\multimap T$, every path going from the conclusion to the left premiss of the node passes through the node.

\end{description}

\end{definition}

In figure \ref{graph} the correction graph $id_{S}\leftrightharpoons S \to S$ is illustrated, where $S$ is the shape $ (I\otimes I)\multimap I$. 

%
%

The category $\CC A$ of allowable graphs has shapes $S,T$ as objects and allowable graphs $f:S\to T$ as morphisms (with composition defined as in $\CC G$). 
$\CC A$ can be presented also as a \emph{symmetric multicategory} (\cite{Leinster2004}) $m\CC A$ whose objects are shapes and whose multiarrows are graphs $f:\Gamma\to S$, where $\Gamma$ is a (possibly empty) list of shapes. Multicomposition is defined as follows: given (multi)arrows $f:B_{1},\dots, B_{n}\to C$, $g_{1}:\Delta_{1}\to B_{1},\dots ,g_{n}:\Delta_{n}\to B_{n}$, where the $\Delta_{i}$ indicate finite lists of shapes, one can define $f\circ_{m} (g_{1},\dots, g_{n}):\Delta_{1},\dots, \Delta_{n}\to C$ as
$f\circ (g_{1}\otimes \dots \otimes g_{n})$. 
Observe that, in $m\CC A$ one can consider arrows $f:\emptyset \to S$, corresponding to closed proofs. Due to the absence of the tensor unit in $\mathit{IMLL}^{-}$, such arrows do not exist in $\CC A$. 
In the following we will often confuse $\CC A$ and $m\CC A$.


We let $\CC A^{\multimap}$ (resp. $m\CC A^{\multimap}$) indicate the subcategory of $\CC A$ (resp. the sub-multicategory of $m \CC A$) whose shapes do not contain $\otimes^{+}$ and $\otimes^{-}$. 



We let $\CC L_{\multimap,\otimes}$ be the language given by the grammar $A,B:= X\mid A\multimap B\mid A\otimes B$, where $X\in \CC V$.
Any linear type $A\in \CC L_{\multimap, \otimes}$ is obviously assigned a shape $S_{A}$ and a \emph{labeling}, i.e. a map $l_{A}:vS_{A}\to \CC V$ associating the leafs of $S_{A}$ with a variable. $f:S\to T$ is a \emph{labeled graph} (or, simply, a graph when no ambiguity occurs) $f:A\to B$ if $S_{A}=S, S_{B}=T$ and, by letting $l= l_{A}\cup l_{B}$, $(x,y)\in f \ \To \ l(x)=l(y)$. A labeled graph (illustrated in figure \ref{labgraph}) can be though as a graph over signed multisets of variables. Given $X\in \CC V$, if $e=(x,y)\in f$ and $l(x)=l(y)=X$, then we say that the $e$ is \emph{over $X$}. Clearly, for any $A\in \CC L_{\multimap}$, if $S_{A}=T$, then $id_{T}$ (which we will note simply $id_{A}$) is a correct pure labeled graph from $A$ to $A$. In the following we will often confuse between a linear type and its associated shape. We will also often confuse the context $\Gamma=\{x_{1}:A_{1},\dots, x_{n}:A_{n}\}$ of linear types with the list of shapes $\Gamma=S_{A_{1}},\dots, S_{A_{n}}$.

If $f:A\to B$ is any (non necessarily correct) graph, then for any $X\in \CC V$, $f$ induces a \emph{$X$-pairing} of $A\multimap B$, i.e. a partition of all occurrences of $X$ in $A\multimap B$ in pairs whose elements have opposite polarity.



We say that two types $A,B\in \CC L_{\multimap,\otimes}$ are \emph{isomorphic} when there exist correct graph $f:A\to B$ and $g:B\to A$ such that $g\circ f=id_{A}$, $f\circ g= id_{B}$.

We conclude the presentation of allowable graphs by showing how to associate to any normal linear $\lambda$-term $u$ such that $\Gamma\vdash u:A$ is derivable in $\lambda_{\multimap}$, an allowable graph $\CC G(u):\Gamma\to A$\footnote{Observe that we are here confusing the list $\Gamma=A_{1},\dots, A_{n}$ with the context $\Gamma=x_{1}:A_{1},\dots, x_{n}:A_{n}$. We will often confuse them, if it creates no ambiguity.}. The definition of $\CC G(u)$ actually depends on $\Gamma$ and $A$, so it should be written more pedantically as $\CC G_{\Gamma}^{A}(u)$, as different typings of the same $\lambda$-term give rise to different labeled graphs\footnote{Indeed all such graphs can be obtained by suitable expansions from the graph $\CC G_{\Gamma_{0}}^{A_{0}}(u)$, where $\Gamma_{0}\vdash u:A_{0}$ is a \emph{principal typing} of $u$.}). 


For any linear type $A\in \CC L_{\multimap}$ we let $\pi A\subseteq \{\TT l, \TT r\}^{*}$ (where $\TT l$ stands for ``left'' and $\TT r$ stands for ``right'') be the set of all paths, i.e. all finite sequences of elements of $\{\TT l, \TT r\}$, leading to variables in the syntactic tree of $A$. 
Given a context $\Gamma$ and a linear type $A$, any element of $\OV{v\Gamma}  +v( A)$, where $\OV{v\Gamma}=\OV{vA_{0}}+\dots + \OV{vA_{n-1}}$ and $\Gamma=\{A_{0},\dots, A_{n-1}\}$, is  uniquely determined by a pair $(i,\pi)$ made of an index $i\in \{0,\dots,n\} $ (by letting $A_{n}=A$) and a path $\pi\in\{\TT l, \TT r\}^{*}$ such that $\pi\in \pi A_{i}$. Let $p(\Gamma,A)$ be the set of such pairs. There exists then a bijection $\alpha_{\Gamma,A}:p(\Gamma,A)\to \OV{v\Gamma}  +v( A)$ where $\alpha(i,\pi)$ is the node corresponding to the path $\pi$ in the syntactic tree of $A_{i}$. In case $\Gamma=\emptyset$, then there is a canonical bijection $\alpha_{A}:\pi A\to vA$ such that $\alpha_{A}(\pi)$ is the node corresponding to $\pi$ in the syntactic tree of $A$. The translation can then be defined inductively as follows:

\begin{enumerate}
\item if $u=x$, then $n=1$, $\Gamma=\{A_{0}\}$, $A_{0}=A_{1}=A$ and we have
$x:A\vdash x:A$. Then $\CC G(u)=\{ (\alpha_{A,A}(0,\pi), \alpha_{A,A}(1,\pi )  ) \mid \pi\in \pi A \}$;

\item if $u=\lambda x^{B}.u'$, then we have
$\Gamma, x:B\vdash u':C$, so by induction hypothesis, the graph $\CC G(u')$ is defined.
Observe that $\alpha_{\Gamma,B\multimap C}:p(\Gamma, B\multimap C)\to \OV{v\Gamma}+ \OV{vB}+ vC$ is defined by
$\alpha_{\Gamma, B\multimap C}(i,\pi)=\alpha_{\Gamma\cup \{B\}, C}(i,\pi)$ when $i< n-1$ and $\pi\in \pi A_{i}$, $\alpha_{\Gamma,B\multimap C}(n,\TT l\cdot\pi)= \alpha_{\Gamma\cup\{B\},C}(n,\pi)$, when $\pi\in \pi B$ and $\alpha_{\Gamma,B\multimap C}(n,\TT r\cdot \pi)=\alpha_{\Gamma\cup\{B\},C}(n+1, \pi)$, when $\pi\in \pi B$. 
We put then $\CC G(u):=\alpha_{\Gamma,B\multimap C}(\alpha_{\Gamma\cup\{B\}, C}^{-1}(\CC G(u')))$.

\item if $u=yu_{1}\dots u_{p}$, where 
$y:C=B_{1}\multimap \dots \multimap B_{p}\multimap A$, and $\Gamma_{i}\vdash u_{i}:B_{i}$, where the $\Gamma_{i}$ form a partition of $\Gamma-\{C\}$, then by induction hypotheses the graphs $\CC G(u_{i})$ are defined and we have bijections $\alpha_{\Gamma_{i},B_{i}}:p(\Gamma_{i},B_{i})\to \OV{v\Gamma_{i}}+ vB_{i}$ and $\alpha_{C}: \pi C\to vC, \alpha_{A}:\pi A\to vA$.
 $\alpha_{\Gamma,A}:p(\Gamma,A)\to \OV{v\Gamma}+vA$ is defined by
$\alpha_{\Gamma,A}( \sum_{j<i}k_{j}+l, \pi)= \alpha_{\Gamma_{i},B_{i}}(l, \pi)$, where $i<n-1$ $\pi\in \pi A_{i}$, $k_{j}$ is the cardinality of $\Gamma_{j}$ and $l\leq k_{i}$, $\alpha_{\Gamma,A}(n-1, \pi)=\alpha_{C}(\pi)$, for $\pi\in \pi C$ and $\alpha_{\Gamma,A}(n,\pi)=\alpha_{A}(\pi)$, for $\pi\in \pi A$. 
Then $\CC G(u)= \alpha_{\Gamma,A}(\alpha_{\Gamma_{1},A_{1}}^{-1}( \CC G(u_{1}))) \cup \dots \cup \alpha_{\Gamma,A}(\alpha_{\Gamma_{p},B_{p}}^{-1}(\CC G(u_{p}))) \cup \{(\alpha_{C}(\TT l^{p}\cdot\pi),\alpha_{A}(\pi)) \mid \pi\in \pi A\}$, where $\TT r^{p}\cdot \pi$ indicates $\underbrace{\TT r \cdot \dots \cdot \TT r }_{p \text{ times}}\cdot \pi$.



\end{enumerate}

We recall some standard results relating the $\lambda$-term $u$ and its graph $\CC G(u)$:

\begin{theorem}[adequacy, sequentialization and normalization, \cite{Lamarche, Ong}]\label{seque}
Let $u,v$ be normal $\lambda$-terms.

\begin{itemize}
\item If $\Gamma\vdash u:A$ is derivable in $\lambda_{\multimap}$, then $\CC G(u):\Gamma\to A$ is allowable;
\item if $f:\Gamma\to A$ is allowable, then for some normal $u$ such that $\Gamma\vdash u:A$, $f=\CC G(u)$;

\item if $\Gamma\vdash u:A$ and $x:A\vdash v:B$, then $\CC G(u)\circ \CC G(v)= \CC G(w)$, where $w$ is the $\beta$-normal form of $u[v/x]$.

\end{itemize}
\end{theorem}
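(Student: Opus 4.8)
The three clauses are the classical facts about essential nets for $\mathit{IMLL^-}$; I would reprove them by adapting \cite{Lamarche, Ong} to the present presentation. \textbf{Adequacy.} Induct on the derivation of $\Gamma\vdash u:A$ in $\lambda_{\multimap}$ (equivalently, on the $\eta$-long normal term $u$), checking at each step that the correction graph of $\CC G(u)$ is a connected DAG satisfying functionality. For the axiom $u=x$ the correction graph is that of $id_{A}:A\to A$, which is allowable by a routine check: each leaf of the negative copy of $A$ is matched to its mirror image, so the graph is connected, acyclic, and every $\multimap^{+}$ node dominates the leaves of its left premiss. For $u=\lambda x^{B}.u'$, the correction graphs of $\CC G(u'):\Gamma,B\to C$ and of $\CC G(u):\Gamma\to B\multimap C$ have the same signed leaf set and the same edge set transported along the evident bijection; hence acyclicity and connectedness are inherited from the induction hypothesis, and functionality need only be rechecked at the $\multimap^{+}$ node of $B\multimap C$, whose left premiss is the root of $\OV{S_{B}}$ and is reachable from the conclusion only through that node. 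For $u=yu_{1}\cdots u_{p}$ with $[y]=C=B_{1}\multimap\cdots\multimap B_{p}\multimap A$, the graph $\CC G(u)$ is the disjoint union of the allowable graphs $\CC G(u_{i}):\Gamma_{i}\to B_{i}$ together with the ``spine'' joining the leaves of the head occurrence of $C$ to those of $A$; acyclicity holds because the spine links the individually acyclic components in a tree-like way, connectedness because the spine meets every component, and functionality at each $\multimap^{+}$ node of a $B_{i}$ is kept by the induction hypothesis, while the $\multimap^{+}$ nodes on the spine inside $C$ satisfy it because their left premisses lie inside some $\OV{S_{B_{i}}}$, reachable from the conclusion only along $\CC G(u_{i})$.

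\textbf{Sequentialization.} Induct on the total number of $\multimap$-nodes in $\Gamma,A$. If the root of $A$ is $\multimap^{+}$, say $A=B\multimap C$, then the correction graph of $f:\Gamma\to A$ coincides, up to the name of the root, with the correction graph of the same edge set read as $f':\Gamma,B\to C$, so $f'$ is allowable; by the induction hypothesis $f'=\CC G(u')$ for some normal $u'$ with $\Gamma,B\vdash u':C$, and $u=\lambda x^{B}.u'$ works. If $A=X$ is a variable, the conclusion leaf has polarity $+$ and is matched by $f$ to some leaf $\ell$; since $A$ has no $\multimap$ node, $\ell\notin vA$, so $\ell\in\OV{v\Gamma}$ is a negative leaf of some context type $C$, and functionality forces $\ell$ to be the rightmost leaf of $C$, so $C=B_{1}\multimap\cdots\multimap B_{p}\multimap X$ with a determined spine. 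The crucial step is a \emph{splitting lemma}: removing the spine edges disconnects the correction graph into exactly $p$ components, the $i$-th of which is an allowable graph $f_{i}:\Gamma_{i}\to B_{i}$ for a suitable partition $\Gamma_{i}$ of $\Gamma\setminus\{C\}$; acyclicity keeps the components disjoint, and connectedness together with functionality assigns every leaf to the correct $B_{i}$. Applying the induction hypothesis to the $f_{i}$ yields normal $u_{i}$ with $\Gamma_{i}\vdash u_{i}:B_{i}$, and unfolding the definition of $\CC G$ confirms $\CC G(yu_{1}\cdots u_{p})=f$.

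\textbf{Normalization.} I would first extend $\CC G$ to all linearly typable terms by the same inductive clauses together with an extra clause for an application whose head is an abstraction, then prove (i) that this extended translation is invariant under a single $\beta$-step --- the redex corresponding, roughly, to a pair of a type III edge and a type II edge that path composition short-circuits --- and (ii) that it agrees with the given $\CC G$ on $\eta$-long normal forms; along the way one also checks that composition of allowable graphs is allowable. This reduces the statement to the substitution lemma $\CC G(u[v/x])=\CC G(u)\circ\CC G(v)$, proved by induction on the term: the only interesting case is when $x$ occurs as the head variable of a subterm, where the clause defining $\CC G$ at an application lines up with the definition of graph composition as iterated path-following, the iteration ``until $s^{-}+u^{+}$ is reached'' mirroring the chain of substitutions triggered inside $u[v/x]$. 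Since $w$ is the $\beta$-normal form of $u[v/x]$, $\beta$-invariance then gives $\CC G(w)=\CC G(u[v/x])=\CC G(u)\circ\CC G(v)$.

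I expect the \textbf{main obstacle} to be the splitting lemma of the sequentialization clause: extracting, from acyclicity and functionality of the correction graph alone, the decomposition into independent allowable sub-graphs over a legitimate partition of the context is exactly where the essential-nets correctness criterion does its real work. Moreover, the same combinatorial analysis underlies the fact --- used tacitly in the normalization clause --- that composing two allowable graphs yields an allowable graph, so that part of the difficulty cannot be avoided by reorganizing the argument.
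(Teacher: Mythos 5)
The paper does not prove Theorem \ref{seque} at all: it is recalled as a standard result and attributed to \cite{Lamarche, Ong}, so there is no in-paper argument to compare yours against. Your sketch follows exactly the standard route from those sources --- adequacy by induction on the normal term mirroring the inductive clauses defining $\CC G$, sequentialization by induction on the number of $\multimap$-nodes pivoting on a splitting lemma, and normalization reduced to a substitution lemma plus $\beta$-invariance of an extended translation --- and you correctly identify the splitting lemma and the closure of allowability under composition as the places where the correctness criterion does its real work. Note that the splitting lemma you invoke is precisely what the paper itself states (again without proof, citing \cite{Ong}) as Lemma \ref{splitting} in Appendix \ref{appA}, so your assessment of where the difficulty lives agrees with the paper's own organization.

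Two small cautions on the sequentialization step. First, when $A=X$ and the conclusion leaf is matched to a negative leaf $\ell$ of a context type $C$, it is not functionality alone that forces $\ell$ to be the head (rightmost) occurrence and makes the decomposition legitimate: acyclicity is also essential (e.g.\ matching the conclusion to a context atom while pairing the two leaves of a separate $X\multimap X$ internally produces a directed cycle through the $\multimap^{-}$ node, not a functionality failure), and connectedness is what rules out leftover context when $p=0$. Second, since these hard combinatorial ingredients are explicitly deferred rather than proved, your text is a correct proof \emph{outline} in the same sense that the paper's citation is: acceptable here, but the splitting lemma would have to be carried out in full for a self-contained proof.
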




\subsection{Categories of typed $\lambda$-terms and their functorial interpretation}\label{sec23}

Since Lambek's pioneering work (\cite{Lambek1969}), it has become standard to treat deductive systems as syntactic (multi)categories whose objects are formulas or types and whose arrows are equivalence classes of derivations.
The categorial treatment of deductive systems allows to introduce their functorial interpretation in an internal way, i.e. by considering types as endofunctors over the syntactic categories of types and terms, and well-typed terms as dinatural transformations between such functors.

In the previous subsection we described the category $\CC A$ generated by proof nets and we observed that, by suitably extending proof nets to unities, the syntactic category obtained corresponds either to the free $^{*}$-autonomous category (\cite{Lamarche2004, Hughes2012}) or to the free symmetric monoidal closed category (\cite{Blute1996}). 
We now briefly recall the categorial description of the simply typed $\lambda$-calculus and System $F$. 

We let $\CC T$ be the category generated by $\lambda_{\To}$: the types of $\CC T$ are the simple types and the arrows $u:A\to B$ are $\lambda$-terms $u$ with exactly one free variable, such that $x:A\vdash u:B$ is derivable in $\lambda_{\To}$, considered 
up to $\beta\eta$-equivalence and up to renaming of its unique free variable. We will often note an arrow $u[z]:A\to B$, where $z$ indicates its unique free variable. When no variable is indicated, we let $x$ denote by convention the unique free variable of an arrow $u:A\to B$. The composition of arrows $u:A\to B$ and $v:B\to C$ is the arrow $v[u/x]:A\to C$, where $x$ is the unique free variable in $v$.
The symmetric multicategory $m\CC T$ can be defined in a similar way. As in the case of $\CC A$, when indicating a  a multiarrow $u:\Gamma\to A$, we will confuse the list $A_{1},\dots, A_{n}$ with
the context $\Gamma=x_{1}:A_{1},\dots, x_{n}:A_{n}$. As already mentioned, we will often confuse a category and its associated multicategory.

As is well-known, if $\lambda_{\To}$ is extended with finite products constructions, then the syntactic category obtained is the free cartesian closed category (\cite{LambekScott}). 
Finally, we consider the syntactic categories $\CC F$ and $\CC F_{at}$ (and the respective multi-categories), defined similarly to $\CC T$: $\CC F$ (resp. $\CC F_{at}$) is the category having as objects the types in $\CC L_{\To, \forall}$ and such that an arrow $u:A\to B$ is a $\lambda$-term (up to $\beta\eta$-equivalence and renaming of its unique free variable) with exactly one free variable such that $x:A\vdash u:B$ is derivable in $F$ (resp. in $F_{at}$).

Functorial polymorphism (\cite{Bainbridge1990}) is the interpretation of types as multivariant functors over a category (either symmetric monoidal closed or $^{*}$-autonomous in the linear case - see \cite{Blute1993} - and cartesian closed in the non linear case -\cite{Girard1992}) and of typed terms as dinatural transformations between such types. We recall that, given a category $\CC C$ and multivariant functors $F,G: \CC C^{op}\times \CC C \to \CC C$, a \emph{dinatural transformation} between $F$ and $G$ is a family of arrows $\theta_{A}$ indexed by the objects of $\CC C$ such that, given objects $A,B$ in $\CC C$ and an arrow $f: A\to  B$, the following diagram commutes:

\begin{center}
\resizebox{0.4\textwidth}{!}{
$$
\xymatrix{
 &  F A A \ar[r]^{\theta_{A}} & G AA \ar[rd]^{G A f}&  \\
F B A \ar[rd]_{F B f} \ar[ru]^{F f A}  &   &  &   G A B\\
  & F B B \ar[r]_{\theta_{B}} & G B B  \ar[ru]_{G f B}&
}
$$}
\end{center}

We can now define functorial interpretation by considering the categories $\CC A$, $\CC T$ and $\CC F$.
By considering the category $\CC A$ (resp. $\CC T$, $\CC F$), any linear type (resp. simple type) $A$, depending on a variable $X$, can be interpreted as a multivariant endofunctor $A:\CC A^{op}\otimes \CC A\to \CC A$ (resp. $A:\CC T^{op}\times \CC T\to \CC T$, $A:\CC F^{op}\otimes \CC F\to \CC F$), which is covariant when $A$ is p-$X$ and contravariant when $A$ is n-$X$. 

Given types $A,B,C$ in any of the above categories, we indicate by $A[B,C]$ the type obtained by substituting $B$ for the negative occurrences of $X$ in $A$ and $C$ for the positive occurrences of $X$ in $A$. 
Given types $B,B',C,C'$ and arrows $u:B\to B'$, $v:C\to C'$ we can define the arrow $A(u,v):A[B',C]\to A[B,C']$ in $\lambda$-calculus notation by induction on $A$ as follows:
\begin{equation*}
\begin{split}
Y(u, v) \ & = \ \begin{cases} v & \text{ if } X=Y \\ Y & \text{ otherwise}\end{cases} \\
(A_{1}\To A_{2})(u,  v) \ & = \ A_{1}(v, u)\To A_{2}( u,  v) \\
\forall YA( u,  v) \ & = \ \Lambda Y.' A[Y'/Y]( u,  v)(xY')
\end{split}
\end{equation*}
where, given arrows $u[z]:B\to B'$ and $v[z']:C\to C'$, $(u\To v)[x]:(B'\To C  ) \to (B\To  C')$ is the arrow $\lambda y^{B}. v[x(u[y/z])/z']$.


It is a standard result (\cite{Blute1993, Girard1992}) that any arrow $f:A\to B$ in $\CC A$ (resp. $u:A\to B$ in $\CC T$) corresponds to a dinatural transformation between the functors associated to $A$ and $B$, respectively. 
This means that any arrow $u:A\to B$ in $\CC T$ yields a dinatural transformation between $A$, and $B$, by letting $u_{C}:A(C , C)\to B(C, C)$ be $u[C/X]$, that is, given any arrow $v: C\to C'$, the equation below 
\begin{equation}\label{dina}
B(C,  v) \circ u_{C} \circ A(v, C) \ = \ B( v, C' )\circ  u_{C'} \circ A(C',  v) 
\end{equation}
where $u\circ v=\lambda x.u(vx)$, holds modulo $\beta\eta$-equivalence.

This fact does not extend to $\CC F$ (\cite{Delatail2009}): there exist well-typed System $F$ terms which are not dinatural. 
This means that dinaturality generates an equational theory over System $F$ terms which strictly extends $\beta\eta$-equivalence (see \cite{StudiaLogica} for a proof theoretic discussion).
We let $ T_{\varepsilon} $ indicate the $\lambda^{2}$-theory generated by all equations \ref{dina} between arrows in $\CC F$. The equivalence $\simeq_{\varepsilon}$ induced by $T_{\varepsilon}$ captures then all equations between System $F$ terms which hold when interpreting such terms in an arbitrary dinatural model of System $F$.

The functorial interpretation can be extended to proof nets and $\lambda$-terms with undischarged assumptions or free variables, respectively, by considering the \emph{polynomial categories} $\CC A[x_{1},\dots, x_{n}]$, $\CC T[x_{1},\dots, x_{n}], \CC F[x_{1},\dots, x_{n}]$ (see \cite{LambekScott}) obtained by adding to $\CC A$, $\CC T$ and $\CC F$ new arrows $x_{i}: \emptyset\to A_{i}$, for some types $A_{1},\dots, A_{n}$. This extension allows to consider the functorial action of types over terms with undischarged assumptions: for instance, let $x_{1}:\emptyset\to A_{1},\dots, x_{n}:\emptyset\to A_{n}$ be ``variable arrows'' and $u: A_{1},\dots, A_{n}, C\to C', b:A_{1},\dots, A_{n}:D\to D'$ in $\CC T$; then the functorial action of the type $A=X\To X$ in $\CC T[x_{1},\dots, x_{n}]$ corresponds to an arrow
$A(u,v): A_{1},\dots, A_{n}, A[C',D]\to A[C,D']$ in $\CC T$.
When considering polynomial categories, the theory $T_{\varepsilon}$ can be defined in a more general and uniform way as the $\lambda^{2}$-theory generated by all equations 
\begin{equation}\label{dinas}
B(C,  yx) \circ u_{C} \circ A(yx, C) \ = \ B( yx, C' )\circ  u_{C'} \circ A(C',  yx) 
\end{equation}
where $u:A\to B$ in $\CC F$ and $yx :C\to C'$ in the polynomial category $\CC F[y]$ generated by the variable arrow $y:\emptyset\to C\To C'$. Observe that all instances of \ref{dina} can be deduced from \ref{dinas} by using standard identity axioms (which hold in any $\lambda^{2}$-theory).

 As it will be clear in section \ref{sec6}, the functorial action of types over open terms is a basic tool in the functorial formulation of instantiation overflow for Russell-Prawitz types.


\section{Proof net interpolation and the positivity lemma}\label{sec4}

%
 
In this section we discuss interpolation from the viewpoint of proof nets and we use it to prove the positivity lemma, which shows that, whenever a type $A$ has a covariant (resp. contravariant) action over $\CC A$, then $A$ is equivalent to a type containing only positive (resp. negative) occurrences of $X$. This result is in some sense the converse of the remark that a type containing only positive (resp. negative) occurrences of a variable is a covariant (resp. contravariant) endofunctor over $\CC A$ (see section \ref{sec6}).


First, we recall proof net interpolation for $\mathit{IMLL^{-}}$ and we define the interpolation order $\prec_{I}$ over linear types; then, we consider weak proof net interpolation for the fragment $\lambda_{\multimap}$ (proved in appendix \ref{appA} by adapting the argument in \cite{deGroote1996} for $\mathit{IMLL^{-}}$ interpolation). Weak interpolation will be exploited to. Finally, we deduce the positivity lemma from proof net interpolation.



\subsection{Proof net interpolation in $\mathit{IMLL^-}$}\label{3}

Craig Interpolation for sequent calculus is usually formulated as follows (see \cite{Troelstra}): given a cut-free natural deduction derivation of $\Gamma, \Delta\vdash A$, there exists a type $I$, called the \emph{interpolant} of the derivation, such that the variables of $I$ occur in both $\Gamma$ and $\Delta,A$ and there exist two derivations of conclusions respectively $\Gamma\vdash I$ and $\Delta,I\vdash A$. 

Sequent calculus interpolation for linear logic was first investigated in \cite{Roorda}, when it was realized that the statement above can be strengthened by considering proof nets. 
Indeed, by taking $\Delta=\emptyset$, interpolation yields a procedure to ``split'' the type II and type III parts of a proof net, yielding two graphs $f_{1}:\Gamma\to I$, $f_{2}:I\to A$, where $f_{1}$ contains the type $II$ part of $f$, $f_{2}$ contains the type $III$ part of $f$ and both $f_{1}$ and $f_{2}$ contain the type $I$ part of $f$.

This idea appears in two different approaches to proof net interpolation, the one in \cite{deGroote1996} (inspired from \cite{Roorda}) for $\mathit{IMLL^-}$, which we recall here, and the one in \cite{Carbone1997}, based on flow graphs for the classical sequent calculus $LK$.  

%

Let $A,B\in \CC L_{\multimap}$ and $\Gamma=\{B_{1},\dots, B_{n}\}$ be a finite multiset of types. We say that $A$ \emph{injects into $B$} (resp. \emph{$A$ injects into $\Gamma$}), noted $A\hookrightarrow B$ (resp. $A\hookrightarrow \Gamma$) if there exists an injective function $h:vA\to vB$ ($h:vA\to \sum_{i}vB_{i}$) preserving polarities and labels.

The proof net interpolation problem for $\mathit{IMLL^-}$ can be described as follows
\begin{definition}[$\mathit{IMLL^-}$ interpolation problem]\label{interpo1}
Given $f: \Gamma\to A$ in $\CC A$, where $f_{I},f_{II},f_{III}\subseteq f$ denote its type I, type II and type III parts, respectively, find a linear type $I$, called the \emph{interpolant of $f$}, such that $f_{II}\cup f_{I}: \Gamma\to I$ and $f_{I}\cup f_{III}: I\to A$ in $\CC A$.
%
%
\end{definition}

Observe that, following definition \ref{interpo1}, interpolation forces $I\hookrightarrow A$ and $I\hookrightarrow \Gamma$, as there exist type $I$ edges connecting any variable occurrence of $I$ with variable occurrences in both $\Gamma$ and $A$. 

The interpolation problem can be reformulated by considering graphs with cuts (as in \cite{deGroote1996}). 
Let us add to the class of shapes the shapes $cut^{+}$ and $cut^{-}$ in figure \ref{shapedef11} and \ref{shapedef21}, respectively. 
By $cut$ we will generically indicate either $cut^{+}$ or $cut^{-}$.

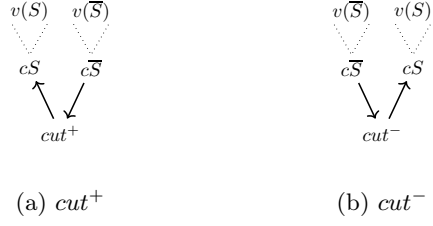
\begin{figure}
\begin{center}
\begin{subfigure}{0.28\textwidth}
\begin{center}
\resizebox{0.4\textwidth}{!}{
\begin{tikzpicture}[baseline=2ex, scale=0.8]
\node(a) at (0,-0.5) {$cut^{+}$};
\node(b) at (-0.7,1) {$cS$};
\node(c) at (0.7,1) {$c\OV S$};

\draw[dotted] (-0.3,2) to (-0.7,1.3) to (-1.1,2);

\node(v) at (-0.7,2.3) {$v(S)$};
\node(v') at (0.7,2.3) {$v(\OV S)$};
\draw[dotted] (0.3,2) to (0.7,1.3) to (1.1,2);

\draw[<-, thick] (a) to (c);
\draw[<-, thick] (b) to (a);

\end{tikzpicture}
}
\end{center}
\caption{$cut^{+}$}
\label{shapedef11}
\end{subfigure}
\begin{subfigure}{0.28\textwidth}
\begin{center}
\resizebox{0.4\textwidth}{!}{
\begin{tikzpicture}[baseline=2ex, scale=0.8]
\node(a) at (0,-0.5) {$cut^{-}$};
\node(b) at (-0.7,1) {$c\OV S$};
\node(c) at (0.7,1) {$cS$};

\draw[dotted] (-0.3,2) to (-0.7,1.3) to (-1.1,2);

\node(v) at (-0.7,2.3) {$v(\OV S)$};
\node(v') at (0.7,2.3) {$v( S)$};
\draw[dotted] (0.3,2) to (0.7,1.3) to (1.1,2);

\draw[->, thick] (a) to (c);
\draw[->, thick] (b) to (a);

\end{tikzpicture}
}
\end{center}
\caption{$cut^{-}$}
\label{shapedef21}
\end{subfigure}
\end{center}
\caption{$cut$ links}
\label{cuts}
\end{figure}

By a \emph{graph with cuts} we indicate a correct graph $f:\Gamma, cut,\dots, cut \to A$, where $\Gamma$ and $A$ have no occurrence of $cut$.
By a \emph{graph with $n$ splitting cuts} we indicate a graph with cuts $f:\Gamma, \underbrace{cut,\dots, cut}_{n\text { times}}\to A$ such that no type I edge of $f$ connects $\Gamma$ and $A$ and no type II edge of $f$ connects any cut. 
A graph with splitting cuts can be 
described as a graph $f:\Gamma_{1},\dots,\Gamma_{p}\to A$, where $\Gamma_{i}=\Delta_{i},cut,\dots, cut$, where the $\Delta_{i}$ form a partition of $\Gamma$, and the correction graph is as in figure \ref{fig5}.

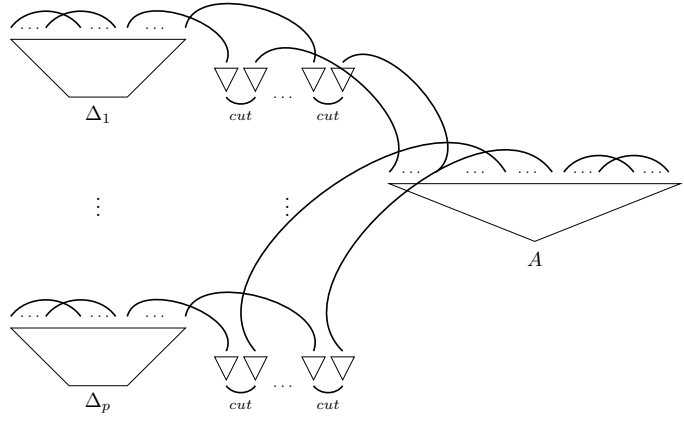
\begin{figure}[h]\begin{center}\resizebox{0.6\textwidth}{!}{\begin{tikzpicture}

\draw[] (0,0) to (1,0) to (2,1) to (-1,1) -- cycle;
\draw[] (8,-2.5) to (10.5,-1.5) to (5.5,-1.5) -- cycle;

\node(g) at (0.5,-0.3) {$\Delta_{1}$};
\node(a) at (8,-2.8) {$A$};

\draw[thick] (-1,1.2) to [bend left=55] (0.2,1.2);
\draw[thick] (-0.4,1.2) to [bend left=55] (0.8,1.2);
\node(d1) at (-0.65,1.2) {\scriptsize$\dots$};
\node(d2) at (0.55,1.2) {\scriptsize$\dots$};

\draw[thick] (8.5,-1.3) to [bend left=55] (9.7,-1.3);
\draw[thick] (9.1,-1.3) to [bend left=55] (10.3,-1.3);
\node(d1) at (8.85,-1.3) {\scriptsize$\dots$};
\node(d2) at (10.05,-1.3) {\scriptsize$\dots$};

\draw (2.5,0.5) to (2.9,0.5) to (2.7,0.1) -- cycle;
\draw (3,0.5) to (3.4,0.5) to (3.2,0.1) -- cycle;
\draw[thick] (2.7,0) to [bend right=55] node[below] {\scriptsize$cut$} (3.2,0);

\node(dd) at (3.7,0) {\scriptsize$\dots$};

\draw (4,0.5) to (4.4,0.5) to (4.2,0.1) -- cycle;
\draw (4.5,0.5) to (4.9,0.5) to (4.7,0.1) -- cycle;
\draw[thick] (4.2,0) to [bend right=55] node[below] {\scriptsize$cut$} (4.7,0);

\draw[thick] (1,1.2) to [bend left=95] (2.7,0.6);
\draw[thick] (2,1.2) to [bend left=95] (4.2,0.6);
\node(ddd) at (1.5,1.2) {\scriptsize$\dots$};

\draw[thick] (3.2,0.6) to [bend left=95] (5.5,-1.3);
\draw[thick] (4.7,0.6) to [bend left=95] (6.3,-1.3);
\node(ddd) at (5.9,-1.3) {\scriptsize$\dots$};

\draw[] (0,-5) to (1,-5) to (2,-4) to (-1,-4) -- cycle;

\node(g1) at (0.5,-5.3) {$\Delta_{p}$};

\draw[thick] (-1,-3.8) to [bend left=55] (0.2,-3.8);
\draw[thick] (-0.4,-3.8) to [bend left=55] (0.8,-3.8);
\node(d1) at (-0.65,-3.8) {\scriptsize$\dots$};
\node(d2) at (0.55,-3.8) {\scriptsize$\dots$};

\draw (2.5,-4.5) to (2.9,-4.5) to (2.7,-4.9) -- cycle;
\draw (3,-4.5) to (3.4,-4.5) to (3.2,-4.9) -- cycle;
\draw[thick] (2.7,-5) to [bend right=55] node[below] {\scriptsize$cut$} (3.2,-5);

\draw (4,-4.5) to (4.4,-4.5) to (4.2,-4.9) -- cycle;
\draw (4.5,-4.5) to (4.9,-4.5) to (4.7,-4.9) -- cycle;
\draw[thick] (4.2,-5) to [bend right=55] node[below] {\scriptsize$cut$} (4.7,-5);

\draw[thick] (1,-3.8) to [bend left=95] (2.7,-4.4);
\draw[thick] (2,-3.8) to [bend left=95] (4.2,-4.4);
\node(ddd) at (1.5,-3.8) {\scriptsize$\dots$};

\draw[thick] (3.2,-4.4) to [bend left=95] (7.5,-1.3);
\draw[thick] (4.7,-4.4) to [bend left=95] (8.3,-1.3);
\node(ddd) at (7.9,-1.3) {\scriptsize$\dots$};

\node(dd) at (3.7,-5) {\scriptsize$\dots$};

\node(dd) at (7,-1.3) {\scriptsize$\dots$};
\node(vd) at (0.5,-1.8) {$\vdots$};
\node(vd) at (3.75,-1.8) {$\vdots$};

\end{tikzpicture}}
\end{center}
\caption{Graph with splitting cuts}
\label{fig5}
\end{figure}


The interpolation problem can then be formulated as follows: given a graph $f:\Gamma\to A$ in $\CC A$, find a type $I\in \CC L_{\multimap,\otimes}$ such that the correction graph of $f$ (illustrated in figure \ref{fig1}) can be split into a graph with one splitting cut as in figure \ref{fig2}, preserving correctness. A simple example of interpolation is illustrated in figure \ref{isom2}.

\begin{figure}
\begin{subfigure}{0.48\textwidth}
\begin{center}
\resizebox{0.9\textwidth}{!}{
\begin{tikzpicture}

\node(t2) at (-0.1,2.3) {Type II};
\node(t1) at (2.7,2.3) {Type I};
\node(t3) at (5.4,2.3) {Type III};

\draw[] (0,0) to (1,0) to (2,1) to (-1,1) -- cycle;
\draw[] (5,0) to (6.5,1) to (3.5,1) -- cycle;

\node(g) at (0.5,-0.3) {$\Gamma$};
\node(a) at (5,-0.3) {$A$};

\draw[thick] (-1,1.2) to [bend left=55] (0.2,1.2);
\draw[thick] (-0.4,1.2) to [bend left=55] (0.8,1.2);
\node(d1) at (-0.65,1.2) {\scriptsize$\dots$};
\node(d2) at (0.55,1.2) {\scriptsize$\dots$};

\draw[thick] (4.5,1.2) to [bend left=55] (5.7,1.2);
\draw[thick] (5.1,1.2) to [bend left=55] (6.3,1.2);
\node(d1) at (4.85,1.2) {\scriptsize$\dots$};
\node(d2) at (6.05,1.2) {\scriptsize$\dots$};

\draw[thick] (1.2,1.2) to [bend left=55] (3.7,1.2);
\draw[thick] (1.8,1.2) to [bend left=55] (4.3,1.2);
\node(d1) at (1.55,1.2) {\scriptsize$\dots$};
\node(d2) at (4.05,1.2) {\scriptsize$\dots$};

\end{tikzpicture}}
\end{center}
\caption{Before interpolation}
\label{fig1}
\end{subfigure}
\begin{subfigure}{0.5\textwidth}
\begin{center}\resizebox{0.9\textwidth}{!}{
\begin{tikzpicture}

\node(t2) at (-0.1,2.3) {Type II};
\node(t3) at (7.4,2.3) {Type III};

\draw[] (0,0) to (1,0) to (2,1) to (-1,1) -- cycle;
\draw[] (7,0) to (8.5,1) to (5.5,1) -- cycle;

\node(g) at (0.5,-0.3) {$\Gamma$};
\node(a) at (7,-0.3) {$A$};

\draw[thick] (-1,1.2) to [bend left=55] (0.2,1.2);
\draw[thick] (-0.4,1.2) to [bend left=55] (0.8,1.2);
\node(d1) at (-0.65,1.2) {\scriptsize$\dots$};
\node(d2) at (0.55,1.2) {\scriptsize$\dots$};

\draw[thick] (6.5,1.2) to [bend left=55] (7.7,1.2);
\draw[thick] (7.1,1.2) to [bend left=55] (8.3,1.2);
\node(d1) at (6.85,1.2) {\scriptsize$\dots$};
\node(d2) at (8.05,1.2) {\scriptsize$\dots$};

\draw[] (2.5,1) to (3.5,1) to (3,0) -- cycle;
\draw[] (4,1) to (5,1) to (4.5,0) -- cycle;
\node(i1) at (3,-0.3) {$I^{+}$};
\node(i2) at (4.5,-0.3) {$I^{-}$};

\draw[thick] (i1) to [bend right=55] node[below] {$cut$} (i2);

\draw[thick] (1.2,1.2) to [bend left=55] (2.5,1.2);
\draw[thick] (1.8,1.2) to [bend left=55] (3.1,1.2);
\node(d1) at (1.55,1.2) {\scriptsize$\dots$};
\node(d2) at (2.85,1.2) {\scriptsize$\dots$};

\draw[thick] (4.2,1.2) to [bend left=55] (5.7,1.2);
\draw[thick] (4.8,1.2) to [bend left=55] (6.3,1.2);
\node(d1) at (4.55,1.2) {\scriptsize$\dots$};
\node(d2) at (6.05,1.2) {\scriptsize$\dots$};

\end{tikzpicture}}
\end{center}
\caption{After interpolation}
\label{fig2}
\end{subfigure}
\caption{Interpolation and the correction graph}
\end{figure}
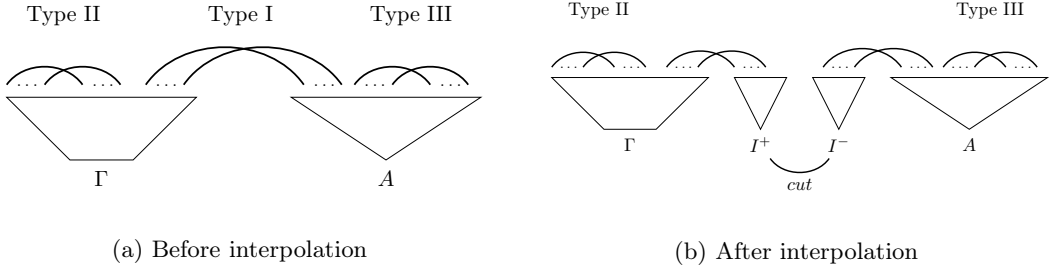

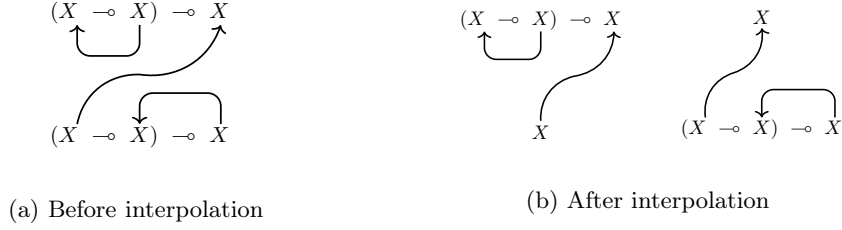
\begin{figure}
\begin{subfigure}{0.45\textwidth}
\begin{center}
\resizebox{0.4\textwidth}{!}{
\begin{tikzpicture}
\node(b) at (0,0) {$(X \ \multimap \ X) \ \multimap \ X$};

\node(c) at (0,-2) {$(X \ \multimap \ X) \ \multimap \ X$};

\draw[<-, thick, rounded corners=6pt] (-1,-0.2) to (-1,-0.7) to (0,-0.7) to (0,-0.2);
\draw[<-, thick, rounded corners=6pt] (0,-1.8) to (0,-1.3) to (1.3,-1.3) to (1.3,-1.8);
\draw[<-, thick] (1.3,-0.2) to [bend left=40]  (0,-1) to [bend right=40] (-1,-1.8);

\end{tikzpicture}}
\end{center}
\caption{Before interpolation}
\label{isom}
\end{subfigure}
\begin{subfigure}{0.45\textwidth}
\begin{center}
\resizebox{0.8\textwidth}{!}{
\begin{tikzpicture}
\node(b) at (0,0) {$(X \ \multimap \ X) \ \multimap \ X$};

\node(c) at (0,-2) {$X$};

\draw[<-, thick, rounded corners=6pt] (-1,-0.2) to (-1,-0.7) to (0,-0.7) to (0,-0.2);
\draw[<-, thick] (1.3,-0.2) to [bend left=40]  (0.6,-1) to [bend right=40] (0,-1.8);

\end{tikzpicture}
\ \ \ \ \ \ 
\begin{tikzpicture}
\node(b) at (0,0) {$X$};

\node(c) at (0,-2) {$(X \ \multimap \ X) \ \multimap \ X$};

\draw[<-, thick, rounded corners=6pt] (0,-1.8) to (0,-1.3) to (1.3,-1.3) to (1.3,-1.8);
\draw[<-, thick] (0,-0.2) to [bend left=40]  (-0.5,-1) to [bend right=40] (-1,-1.8);

\end{tikzpicture}}
\end{center}
\caption{After interpolation}
\label{isom2}
\end{subfigure}
\caption{Example of interpolation}
\end{figure}

\begin{theorem}[\cite{deGroote1996}, \cite{Carbone1997}]\label{interpolation}
Any $\mathit{IMLL^-}$ interpolation problem has a solution.
\end{theorem}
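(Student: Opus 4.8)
The plan is to analyse the correction graph of $f$ directly, following \cite{deGroote1996}; the flow-graph method of \cite{Carbone1997} and the route of sequentializing $f$ through theorem \ref{seque} and then running the textbook induction for Craig interpolation on the resulting cut-free $\mathit{IMLL^-}$ sequent derivation are equivalent alternatives.

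The first step is to observe that, once $f\colon\Gamma\to A$ is fixed, there is in fact almost no combinatorial freedom. Since we must exhibit $I$ with $f_{II}\cup f_{I}\colon\Gamma\to I$ and $f_{I}\cup f_{III}\colon I\to A$ in $\CC A$, the variable set $vI$ and its labelling are already forced: there must be exactly one leaf of $I$ per type I edge of $f$, carrying the common variable of that edge and the polarity of its endpoint in $A$, and the type I edges are precisely the leaves along which the two half-graphs are glued. Hence the whole content of the theorem is to equip this signed set with a \emph{shape}, i.e.\ to bracket it by $\multimap^{\pm}$ and $\otimes^{\pm}$ nodes (this is why $I$ must be allowed to range over $\CC L_{\multimap,\otimes}$ and in general $\otimes$ cannot be avoided: already $X,X\to X\otimes X$ has no $\multimap$-interpolant), in such a way that both correction graphs $f_{II}\cup f_{I}\leftrightharpoons\Gamma\multimap I$ and $f_{I}\cup f_{III}\leftrightharpoons I\multimap A$ are connected $DAG$s satisfying functionality. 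The inclusions $I\hookrightarrow\Gamma$ and $I\hookrightarrow A$ noted after definition \ref{interpo1} then hold automatically, and $(f_{I}\cup f_{III})\circ(f_{II}\cup f_{I})=f$ comes for free by path composition along the shared leaves.

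To produce the bracketing I would induct on the total number of occurrences of $\multimap$ and $\otimes$ in $\Gamma$ and $A$, with a case split on the outermost connective of $A$ and of the formulas of $\Gamma$ that mirrors the sequent rules a sequentialization of $f$ could use: a $\multimap^{+}$ at the root of $A$, or a $\otimes^{-}$ inside a formula of $\Gamma$, is decomposed without modifying $I$ and one recurses; a $\otimes^{+}$ at the root of $A$ (resp.\ an $\multimap^{-}$ inside $\Gamma$) that is \emph{splitting} --- removing it disconnects the correction graph of $f$ into the sub-nets over its two premisses --- lets one recurse on the two pieces, obtain interpolants $I_{1},I_{2}$, and combine them as $I_{1}\otimes I_{2}$ (resp.\ as $I_{1}\multimap I_{2}$); the base case, where $\Gamma$ and $A$ are purely atomic, reads $I$ directly off the type I edges $f_{I}$.

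The main obstacle --- and the technical heart of \cite{deGroote1996, Carbone1997} --- lies in the last two points. One first needs a \emph{splitting} lemma: whenever the relevant position of $A$ (or of a formula of $\Gamma$) carries a $\otimes^{+}$ (resp.\ $\multimap^{-}$), some such node can be chosen that genuinely splits the correction graph of $f$; this is the Girard-style splitting-tensor/empire analysis, here for essential nets and with the extra functionality clause to keep track of. One then has to check that every decomposition or combination step preserves the essential-nets criterion for \emph{both} half-graphs simultaneously. This is where global correctness of $f$ does the work: the constraints that correctness of $f_{II}\cup f_{I}$ imposes on the shape of $I$ from the $\Gamma$-side, and those that correctness of $f_{I}\cup f_{III}$ imposes from the $A$-side, are each just a partial order on the leaves of $I$ (a leaf must lie in the scope of another exactly when a path in the respective correction graph forces it), and they are jointly realizable by a single shape precisely because both are shadows of the one acyclic, functional correction graph $f\leftrightharpoons\Gamma\multimap A$ and hence cannot conflict. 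Assembling these observations along the induction yields the interpolant and completes the proof.
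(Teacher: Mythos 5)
Your overall strategy (a top-down, Maehara-style induction driven by a splitting lemma for essential nets) is a legitimate alternative to what the paper actually does --- the paper, following \cite{deGroote1996} (cf.\ appendix \ref{appA}), works bottom-up: it replaces each type I edge by an \emph{atomic} cut and then merges cuts by local correctness-preserving rewrites that introduce $\multimap$ (and $\otimes$) nodes into the cut formula, so that the interpolant is grown rather than found by recursion on the sequent. But as written your induction does not close, for two concrete reasons.

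First, the claim that a $\multimap^{+}$ at the root of $A=B\multimap C$ ``is decomposed without modifying $I$'' is false for the problem as stated in definition \ref{interpo1}. Re-reading the same graph as $f:\Gamma,B\to C$ changes the type of its edges: edges between $B$ and $C$ were type III and become type I, while edges between $\Gamma$ and $B$ were type I and become type II. Since, as you yourself observe, the variable set of the interpolant is \emph{forced} to be the set of type I edges, the sub-problem has a genuinely different interpolant, and the recursion cannot simply hand back the same $I$. The standard repair is to prove the stronger statement for an arbitrary two-sided partition of $\{\Gamma\}\cup\{A\}$ (Maehara's lemma, which is how \cite{Roorda} states it), keeping $B$ on the $A$-side of the partition; you never strengthen the induction hypothesis in this way. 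Second, your combination rule for a splitting $\multimap^{-}$ node $B\multimap C$ in $\Gamma$ is wrong for the $\Gamma$-versus-$A$ split: the sub-net $f_{1}:\Gamma_{1}\to B$ lies entirely on the $\Gamma$ side, so all of its edges are type II edges of $f$, it contributes nothing to $f_{I}$, and the interpolant of $f$ is just $I_{2}$ (the interpolant of the piece containing $A$). The combination $I_{1}\multimap I_{2}$ only arises in the generalized two-sided version when part of the context sits on the $A$-side of the partition; in the degenerate instance you actually face, the first component would have to satisfy $\emptyset\vdash I_{1}$, which is impossible in unit-free $\mathit{IMLL^{-}}$. Both defects are repairable by proving the partitioned statement and case-splitting on which side the principal formula lies (yielding $I_{1}\otimes I_{2}$ in one case and $I_{1}\multimap I_{2}$ in the other), but that strengthening is the missing idea, not a detail.
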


Interpolation allows to introduce an order over types, the \emph{interpolation order} $\prec_{I}$, where $A\prec_{I} B$ holds if $A$ is the interpolant of an allowable graph $f:B\to B$. Observe that, if $A\prec_{I} B$, then $A$ and $B$ are interderivable and $A\hookrightarrow B$. When $A\prec_{I}B$, we will say that $B$ \emph{linearly collapses} into $A$. If for no $A$, $A\prec_{I}B$, then $B$ will be called \emph{$\prec_{I}$-minimal}.

For example, the type $(X\multimap X)\multimap (X\multimap X)$ is not $\prec_{I}$-minimal, as it linearly collapses into the $\prec_{I}$-minimal type $X\multimap X$. 

Observe that, if $A$ linearly collapses into $B$ and $B$ linearly collapses into $C$, then $A$ linearly collapses into $C$.

%
%
%
%
%
%
%

Let $\ell(A)$ indicate the number of variable occurrences in $A$ (equivalently, the length of $vA$). If $A\prec_{I} B$, then $\ell(A)< \ell (B)$. Types with different lengths cannot be isomorphic:

\begin{proposition}\label{iso}
If $\ell(A)< \ell(B)$, then $A$ is not isomorphic to $B$.
\end{proposition}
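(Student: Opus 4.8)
The plan is to prove the contrapositive: if $A$ and $B$ are isomorphic — i.e. there are correct graphs $f\colon A\to B$ and $g\colon B\to A$ with $g\circ f=id_A$ and $f\circ g=id_B$ — then $\ell(A)=\ell(B)$. The key observation is that, since graph composition is directed‑path composition, these two equations force $f$ (and $g$) to be \emph{pure}, i.e. to contain only type I edges; and a pure correct graph manifestly preserves the number of positive and of negative leaves.

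First I would analyse the composite $g\circ f$ concretely. Put the two matchings side by side on $vA_1 + vB + vA_2$, two copies of $vA$ glued through the shared middle $vB$: every leaf of $vB$ lies on exactly one edge of $f$ and exactly one edge of $g$, whereas every leaf of $vA_1$ (resp. $vA_2$) lies on exactly one edge of $f$ (resp. $g$) and on none of the other. Hence the connected components of $f\cup g$ that meet $vA_1+vA_2$ are simple paths with both endpoints in $vA_1+vA_2$ — the components that are cycles are internal to $vB$ and, by the Lemma preceding the definition of composition, do not exist — and the edges of the graph $g\circ f$ are precisely these maximal paths, read endpoint to endpoint.

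Now I would rule out non‑type‑I edges of $f$. A type II edge of $f$ joins two leaves of $vA_1$, each of degree $1$ in $f\cup g$, so it forms an isolated component: a path with both endpoints in $vA_1$, which would be a type II edge of $g\circ f$. But $g\circ f=id_A$ is pure and has no type II edge; hence $f$ has no type II edge. Symmetrically, laying $g$ and $f$ on $vB_1+vA+vB_2$ to compute $f\circ g$, a type III edge of $f$ joins two leaves of $vB_2$, each of degree $1$, giving a path with both endpoints in $vB_2$, i.e. a type III edge of $f\circ g=id_B$ — impossible since $id_B$ is pure. So $f$ has no type III edge either, and $f$ is pure. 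A pure correct graph $f\colon A\to B$ is a bijection $vA^{+}+vB^{-}\to vA^{-}+vB^{+}$ all of whose edges are type I; such edges issuing from $vA^{+}$ can only land in $vB^{+}$, and those landing in $vA^{-}$ can only issue from $vB^{-}$, so $f$ restricts to bijections $vA^{+}\to vB^{+}$ and $vA^{-}\to vB^{-}$. Therefore $|vA^{+}|=|vB^{+}|$ and $|vA^{-}|=|vB^{-}|$, whence $\ell(A)=|vA^{+}|+|vA^{-}|=|vB^{+}|+|vB^{-}|=\ell(B)$, contradicting $\ell(A)<\ell(B)$.

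The step I expect to require the most care is the second one: making airtight the identification of the edges of a composite graph with the maximal alternating paths of the glued matchings, together with the correct bookkeeping of polarities needed to read off the endpoints of such a path as a type I, II, or III edge of the composite. Once that correspondence is set up cleanly, the purity of $f$ and the counting of leaves are routine.
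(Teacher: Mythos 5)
Your proof is correct and is essentially the paper's own argument read contrapositively: the paper observes directly that when $\ell(A)<\ell(B)$ the counting forces $f$ and $g$ to contain type III and type II edges, which survive into the composite $f\circ g:B\to B$, so it cannot be the identity, whereas you first deduce purity of $f$ from the two identities and then count leaves. The key fact in both versions is the same, namely that a non-type-I edge of a factor is an isolated component of the glued matchings and hence persists as a non-type-I edge of the composite, which the (pure) identity graph excludes.
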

\begin{proof}
Suppose $f:A\to B$ and $g:B\to A$, then $f\circ g:B\to B$ contains both type II and type III edges, hence it cannot be the identity.

\end{proof}

We deduce that, if $A$ linearly collapses into $B$, then $A$ cannot be isomorphic to $B$.

The following is an immediate consequence of interpolation:

\begin{lemma}[linear collapse lemma for $\mathit{IMLL}^{-}$]\label{colla}
For any $A\in \CC L_{\multimap,\otimes}$, if $f:A\to A$ in $\CC A$ contains at least one type III (or type II) edge, then $A$ is not $\prec_{I}$-minimal.
\end{lemma}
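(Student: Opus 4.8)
The plan is to read the statement off from proof net interpolation (Theorem~\ref{interpolation}): essentially all the work has already been done, and what remains is a short bookkeeping argument guaranteeing that the interpolant produced is a \emph{proper} one.

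First I would apply Theorem~\ref{interpolation} to the allowable graph $f:A\to A$. This yields a type $I$, the interpolant of $f$, together with allowable graphs $f_{II}\cup f_{I}:A\to I$ and $f_{I}\cup f_{III}:I\to A$, the split correction graph having the shape of figures~\ref{fig1}--\ref{fig2}. By the definition of the interpolation order, $f$ itself witnesses $I\prec_{I}A$; so it is enough to check that $I\neq A$ in order to conclude that $A$ is not $\prec_{I}$-minimal.

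To see $I\neq A$ I would count variable occurrences. Writing $vA$ for the (common) signed set of the two copies of $A$, the graph $f$ is a bijection $vA^{+}+vA^{-}\to vA^{-}+vA^{+}$, so $|f|=|vA^{+}|+|vA^{-}|=\ell(A)$, while trivially $|f|=|f_{I}|+|f_{II}|+|f_{III}|$. On the other hand, in the split correction graph every leaf of $I$ occurs once positively on the $\Gamma$-side (in $I^{+}$) and once negatively on the $A$-side (in $I^{-}$), joined to each other by a $cut$ edge and to a leaf of $A$ by a type~I edge, whereas the type~II edges of $f$ remain entirely inside the domain copy of $A$ and the type~III edges remain entirely inside the codomain copy; hence the leaves of $I$ are in bijection with the $cut$ edges, i.e.\ with the type~I edges of $f$, so $\ell(I)=|f_{I}|=\ell(A)-|f_{II}|-|f_{III}|$. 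Since by hypothesis $f$ has at least one type~III edge---or, symmetrically, at least one type~II edge---we get $\ell(I)<\ell(A)$, so in particular $I\neq A$. Thus $I\prec_{I}A$ with $I\neq A$, and $A$ is not $\prec_{I}$-minimal.

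The one point that is not entirely routine is the identity $\ell(I)=|f_{I}|$, which must be extracted either from the interpolation construction itself (appendix~\ref{appA}, following \cite{deGroote1996}) or directly from the shape of the split correction graph in figure~\ref{fig5}, where one reads off that the leaves of the interpolant are precisely the ``axiom links'' threading the cut. Alternatively one may bypass the counting: by the remark recorded just before the lemma, $A\prec_{I}B$ already implies $\ell(A)<\ell(B)$, so it suffices to observe that the interpolant of $f$ can equal $A$ only when $f$ is pure, which is precisely the contrapositive of the claim.
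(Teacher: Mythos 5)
Your proof is correct and matches the paper's treatment: the paper states the lemma as an ``immediate consequence'' of Theorem~\ref{interpolation} with no further argument, and your counting ($\ell(I)=|f_{I}|<|f_{I}|+|f_{II}|+|f_{III}|=\ell(A)$, hence $I\neq A$) is exactly the bookkeeping that justifies calling the interpolant a \emph{proper} one. The only slip is cosmetic: in the split correction graph there is a single $cut$ link joining the roots of $I^{+}$ and $I^{-}$, not one per leaf, but the identity $\ell(I)=|f_{I}|$ still follows, as you say, from the fact that every leaf of $I$ is the endpoint of exactly one type~I edge.
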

%
%
%

The linear collapse lemma states that if a type $A$ is $\prec_{I}$-minimal, then any correct graph $f:A\to A$ must be pure (hence a permutation of the identity graph). As an example we mention the graph $f:D\to D$, where $D=(X\multimap X)\multimap X$ shown in figure \ref{isom}: as the graph $f$ is not pure, the type $D$ is not minimal with respect to linear collapse. Indeed $D$ linearly collapses onto $X$ and the two graphs $g:D\to X, h:X\to D$, shown in figure \ref{isom2}, are obtained by decomposing $f$.

\subsection{Weak proof net interpolation in $\lambda_{\multimap}$}\label{sec32}

We now consider proof net interpolation for $\lambda_{\multimap}$. Interpolation \ref{interpo1} fails for this fragment: the interpolant of the graph $f: Y,Z\to (Y\multimap Z\multimap X)\multimap X$ is the type $Y\otimes Z\notin \CC L_{\multimap}$ and there is no interpolant for $f$ in $\CC L_{\multimap}$. 

It is well-known that interpolation similarly fails in the $\To$-fragment of propositional intuitionistic logic (see \cite{Kanazawa}). One usually considers a weak form of interpolation, which consists in admitting multiple interpolants. In the case of $\lambda_{\multimap}$, we can formulate weak interpolation as follows:

\begin{definition}[weak $\lambda_{\multimap}$ interpolation problem]\label{interpo2}

Given $f: \Gamma\to A$ in $\CC A$, where $f_{I},f_{II},f_{III}\subseteq f$ denote its type I, type II and type III parts, respectively, find a partition $\Gamma_{1},\dots, \Gamma_{p}$ of $\Gamma$, a partition $f_{I}^{1},\dots, f_{I}^{p}$ of $f_{I}$, a partition $f_{II}^{1},\dots, f_{II}^{p}$ of $f_{II}$ and types $I_{1},\dots, I_{p}$ such that $f_{II}^{i}\cup f_{I}^{i}: \Gamma_{i}\to I_{i}$ and $f_{I}\cup f_{III}: I_{1},\dots, I_{p}\to A$ in $\CC A$.
%
\end{definition}

Weak interpolation forces $I_{i}\hookrightarrow A$ and $I_{i}\hookrightarrow \Gamma_{i}$ for all $1\leq i\leq p$. 

 The weak interpolation problem can be formulated by means of graphs with cuts as follows: given a graph $f:\Gamma\to A$ in $\CC A$ find a partition $\Gamma_{1},\dots, \Gamma_{p}$ of $\Gamma$ and $p$ types $I_{1},\dots, I_{p}$ such that the correction graph (figure \ref{fig1}) can be split into a graph with $p$ splitting cuts as in figure \ref{fig3}, preserving correctness.

\begin{figure}
\begin{center}
\resizebox{0.6\textwidth}{!}{
\begin{tikzpicture}

\draw[] (0,0) to (1,0) to (2,1) to (-1,1) -- cycle;
\draw[] (8,-2.5) to (10.5,-1.5) to (5.5,-1.5) -- cycle;

\node(g) at (0.5,-0.3) {$\Gamma_{1}$};
\node(a) at (8,-2.8) {$A$};

\draw[thick] (-1,1.2) to [bend left=55] (0.2,1.2);
\draw[thick] (-0.4,1.2) to [bend left=55] (0.8,1.2);
\node(d1) at (-0.65,1.2) {\scriptsize$\dots$};
\node(d2) at (0.55,1.2) {\scriptsize$\dots$};

\draw[thick] (8.5,-1.3) to [bend left=55] (9.7,-1.3);
\draw[thick] (9.1,-1.3) to [bend left=55] (10.3,-1.3);
\node(d1) at (8.85,-1.3) {\scriptsize$\dots$};
\node(d2) at (10.05,-1.3) {\scriptsize$\dots$};

\draw[] (2.5,1) to (3.5,1) to (3,0) -- cycle;
\draw[] (4,1) to (5,1) to (4.5,0) -- cycle;
\node(i1) at (3,-0.3) {$I^{+}_{1}$};
\node(i2) at (4.5,-0.3) {$I^{-}_{1}$};

\draw[thick] (i1) to [bend right=55] node[below] {$cut$} (i2);

\draw[thick] (1.2,1.2) to [bend left=55] (2.5,1.2);
\draw[thick] (1.8,1.2) to [bend left=55] (3.1,1.2);
\node(d1) at (1.55,1.2) {\scriptsize$\dots$};
\node(d2) at (2.85,1.2) {\scriptsize$\dots$};

\draw[thick] (4.2,1.2) to [bend left=85] (7.7,-1.3);
\draw[thick] (4.8,1.2) to [bend left=85] (8.3,-1.3);
\node(d1) at (4.55,1.2) {\scriptsize$\dots$};
\node(d2) at (8.05,-1.3) {\scriptsize$\dots$};

\draw[] (0,-5) to (1,-5) to (2,-4) to (-1,-4) -- cycle;

\node(g1) at (0.5,-5.3) {$\Gamma_{p}$};

\draw[thick] (-1,-3.8) to [bend left=55] (0.2,-3.8);
\draw[thick] (-0.4,-3.8) to [bend left=55] (0.8,-3.8);
\node(d1) at (-0.65,-3.8) {\scriptsize$\dots$};
\node(d2) at (0.55,-3.8) {\scriptsize$\dots$};

\draw[] (2.5,-4) to (3.5,-4) to (3,-5) -- cycle;
\draw[] (4,-4) to (5,-4) to (4.5,-5) -- cycle;
\node(i1) at (3,-5.3) {$I^{+}_{p}$};
\node(i2) at (4.5,-5.3) {$I^{-}_{p}$};

\draw[thick] (i1) to [bend right=55] node[below] {$cut$} (i2);

\draw[thick] (1.2,-3.8) to [bend left=55] (2.5,-3.8);
\draw[thick] (1.8,-3.8) to [bend left=55] (3.1,-3.8);
\node(d1) at (1.55,-3.8) {\scriptsize$\dots$};
\node(d2) at (2.85,-3.8) {\scriptsize$\dots$};

\draw[thick] (4.2,-3.8) to [bend left=95] (5.7,-1.3);
\draw[thick] (4.8,-3.8) to [bend left=95] (6.3,-1.3);
\node(d1) at (4.55,-3.8) {\scriptsize$\dots$};
\node(d2) at (6.05,-1.3) {\scriptsize$\dots$};

\node(dd) at (7,-1.3) {\scriptsize$\dots$};
\node(vd) at (0.5,-1.8) {$\vdots$};
\node(vd) at (3.75,-1.8) {$\vdots$};

\end{tikzpicture}}
\end{center}
\caption{Correction graph after weak interpolation}
\label{fig3}
\end{figure}
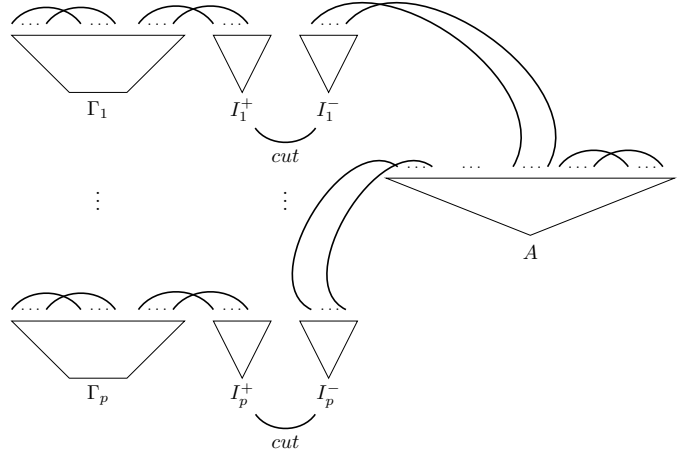
 


%
%
%
%
%
%
%
%
%

By an argument similar to the one in \cite{deGroote1996} we prove that weak interpolation problems admit solutions in $\lambda_{\multimap}$. 

\begin{theorem}[weak $\lambda_{\multimap}$-interpolation]\label{interpolation2}
Any weak interpolation problem for $\lambda_{\multimap}$ admits a solution.
\end{theorem}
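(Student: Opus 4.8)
The plan is to adapt the proof-net interpolation algorithm of \cite{deGroote1996} for $\mathit{IMLL^-}$ (which underlies Theorem \ref{interpolation}), the only substantial change being that, since $\CC L_{\multimap}$ lacks the tensor, a single interpolant no longer suffices: wherever de Groote's construction would introduce a $\otimes$, I instead split the problem, producing a graph with $p$ splitting cuts as in Figure \ref{fig3} rather than a graph with one cut as in Figure \ref{fig2}. So the target is the reformulation of Definition \ref{interpo2} in terms of graphs with cuts, and the whole argument is a structural induction that keeps the partition of $\Gamma$, of $f_I$ and of $f_{II}$ in the exact shape that definition demands.

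Concretely, using sequentialization (Theorem \ref{seque}) I would replace the given correction graph $f:\Gamma\to A$ by a normal $\lambda_{\multimap}$-term $u$ with $\Gamma\vdash u:A$, so that $f=\CC G(u)$, and argue by induction on $u$ (equivalently, on a sequent derivation of $\Gamma\vdash u:A$). For the axiom $x:A\vdash x:A$ the graph is pure, and the trivial data $p=1$, $\Gamma_1=\Gamma$, $I_1=A$, all type I edges on both sides, solves the problem. If the last rule is $\multimap$E, the context splits as $\Gamma=\Gamma',\Delta$ with $\Gamma'\vdash u':B\multimap A$ and $\Delta\vdash v:B$; applying the induction hypothesis to $u'$ and to $v$ yields families of interpolants over partitions of $\Gamma'$ and of $\Delta$, and I assemble the required family over the induced partition of $\Gamma$ by grafting the corresponding correction graphs along the eliminated occurrence of $B$. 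If the last rule is $\multimap$I, $u=\lambda x^{B}.u'$ with $\Gamma,x:B\vdash u':C$ and $A=B\multimap C$; the induction hypothesis gives interpolants over a partition $\Delta_1,\dots,\Delta_q$ of $\Gamma,x:B$, and the block $\Delta_{j_0}$ containing $x:B$ must be reprocessed, since the variable occurrences of $B$ now lie on the $A$-side of the cut — one recomputes that block's contribution against the enlarged conclusion $B\multimap C$, possibly \emph{raising} the number of interpolants (this is exactly where the move from one to $p$ cuts is forced), while the remaining blocks are carried over unchanged.

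The invariant to maintain is that every intermediate decomposition is genuinely a \emph{graph with $p$ splitting cuts} in the sense fixed just before Figure \ref{fig5}: no type I edge links $\Gamma$ to $A$ directly, no type II edge links two distinct cuts, and — the crux — both resulting correction graphs, $f^i_{II}\cup f^i_{I}:\Gamma_i\to I_i$ and $f_I\cup f_{III}:I_1,\dots,I_p\to A$, still satisfy the essential-net criterion: connectedness, acyclicity, and functionality at every $\multimap^{+}$ node. I expect the main obstacle to be precisely the preservation of this criterion under the grafting and re-cutting operations, and above all the verification that functionality at the freshly created $\multimap^{+}$ node is not broken when a hypothesis migrates across the cut in the $\multimap$I case; a secondary point is the bookkeeping needed to guarantee that the number of interpolants stays finite and that the evolving partitions genuinely match the statement. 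These are the same difficulties met in \cite{deGroote1996}, and the full verification is carried out in Appendix \ref{appA}.
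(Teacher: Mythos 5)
Your strategy is not the one the paper follows: the appendix does not sequentialize and induct on the term; it works directly on the correction graph, replaces the type I edges by $cut$ links, and then applies a local rewriting algorithm (lemma \ref{algo}) that merges a $cut^{-}$ with an adjacent $cut^{+}$ into a single cut on an implication, eliminates all negative cuts, and finally invokes a splitting property of correct graphs (proposition \ref{splitprop}) to partition the context; acyclicity and functionality are checked once per local rewriting step. That is the de Groote route, and it is self-contained.

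The induction you propose does not close as stated. In the $\multimap$I case the induction hypothesis gives you interpolants of $\Gamma, x{:}B\vdash u'{:}C$ \emph{against the conclusion $C$}; you then say the block $\Delta_{j_0}$ containing $x{:}B$ must be ``reprocessed \dots against the enlarged conclusion $B\multimap C$''. But that reprocessing is itself a weak interpolation problem and it is not smaller than the one you started from: the induction hypothesis may return $q=1$ with $\Delta_{1}=\Gamma, x{:}B$, in which case the block to be reprocessed is all of $\Gamma$ and the ``recomputation'' is exactly the original instance. No measure decreases, so this step is circular rather than an appeal to the induction hypothesis. The application case has the same defect in another guise: the interpolants obtained for $\Gamma'\vdash u'{:}B\multimap A$ interpolate against $B\multimap A$ and may inject into the $B$ part; after composition with $\Delta\vdash v{:}B$ those occurrences are consumed by the cut and become type II edges joining $\Gamma'$ to $\Delta$, so blocks coming from the two premises must be merged and re-interpolated --- again a fresh problem the induction hypothesis does not cover, and the grafted $I_{i}$ would in general fail $I_{i}\hookrightarrow A$. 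The standard repair is Maehara's method: prove a stronger statement for an arbitrary two-sided split $\Gamma;\Delta\vdash A$, seeking interpolants between $\Gamma$ and $\Delta, A$, so that in the $\multimap$I case the abstracted hypothesis is parked on the conclusion side and nothing is reprocessed; you would then still have to verify by hand the acyclicity and functionality of both halves of the correction graph in each case. Either carry out that generalization, or switch to the paper's cut-rewriting algorithm, where these verifications are localized to three explicit graph transformations.
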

\begin{proof}
Proof in appendix \ref{appA}.
\end{proof}

\begin{remark}\label{unique}
For a graph $f:A\to B$, weak interpolation coincides with interpolation. Indeed, by definition \ref{interpo2}, if the weak interpolation problem for $f$ has a solution, then there exist types $I_{1},\dots, I_{p}$ and partitions $f_{I}^{i},f_{II}^{i}$ such that, for one $i$, say $i=c$, $f_{II}^{i}\cup f_{I}^{i}:A\to I_{c}$ and for all $i\neq c$, $f_{II}^{i}\cup f_{I}^{i}: \emptyset \to I_{i}$. However, as there can be no type III free graph $g:\emptyset \to I_{i}$, it must be $p=1$ and $I_{c}$ is the unique interpolant of $f$. 

\end{remark}

Remark \ref{unique} allows to extend the interpolation order to $\lambda_{\multimap}$: $A\prec_{I} B$ if $A$ is the unique interpolant of an arrow $f:B\to B$ in $\CC A$. The linear collapse lemma can then be immediately extended to $\lambda_{\multimap}$:
\begin{lemma}[linear collapse lemma for $\lambda_{\multimap}$]\label{wcolla}
For any type $A$, if $f:A\to A$ is allowable and contains at least one type III (or type II) edge, then $A$ is not $\prec_{I}$-minimal.
\end{lemma}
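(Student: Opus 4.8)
The plan is to transpose the proof of the linear collapse lemma for $\mathit{IMLL^-}$ (Lemma \ref{colla}), using weak interpolation (Theorem \ref{interpolation2}) in place of ordinary interpolation. Given an allowable $f:A\to A$ with at least one type III edge, I would first read $f$ as a weak $\lambda_{\multimap}$-interpolation problem with $\Gamma=\{A\}$. Since the domain is a single shape, Remark \ref{unique} forces $p=1$, so a solution consists of a single interpolant $I\in\CC L_{\multimap}$ together with allowable graphs $h=f_{II}\cup f_{I}:A\to I$ and $g=f_{I}\cup f_{III}:I\to A$. By the (extended) definition of the interpolation order this gives $I\prec_{I}A$, once we know $I\neq A$.

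It then remains to show that the presence of a type III edge makes the splitting strict, i.e.\ $\ell(I)<\ell(A)$. For this I would use a leaf count: for any graph $k:S\to T$ with $a$ edges of type I, $b$ of type II and $c$ of type III one has $\ell(S)=a+2b$ and $\ell(T)=a+2c$, since every leaf of $S$ (resp.\ $T$) lies on exactly one edge of $k$, and type I edges have one endpoint in each of $S,T$ while type II (resp.\ III) edges have both in $S$ (resp.\ $T$). Apply this to $g=f_{I}\cup f_{III}:I\to A$. By construction $g$ has no type II edge, while it retains every type III edge of $f$, so its type III count $c_{g}$ is $\geq 1$; hence $\ell(A)=\ell(I)+2c_{g}>\ell(I)$. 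Thus $I\neq A$ and $I\prec_{I}A$, so $A$ is not $\prec_{I}$-minimal. The ``(or type II)'' case is symmetric: if $f$ has a type II edge, the same count applied to $h=f_{II}\cup f_{I}:A\to I$ (which has no type III edge and keeps all type II edges of $f$) gives $\ell(A)>\ell(I)$; alternatively one may pass to the dual graph $\OV f:\OV A\to\OV A$, which interchanges the two edge types.

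The substantive content is Theorem \ref{interpolation2}, proved in appendix \ref{appA}; granted that, the rest is bookkeeping, and there is no real obstacle. The two points deserving attention are (i) the invocation of Remark \ref{unique}, needed so that weak interpolation of an arrow $A\to A$ actually produces a \emph{single} interpolant and hence makes the relation $\prec_{I}$ applicable, and (ii) the strict inequality $\ell(I)<\ell(A)$, which is precisely where the hypothesis ``at least one type III (or type II) edge'' enters.
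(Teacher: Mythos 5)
Your proof is correct and follows exactly the route the paper intends: the paper states the lemma as an immediate consequence of weak interpolation (Theorem \ref{interpolation2}) together with Remark \ref{unique}, which is precisely your combination. Your leaf-count argument $\ell(A)=\ell(I)+2\,|f_{III}|$ just makes explicit the strictness that the paper leaves implicit (its claim that $A\prec_{I}B$ implies $\ell(A)<\ell(B)$), so nothing is missing.
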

%
%
%

The linear collapse lemma states that if a type $A$ is $\prec_{I}$-minimal, then any correct graph $f:A\to A$ must be pure a permutation of the identity graph. 

\subsection{The positivity lemma for $\mathit{IMLL^{-}}$}\label{4}

By exploiting interpolation, we prove a fundamental lemma which characterizes the types which have a positive action over allowable graphs: if $A$ is such a type (relatively to a variable $X$) then $A$ collapses into a type containing only positive occurrences of $X$. 

For any variable $X$, we say that a type $A\in\CC L_{\multimap}$ is \emph{p-$X$} (resp. \emph{n-$X$}) if it has no negative (resp. positive) occurrence of $X$.
%
%
%
%
%
%
%
%
%

\begin{definition}[definable $\CC A$-morphism]\label{defifunc}

A \emph{covariant (resp. contravariant) $\CC A$-morphism} is given by a map $F$ over shapes along with a map over allowable graphs in any polynomial extension $\CC A[x_{1},\dots, x_{n}]$ of $\CC A$ such that:
\begin{itemize}
\item if $f:S\to T$ in $\CC A[x_{1},\dots, x_{n}]$, then $F(f): F(S)\to F(T)$ (resp. $F(f):F(T)\to F(S)$ in $\CC A[x_{1},\dots, x_{n}]$). 
\item $F(f)$ preserves labelings: if $f:A\to B$, then $F(f):F(A)\to F(B)$.
\end{itemize}

Given $X\in \CC V$, a type $A\in \CC L_{\multimap}$ is a \emph{definable covariant (resp. contravariant) $\CC A$-morphism in $X$} if there is a covariant (resp. contravariant) $\CC A$-morphism $F$ such that, for any shape $S$, $F(S)=A[S/X]$.

%
%

\end{definition}

%
%
%
%
%

Since p-$X$ (resp- n-$X$) types are correspond to endofunctors $A:\CC A[x_{1},\dots, x_{n}]\to \CC A[x_{1},\dots, x_{n}]$ (resp. $A: \CC A^{op}[x_{1},\dots, x_{n}]\to \CC A[x_{1},\dots, x_{n}]$), they are in particular definable $\CC A$-morphisms.

\begin{proposition}\label{sopra}
For all $X\in \CC V$, if $A\in \CC L_{\multimap,\otimes}$ is p-$X$ (resp. n-$X$), then it is a definable covariant (resp. contravariant) $\CC A$-morphism in $X$.
\end{proposition}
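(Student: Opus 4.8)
The plan is to produce, for a p-$X$ type $A$, the required covariant $\CC A$-morphism $F$ by equipping $A$ with its \emph{functorial action} on allowable graphs, in exact analogy with the bifunctorial action of linear types over $\CC T$ and $\CC F$ recalled in Section~\ref{sec23}; since $A$ has no negative occurrence of $X$, only the covariant half of that action is needed, and the n-$X$ case is dual. On objects one simply puts $F(S)=A[S/X]$ for every shape $S$.

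Concretely, I would define simultaneously, by induction on a p-$X$ (resp.\ n-$X$) type $A$ and for every graph $f\colon S\to T$ of $\CC A$ — or, more generally, of any polynomial extension — a graph $A[f]\colon A[S/X]\to A[T/X]$ (resp.\ $A[f]\colon A[T/X]\to A[S/X]$) by the clauses
\begin{align*}
X[f] &= f, & Y[f] &= id_{Y}\quad (Y\neq X),\\
(B\multimap C)[f] &= B[f]\multimap C[f], & (B\otimes C)[f] &= B[f]\otimes C[f],
\end{align*}
where $g\multimap h$ and $g\otimes h$ denote the functorial actions of $\multimap$ and $\otimes$ on allowable graphs (on edge sets both are the disjoint union $g\cup h$, with source and target rearranged according to the polarities of the two subtrees), and where in the $\multimap$-clause the recursive call on $B$ is taken with the variance opposite to that of $B\multimap C$ and the call on $C$ with the same variance, while in the $\otimes$-clause both calls have the same variance as $B\otimes C$ (this is exactly the polarity discipline that keeps the recursion within p-$X$ and n-$X$ types). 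A routine polarity bookkeeping confirms that the source and target of $A[f]$ are $A[S/X]$ and $A[T/X]$ in the stated order; in fact $A[f]$ is, on edge sets, the disjoint union of one copy of $f$ for each occurrence of $X$ in $A$, together with one identity edge for each leaf of $A$ labelled by a variable other than $X$.

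It then remains to verify three things. (i) \emph{Allowability}: if $f$ is allowable then so is $A[f]$, by induction on $A$, the base cases being $id_{Y}$ (one type I edge, trivially correct) and $f$ itself, and the inductive steps amounting to the statement that $\otimes$ and $\multimap$ act functorially on $\CC A$, namely that grafting the correction graphs of $g$ and $h$ below a fresh $\otimes$- (resp.\ $\multimap$-) node produces again a connected DAG satisfying the functionality condition. (ii) \emph{Preservation of labelings}: immediate, since $X[f]=f$ and the $id_{Y}$ are label-preserving and both $\multimap$ and $\otimes$ preserve labels. (iii) \emph{Uniformity over polynomial extensions}: the clauses only use the identities and the tensor and internal-hom actions on graphs, i.e.\ the monoidal closed structure, which every $\CC A[x_{1},\dots,x_{n}]$ inherits from $\CC A$, so the definition applies verbatim. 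Taking $F(S)=A[S/X]$ and $F(f)=A[f]$ then gives a covariant (resp.\ contravariant) $\CC A$-morphism in $X$ with $F(S)=A[S/X]$, as required by Definition~\ref{defifunc}.

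The only step that is not pure bookkeeping is (i), so the main obstacle is verifying that the tensor and internal-hom constructions send allowable graphs to allowable graphs — essentially the relevant fragment of the fact that $\CC A$ is a symmetric semi-monoidal closed category, which can either be invoked from the literature (\cite{Blute1996,Hughes2012,Hughes2017}) or checked by hand: acyclicity holds because every new path segment of the grafted correction graph passes through the new node, and functionality of a $\multimap^{+}$ node is inherited from the components for the old nodes and follows from acyclicity and the shape of the grafting for the new one. The same induction moreover shows that $A[-]$ is functorial in the strict sense ($A[id_{S}]=id_{A[S/X]}$ and $A[g\circ f]=A[g]\circ A[f]$), a fact that Definition~\ref{defifunc} does not demand but which will be convenient for the converse direction, the positivity lemma~\ref{functor}.
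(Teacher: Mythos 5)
Your proof is correct and follows exactly the route the paper intends: the paper's own proof is just the phrase ``By induction on $A$,'' and your argument is the natural expansion of that induction, defining $A[f]$ via the tensor and internal-hom actions with the usual polarity discipline and checking allowability, label preservation, and stability under polynomial extensions. Nothing further is needed.
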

\begin{proof}
By induction on $A$. 
\end{proof}

Conversely, a $\CC A$-morphism need not be an endofunctor over $\CC A$. If a type $A$ is logically equivalent to a p-$X$ type $B$, then one can define a positive action over allowable graphs for $A$ starting from the covariant functorial action of $B$. However, this action need not be functorial. For instance, since the type $A=(X\multimap X)\multimap X$ collapses into $X$, one can define a covariant action of $A$ over allowable graphs by pre-composing and post-composing the functorial action of $X$ with the arrows $g:A\to X$ and $h:X\to A$. However, with such definitions, $A$ is not a functor, as the graph $A(id_{X})= h\circ id_{X} \circ g$ has a type III edge, so $A(id_{X})\neq id_{X}$.

We now show that a definable $\CC A$-morphism always collapses into a p-$X$ type. By reasoning as above, this fact shows that p-$X$ (resp. n-$X$) types are essentially the unique definable covariant endofunctors $A:\CC A\to \CC A$ (resp. $A:\CC A^{op}\to \CC A$).

\begin{proposition}[positivity lemma]\label{functor}
For all $X\in \CC V$, if $A\in \CC L_{\multimap,\otimes}$ is a definable covariant (resp. contravariant) $\CC A$ morphism in $X$, then either $A$ is p-$X$, or $A$ linearly collapses into a p-$X$ (resp. n-$X$) type.
\end{proposition}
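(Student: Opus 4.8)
The plan is to argue by induction on the structure of $A$, using interpolation (theorem \ref{interpolation2}) to perform the collapse at the base of the argument. Let $F$ be a covariant $\CC A$-morphism in $X$ witnessing that $A$ is a definable covariant $\CC A$-morphism, so $F(S) = A[S/X]$ for every shape $S$. The key object to exploit is the graph $F(id_X) : A[X/X] \to A[X/X]$, i.e. $F(id_X) : A \to A$, which is an allowable graph from $A$ to $A$. By the linear collapse lemma for $\lambda_{\multimap}$ (lemma \ref{wcolla}, or lemma \ref{colla} in the full $\mathit{IMLL}^-$ setting), if this graph contains any type II or type III edge, then $A$ is not $\prec_I$-minimal, and hence $A$ linearly collapses into some type of strictly smaller length; one then wants to see that the interpolant one lands on is p-$X$ (or n-$X$ in the contravariant case). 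So the crux is: \emph{analyze the shape of the $X$-pairing induced by $F(id_X)$}, and show that the only way $F(id_X)$ can be pure (a permutation of $id_A$) is when $A$ is already p-$X$, while in the impure case the interpolation split isolates a p-$X$ interpolant.

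First I would set up the functoriality constraints: because $F$ is a morphism on \emph{all} of $\CC A$ (and its polynomial extensions), we may feed it the graphs $id_X$, the "twist"/non-identity allowable graphs on types like $X \multimap X$, and more importantly graphs of the form $e : Y \to X$ arising in a polynomial category $\CC A[e]$ with a variable arrow $e : \emptyset \to (Y \multimap X)$ — actually, the cleanest device is to take a genuine arrow $g_C : C \to X$ and $h_C : X \to C$ for a chosen type $C$ and track $F(g_C)$, $F(h_C)$. The point is that covariance forces $F(id_X) = F(\,\cdot\,)$ to be compatible with composition, so $F(id_X)$ is idempotent: $F(id_X) \circ F(id_X) = F(id_X \circ id_X) = F(id_X)$. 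An idempotent allowable graph from $A$ to $A$ that is not pure must, by the linear collapse lemma, factor through a strictly shorter type $B$ as $A \xrightarrow{f_2} B \xrightarrow{f_1} A$ with $f_1 \circ f_2 = F(id_X)$ and (choosing the interpolant of $F(id_X)$ for $B$) $B = I$ with $I \hookrightarrow A$. I would then show the retraction $f_2 : A \to I$, $f_1 : I \to A$ can be taken with $f_1 \circ f_2$ idempotent and $f_2 \circ f_1 = id_I$, i.e. $A$ genuinely collapses onto $I$; iterating, we reach a $\prec_I$-minimal type $A_0$ with $A \preceq_I A_0$ (via a chain of linear collapses) on which the induced morphism $F_0(id_X)$ is pure.

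The heart of the argument is then the \textbf{purity case}: suppose $F(id_X) : A \to A$ is pure, i.e. a permutation of the identity. I want to conclude $A$ is p-$X$. Consider the $X$-pairing of $A \multimap A$ induced by $F(id_X)$: purity means every edge is type I, so each positive occurrence of $X$ in the conclusion-copy of $A$ is linked to a positive occurrence of $X$ in the premiss-copy, and dually for negative occurrences. Now use functoriality against a non-trivial arrow: take $C = X \multimap X$ with its two allowable graphs — the identity $id_C$ and the twist. Functoriality of $F$ says $F$ sends the twist on $C$ (composed appropriately with $g_C, h_C : C \leftrightarrow X$, or directly if $C$ is substituted for $X$) to an allowable graph on $A[C/X]$, and the coherence equations relating $F(id_C)$, $F(\text{twist})$, together with the positions of the $X$-leaves, force: every negative occurrence of $X$ in $A$ must "see" (via the essential-net paths in the shape of $A$) a positive occurrence — more precisely, a negative $X$ would create, under $F$ applied to a suitable arrow out of $X$, a type II or type III edge in $F$ of that arrow, violating the label/correctness constraints or the acyclicity/functionality conditions of the correction graph. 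Formalizing this is where I expect the \textbf{main obstacle} to lie: one must carefully unwind what "covariant $\CC A$-morphism on polynomial extensions" lets us substitute, and check that the essential-net correctness criterion (acyclicity + functionality) together with label-preservation genuinely rules out negative occurrences of $X$ surviving a pure $F(id_X)$. The induction then closes: if $A$ is $\prec_I$-minimal and $F(id_X)$ is pure, $A$ is p-$X$; otherwise $A$ linearly collapses into its interpolant $I$, which inherits a definable covariant $\CC A$-morphism structure in $X$ (by pre/post-composing $F$ with the collapse maps, exactly as in the remark preceding the statement), $\ell(I) < \ell(A)$, so by induction $I$ is p-$X$ or collapses further into a p-$X$ type, and linear collapse is transitive — giving that $A$ linearly collapses into a p-$X$ type. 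The contravariant case is dual, replacing "p-$X$" by "n-$X$" and composing with opposite-variance arrows throughout.
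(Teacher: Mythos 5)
There is a genuine gap, and it sits exactly where you flag your ``main obstacle'': the purity case. Your strategy reduces everything to the claim that if $F(id_{X}):A\to A$ is pure then $A$ is p-$X$, but no information about the variance of the occurrences of $X$ in $A$ can be read off $F(id_{X})$ alone --- for $A=X\multimap X$ (which is not p-$X$ and is $\prec_{I}$-minimal) the only allowable graph $A\to A$ is pure, so your dichotomy would wrongly conclude that $X\multimap X$ is p-$X$ whenever it carries a covariant morphism structure; the actual obstruction to such a structure is only visible when $F$ is applied to a non-endomorphism arrow. Your proposed fix via ``the twist on $X\multimap X$'' does not exist: the non-pure graph $X\multimap X\to X\multimap X$ fails the correctness criterion (it would sequentialize to a non-linear term), so $id$ is the only allowable endomap there and no coherence equation is available. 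A second unjustified step is the idempotency $F(id_{X})\circ F(id_{X})=F(id_{X})$: Definition \ref{defifunc} does not require a definable $\CC A$-morphism to preserve composition or identities, and the paper explicitly exhibits $(X\multimap X)\multimap X$ with $A(id_{X})\neq id_{X}$ just before the statement. Finally, your inductive step needs the interpolant $I$ to inherit a definable covariant $\CC A$-morphism structure; pre/post-composing with the collapse maps $A\leftrightarrow I$ gives arrows only at the unsubstituted level, whereas the definition requires an action on every $I[S/X]$, and transporting the collapse maps across the substitution $[S/X]$ is itself an expansion problem you have not solved.

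The missing idea, which is the whole content of the paper's proof, is to apply the morphism not to $id_{X}$ but to the generic arrow $f:X\to Y$ living in the polynomial extension $\CC A[x]$ with variable arrow $x:\emptyset\to X\multimap Y$ (as a multiarrow, $f:X\multimap Y,\,X\to Y$). The resulting graph $A(f):X\multimap Y,\dots,X\multimap Y,\,A\to A[Y/X]$ automatically separates the variances: every edge over $X$ is type II (since $X$ is absent from the conclusion of $f$), and the positive occurrences of $X$ in $A$ are exactly those whose image $Y$ in the conclusion is reached by a contractible type I chain through a hypothesis $X\multimap Y$. Contracting these chains yields $f^{*}:A\to A^{*}$ with $A^{*}$ having no negative occurrence of $X$; if $f^{*}$ has no type III edge over $Y$ one reads off directly that $A$ is p-$X$, and otherwise a single application of interpolation (Theorem \ref{interpolation}) splits off the type III part over $Y$ and produces an interpolant $A'$ with no occurrence of $Y$ and only positive occurrences of $X$, into which $A$ linearly collapses after renaming $Y\mapsto X$. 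No induction on $\ell(A)$, no purity dichotomy, and no transport of the morphism structure to the interpolant are needed.
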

\begin{proof}
%
%
%

%
Let $A\in \CC L_{\multimap,\otimes}$ be a definable $\CC A$-morphism in $X$. 
Let $Y$ be a variable not occurring free in $A$ and let $f$ be the only labeled graph such that $f:X\multimap Y, X\to Y$ in $m\CC A$. Since $f$ is an arrow $f:X\to Y$ in $\CC A[x]$, where $x:\emptyset\to X\multimap Y$ is a variable arrow, by hypothesis, there exists an arrow $A(f):A\to A[Y/X]$ in $\CC A$, corresponding to a graph $A(f): X\multimap Y,\dots,X\multimap Y, A \to A[Y/X]$ in $m\CC A$. Observe that all type I edges over $Y$ of $A(f)$ are directed upwards (as $Y$ only occurs positively in the hypotheses of $f$) and all edges over $X$ of $A(f)$ are type $II$, as $X$ does not occur in the conclusion of $f$.

\begin{figure}
\begin{center}
\resizebox{0.4\textwidth}{!}{
\begin{tikzpicture}[baseline=-8ex]

\node(a) at (0,0) {$X \ \multimap \ Y$};

\node(d) at (-1.5,0) {$\dots$};
\node(b) at (-3,0) {$X^{+}$};

\node(e) at (0,-2) {$Y^{+}$};

\draw[<-, thick] (0.4,-0.3) to [bend right=15] (e);

\draw[->, thick] (-0.6,-0.3) to [bend left=55] (-3,-0.3);


\end{tikzpicture}
$\qquad\leadsto\qquad$
\begin{tikzpicture}[baseline=-8ex]
%
\node(b) at (-0.4,0) {$X^{+}$};

\node(e) at (0,-2) {$X^{+}$};

\draw[<-, thick] (b) to [bend left=15] (e);


\end{tikzpicture}}
\end{center}
\caption{Contraction of edges}
\label{contra1}
\end{figure}
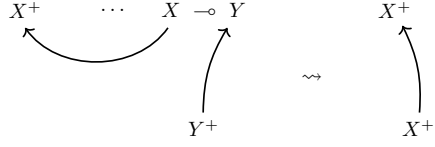
Let $f^{*}$ be the graph obtained from $A(f)$ by contracting all such edges as illustrated in figure \ref{contra1}.
We have then $f^{*}:A\to A^{*}$, where $A^{*}$ is the result of replacing, in $A[Y/X]$ all (positive) occurrences of $Y$ corresponding to type $I$ edges of $A(f)$ by $X$. Observe that $A^{*}$ has no negative occurrence of $X$. Moreover, $f^{*}$ is correct since all the paths in the correction graph of $f^{*}$ come from paths in the correction graph of $A(f)$ (as the latter contains a path from $Y^{+}$ to $X^{-}$).

Now two cases arise: if $f^{*}$ has no type III edge over $Y$, then $A^{*}[Y/X]=A[Y/X]$ and $A$ is p-$X$. Otherwise, if $f^{*}$ has some type III edge over $Y$, we can apply theorem \ref{interpolation} : we find then a type $A'\hookrightarrow A^{*}$ with no occurrence of $Y$ and only positive occurrences of $X$ and two correct graphs $g:A\to A'$ and $h:A'\to A^{*}$, with $h\circ g=f^{*}$. Since all edges over $Y$ in $h$ are type III, by a renaming $Y\mapsto X$ we see that $h:A'\to A$, since $A^{*}[X/Y]=A$. We conclude then that $A'\hookrightarrow A$ and $A,A'$ are interderivable.


\end{proof}

%
%

By proposition \ref{functor} and proposition \ref{iso} we deduce that the property of being p-$X$ (resp. n-$X$) is stable under isomorphism:
\begin{proposition}\label{iso2}
For any $X\in \CC V$ and $A\in \CC L_{\multimap}$ and $B\in \CC L_{\multimap} $ p-$X$ (resp. n-$X$), $A$ is isomorphic to $B$ only if $A$ is p-$X$ (resp. n-$X$).
\end{proposition}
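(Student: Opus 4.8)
The plan is to deduce the statement directly from the positivity lemma (Proposition \ref{functor}) together with the length-monotonicity of linear collapse and Proposition \ref{iso}. Suppose $B\in\CC L_{\multimap}$ is p-$X$ and $A$ is isomorphic to $B$, witnessed by correct graphs $f:A\to B$ and $g:B\to A$ with $g\circ f=id_{A}$ and $f\circ g=id_{B}$. First I would observe that, since $B$ is p-$X$, by Proposition \ref{sopra} it is a definable covariant $\CC A$-morphism in $X$: it comes equipped with a functorial action $B(-)$ on allowable graphs in any polynomial extension. I then transport this action along the isomorphism to obtain a covariant action of $A$: for an arrow $h$ in $\CC A[x_1,\dots,x_n]$ set $A(h) := g' \circ B(h) \circ f'$, where $f',g'$ are the evident labeled instances of $f,g$ at the relevant substitution (they exist and are correct because $f,g$ are correct and isomorphisms are preserved under functorial substitution). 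This makes $A$ a definable covariant $\CC A$-morphism in $X$ in the sense of Definition \ref{defifunc} — label preservation is inherited from $f,g,B(-)$.

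By Proposition \ref{functor}, either $A$ is already p-$X$, in which case we are done, or $A$ linearly collapses into some p-$X$ type $A'$. In the latter case $\ell(A') < \ell(A)$. Now I would use the isomorphism between $A$ and $B$: since isomorphic linear types have the same number of variable occurrences (indeed an isomorphism, being a pair of mutually inverse pure-composing graphs, must be a pure labeled graph, hence a bijection on variable sets — this is implicit in Proposition \ref{iso}), we have $\ell(A) = \ell(B)$. Hence $\ell(A') < \ell(B)$ as well. But $A'$, $A$, $B$ are all interderivable (collapse implies interderivability, and isomorphism certainly does), and $A'$ and $B$ are both p-$X$; moreover, since $A'$ is obtained from $A$ by linear collapse, $A' \hookrightarrow A$, and composing with the isomorphism $A\cong B$ gives that $A'$ injects into $B$.

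The point I expect to require the most care is ruling out the second alternative of Proposition \ref{functor}, i.e. showing it actually leads to a contradiction rather than merely a weaker conclusion. The cleanest route I see: from the collapse $A\prec_I A'$ we get a correct graph $k:A\to A$ factoring as $A\to A'\to A$ and containing a type III (or type II) edge. Conjugating by the isomorphism yields a correct graph $g\circ k\circ f:B\to B$ which still contains a non-type-I edge (conjugation by a pure graph cannot erase type II/III edges, since it only permutes variable occurrences). But then the linear collapse lemma (Lemma \ref{wcolla}, applicable since $B\in\CC L_{\multimap}$) tells us $B$ is not $\prec_I$-minimal — which is fine, it does not yet contradict anything. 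So instead I would argue by lengths: $B$ collapses into something of length $\ell(A') < \ell(B)$, and we may iterate, but the key is that the p-$X$ type $A'$ is itself isomorphic to — or at least interderivable with and injecting into — $B$; pushing the whole argument down to a $\prec_I$-minimal representative of the collapse chain and invoking that for a $\prec_I$-minimal type every endomorphism graph is pure, one sees that the minimal type in the chain is p-$X$ and isomorphic to both $A$ and $B$, forcing $\ell(A)=\ell(B)=\ell(\text{minimal})$, hence no collapse occurred and $A$ itself is p-$X$. Making this iteration precise — essentially, that the minimal type reached by collapsing $A$ and the minimal type reached by collapsing $B$ coincide up to isomorphism, and inherit p-$X$-ness — is the real content; the rest is bookkeeping with $\ell(\cdot)$ and $\hookrightarrow$.
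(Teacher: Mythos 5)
The gap is in your handling of the second alternative of Proposition \ref{functor}. You try to rule it out by ``pushing down to a $\prec_{I}$-minimal representative of the collapse chain'' and claiming that this minimal type is isomorphic to both $A$ and $B$, forcing $\ell(A)=\ell(B)=\ell(\text{minimal})$ and hence that no collapse occurred. But that claim is exactly what Proposition \ref{iso} forbids: linear collapse strictly decreases $\ell$, and types of different lengths are never isomorphic (the paper notes explicitly, right after Proposition \ref{iso}, that a type is never isomorphic to a type it linearly collapses into). So the step ``hence no collapse occurred'' is circular. Nor can the second alternative be dismissed in the abstract: there are non-p-$X$ types that are definable covariant $\CC A$-morphisms and linearly collapse into p-$X$ types --- $(X\multimap X)\multimap X$ collapses into $X$, as the paper itself shows --- so the positivity lemma on its own can never deliver the conclusion ``$A$ is p-$X$''; the isomorphism with $B$ has to be exploited in a way that sees polarities of occurrences, not just the length $\ell$.

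The repair is already sitting in your own parenthetical remark. If $f:A\to B$ and $g:B\to A$ are mutually inverse correct graphs, then (as in the proof of Proposition \ref{iso}) any type II edge of $f$ and any type III edge of $g$ survives in $g\circ f$, and any type II edge of $g$ and any type III edge of $f$ survives in $f\circ g$; since both composites are identities, hence pure, $f$ and $g$ are pure. A pure labeled graph $A\to B$ is a bijection $vA\to vB$ preserving both polarities and labels, so $A$ and $B$ have the same number of negative occurrences of $X$; as $B$ has none, $A$ is p-$X$ (dually for n-$X$). This single observation is the whole proof --- the transport of the functorial action and the appeal to the positivity lemma are unnecessary --- and it is, in substance, what the paper's unwritten one-line justification via Propositions \ref{functor} and \ref{iso} comes down to.
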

%

\section{Interpolation and collapse in $\lambda_{\To}$}\label{sec5}

In this section we extend the positivity lemma to $\lambda_{\To}$, by proving a similar result for definable $\CC T$-morphisms. We exploit the linearization theorem, a folklore result in linear logic, which shows that arrows in $\CC T$ factor through arrows in $\CC A$.
Linearization allows to apply proof net (weak) interpolation to simply typable $\lambda$-terms, yielding a new weak interpolation result for $\lambda_{\To}$, from which we deduce the positivity lemma similarly to the previous section.

\subsection{The linearization theorem}\label{6}

The basic ingredient to generalize theorem \ref{interpolation2} to $\CC L_{\To}$ is theorem \ref{linear} below, which establishes a precise connection between $\CC T$ and $\CC A$: any arrow in $\CC T$ can be factorized through an arrow in $\CC A$.

We first define linear expansions of types $A\in \CC L_{\To}$.
\begin{definition}[linear expansions]
For any type $A\in \CC L_{\To}$, we define the sets $\CC E^{+}(A),\CC E^{-}(A)\subseteq \CC L_{\multimap}$ of its \emph{positive and negative linear expansions}, respectively, by induction on $A$:
\begin{itemize}
\item if $A=X$, then $\CC E^{+}(A)=\CC E^{-}(A)=\{A\}$;
\item if $A=B_{1}\To \dots \To B_{n}\To Z$, then 
\begin{equation*}
\begin{split}
\CC E^{+}(A) &=\{  C_{11}  \multimap \dots \multimap C_{1k_{1}} \multimap\dots \multimap  
  \multimap C_{n1}\multimap \dots \multimap C_{nk_{n}}\multimap  Z \mid  \\ &\qquad  k_{1},\dots, k_{n}\in \BB N, C_{ij}\in \CC E^{-}(B_{i}), \forall 1\leq i\leq n, 1\leq j\leq k_{i}\}\\
\CC E^{-}(A)&  =\{ C_{1}\multimap \dots \multimap C_{n}\multimap Z \mid C_{i}\in \CC E^{+}(B_{i})\}
\end{split}
\end{equation*}

\end{itemize}
\end{definition}

The following are easily established.

\begin{proposition}
\begin{itemize}
\item if $B\in \CC E^{+}(A)$, then there exists $u:B^{\To}\to A$ in $\CC T$;
\item if $B\in \CC E^{-}(A)$, then there exists $u:A\to B^{\To}$ in $\CC T$.

\end{itemize}
\end{proposition}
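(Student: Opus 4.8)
The plan is to prove the two statements simultaneously, by induction on the structure of $A\in\CC L_{\To}$, writing down the witnessing $\lambda$-terms explicitly. In the base case $A=X$ one has $\CC E^{+}(X)=\CC E^{-}(X)=\{X\}$ and $X^{\To}=X$, and the identity arrow $\lambda x^{X}.x:X\to X$ of $\CC T$ serves for both claims.

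For the inductive step I would write $A=B_{1}\To\dots\To B_{n}\To Z$ with $Z\in\CC V$ and split into the two cases. If $B\in\CC E^{-}(A)$, then $B=C_{1}\multimap\dots\multimap C_{n}\multimap Z$ with $C_{i}\in\CC E^{+}(B_{i})$; the induction hypothesis (first clause at $B_{i}$) supplies arrows $u_{i}:C_{i}^{\To}\to B_{i}$ in $\CC T$, say with $y_{i}:C_{i}^{\To}\vdash u_{i}:B_{i}$, and I would take
\[
u\ :=\ \lambda y_{1}^{C_{1}^{\To}}.\cdots.\lambda y_{n}^{C_{n}^{\To}}.\ x\,u_{1}\cdots u_{n},
\]
which gives $x:A\vdash u:C_{1}^{\To}\To\dots\To C_{n}^{\To}\To Z=B^{\To}$, i.e.\ an arrow $A\to B^{\To}$. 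If instead $B\in\CC E^{+}(A)$, then $B=C_{11}\multimap\dots\multimap C_{nk_{n}}\multimap Z$ with $C_{ij}\in\CC E^{-}(B_{i})$; the induction hypothesis (second clause at $B_{i}$) supplies arrows $v_{ij}:B_{i}\to C_{ij}^{\To}$, say with $z_{i}:B_{i}\vdash v_{ij}:C_{ij}^{\To}$, and I would take
\[
u\ :=\ \lambda z_{1}^{B_{1}}.\cdots.\lambda z_{n}^{B_{n}}.\ x\,v_{11}\cdots v_{1k_{1}}\cdots v_{n1}\cdots v_{nk_{n}},
\]
which gives $x:B^{\To}\vdash u:B_{1}\To\dots\To B_{n}\To Z=A$, i.e.\ an arrow $B^{\To}\to A$.

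The verification is routine, and the only point worth flagging --- indeed the reason the domain/codomain is taken in $\CC T$ via $^{\To}$ rather than in $\lambda_{\multimap}$ --- is the resource discipline in the $\CC E^{+}$ case: the variable $z_{i}$ occurs $k_{i}$ times in $u$, so for $k_{i}>1$ one uses contraction and for $k_{i}=0$ one uses weakening, both derivable in $\lambda_{\To}$ but not in the linear calculus. One should also check along the way that the displayed terms have $x$ as their unique free variable (the $y_{i}$, resp.\ $z_{i}$, being bound, and each sub-arrow $u_{i}$, resp.\ $v_{ij}$, chosen with the matching free variable), so that $u$ is genuinely a morphism of $\CC T$. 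I do not expect any real obstacle beyond this bookkeeping, which is consistent with the statement being presented as ``easily established''.
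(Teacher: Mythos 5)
Your proof is correct and is exactly the routine mutual structural induction that the paper leaves implicit when it says these facts are ``easily established'': the witnessing terms you write down are the expected ones, and your observation that contraction/weakening in $\lambda_{\To}$ is what makes the $\CC E^{+}$ case work (since $z_{i}$ occurs $k_{i}$ times) correctly identifies the only non-trivial point. The sole cosmetic remark is that an arrow of $\CC T$ is officially an open term with one free variable, so the identity in the base case is the term $x$ with $x:X\vdash x:X$ rather than $\lambda x^{X}.x$; this does not affect the argument.
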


We can now prove the linearization theorem.

\begin{theorem}[Linearization]\label{linear}
Let $u:\Gamma\to A$ be an arrow in $\CC T$, where $\Gamma=\{A_{1},\dots, A_{p}\}$. Then there exist integers $d_{1},\dots, d_{p}$, linear types $A_{ij}^{-}\in \CC E^{-}(A_{i}), A^{+}\in \CC E^{+}(A)$, arrows $LIN_{ij}:A_{i}\to (A_{ij}^{-})^{\To}$, $coLIN:(A^{+})^{\To}\to A$ in $\CC T$ and an arrow $\partial u: \Gamma^{-}\to A^{+}$ in $\CC A$, where $\Gamma^{-}=\{A_{ij}^{-}\mid 1\leq i\leq p, 1\leq j\leq d_{i}\}$, such that the following diagram commutes in $\CC T$
(by seeing $\partial u: (\Gamma^{-})^{\To} \to (A^{+})^{\To}$ as an arrow in $\CC T$)
 \begin{equation}
\xymatrix{
\Gamma \ar[r]^{u} \ar[d]_{LIN} & A \\
(\Gamma^{-})^{\To} \ar[r]_{\partial u} & (A^{+})^{\To} \ar[u]_{coLIN}
}
\end{equation}
where $LIN=(LIN_{1},\dots, LIN_{p})$ and $LIN_{i}=(LIN_{i1},\dots, LIN_{id_{i}})$, that is,
$$  coLIN\big [\partial u [LIN_{ij}/x_{ij}] /x\big ]\ \simeq_{\beta\eta} \ u $$

\end{theorem}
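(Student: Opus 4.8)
The plan is to reduce to the case where $u$ is in $\eta$-long $\beta$-normal form — legitimate since arrows of $\CC T$ are equivalence classes modulo $\simeq_{\beta\eta}$ and every simply typed term has such a normal form — and then to argue by induction on the size of this normal form, proving the statement for an arbitrary context $\Gamma$ all at once. The guiding idea is that a simply typed term mentions each of its free variables only finitely many times, and each occurrence can be replaced by a fresh ``linearized copy'' whose type is a suitable expansion: the arrows $LIN_{ij}$ and $coLIN$ are exactly the coercions between a type and its expansions provided by the proposition preceding the statement (``if $B\in\CC E^{-}(A)$ then $A\to B^{\To}$ in $\CC T$'', resp. ``if $B\in\CC E^{+}(A)$ then $B^{\To}\to A$''), chosen coherently by the construction, and $\partial u$ is the ``linear skeleton'' of $u$ obtained once every variable occurrence carries its own copy.

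In the base case $n=k=0$, i.e.\ $u=z$ a variable, necessarily of atomic type $Z$: if $z=x_{i}$ one sets $d_{i}=1$, $A_{i1}^{-}=A^{+}=Z$, takes $LIN_{i1}$ and $coLIN$ to be identities, $d_{j}=0$ for $j\neq i$, and lets $\partial u$ be the pure identity graph $Z\to Z$; the triangle commutes trivially. For the inductive step write the $\eta$-long normal form as $u=\lambda y_{1}^{B_{1}}\dots\lambda y_{n}^{B_{n}}.\,z\,v_{1}\dots v_{k}$, where $z$ has type $C_{1}\To\dots\To C_{k}\To Z$ and each $v_{\ell}$ is again $\eta$-long normal; apply the induction hypothesis to each $v_{\ell}$ (a proper subterm, hence strictly smaller). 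One then glues the resulting data: the binders $y_{1},\dots,y_{n}$ become $\multimap^{+}$-nodes at the root of $\partial u$; when the head $z$ is one of the $y_{m}$, its occurrence is ``internal'' and the copies of $y_{m}$ generated across $v_{1},\dots,v_{k}$ are bundled into a single positive expansion placed as a left subtree of the corresponding node; when $z=x_{i}$ is free, one adds a fresh copy $x_{ij}$ to $\Gamma^{-}$, whose type $A_{ij}^{-}\in\CC E^{-}(A_{i})$ carries in its argument slots the positive expansions $A_{\ell}^{+}$ delivered by the induction hypothesis on the $v_{\ell}$, and whose coercion $LIN_{ij}$ is assembled from $x_{i}$ and the $coLIN$'s of the $v_{\ell}$. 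Collecting, for each argument slot of $A$, all the copies produced while descending through $u$ yields $A^{+}\in\CC E^{+}(A)$ together with its coercion $coLIN$, and $\partial u$ is the evident graph obtained by plugging the $\partial v_{\ell}$ into this skeleton.

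Two verifications remain. First, correctness of $\partial u$ as an allowable graph: acyclicity and connectedness of the correction graph are immediate from the tree structure of the skeleton and the correctness of the $\partial v_{\ell}$, while the \textbf{functionality} condition for the $\multimap^{+}$-nodes — the genuinely delicate point — holds because those nodes are precisely the binders $\lambda y_{m}$, and the essential-net correctness of each $\partial v_{\ell}$ guarantees that every path from the conclusion to the left premiss of such a node (an argument slot of $B_{m}$, hence used only inside some $v_{\ell}$ sitting below the binder) factors through the node. Second, the commuting equation $coLIN\big[\partial u[LIN_{ij}/x_{ij}]/x\big]\simeq_{\beta\eta}u$ is a direct $\beta\eta$-computation: substituting the coercions into the linear skeleton $\partial u$ and contracting the redexes undoes exactly the duplication/expansion introduced by the construction, using the analogous equation supplied by the induction hypothesis for each $\partial v_{\ell}$.

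The main obstacle is organizational rather than conceptual: the induction hypothesis must be phrased so that the degrees $d_{i}$, the expansion depths and the linear types $A_{ij}^{-},A^{+}$ are \emph{outputs} determined by $u$ rather than parameters fixed in advance, and one must check that the types produced really lie in the inductively defined sets $\CC E^{-}(A_{i})$ and $\CC E^{+}(A)$ — which forces the expansions of all occurrences of one and the same argument slot to be chosen uniformly. (Morally, the theorem only expresses that $\CC T$ is obtained from $\CC A$ by freely adjoining the structural rules of contraction and weakening, i.e.\ through an exponential comonad, so the factorization is formal; but the term-level induction via $\eta$-long normal forms keeps the argument self-contained and produces explicit $LIN$ and $coLIN$.)
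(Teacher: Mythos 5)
Your proposal is correct and follows essentially the same route as the paper: an induction on the ($\eta$-long) normal form of $u$, giving each occurrence of a variable its own linearized copy, assembling $A^{+}$ and the $A_{ij}^{-}$ from the expansions produced for the subterms, and checking the $\beta\eta$-equation at each step. The only cosmetic differences are that the paper keeps the abstraction and head-application cases separate rather than merging them into one $\eta$-long head normal form, and that it builds $\partial u$ directly as a linear $\lambda$-term (so allowability follows from adequacy) instead of constructing the graph and re-verifying functionality by hand.
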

%
%
\begin{proof}

Induction on the typing derivation of $u$: 
\begin{itemize}
\item if $u=x$, then we have $x:A\vdash x:A$, so we put $ A^{-}=A^{+}=A^{\multimap}$, $d_{1}=1$ and $\partial u=u$;

\item if $u=\lambda z^{B_{1}}.u'$, then we have $\AXC{$\Gamma, z:B_{1}\vdash u':B_{2}$}\UIC{$\Gamma\vdash u:B_{1}\To B_{2}$}\DP$, where $\Gamma=\{A_{1},\dots, A_{p}\}$, then by induction hypothesis there exist integers $e_{1},\dots e_{p+1}$, types $C_{11},\dots, C_{pe_{p}}$, $D_{1},\dots, D_{e_{p+1}},D$, arrows $v_{ij}:A_{j}\to C_{ij}$, $v'_{l}[y]: B_{1}\to D_{l}$, $w:D\to B_{2}$ and a linear term $\partial u'$ such that 
$x_{11}:C_{11},\dots, x_{pe_{p}}:C_{pe_{p}},y_{1}:D_{1},\dots, y_{e_{p+1}}:D_{e_{p+1}}\vdash\partial u: D$ and 
$w \big[ \partial u'[v_{ij}/x_{ij}, v'_{l}/y_{l}]/x\big] \simeq_{\beta\eta} u'$.

We let then $d_{1}=e_{1},\dots, d_{p}=e_{p}$, $ A_{ij}^{-}=C_{ij}$, $A^{+}= D_{1}\To \dots \To D_{e_{p+1}}\To D$, and moreover $LIN_{ij}:=u_{ij}$, $coLIN = \lambda z^{B_{1}}. w \big [x (v'_{1}[z/y])\dots (v'_{e_{p+1}}[z/y])/x\big ]$ and $\partial u:=\lambda y_{1}^{D_{1}}.\dots \lambda y_{e_{p+1}}^{D_{e_{p+1}}}.\partial u'$. We can compute then 
$coLIN(\partial u[LIN_{ij}/x_{ij}]) \simeq_{\beta\eta} \lambda z^{B_{1}}.u' \simeq_{\beta\eta} u$

\item if $u=xu_{1}\dots u_{q}$, then $[x]=B_{1}\To \dots \To B_{q}\To A$ and $\Gamma\vdash u_{i}:B_{i}$, where $\Gamma=\{A_{1},\dots, A_{p}\}$ and $A_{k}=B_{1}\To \dots \To B_{q}\To A$, for some $1\leq k\leq p$, hence by induction hypothesis for all $1\leq l\leq q$
%
there exist integers $e_{1}^{l},\dots, e_{p}^{l}$, types $C_{11}^{l},\dots, C_{pe_{p}}^{l},C^{l}$, arrows $v_{ij}^{l}:B_{l}\to P_{ij}^{l}$, $w^{l}[y]:C^{l}\to B_{l}$ and a linear term $\partial u_{l}$ such that $x^l_{11}:C^{l}_{11},\dots, x^{l}_{pe_{p}}:C^{l}_{pe_{p}}\vdash \partial u_{l}: C^{l}$ and 
$w^{l}\big [\partial u_{l}[v^{l}_{ij}/x^{l}_{ij}]/x\big ]\simeq_{\beta\eta} u_{l}$.

We let then, for $1\leq i\leq q$, $d_{i}:= \sum_{l}^{q}e_{i}^{l}$, for $i\neq k$, and $d_{k}=(\sum_{l}^{q}e_{k}^{l} )+1$, we let, for $i\neq k$ the $A_{ij}^{-}$ be all $C_{ij}^{l}$, for all $1\leq l\leq q$; for $j\leq d_{k}-1$, we let the $A_{jk}^{-}$ be all $C_{kj}^{l}$, for all $1\leq l\leq q$, and we finally let $A_{d_{k}k}^{-}$ be $C^{1}\To \dots \To C^{q}\To A$. 
Also, for $i\neq k$, we let the $LIN_{ij}$ be all $v_{ij}^{l}$ and for $j\leq d_{k}-1$, the $LIN_{kj}$ be all $v_{kj}^{l}$, and we let $LIN_{kd_{k}}[x]=\lambda z_{1}^{C_{1}}.\dots.\lambda z_{q}^{C_{q}}.x (w^{1}[z_{1}/y])\dots (w^{q}[z^{q}/y])$ and $coLIN=x$. We finally let 
$\partial u:= x_{kd_{k}}\partial u_{1}\dots \partial u_{q}$. 
We can compute then $coLIN\big [\partial u[LIN_{ij}/x_{ij}]/x\big ]\simeq_{\beta} LIN_{kd_{k}} \big (w^{1}\big [\partial u_{1}[v_{ij}^{1}/x_{ij}^{1}]/y\big ]\big ) \dots \big (w^{q}\big [\partial u_{q}[v_{ij}^{q}/x_{ij}^{q}]/y\big ])\big ) \simeq_{\beta\eta} u$.

\end{itemize}
\end{proof}

Let us call a $\lambda$-term \emph{clean}  if $FV(u)\cap BV(u)=\emptyset$ and for any $x\in BV(u)$ there exists in $u$ exactly one abstraction of the form $\lambda x$. Clearly, any $\lambda$-term $u$ is $\alpha$-equivalent to a clean term, that we indicate $u^{*}$.

For any $\lambda$-term $u$,  
and any variable $x\in FV(u^{*})\cup BV(u^{*})$, let $s_{x}(u)$ indicate the number of occurrences of $x$ in $u^{*}$. More precisely, we let 
$$
\begin{matrix}
s_{x}(y) \  = \ \begin{cases} 1 & \text{if } x=y \\ 0 & \text{otherwise} \end{cases} \\
s_{x}(tu)\  = \ s_{x}t+s_{x}u \\
s_{x}(\lambda y.t) \  = \ \begin{cases} s_{x}(t)+1 & \text{if } x=y \\ s_{x}(t) & \text{otherwise} \end{cases} 
\end{matrix}
$$
 Let then the \emph{size of $u$}, noted $s(u)$, be the sum of all $s_{x}(u)$, for $x\in FV(u^{*})\cup BV(u^{*})$. 
Linearization preserves the size, in the following sense:
\begin{lemma}\label{size}
Let $u:\Gamma\to A$ be an arrow in $\CC T$ and $\partial u:\Gamma^{-}\to A^{+}$ be defined as in theorem \ref{linear}. Then
$s(u)=s(\partial u)$.
\end{lemma}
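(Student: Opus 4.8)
The plan is to prove Lemma \ref{size} by induction on the typing derivation of $u$, mirroring exactly the case analysis used in the proof of the linearization theorem (Theorem \ref{linear}). Since $\partial u$ is constructed explicitly in that proof, the natural strategy is to carry the size bookkeeping alongside that construction. The key observation making this work is that linearization never duplicates or erases subterms: it only \emph{renames} variables (splitting one simply-typed variable $x_{ij}$ into several linear copies, tracked by the $LIN_{ij}$) and reorganizes the $\lambda$-abstractions, while the application skeleton of the linear term $\partial u$ is the same as that of $u$ with each $u_i$ replaced by $\partial u_i$.

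\medskip

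\textbf{Base case.} If $u = x$, then $\partial u = u$ by construction, so trivially $s(u) = s(\partial u)$.

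\textbf{Abstraction case.} If $u = \lambda z^{B_1}.u'$, the proof of Theorem \ref{linear} sets $\partial u = \lambda y_1^{D_1}.\dots.\lambda y_{e_{p+1}}^{D_{e_{p+1}}}.\partial u'$. By induction hypothesis $s(u') = s(\partial u')$. The single bound variable $z$ of $u$, which occurs $s_z(u')$ times (plus the binder) in $u'$, is replaced in $\partial u'$ by the $e_{p+1}$ fresh variables $y_1,\dots,y_{e_{p+1}}$ — but crucially the total number of occurrences is preserved, because in $\partial u'$ the substitutions $v'_l[z/y]$ that realized the linearization distribute exactly those occurrences among the $y_l$, with each $y_l$ getting at least one (an $\eta$-long normal form argument, or the remark after the definition of $\eta$-long normal form, guarantees no $y_l$ is vacuous). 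So the sum of $s_{y_l}(\partial u')$ over $l$, plus the $e_{p+1}$ new binders, equals $s_z(u') + 1$; all other variables keep their counts. Hence $s(\partial u) = s(\partial u') + e_{p+1} = s(u') + 1 = s(u)$.

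\textbf{Application case.} If $u = x u_1 \dots u_q$, the proof sets $\partial u = x_{kd_k}\,\partial u_1 \dots \partial u_q$, where $x_{kd_k}$ is a single fresh variable replacing the head $x$, and each $\partial u_l$ has, by induction hypothesis, $s(\partial u_l) = s(u_l)$. The head variable contributes one occurrence on each side. The variables of the context $\Gamma$ (other than the head $x$) are split across the $u_l$ into linear copies by the $LIN_{ij}$; again the substitutions $v^l_{ij}[\cdot/x^l_{ij}]$ that absorb the $coLIN$-wrappers preserve total occurrence counts. Summing, $s(\partial u) = 1 + \sum_l s(\partial u_l) = 1 + \sum_l s(u_l) = s(u)$.

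\medskip

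\textbf{The main obstacle} is making precise the claim that "linearization preserves total occurrence counts" in the abstraction and application cases — specifically, that each freshly introduced linear variable $y_l$ (resp. $x^l_{ij}$) occurs \emph{exactly once} in the linear term $\partial u$ and that the $LIN$-terms, when substituted back, restore exactly the original multiplicities. This requires either (i) noting that $\partial u$ is a linear $\lambda$-term, so by definition every bound variable occurs exactly once, hence the count of fresh linear variables equals the number of binders for them, \emph{and} separately tracking that the non-linear multiplicities on the $u$ side are recovered through the $LIN_{ij}$ substitutions; or (ii) a more careful inductive invariant that records, for each context variable $x_i$, the equality between $s_{x_i}(u)$ and $\sum_j s_{x_{ij}}$ evaluated in $\partial u$ after re-substitution. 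I expect approach (i) to be cleanest: since $\partial u \in \CC L_{\multimap}$, every $y_l$ and every $x^l_{ij}$ contributes exactly one occurrence-plus-one-binder pair, and the bijective correspondence between these linear variables and the "slots" of the original term (given by the paths $\pi$ in the syntactic tree, as in the definition of $\CC G(u)$) yields the count preservation directly. One should also double-check that the $\alpha$-renaming to clean form $u^*$ does not interfere, which is immediate since $s$ is defined on $u^*$ and $\partial u$ can likewise be taken clean.
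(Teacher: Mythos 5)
Your strategy---induction on the typing derivation, carrying the size bookkeeping alongside the explicit construction of $\partial u$---is exactly what the paper intends: its own proof of lemma \ref{size} is only the remark that the claim follows by inspecting the proof of theorem \ref{linear}. Your base case and application case are correct: at an application the head variable contributes one occurrence on each side, no binders are introduced, and the subterm counts match by induction.

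The abstraction case, however, does not go through as written, and the failure is substantive rather than presentational. With the paper's definition of $s$, a binder $\lambda y$ contributes $+1$ to $s_{y}$. Hence $s(\lambda z^{B_{1}}.u')=s(u')+1$, whereas $s(\lambda y_{1}^{D_{1}}.\dots.\lambda y_{e_{p+1}}^{D_{e_{p+1}}}.\partial u')=s(\partial u')+e_{p+1}$; combined with the induction hypothesis $s(u')=s(\partial u')$, your chain of equalities forces $e_{p+1}=1$. But $e_{p+1}$ is the number of linear copies of $z$, which (as your own occurrence tracking shows) equals the number of occurrences of $z$ in $u'$. Your intermediate identity, that $\sum_{l}s_{y_{l}}(\partial u')$ plus the $e_{p+1}$ new binders equals $s_{z}(u')+1$, evaluates to $2e_{p+1}=e_{p+1}+1$ and is false whenever $z$ is not used exactly once. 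Concretely, for $u=\lambda z^{X}.xzz$ with $x:X\To X\To Y$ the construction yields $\partial u=\lambda y_{1}^{X}.\lambda y_{2}^{X}.x_{11}y_{1}y_{2}$, so $s(u)=4$ while $s(\partial u)=5$; for a vacuous binder the discrepancy goes the other way. What your argument genuinely establishes is that linearization preserves the number of variable occurrences in the body of the term---each occurrence of a bound or free variable of $u$ corresponds to exactly one occurrence of one of its linear copies in $\partial u$---that is, the lemma holds for the count that omits the $+1$ binder contributions, and that weaker statement is all that the application in theorem \ref{interpolationl} requires. You should either prove that version, or record that the two sizes differ by the sum over bound variables of their multiplicity minus one; the literal equality $s(u)=s(\partial u)$ with binders counted cannot be rescued by this induction because it is not true.
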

\begin{proof}
Simple verification by inspecting the proof of theorem \ref{linear}.

\end{proof}

\subsection{Weak interpolation in $\lambda_{\To}$}

Similarly to the case of $\lambda_{\multimap}$, interpolation fails for $\lambda_{\To}$ (i.e. for the $\To$-fragment of intuitionistic propositional logic): the interpolant of $ \lambda z^{A\To B\To X}.zxy:  A,B\to (A\To B\To X)\To X$ is the product type $A\land B$ which is not a type in $\CC L_{\To}$. 

Weak interpolation for $\lambda_{\To}$ was investigated in \cite{Wronski, Pentus, Kanazawa}. In particular, \cite{Kanazawa} shows that, by suitably modifying the algorithm in \cite{Prawitz1965}, one can always find strongest interpolants. By exploiting linearization and weak interpolation for $\lambda_{\multimap}$ we immediately obtain a new weak interpolation theorem for $\lambda_{\To}$, theorem \ref{interpolationl} below.

Due to the loss of linearity, the requirement that interpolant types inject into the conclusion and the context is replaced by the weaker requirement that the free type variables of the interpolant types be included into the free variables shared by the conclusion and the context. 

\begin{theorem}[weak interpolation for $\CC T$]\label{interpolationl}
Let $u:\Gamma\to A$ in $\CC T$ and suppose $\partial u$ is not pure. Then there exist simple types $I_{1},\dots, I_{p}$ and arrows $v_{j}:\Gamma\to I_{j}$, $w:I_{1},\dots, I_{p}\to A$ such that
%
%


\begin{enumerate}

\item $FV(I_{j})\subseteq FV(A)\cap FV(\Gamma)$;
\item $(\sum_{j}^{p}s(v_{j}))+ s(v)< s(u)$;

\item $v[v_{1}/x_{1},\dots, v_{p}/x_{p}] \ \simeq_{\beta\eta}\  u  $.
\end{enumerate}

\end{theorem}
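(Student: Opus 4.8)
The plan is to reduce weak interpolation for $\lambda_{\To}$ to weak interpolation for $\lambda_{\multimap}$ via the linearization theorem \ref{linear}, and then transport the linear interpolants back along the coercions $LIN$ and $coLIN$. First I would apply theorem \ref{linear} to $u:\Gamma\to A$, obtaining the linear types $A_{ij}^{-}\in\CC E^{-}(A_{i})$ and $A^{+}\in\CC E^{+}(A)$, the coercions $LIN_{ij}:A_i\to(A_{ij}^{-})^{\To}$ and $coLIN:(A^{+})^{\To}\to A$ in $\CC T$, and the allowable graph $\partial u:\Gamma^{-}\to A^{+}$ in $\CC A$, with $coLIN\big[\partial u[LIN_{ij}/x_{ij}]/x\big]\simeq_{\beta\eta}u$. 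Since by hypothesis $\partial u$ is not pure, it contains at least one type III (equivalently type II) edge, so the weak $\lambda_{\multimap}$-interpolation theorem \ref{interpolation2} applies to $\partial u$: we get a partition $\Gamma^{-}_1,\dots,\Gamma^{-}_p$ of $\Gamma^{-}$, linear types $J_1,\dots,J_p$, and allowable graphs $\partial v_j:\Gamma^{-}_j\to J_j$, $\partial w:J_1,\dots,J_p\to A^{+}$ in $\CC A$ with $\partial w[\partial v_j/z_j]=\partial u$ (at the level of graphs, hence of their associated $\lambda$-terms modulo $\simeq_{\beta\eta}$ by theorem \ref{seque}).

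Next I would set $I_j:=J_j^{\To}$, and define the simply typed arrows $v_j:\Gamma\to I_j$ by precomposing $\partial v_j$ (viewed in $\CC T$) with the relevant $LIN_{ij}$'s — i.e. $v_j:=\partial v_j[LIN_{ij}/x_{ij}]$ ranging over those $ij$ for which $A_{ij}^{-}\in\Gamma^{-}_j$ — and $w:I_1,\dots,I_p\to A$ by $w:=coLIN\big[\partial w[z_1/z_1,\dots,z_p/z_p]/x\big]$, i.e. postcomposing $\partial w$ with $coLIN$. Condition (3), namely $v[v_1/x_1,\dots,v_p/x_p]\simeq_{\beta\eta}u$, then follows by substituting and using $\partial w[\partial v_j/z_j]=\partial u$ together with the factorization equation from theorem \ref{linear}: $w[v_j/z_j]=coLIN\big[\partial w[\partial v_j[LIN_{ij}/x_{ij}]/z_j]/x\big]=coLIN\big[(\partial w[\partial v_j/z_j])[LIN_{ij}/x_{ij}]/x\big]=coLIN\big[\partial u[LIN_{ij}/x_{ij}]/x\big]\simeq_{\beta\eta}u$. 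For condition (2), I would invoke the size-preservation lemma \ref{size}: $s(u)=s(\partial u)$ and $s(v_j)=s(\partial v_j)$, $s(w)=s(\partial w)$ (the coercions $LIN_{ij}$, $coLIN$ of theorem \ref{linear} are $\eta$-expansions that contribute only variable occurrences already counted, so composing with them does not change the size — this is exactly what lemma \ref{size} records). The strict inequality $(\sum_j s(\partial v_j))+s(\partial w)<s(\partial u)$ is the quantitative content of weak $\lambda_{\multimap}$-interpolation, which decomposes $\partial u$ into strictly smaller pieces (one type III or type II edge is consumed into the $cut$ links and disappears).

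Condition (1), $FV(I_j)\subseteq FV(A)\cap FV(\Gamma)$, is where the subtlety lies and where I expect the main obstacle. In the linear setting one has the strong statement $J_j\hookrightarrow A^{+}$ and $J_j\hookrightarrow\Gamma^{-}_j$ (remark after definition \ref{interpo2}), which at the level of variables gives $FV(J_j)\subseteq FV(A^{+})\cap FV(\Gamma^{-})$. So I would need that $FV(A^{+})\subseteq FV(A)$ and $FV(\Gamma^{-})\subseteq FV(\Gamma)$ — these should hold because linear expansions in $\CC E^{\pm}$ only duplicate or rearrange subformulas without introducing fresh variables (the variables $Z$ and the $C_{ij}$'s in the definition of $\CC E^{\pm}$ come from the original type). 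The delicate point is that the weak interpolation theorem for $\lambda_{\multimap}$ as stated guarantees $FV$-containment in $A^{+}$ \emph{and} in $\Gamma_j^{-}$; one must check the stated version actually yields the intersection rather than just one side, and that passing from $\Gamma_j^{-}$ (a sublist) back to all of $\Gamma$ does not lose the containment — it does not, since $FV(\Gamma_j^{-})\subseteq FV(\Gamma^{-})\subseteq FV(\Gamma)$. I would therefore spell out the elementary lemma $FV(B)\subseteq FV(A)$ for $B\in\CC E^{+}(A)\cup\CC E^{-}(A)$ (immediate induction on $A$) and then chain the inclusions. The only genuine care needed is bookkeeping: making sure the partition of $\Gamma^{-}$ induces a coherent grouping of the $LIN_{ij}$'s so that each $v_j$ really has domain $\Gamma$ (using weakening in $\lambda_{\To}$ to adjoin the unused hypotheses), which is harmless since $\lambda_{\To}$ has weakening.
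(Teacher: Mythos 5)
Your proposal follows exactly the paper's route: apply the linearization theorem \ref{linear} to obtain $\partial u$, invoke weak $\lambda_{\multimap}$-interpolation (theorem \ref{interpolation2}) to split $\partial u$ into $v_{j}:\Gamma_{j}^{-}\to I_{j}$ and $v:I_{1},\dots,I_{p}\to A^{+}$, and transport the interpolants back along $LIN$ and $coLIN$. The paper's own proof is a one-line appeal to theorem \ref{interpolation2}, so your additional bookkeeping for conditions (1)--(3) only makes explicit what the paper leaves implicit.
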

\begin{proof}
By theorem \ref{interpolation2} there exists an integer $p$, a partition $(\Gamma_{1}^{-},\dots,\Gamma_{p}^{-})$ of $\Gamma^{-}$, types $I_{1},\dots, I_{p}$ and a splitting of $\partial u$ in $\CC A$ into $v_{j}:\Gamma_{j}^{-}\to I_{j}$ and $v:I_{1},\dots, I_{p}\to A^{+}$.

\end{proof}

Observe that, while the logical complexity of weak interpolants for an arrow $f:\Gamma\to A$ in $\CC A$ is bounded by $A$, this need not be the case in $\CC T$. For instance, let $A=(X\To X)\To (X\To X)$ and $u_{k}:A\to A$ be the term $\lambda y_{1}^{X\To X}.\lambda y_{2}^{X}. y_{1}^{k}(xy_{1}y_{2})$, for $k\in \BB N$. Let, for all $k\in \BB N$, $A_{k}= \underbrace{(X\multimap X)\multimap \dots \multimap (X\multimap X)}_{k \text{ times}}\multimap A^{\multimap}$. Then $\partial u_{k+1}: A^{\multimap} \to A_{k+2}$ is not pure (see figure \ref{impure}) and can be split into 
$u_{1}:A^{\multimap}\to A_{k+1}$ and $u_{2}:A_{k+1}\to A_{k+2}$. We deduce that $u_{k}$ is split into $v_{1}: A\to A_{k+1}^{\To}$ and $v_{2}:A_{k+1}^{\To }\to A$. The problem of investigating the growth in complexity of interpolants is well investigated in the literature (see \cite{Pruim1998}).
Observe that, while the size of interpolants might grow, the size of the terms decreases (condition 2. in theorem \ref{interpolationl}), as $s(v_{1})< s(u)$ (one can easily compute $v_{1}= u_{k}$).

\begin{figure}
\begin{center}
\resizebox{0.6\textwidth}{!}{
\begin{tikzpicture}
\node(a) at (0,1) {$(X\multimap X)\multimap (X\multimap X)$};

\node(b) at (0,-2) {$\underbrace{(X\multimap X)\multimap (X\multimap X)\multimap \dots \multimap (X\multimap X)}_{k+1 \text{ times}}\multimap (X\multimap X)\multimap (X\multimap X)$};

\draw[->, thick] (-3.2,-1.5) to [bend right=75] (-4.2,-1.5);
\draw[->, thick] (-1.8,-1.2) to [bend right=55] (-2.2,-1.5);

\draw[thick] (0.2,-1.5) to [bend right=55] (-0.2,-1.2);
\node(d) at (-1,-1.2) {$\dots$};

\draw[->, thick] (5.2,-1.5) to [bend right=55] (1,-1.5);
\draw[->, thick] (-5.2,-1.5) to [bend left=25] (-1.8,-0.6) to [bend right=25] (1.5, 0.8);

\draw[->, thick] (2.2,-1.5) to [bend right=15] (0.45,-0.6)   to  [bend left=25] (-1.5,0.8);
\draw[<-, thick] (3.1,-1.5) to [bend right=15] (1.45,-0.6)   to  [bend left=25] (-0.5,0.8);
\draw[<-, thick] (4.2,-1.5) to [bend right=15] (2.45,-0.6)   to  [bend left=25] (0.5,0.8);

\end{tikzpicture}
}
\end{center}
\caption{Graph of $\partial u_{k+1}$}
\label{impure}
\end{figure}
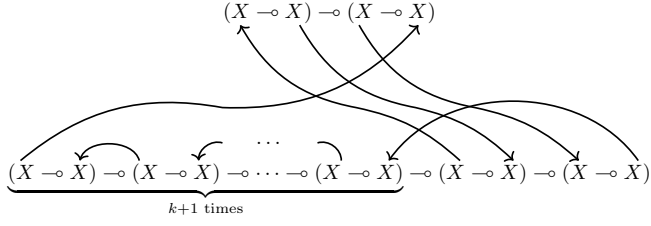

Remark \ref{unique} cannot be extended to $\lambda_{\To}$. In particular, an arrow $u:A\to B$ in $\CC T$ might well have more than one interpolant. Take for instance $u: C\to C$, where
$C=(X\To X)\To( ((A\To X)\To X)\To ( (B\To X)\To X)\To X)\To X$ and 
$u[x]= \lambda x_{1}^{C_{1}}.\lambda x_{2}^{C_{2}}.x_{1} ( x_{2} u_{1}u_{2})$, where $C_{1}=X\To X$, $C_{2}=((A\To X)\To X)\To ((B\To X)\To X)\To X$ and $u_{1}=\lambda z^{A\To X}.x (\lambda y^{X}.y 
(\lambda y_{1}^{(A\To X)\To X}.\lambda y_{2}^{(B\To X)\To X}.y_{1}z))$, 
$u_{2}=
\lambda z^{B\To X}.x (\lambda y^{X}.y (
\lambda y_{1}^{(A\To X)\To X}.\lambda y_{2}^{(B\To X)\To X}.y_{1}z))$. One can verify that $u$ has two interpolants $I_{1}=(A\To X)\To X$ and $I_{2}=(B\To X)\To X$.

We define then the \emph{weak interpolation} order $\prec_{wI}$ as follows: $A\prec_{wI} B$ if $A$ is among the interpolants of some arrow $u:B\to B$ in $\CC T$. When $\Gamma=\{I_{1},\dots, I_{p}\}$ are the interpolants of some arrow $u:B\to B$, we say that $B$ \emph{collapses} into $\Gamma$. If for no $\Gamma$, $B$ collapses into $\Gamma$, then $B$ is called \emph{$\prec_{wI}$-minimal}.

%
%

The linear collapse lemma \ref{colla} can be extended to $\CC L_{\To}$:

\begin{lemma}[collapse lemma for $\CC T$]\label{lambdacolla}
Let $u:A\to A$ be an arrow in $\CC T$ such that $\partial u$ contains a type III edge. Then $A$ is not $\prec_{wI}$-minimal.
\end{lemma}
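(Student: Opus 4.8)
The plan is to reduce the statement for $\CC T$ to the corresponding linear collapse lemma for $\lambda_{\multimap}$ (Lemma~\ref{wcolla}), using the linearization theorem (Theorem~\ref{linear}) and weak interpolation for $\CC T$ (Theorem~\ref{interpolationl}). First I would note that the hypothesis is precisely the hypothesis of Theorem~\ref{interpolationl}: the arrow $u\colon A\to A$ in $\CC T$ is such that $\partial u$ is not pure (it contains a type III edge). Applying Theorem~\ref{interpolationl} to $u\colon A\to A$, I obtain simple types $I_1,\dots,I_p$ and arrows $v_j\colon A\to I_j$ and $w\colon I_1,\dots,I_p\to A$ in $\CC T$ with $w[v_1/x_1,\dots,v_p/x_p]\simeq_{\beta\eta}u$, which means exactly that $A$ collapses into $\{I_1,\dots,I_p\}$ in the sense defined just before the lemma. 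Hence $A$ is not $\prec_{wI}$-minimal, which is what we wanted.

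The only real content to check is that Theorem~\ref{interpolationl} is genuinely applicable, i.e.\ that ``$\partial u$ contains a type III edge'' does imply ``$\partial u$ is not pure''. A pure graph is by definition one consisting only of type~I edges, so a graph containing a type~III edge is by definition not pure; this is immediate from the terminology fixed in Section~\ref{sec22}. (One should also observe that here the source and target of $u$ coincide, so that the partition $\Gamma_1^-,\dots,\Gamma_p^-$ furnished by the proof of Theorem~\ref{interpolationl} is a partition of $A^-$'s linear expansions, and the $v_j$, $w$ live over $A$; but none of this is needed beyond invoking the theorem as a black box.)

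The step I expect to require the most care is purely bookkeeping: making sure that the data $(I_1,\dots,I_p;v_1,\dots,v_p;w)$ produced by Theorem~\ref{interpolationl} matches verbatim the definition of ``$B$ collapses into $\Gamma$'' given in the text (namely, $\Gamma=\{I_1,\dots,I_p\}$ are ``the interpolants of some arrow $u\colon B\to B$''). Since Theorem~\ref{interpolationl} is stated for a general arrow $u\colon\Gamma\to A$ and we are specializing to $\Gamma=A$, the identification is direct, but one should state it explicitly so that $\prec_{wI}$-non-minimality of $A$ follows by definition. There is no genuine obstacle: the lemma is essentially a restatement of Theorem~\ref{interpolationl} in the $A=B$ case, exactly as Lemma~\ref{wcolla} was a restatement of Theorem~\ref{interpolation2} in that case.
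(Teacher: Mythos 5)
Your proof is correct and is essentially the argument the paper intends: the lemma is stated as an immediate corollary of Theorem~\ref{interpolationl} (the paper gives no separate proof), and your application of that theorem to $u\colon A\to A$, together with the observation that a graph containing a type III edge is not pure, yields exactly the data required by the definition of ``$A$ collapses into $\{I_1,\dots,I_p\}$'' and hence non-$\prec_{wI}$-minimality.
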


For instance, the arrow $u:C\to C$ discussed above shows that $C$ is not $\prec_{wI}$-minimal and collapses into two minimal types $I_{1},I_{2}$.


\subsection{The positivity lemma for $\lambda_{\To}$}\label{7}

By exploiting weak interpolation, we now extend the positivity lemma \ref{functor} to $\lambda_{\To}$.


The notions of p-$X$ and n-$X$ types for $\CC L_{\To}$ are defined as for $\CC L_{\multimap}$.  Observe that, if $B\in \CC E^{\epsilon}(A)$, then $B$ is p-$X$ (resp n-$X$) iff $A$ is.
We define definable $\CC T$-morphisms similarly to the case of $\CC A$.

\begin{definition}[definable $\CC T$-morphism]\label{defifuncT}

A \emph{covariant (resp. contravariant) $\CC T$-morphism} is given by a map $F$ over simple types along with a map over arrows in any polynomial extension $\CC T[x_{1},\dots, x_{n}]$ of $\CC T$, such that, if $u:A\to B$ is an arrow in $\CC T[x_{1},\dots, x_{n}]$, then $F(u): F(A)\to F(B)$ (resp. $F(u):F(B)\to F(A)$) is an arrow in $\CC T[x_{1},\dots, x_{n}]$.

Given $X\in \CC V$, a simple type $A\in \CC L_{\To}$ is a \emph{definable covariant (resp. contravariant) $\CC T$-morphism} in $X$ if there is a covariant (resp. contravariant) $\CC T$ morphism $F$ such that, for any simple type $B$, $F(B)=A[B/X]$.

%
%

\end{definition}

As for $\CC A$, for one direction, the fact that p-$X$ (resp n-$X$) types correspond to covariant (resp. contravariant) functors can be deduced from the fact that all types are multivariant functors.

\begin{proposition}
For all $X\in \CC V$, if $A$ is p-$X$ (resp. n-$X$), then it is a covariant (resp. contravariant) definable $\CC T$-morphism in $X$.
\end{proposition}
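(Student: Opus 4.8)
The plan is to run an induction on the structure of $A$, exactly parallel to the proof of Proposition~\ref{sopra} for $\CC A$, using the $\lambda$-calculus description of the functorial action $A(u,v)$ introduced in Section~\ref{sec23}. The essential point is that those defining clauses --- $Y(u,v)$, $(A_1\To A_2)(u,v)=A_1(v,u)\To A_2(u,v)$, and $(u\To v)[x]=\lambda y.\,v[x(u[y/z])/z']$ --- produce $\lambda$-terms whose only free variables are those of $u$ and $v$; hence they make sense verbatim inside any polynomial extension $\CC T[x_1,\dots,x_n]$, not just in $\CC T$ itself. Combined with the observation that, when $A$ is p-$X$, substituting a type for the (nonexistent) negative occurrences of $X$ changes nothing, so that $A[B,C]=A[C/X]$ for every $B$, this makes the bivariant action collapse to a genuinely covariant one.

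Concretely I would take $F(B):=A[B/X]$ on objects and, for an arrow $v:B\to B'$ in $\CC T[x_1,\dots,x_n]$, $F(v):=A(\mathrm{id},v)$, where the first argument is immaterial because no clause of the inductive definition ever substitutes it for anything associated to $X$ (an easy sub-induction: if $A=A_1\To A_2$ is p-$X$ then $A_1$ is n-$X$ and $A_2$ is p-$X$, and in $A_1(v,u)$ the arrow $u$ now occupies the covariant slot of a type with no positive $X$, so it is never applied to $X$; recursively $u$ stays out of all $X$-slots). By the typing of $A(u,v)$ we then have $F(v):A[B/X]\to A[B'/X]$, i.e. $F(v):F(B)\to F(B')$ in $\CC T[x_1,\dots,x_n]$, which is precisely what Definition~\ref{defifuncT} asks of a covariant definable $\CC T$-morphism in $X$. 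The n-$X$ case is dual: there $A[B,C]=A[B/X]$, so one sets $F(v):=A(v,\mathrm{id}):A[B'/X]\to A[B/X]$, obtaining a contravariant action.

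I do not anticipate a real obstacle: this is the elementary ``forward'' half of the positivity phenomenon, the substantive converse being Proposition~\ref{functor}, and Definition~\ref{defifuncT} demands only that the assigned action be well typed, with no identity- or composition-preservation requirement. The only mild care needed is to keep the construction inside the polynomial category throughout the induction --- i.e. to check that the auxiliary variables $x_1,\dots,x_n$ carried along by $u$ and $v$ are never captured by the bound variables introduced in the clause for $u\To v$ --- which is immediate, since those bound variables can always be chosen fresh.
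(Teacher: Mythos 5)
Your proof is correct and follows essentially the same route as the paper, which simply states ``by induction on $A$'' after observing that the claim can be deduced from the fact that all types are multivariant functors; your specialization $F(v):=A(\mathrm{id},v)$ (resp.\ $A(v,\mathrm{id})$) and the remark that $A[B,C]=A[C/X]$ for p-$X$ types is exactly the intended elaboration. You are also right that Definition \ref{defifuncT} imposes only well-typedness, so no identity or composition laws need to be checked.
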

\begin{proof}
By induction on $A$.
\end{proof}

The following proposition extends proposition \ref{functor} to $\CC T$.

\begin{proposition}\label{functor1}
For all $X\in \CC V$, if $A\in \CC L_{\To}$ is a covariant (resp. contravariant) definable $\CC T$-morphism in $X$, then either $A$ is p-$X$ (resp. n-$X$) or $A$ collapses into a finite set of p-$X$ (resp. n-$X$) types. 
\end{proposition}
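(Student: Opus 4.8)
The plan is to mimic the proof of Proposition~\ref{functor} (the positivity lemma for $\mathit{IMLL^-}$), replacing the use of $\mathit{IMLL^-}$ interpolation (theorem~\ref{interpolation}) with weak interpolation for $\lambda_{\To}$ (theorem~\ref{interpolationl}), and using the linearization theorem~\ref{linear} to pass back and forth between $\CC T$ and $\CC A$. Suppose $A\in\CC L_{\To}$ is a definable covariant $\CC T$-morphism in $X$ (the contravariant case being symmetric, by the translations $^{\To},^{\multimap}$ and by reversing polarities). First I would pick a fresh variable $Y$ not occurring in $A$ and consider the ``variable arrow'' $x:\emptyset\to X\To Y$ in the polynomial category $\CC T[x]$, together with the canonical arrow $f=\lambda z^{X}.xz:X\to Y$ in $\CC T[x]$. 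Applying the assumed functorial action gives an arrow $A(f):A\to A[Y/X]$ in $\CC T[x_{1},\dots,x_{n}]$, i.e.\ after discharging the copies of $x$, an arrow $u:(X\To Y),\dots,(X\To Y),A\to A[Y/X]$ in $\CC T$.

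Next I would linearize $u$ via theorem~\ref{linear}, obtaining $\partial u:\Gamma^{-}\to (A[Y/X])^{+}$ in $\CC A$, where $\Gamma^{-}$ consists of negative linear expansions of the hypotheses $X\To Y$ and of $A$, and $(A[Y/X])^{+}$ is a positive linear expansion of $A[Y/X]$. Since $Y$ occurs only positively in the hypotheses $X\To Y$ and $X$ occurs only negatively there, in $\partial u$ every type~I edge over $Y$ points ``upward'' towards the conclusion and every edge over $X$ is type~II. As in the proof of Proposition~\ref{functor}, I would then contract these $X$/$Y$ edges: this produces a correct graph $f^{*}:A^{+}\to A^{*}$ in $\CC A^{\multimap}$ (the leftover hypothesis being the linear expansion $A^{+}\in\CC E^{-}(A)^{\multimap}$ or, more precisely, the relevant expansion of $A$), where $A^{*}$ is obtained from the linear expansion of $A[Y/X]$ by renaming back to $X$ exactly those positive occurrences of $Y$ that were reached by a type~I edge. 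Crucially $A^{*}$ has no negative occurrence of $X$, and correctness of $f^{*}$ follows because every path in its correction graph lifts to a path in that of $\partial u$. Two cases: if $f^{*}$ has no type~III edge over $Y$, then no positive occurrence of $X$ got ``lost'', $A^{*}=A[Y/X]$ up to the expansion, and unwinding the linear expansions (using the arrows $LIN,coLIN$ and the propositions relating $\CC E^{\pm}(A)$ to $A$) shows $A$ is already logically equivalent to, hence — since $B\in\CC E^{\epsilon}(A)$ is p-$X$ iff $A$ is — a p-$X$ type; in fact $A$ itself is p-$X$. If $f^{*}$ does have type~III edges over $Y$, I would apply weak interpolation~\ref{interpolationl} to $f^{*}$ (or rather to the corresponding $\CC T$-arrow obtained by composing with $LIN$/$coLIN$), splitting it through interpolants $I_{1},\dots,I_{p}$, each with $FV(I_{j})\subseteq FV(A)$ and no occurrence of $Y$; the ``upper'' half $w:I_{1},\dots,I_{p}\to A^{*}$ has all its $Y$-edges of type~III, so renaming $Y\mapsto X$ turns it into a morphism into $A$ (since $A^{*}[X/Y]$ collapses to $A$), exhibiting $A$ as collapsing into the finite set $\{I_{1},\dots,I_{p}\}$ of p-$X$ types.

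The main obstacle, and the point requiring the most care, is the bookkeeping around linear expansions: unlike in the purely linear setting of Proposition~\ref{functor}, here $u$ has several hypotheses $X\To Y$ (one per occurrence of $X$ in $A$, roughly) and linearization multiplies these into many linear copies $A_{ij}^{-}$, so one must check that \emph{all} occurrences of $X$ in the original type $A$ are accounted for by type~II edges in $\partial u$ before contraction, and that the renaming $Y\mapsto X$ after interpolation genuinely lands in (an expansion of) $A$ rather than in some larger expansion. I expect this to go through because the functorial action $A(f)$ must, by dinaturality-style reasoning already implicit in the functor axioms, connect each negative $X$ of $A$ to the corresponding $X$ in the hypothesis and each positive $X$ of $A$ either to a positive $Y$ (if the action genuinely moved it) or type~III back into $A$; Lemma~\ref{size} guarantees the interpolation step strictly decreases size, so no infinite regress occurs, and the collapse is into a genuinely finite set. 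A secondary subtlety is that weak interpolation for $\CC T$ only bounds $FV(I_j)\subseteq FV(A)\cap FV(\Gamma)$ rather than giving an injection $I_j\hookrightarrow A$, so — unlike the linear case — the resulting p-$X$ types need not be subformulas of $A$; but the statement only claims collapse into a finite set of p-$X$ types, so this weaker control suffices. I would close by noting the contravariant case follows by the same argument applied to $\OV A$ (reversing all polarities), or directly by duality.
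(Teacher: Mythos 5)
Your proposal is correct and follows essentially the same route as the paper's own proof: apply the assumed functorial action to the generic arrow $X\to Y$ in the polynomial category $\CC T[x]$, linearize via theorem~\ref{linear}, then reason as in the proof of proposition~\ref{functor} (contraction of the $X$/$Y$ edges, case split on the presence of type~III edges, weak interpolation, renaming $Y\mapsto X$) and transport the resulting p-$X$ linear interpolants back to $\CC L_{\To}$ via $(\cdot)^{\To}$. The only slight imprecision is that the $\CC T$-arrow $A(f)$ has a single hypothesis $X\To Y$ (the multiplication into many linear copies $X\multimap Y$ happens only at the linearization step), but this does not affect the argument.
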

\begin{proof}
Let $A\in \CC L_{\To}$ be a covariant definable $\CC T$-morphism in $X$. Let $Y$ be a variable not occurring free in $A$ and let $u=xy:X\To Y, X\to Y$ in $m\CC T$. Since $u$ is an arrow $u:X\to Y$ in $\CC T[x]$, where $x:\emptyset\to X\To Y$ is a variable arrow, by hypothesis, there exists an arrow $A(u):A\to A[Y/X]$, i.e. a $\lambda$-term $v=A(u):X\To Y, A\to A[Y/X]$. 

By theorem \ref{linear}, $v$ is $\beta\eta$-equivalent to 
$LIN(\partial v[LIN_{j}/x_{j}])$, where $\partial v:X\multimap Y, \dots, X\multimap Y, A^{-}_{1},\dots,A^{-}_{r}\to A^{+}[Y/X]$, where $A^{-}_{i}\in \CC E^{-}(A)$ and $A^{+}\in \CC E^{+}(A)$.

Suppose now $A$ is not p-$X$; we deduce that $A^{+}$ is not p-$X$ either. By reasoning similarly to the proof of proposition \ref{functor} we deduce that, if $\partial v$ has no type III edge over $X$, then $A^{+}$ (and a fortiori $A$) is p-$X$, while if 
$\partial v$ has a type III edge over $X$, then there exist p-$X$ types $A_{1},\dots, A_{p}\in \CC L_{\multimap}$ such that $A_{i}\hookrightarrow A^{+}$ and linear terms $v_{1},\dots, v_{p},w$ such that 
$v_{i}:A^{-}_{1},\dots, A^{-}_{r_{i}}\to A_{i}$ and $w:A_{1},\dots, A_{p}\to A^{+}$. We conclude that $A$ collapses into the p-$X$ types $A_{1}^{\To},\dots, A_{p}^{\To}$. 


\end{proof}

By proposition \ref{functor1} and the fact (proved in \cite{Bruce91}) that type isomorphism $\CC L_{\To}$ is trivial, i.e. $A\simeq B$ iff $A=B$, we have:

\begin{proposition}
For any $A\in \CC L_{\To}$ and $B\in \CC L_{\To}$ p-$X$ (resp. n-$X$), $A$ is isomorphic to $B$ only if $A$ is p-$X$ (resp. n-$X$).
\end{proposition}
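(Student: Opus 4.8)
The plan is to mimic the proof of the corresponding statement for $\lambda_{\multimap}$ (Proposition~\ref{iso2}), replacing the appeal to Proposition~\ref{functor} with its $\lambda_{\To}$-analogue Proposition~\ref{functor1}, and replacing the length argument (Proposition~\ref{iso}) by the already-quoted result of Bruce--Di~Cosmo--Longo that type isomorphism in $\CC L_{\To}$ is trivial. So suppose $A\in\CC L_{\To}$ is isomorphic to a p-$X$ type $B$ (the n-$X$ case is symmetric, working with the opposite category). An isomorphism $A\simeq B$ gives, in particular, arrows $u:A\to B$ and $v:B\to A$ in $\CC T$ with $v\circ u\simeq_{\beta\eta}\mathrm{id}_A$ and $u\circ v\simeq_{\beta\eta}\mathrm{id}_B$; these arrows are definable in every polynomial extension $\CC T[x_1,\dots,x_n]$ as well, simply by reusing the same closed $\lambda$-terms.

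The key step is to turn this isomorphism into a covariant definable $\CC T$-morphism structure on $A$ in $X$. Since $B$ is p-$X$, Proposition~\ref{functor1}'s converse direction gives a covariant definable $\CC T$-morphism $F_B$ with $F_B(C)=B[C/X]$ and a functorial action $F_B(w):B[C/X]\to B[D/X]$ on every arrow $w:C\to D$. Define $F_A(C):=A[C/X]$ on objects and, for an arrow $w:C\to D$ in any $\CC T[x_1,\dots,x_n]$, set
\begin{equation*}
F_A(w) \ := \ v[D/X] \circ F_B(w) \circ u[C/X] \ : \ A[C/X] \to A[D/X],
\end{equation*}
where $u[C/X],v[D/X]$ are the instances of the isomorphism terms at the appropriate type. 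One checks immediately that $F_A$ sends an arrow $w:C\to D$ in $\CC T[x_1,\dots,x_n]$ to an arrow $A[C/X]\to A[D/X]$ in $\CC T[x_1,\dots,x_n]$, which is exactly the requirement in Definition~\ref{defifuncT}; note that $F_A$ need not be functorial (it will in general fail $F_A(\mathrm{id})=\mathrm{id}$), but Definition~\ref{defifuncT} does not demand functoriality, only that the action be well-typed and definable. Hence $A$ is a covariant definable $\CC T$-morphism in $X$.

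Now apply Proposition~\ref{functor1}: either $A$ is p-$X$ — in which case we are done — or $A$ collapses into a finite set of p-$X$ types. In the latter case $A$ is $\prec_{wI}$-non-minimal, and in particular (since collapse strictly decreases the length $\ell$, cf.\ the $\lambda_{\multimap}$ remarks and the preservation-of-size Lemma~\ref{size}) $A$ cannot be isomorphic to any type of strictly smaller length; but the triviality of isomorphism in $\CC L_{\To}$ forces $A=B$, contradicting the fact that collapse produces a strictly smaller type. Hence the second alternative is impossible, and $A$ is p-$X$, as claimed; the n-$X$ case follows by the dual argument in $\CC T^{op}$. The main obstacle I anticipate is making precise the link between ``$A$ collapses'' and ``$A$ is strictly smaller, hence $A\neq B$'': one must verify that weak collapse in $\CC T$ indeed yields interpolant types that are in a suitable sense simpler than $A$ (so that $A=B$ with $B$ appearing among the collapse targets is absurd), which is where the size bound in Theorem~\ref{interpolationl}(2) and the triviality of $\CC L_{\To}$ isomorphisms must be combined carefully — alternatively, one sidesteps this by observing directly that if $A$ collapsed then $A$ would be isomorphic to nothing smaller yet equal to the p-$X$ type $B$, making $A$ itself p-$X$, which already closes the argument.
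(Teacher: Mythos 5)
Your argument does reach the right conclusion, but only because of the parenthetical observation in your last sentence — and that observation is, in essence, the whole of the paper's proof. The paper derives this proposition directly from the triviality of type isomorphism in $\CC L_{\To}$ (\cite{Bruce91}): $A\simeq B$ forces $A=B$, and $B$ is p-$X$, so $A$ is p-$X$. The conjugation construction turning $A$ into a definable covariant $\CC T$-morphism and the subsequent appeal to Proposition \ref{functor1} are dispensable here; that machinery is what one would need in a setting, such as $\CC A$, where isomorphism is \emph{not} trivial (and it is indeed how Proposition \ref{iso2} is obtained), but once $A=B$ is available the case analysis you set up never has to be resolved.

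Two steps in your main line of argument are genuinely wrong, even though they end up not mattering. First, ``collapse strictly decreases the length $\ell$'' is not available in $\lambda_{\To}$: Lemma \ref{size} and clause 2 of Theorem \ref{interpolationl} bound the size of the \emph{terms}, not of the interpolant \emph{types}, and the paper explicitly exhibits, right after Theorem \ref{interpolationl}, arrows $u_{k}:A\to A$ with $A=(X\To X)\To(X\To X)$ whose interpolants have strictly larger logical complexity than $A$. Second, even granting a size decrease, there is no contradiction in ``$A=B$ and $A$ collapses into a finite family of p-$X$ types'': a p-$X$ type may perfectly well fail to be $\prec_{wI}$-minimal, so the second alternative of Proposition \ref{functor1} cannot be excluded this way. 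What actually closes the proof is exactly your final remark: in either alternative, triviality of isomorphism gives $A=B$, which is already p-$X$.
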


%
%

\section{Instantiation overflow and Russell-Prawitz types}\label{sec6}

In this section we introduce the instantiation overflow property and its relation to Russell-Prawitz types. Moreover, we introduce some classes of types which generalize Russell-Prawitz types, though preserving the instantiation overflow property, and which will play a central role in the next sections.
Our formulation of instantiation overflow essentially follows the functorial approach developed in \cite{StudiaLogica1},
which yields expansion terms equivalent to those in \cite{Ferreira2013}.

Functorial polymorphism is a useful tool to investigate the Russell-Prawitz translation, as the requirement that the translated connectives satisfy all properties of the original connectives corresponds to a dinaturality condition (\cite{Hasegawa2009}). In categorial terms, this requirement corresponds to asking that the $RP$ translation preserves universal properties of connectives: for instance, that the $RP$ translation of conjunction and disjunction preserve the universal properties of products and coproducts, respectively. In proof-theoretic terms, this means asking that the $RP$ translation preserves the equational theory over derivations generated by $\beta,\eta$-equivalences and permuting conversions (see \cite{StudiaLogica}).

We will now describe instantiation overflow within the functorial framework introduced in subsection \ref{sec23} and show that, when $A$ is a $RP$ type, the expansion term $IO_{A}(B):\forall XA\to A[B/X]$, is equivalent, modulo dinaturality, to the term $xB$, corresponding to an instance of full extraction.


\begin{definition}[instantiation overflow]
A type of the form $\forall XA$ has the \emph{instantiation overflow} property ($IO$ for short) if, for any $B\in \CC L_{\To,\forall}$, there exists an arrow $IO_{A}(B):\forall XA\to A[B/X]$ in $\CC F_{at}$.

\end{definition}

As it was discussed in the introduction, the types figuring as the Russell-Prawitz translation of logical connectives 
are examples of types having $IO$. These
are types of the form
$\forall X(A_{1}\To \dots \To A_{p}\To X)$\footnote{Here, for simplicity, we only consider types translating propositional connectives. A more general treatment of Russell-Prawitz types requires to define them as types of the form $\forall X\forall \OV Y_{1}(A_{1}\To \dots \To \forall \OV Y_{p}(A_{p}\To X))$, in order to account also for the translation of second order existential quantification $\exists YA$ as $\forall X(\forall Y(A\To X)\To X)$.}, where the $A_{i}$ are in turn of the form
$B_{i}^{1}\To \dots \To B_{i}^{n_{i}}\To X$, with $X\notin FV(B_{i}^{j})$.

Observe that the types $A_{i}$ are p-$X$. This remark allows to define $RP$ types formally as follows:

\begin{definition}[$RP$ type]\label{rpx}
For any $X\in\CC V$, a type $A\in \CC L_{\To, \forall}$ is called a \emph{Russell-Prawitz type in $X$} ($RP_{X}$ for short) if $A= A_{1}\To \dots \To A_{p}\To X$, where the $A_{i}$ are p-$X$.

A type $A\in \CC L_{\To, \forall}$ is called a \emph{Russell-Prawitz type} ($RP$ for short) if $A=\forall XA'$, where $A'$ is $RP_{X}$. 

\end{definition}

We now show how to construct expansion terms for $RP$ types by exploiting the functoriality of p-$X$ types.

\begin{proposition}
Any $RP$ type has $IO$.
\end{proposition}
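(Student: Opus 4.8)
The plan is to construct explicitly, for each $RP$ type $\forall X A$ with $A = A_1 \To \dots \To A_p \To X$ (the $A_i$ being $\text{p-}X$), and each $B \in \CC L_{\To,\forall}$, an $F_{at}$-arrow $IO_A(B)\colon \forall XA \to A[B/X]$, following the functorial recipe sketched in Figure~\ref{IOintro} of the introduction. First I would write $B$ in its general form $B = \forall \OV Y_1(B_1 \To \forall \OV Y_2(B_2 \To \dots \To \forall \OV Y_q(B_q \To \forall Y_{q+1} Y)\dots))$, so that there is a canonical ``elimination'' term $Elim_B$ which, from hypotheses $z\colon B$ and $z_1\colon B_1,\dots,z_q\colon B_q$, produces a term of type $Y$ using only $\forall E$ (instantiating all the $\OV Y_i$ and $Y_{q+1}$ by $Y$) and $\To E$ applications — crucially this uses only atomic instantiations, hence lives in $F_{at}$. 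This gives an arrow $Elim_B\colon B, B_1,\dots,B_q \to Y$, i.e. a closed $F_{at}$-term of type $B \To B_1 \To \dots \To B_q \To Y$.

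The heart of the argument is the functorial action of the $\text{p-}X$ types $A_i$ on $Elim_B$. Since each $A_i$ is $\text{p-}X$, by the covariant functoriality of $\text{p-}X$ types over the syntactic category (Proposition~\ref{sopra}, and more concretely the $\lambda$-calculus description of $A(u,v)$ given in Section~\ref{sec23}, restricted to the covariant case), each arrow $u\colon C \to D$ yields an arrow $A_i(u)\colon A_i[C/X] \to A_i[D/X]$, and this construction only uses $\To I, \To E$ and term variables — no type instantiation at all — so it is available in $F_{at}$, even in a polynomial extension (so that $Elim_B$ with its free variables $z_1,\dots,z_q$ may be fed in). Applying this with $u = \lambda z^B. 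Elim_B\, z\, z_1 \dots z_q \colon B \to Y$ (in the polynomial category $\CC F_{at}[z_1,\dots,z_q]$) gives arrows $A_i(u)\colon A_i[B/X] \to A_i[Y/X]$. Now I assemble the term: given $x\colon \forall XA$, instantiate atomically to get $xY\colon A_1[Y/X] \To \dots \To A_p[Y/X] \To Y$; given hypotheses $w_i\colon A_i[B/X]$ and $z_1\colon B_1, \dots, z_q\colon B_q$, form $xY\,(A_1(u)\,w_1)\,\dots\,(A_p(u)\,w_p)\colon Y$; discharge $z_1,\dots,z_q$ to obtain a term of type $B_1 \To \dots \To B_q \To Y$, which, by the choice of $Elim_B$ and $B$'s shape, is exactly a term of type $B$ (re-packaging the $\forall\OV Y_i$-quantifiers and implications — this uses $\forall I$ on the $\OV Y_i$ and $Y_{q+1}$, which is legitimate since $Y$ is fresh); finally discharge $w_1,\dots,w_p$ and $\Lambda$-abstract nothing more, yielding $IO_A(B) = \lambda x^{\forall XA}. \lambda w_1 \dots \lambda w_p.\, (\text{repackaging of } xY(A_1(u)w_1)\dots(A_p(u)w_p))\colon \forall XA \to A[B/X]$ in $F_{at}$.

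The main obstacle, and the step requiring the most care, is checking that the ``repackaging'' step genuinely produces a term of type $A[B/X] = A_1[B/X] \To \dots \To A_p[B/X] \To B$ in $F_{at}$: one must verify that abstracting the $z_i$ from a term of type $Y$ and then generalizing the fresh variables $\OV Y_1,\dots,\OV Y_q,Y_{q+1}$ (none of which occur in the types of the free variables $x, w_1,\dots,w_p$, precisely because the $A_i$ are $\text{p-}X$ and $X$ is being instantiated by $B$ whose bound variables are fresh) reconstitutes the type $B$ up to $\alpha$-renaming — this is where the $RP$ shape is essential and where one must be pedantic about variable freshness and the side condition of $\forall I$. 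A secondary point is to confirm that the covariant functorial action $A_i(u)$, as defined in Section~\ref{sec23}, indeed stays within $F_{at}$ when $u$ lives in a polynomial extension; this is immediate by inspecting the inductive clauses, since for $\text{p-}X$ types the clause for $\forall Y A'$ is never reached in a way that introduces non-atomic instantiation (the only instantiations are the atomic ones hidden in $Elim_B$). The remark after the proposition — that modulo dinaturality $IO_A(B)$ equals $xB$ — I would leave to the cited companion work \cite{StudiaLogica1} rather than prove here.
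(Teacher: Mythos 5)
Your construction is essentially the paper's own proof: the same $Elim_{B}$/$Intro_{B}$ pair, the same use of the covariant action of the p-$X$ premises $A_{i}$ on the open arrow $Elim_{B}\colon B\to Y$ in a polynomial extension, and the same assembly $\lambda w_{1}.\dots.\lambda w_{p}.\,Intro_{B}\big(xY\,(A_{1}(Elim_{B})\,w_{1})\dots(A_{p}(Elim_{B})\,w_{p})\big)$. The only slip is in your description of $Elim_{B}$: each quantifier $\forall Y_{ij}$ should be instantiated at its own variable $Y_{ij}$ (as the paper does with $x\OV Y_{1}x_{1}\dots$) rather than uniformly at the tail variable $Y$, since otherwise the hypothesis types become $B_{j}$ with earlier bound variables substituted away and the repackaging step no longer reconstitutes $B$; with that correction everything goes through exactly as you describe.
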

\begin{proof}
Let $B$ be the type $\forall \OV Y_{1}(B_{1}\To \forall \OV Y_{2}(B_{2}\To \dots \To \forall \OV Y_{n}(B_{n}\To \forall \OV Y_{n+1}Z))$. We let $Elim_{B}: B,B_{1},\dots, B_{n}\to Z$ be the term below
$$Elim_{B}:= x \OV Y_{1} x_{1}\OV Y_{2}x_{2} \dots \OV Y_{n} x_{n}\OV Y_{n+1}$$
where $\OV Y_{i}$ denotes a sequence of type variables $Y_{i1}\dots Y_{ik_{i}}$. We have that $x:B, \Delta\vdash Elim_{B}: Z$, where $\Delta=\{x_{i}:B_{i}\mid 1\leq i\leq n\}$. 
For any arrow $u:\Gamma,\Delta\to Z$, where $\Delta$ is as before, we let $Intro_{B}(u)$ be the term below
$$Intro_{B}(u)=\Lambda\OV Y_{1}.\lambda x_{1}^{B_{1}}.\Lambda \OV Y_{2}.\lambda x_{2}^{B_{2}}.\dots.\Lambda\OV Y_{n}.u$$
where $\Lambda \OV Y_{i}$ indicates a finite sequence of abstractions $\Lambda Y_{i1}.\dots.\Lambda Y_{ik_{i}}$. We have that $\Gamma\vdash Intro_{B}(u): B$

Let now $A=A_{1}\To \dots \To A_{p}\To X$ be $RP_{X}$. We can define $IO_{A}(B)$ as follows:
$$
IO_{A}(B)=
\lambda x_{1}^{A_{1}[B/X]}.\dots.\lambda x_{p}^{A_{p}[B/X]}.Intro_{B} \big (
xZ (A_{1}(Elim_{B})\dots (A_{p}(Elim_{B}))\big )$$
where $A_{i}(Elim_{B})$ denotes the application of the definable $\CC T$-morphism $A_{i}$ to the arrow $Elim_{B}:B\to X$ in $\CC T$ and $x_{i}$ is the unique variable of $A_{i}(Elim_{B})$ of type $A_{i}[B/X]$. 

The term $IO_{A}(B)$ corresponds to the derivation illustrated in figure \ref{IOintro} in the introduction.

\end{proof}

We now show that the arrows $IO_{A}(B):\forall XA\to A[B/X]$ in $\CC F_{at}$ are equivalent, modulo dinaturality, to the arrows $xB:\forall XA\to A[B/X]$ in $\CC F$, obtained by one instance of the full extraction rule. This fact says that atomic polymorphism and full polymorphism for $RP$ types are indistiguishable modulo dinaturality/parametricity.

We recall that $\simeq_{\varepsilon}$, introduced in subsection \ref{sec23}, indicates the $\lambda^{2}$-theory extending $\beta\eta$-equivalence and generated by dinaturality.

\begin{proposition}\label{dinatural}

Let $\forall XA\in \CC L_{\To}$ be $RP$. Then, for all $B\in \CC L_{\To, \forall}$, $IO_{A}(B)\simeq_{\varepsilon} xB$.



\end{proposition}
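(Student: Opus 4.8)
The plan is to verify that $IO_A(B)$ and $xB$ fit into the two legs of a dinaturality square (or a short composite of such squares), so that they are identified by $\simeq_\varepsilon$. Recall that $A = A_1 \To \dots \To A_p \To X$ with each $A_i$ being p-$X$, and that $B = \forall \OV Y_1(B_1 \To \forall \OV Y_2(B_2 \To \dots \To \forall \OV Y_n(B_n \To \forall \OV Y_{n+1}Z)))$. First I would record the two key facts about the auxiliary terms $Elim_B : B \to Z$ (extended to the context $\Delta$) and $Intro_B$. On the one hand, $Elim_B$ is built purely out of $\forall E$ rules, so it is itself an instance of iterated full extraction; on the other hand, $Intro_B \circ Elim_B$ is $\beta$-convertible (in fact $\beta\eta$-convertible once one expands on the $\OV Y_i$) to the identity on $B$ in the context $\Delta$. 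These are the "$\eta$-expansion'' ingredients that will collapse the instantiation-overflow term back to plain extraction once dinaturality is applied to move the functorial actions $A_i(Elim_B)$ past $xZ$.

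The central step is the dinaturality equation for the arrow $x : \forall XA \to A$ seen as the generic element, applied along the arrow $Elim_B : B \to Z$ (working, if needed, in the polynomial category as in equation~\ref{dinas} to handle the open context $\Delta$). Instantiating the functorial-action definitions: $A(\,\cdot\,,\,\cdot\,)$ on $A = A_1\To\dots\To A_p\To X$ contravariantly precomposes each $A_i$-slot and covariantly postcomposes the $X$-slot, and since the $A_i$ are p-$X$ their actions are the covariant functorial actions $A_i(Elim_B)$ used in the definition of $IO_A(B)$. Writing out the square for $x$ along $Elim_B : B \to Z$ gives, on one side, $A(B, Elim_B) \circ x_B$ and on the other $A(Elim_B, Z) \circ x_Z$; expanding both sides in $\lambda$-calculus notation and using the $\beta\eta$-laws together with the identity $Intro_B \circ Elim_B \simeq_{\beta\eta} \mathrm{id}$ (in context $\Delta$), the right-hand side rewrites precisely to $IO_A(B)$ while the left-hand side rewrites to $xB$ composed with an identity, i.e.\ to $xB$. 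I would carry out this bookkeeping slot by slot, being careful that the abstractions $\lambda x_i^{A_i[B/X]}$ and the re-abstraction of the $\OV Y_i, x_i$ inside $Intro_B$ line up with the shape of $A(\cdot,\cdot)$, and that the $\forall$-quantifiers inside $B$ are handled by the $\forall YA(u,v)$ clause of the functorial action.

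The main obstacle I anticipate is purely notational rather than conceptual: making the match between the inductively defined functorial action $A_i(Elim_B)$ and the explicit term $Elim_B = x\OV Y_1 x_1 \OV Y_2 x_2 \dots \OV Y_n x_n \OV Y_{n+1}$ precise, including the free-variable discipline (the $x_i : B_i$ in $\Delta$ must be shared across the $A_i(Elim_B)$, which is exactly why one passes to $IO_A(B)$ rather than a bare composite), and ensuring that the single dinaturality square for $x$ along $Elim_B$ really does suffice — one may instead need to iterate dinaturality along each $\forall E$ step hidden inside $Elim_B$, i.e.\ along the arrows peeling off $B_i$ and $Y_{ij}$ one at a time, and then reassemble. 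If a single square does not literally close, the fallback is to decompose $Elim_B$ as a composite of "elementary'' extractions and apply \ref{dinas} to each factor, using functoriality $A_i(g\circ h) = A_i(g)\circ A_i(h)$ of the definable $\CC T$-morphisms $A_i$ to glue the squares; the $\simeq_\varepsilon$-identification then follows by transitivity. This is essentially the computation carried out in \cite{StudiaLogica1}, so I expect no genuine difficulty beyond transcribing it into the present notation.
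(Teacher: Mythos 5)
Your proposal follows essentially the same route as the paper: a single dinaturality square for the extraction family $ext_C[x]=xC$ along the arrow $Elim_{B}:B\to Z$ in the polynomial category $\CC F[x_{1},\dots,x_{n}]$, identifying $A(Elim_{B},Z)\circ ext_{Z}$ with $IO_{A}(B)$ (via $Intro_{B}$) and $A(B,Elim_{B})\circ ext_{B}$ with $xB$ (via the $\eta$-law $Intro_{B}\circ Elim_{B}\simeq_{\beta\eta}\mathrm{id}$). The single square does suffice, so your fallback of decomposing $Elim_{B}$ into elementary extractions is not needed.
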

\begin{proof}
Let $A=A_{1}\To \dots \To A_{p}\To X$, where the $A_{i}$ are p-$X$. For all $B=
\forall \OV Y_{1}(B_{1}\To \dots \forall \OV Y_{n}(B_{n}\To \forall \OV Y_{n+1} Z)\dots )$, let us consider the polynomial category $\CC F[x_{1},\dots, x_{n}]$, with variable arrows $x_{i}:\emptyset \to B_{i}$.

Let, for any type $C$, $ext_{C}[x]$ indicate the term $xC$; we have that 
$A(Elim_{B},Z)\circ ext_{Z}= ext_{Z} (A_{1}(Elim_{B}))\dots (A_{p}(Elim_{B})): \forall XA \to A[Z,Z]$, 
hence $IO_{A}(B)$ can be written as
$$IO_{A}(B)=\lambda z_{1}^{A_{1}[B/X]}.\dots.\lambda z_{p}^{A_{p}[B/X]}.Intro_{B}( A(Elim_{B},Z)\circ ext_{Z})$$
Observe that $A(Elim_{B},Z): A[Z,Z]\to A[B,Z]$ and $A(B,Elim_{B}): A[B,B]\to A[B,Z]$. Hence, dinaturality yields 
\begin{center}
\resizebox{0.4\textwidth}{!}{
$$
\xymatrixcolsep{5pc}\xymatrix{
\forall XA \ar[r]^{ext_{Z}[x]}\ar[d]_{ext_{B}[x]}  &  A[Z,Z] \ar[d]^{A(Elim_{B},Z)} \\
A[B,B] \ar[r]_{A(B, Elim_{B})} & A[B,Z]
}
$$}
\end{center}
i.e. 
$ A(Elim_{B},Z)\circ ext_{Z}  \ \simeq_{\varepsilon} \  A(B,Elim_{B})\circ ext_{B}$, and we can compute
$A(B,Elim_{B})\circ ext_{B} \simeq_{\beta\eta}
\lambda z_{1}^{A_{1}[B/X]}.\dots.\lambda z_{p}^{A_{p}[B/X]}.Elim_{B}[ext_{B}z_{1}\dots z_{n}/x]$. We finally deduce 
$$IO_{ A}(B)= \lambda z_{1}^{A_{1}[B/X]}.\dots.\lambda z_{p}^{A_{p}[B/X]}.Intro_{B} (Elim_{B}[ext_{B}[x]z_{1}\dots z_{p}/x]) \  \simeq_{\beta\eta} \  ext_{B}[x]$$


\end{proof}

We introduce now two classes of types, \emph{quasi Russell-Prawitz types} and \emph{generalized Russell-Prawitz types}, which will play a central role in the next sections.
These classes generalize definition \ref{rpx} though preserving the instantiation overflow property. 

\begin{definition}[$qRP$ and $gRP$ types]\label{grpx}

A \emph{quasi Russell-Prawitz type in $X$} ($qRP_{X}$) is a type of the form $A_{0}[R/X]$, where $A_{0}$ is p-$X$ and has a unique occurrence of $X$ and $R$ is $RP_{X}$. If $\CC X\subset \CC V$, a \emph{quasi Russell-Prawitz type in $\CC X$} ($qRP_{\CC X}$) is a type which is $qRP_{X}$ for all $X\in \CC X$;

A \emph{generalized Russell-Prawitz type in $X$} ($gRP_{X}$) is a type of the form $A[X/X_{1},\dots, X/X_{p}]$, where $A$ is $qRP_{\{X_{1},\dots, X_{p}\}}$.

For all $A\in \CC L_{\To}$, the type $\forall XA$ is called \emph{quasi Russell-Prawitz}, $qRP$ for short (resp. \emph{generalized Russell-Prawitz}, $gRP$ for short), when $A$ is $qRP_{X}$ (resp. $A$ is $gRP_{X}$). 

\end{definition}

An example of a $qRP$ type is the type $\forall X((X\To X)\To C)\To D)$, where $X\notin FV(C), FV(D)$. Examples of $gRP$ types are $\forall X((X\To X)\To X)\To X)$, $\forall X(X\To  (((A\To X)\To X)\To B)\To X)$, where $x\notin FV(A),FV(B)$.

In section \ref{sec9} we briefly discuss the relation between $gRP$ types and the Russell-Prawitz translation. 
As it will be clear from the next section, $qRP$ and $gRP$ types generalize $RP$ types in the sense that we can construct expansion terms $IO_{A}(B)$ for them in a way similar to $RP$ types.
Indeed, one can easily extend instantiation overflow to quasi Russell-Prawitz types by exploiting the functoriality of $A_{0}$. The extension to generalized Russell-Prawitz types is a bit more involved and will be treated in the following sections, through an equivalent inductive definition of $gRP_{X}$ types. 

%
%
%
%
%
%
%
%
%
%
%

Generalized Russell-Prawitz types can be seen as $qRP_{\CC X}$ types from which we deleted informations about how to localize its Russell-Prawitz subtypes. However, we will show (proposition \ref{renaming2}) that, given a closed term of type a $gRP_{X}$ type, one can always ``separate variables'', i.e. rename variables so to transform the term into a term of type a $qRP_{\CC X}$ type. 


%
The definition of $RP_{X}, qRP_{X}$ and $gRP_{X}$ types can be adapted to $\lambda_{\multimap}$.
Linear Russell-Prawitz types are defined as follows:
\begin{definition}[linear $RP_{X}$ type]\label{lrpx}
For any variable $X$, a type $A\in \CC L_{\multimap}$ is called a \emph{linear Russell-Prawitz type in $X$} if $A=A_{1}\multimap \dots \multimap A_{p}\multimap X$, where all $A_{i}$ have no occurrence of $X$ but one, which is p-$X$ and has exactly one positive occurrence of $X$.

\end{definition}

For instance, the type $C=(A\multimap B\multimap X)\multimap X$ (we suppose $X\notin FV(A), FV(AB)$), translating the linear connective $A\otimes B$, is linear $RP_{X}$. Clearly, if $A$ is a linear $RP_{X}$ type, then $A^{\To}$ is $RP_{X}$. However the converse need not hold: for instance, while $(A\To X)\To (B\To X)\To X$ is $RP_{X}$ (again, we suppose $X\notin FV(A), FV(AB)$) the type $(A\multimap X)\multimap (B\multimap X)\To X$ is not a linear Russell-Prawitz type in $X$. 

Linear $qRP_{X}$ and $gRP_{X}$ types are defined as in definition \ref{grpx}. For instance, let $D=(A\multimap B\multimap X)\multimap X$, where $X\notin FV(A),FV(B)$ be the linear $RP_{X}$ type discussed above. The type $(D\multimap Y)\multimap (Y\multimap Y)\multimap Z$ is linear $qRP_{X}$. The types $((X\multimap X)\multimap X')\multimap X'$ and  $(D\multimap Y) \multimap (D[X'/X]\multimap Y)\multimap Z$ are linear $qRP_{\{X,X'\}}$. Finally, the types $((X\multimap X)\multimap X)\multimap X$ and $(D\multimap Y) \multimap (D\multimap Y)\multimap Z$ are linear $gRP_{X}$.


\section{The linear expansion property}\label{sec7}

In this section we characterize the types $A\in \CC L_{\multimap}$ which have the \emph{linear expansion property}: for any $B\in \CC L_{\multimap}$, there exists an arrow $EXP_{A}(B): A[Z/X]\to A[B/X]$ in $\CC A$, where $Z$ is the rightmost variable in $B$. This property is obviously related to instantiation overflow: if $A$ is linearly expansible, then $\forall XA^{\To}$ has instantiation overflow. 

We characterize the class of linearly expansible types by showing that linear $gRP_X$ types are ``dense'' in that class. More precisely, we show that, if a type $A$ is linearly expansible, then either $A$ is $gRP_{X}$ or $A$ linearly collapses into a $gRP_X$ type.
Our argument is based on two graphical characterizations of linear $gRP_{X}$ types as (1) those types admitting an internal pairing (proposition \ref{pairing}) and (2) those paired types for which all simple expansion graphs are correct (proposition \ref{$gRP_X$2}). 
%
%
%
%

\subsection{Linearly expansible types and simple expansion graphs}

We fix for all this section a variable $X$. For any type $A$, we indicate by $n(A)$ the number of occurrences of $X$ in $A$ and by $n^{+}(A)$ (resp. $n^{-}(A)$) the number of positive (resp. negative) occurrences of $X$ in $A$, respectively. We will call a type $A$ \emph{paired} in $n^{+}(A)=n^{-}(A)$.

%
%

\begin{definition}\label{wexpa}
A type $A\in \CC L_{\multimap}$ is \emph{weakly linearly expansible in $X$} (resp. \emph{linearly expansible in $X$}) if for every linear types $C_{1},\dots, C_{p}\in \CC L_{\multimap}$, there exists a graph (resp. an allowable graph) $EXP_{A}(B): A \to A[B/X]$, where $B$ is 
$C_{1}\multimap \dots \multimap C_{p}\multimap X$. 

Dually, a type $A\in \CC L_{\multimap}$ is \emph{weakly co-linearly expansible in $X$} (resp. \emph{co-linearly expansible in $X$}) if for every linear types $C_{1},\dots, C_{p}\in \CC L_{\multimap}$, there exists a graph (resp. an allowable graph) $coEXP_{A}(B): A[B/X] \to A$, where $B$ is 
$C_{1}\multimap \dots \multimap C_{p}\multimap X$.

\end{definition}

If $A$ is linearly expansible then, for any linear type $B=B_{1}\multimap \dots \multimap B_{n}\multimap Y$, there exists a correct graph $f:A[Y/X]\to A[B/X]$: indeed, by definition \ref{wexpa} there exists a correct graph $f:A\to A[C/X]$,   where $C=B'_{1}\multimap \dots \multimap B'_{n}\multimap X$, and the $B'_{i}$ are such that $B'_{i}[Y/X]=B_{i}$, hence in particular, $f$ is a correct graph $f:A[Y/X]\to A[B/X]$. 
Conversely, if for any linear type $B=B_{1}\multimap \dots \multimap B_{n}\multimap Y$, there exists a correct graph $f:A[Y/X]\to A[B/X]$, then for any types $C_{1},\dots, C_{n}$, $f$ is a correct graph $f:A\to A[D/X]$, where $D=C_{1}\multimap \dots \multimap C_{n}\multimap X$.

Let $A,B\in \CC L_{\multimap}$, $B=B_{1}\multimap \dots \multimap B_{n}\multimap X$ and $f:A\to A[B/X]$, not necessarily correct. $f$ is called a \emph{simple $B$-expansion of $A$} if $id_{A}\subseteq f$.
Dually, a graph $f:A[B/X]\to A$ is called a \emph{simple $B$-coexpansion of $A$} if $id_{A}\subseteq f$. 

Examples of simple $B$-expansions and $B$-coexpansions are shown in figure \ref{expa}, for the paired types $X\multimap X$ and $(Y\multimap X)\multimap (X\multimap Z)$, respectively.
Simple expansions might fail to be correct. For instance the simple $Y\multimap X$-expansion in figure \ref{expa2} is not correct.

%
%
%
%
%


\begin{figure}
\begin{subfigure}{0.48\textwidth}
\begin{center}
\resizebox{0.5\textwidth}{!}{
\begin{tikzpicture}
\node(a) at (0,0) {$X\multimap X$};

\draw[<-, thick] (0.4,-0.2) to [bend right=18] (1.4,-1.8);

\draw[<-, thick] (-0.6,-1.8) to [bend right=18] (-0.4,-0.2);

\draw[<-, thick] (0.6,-1.8) to [bend right=45] (-1.4,-1.8);

\node(d) at (0,-2) {$(Y\multimap X)\multimap (Y\multimap X)$};

\end{tikzpicture}
}
\end{center}
\caption{Simple expansion of $X\multimap X$}
\label{expa1}
\end{subfigure}
\begin{subfigure}{0.48\textwidth}
\begin{center}
\resizebox{0.7\textwidth}{!}{
\begin{tikzpicture}
\node(a) at (0,0) {$(X\multimap Y)\multimap (X\multimap Z)$};

\draw[<-, thick] (1.4,-0.2) to [bend right=18] (2.6,-1.8);

\draw[->, thick] (-1.4,-1.8) to [bend right=12] (-1.4,-0.2);
\draw[->, thick] (0.6,-0.2) to [bend right=9] (1.6,-1.8);
\draw[->, thick] (-0.6,-0.2) to [bend left=9] (-0.6,-1.8);

\draw[<-, thick] (-2.3,-1.8) to [bend left=45] (0.6,-1.8);

\node(d) at (0,-2) {$( (Y\multimap X)\multimap Y)\multimap (Y\multimap X)\multimap Z$};

\end{tikzpicture}
}
\end{center}
\caption{Simple expansion of $(X\multimap Y)\multimap (X\multimap Z)$}
\label{expa2}
\end{subfigure}

\

\
\

\begin{subfigure}{0.48\textwidth}
\begin{center}
\resizebox{0.5\textwidth}{!}{
\begin{tikzpicture}
\node(a) at (0,-2) {$X\multimap X$};

\draw[->, thick] (0.4,-1.8) to [bend left=18] (1.4,-0.2);

\draw[->, thick] (-0.6,-0.2) to [bend left=18] (-0.4,-1.8);

\draw[->, thick] (0.6,-0.2) to [bend left=45] (-1.4,-0.2);

\node(d) at (0,0) {$(Y\multimap X)\multimap (Y\multimap X)$};

\end{tikzpicture}
}
\end{center}
\caption{Simple coexpansion of $X\multimap X$}
\label{expa3}
\end{subfigure}
\begin{subfigure}{0.48\textwidth}
\begin{center}
\resizebox{0.7\textwidth}{!}{
\begin{tikzpicture}
\node(a) at (0,-2) {$(X\multimap Y)\multimap (X\multimap Z)$};

\draw[->, thick] (1.4,-1.8) to [bend left=18] (2.6,-0.2);

\draw[<-, thick] (-1.4,-0.2) to [bend left=12] (-1.4,-1.8);
\draw[<-, thick] (0.6,-1.8) to [bend left=9] (1.6,-0.2);
\draw[<-, thick] (-0.6,-1.8) to [bend right=9] (-0.6,-0.2);

\draw[->, thick] (-2.3,-0.2) to [bend right=45] (0.6,-0.2);

\node(d) at (0,0) {$( (Y\multimap X)\multimap Y)\multimap (Y\multimap X)\multimap Z$};

\end{tikzpicture}
}
\end{center}
\caption{Simple coexpansion of $(X\multimap Y)\multimap (X\multimap Z)$}
\label{expa4}
\end{subfigure}
\caption{}
\label{expa}
\end{figure}

The following proposition characterizes weakly linearly expansible types:
\begin{proposition}\label{paired}
$A$ is weakly linearly expansible (resp. weakly co-linearly expansible) iff $A$ is paired.
\end{proposition}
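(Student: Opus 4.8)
The plan is to reduce the statement to a purely combinatorial condition on occurrence counts. First I would recall that a labeled graph $f:A\to C$ between linear types is, by definition, a label‑preserving bijection from $vA^{+}+vC^{-}$ to $vA^{-}+vC^{+}$; since in $A\multimap C$ the leaves of $A$ appear with reversed polarity, $vA^{+}+vC^{-}$ is exactly the set of negative leaves of the type $A\multimap C$ and $vA^{-}+vC^{+}$ its set of positive leaves, so such a bijection — refined according to labels — exists if and only if, for every variable $Y$, the type $A\multimap C$ has equally many positive and negative occurrences of $Y$. Applying this with $C=A[B/X]$, where $B=C_{1}\multimap\dots\multimap C_{p}\multimap X$, and writing $n_{Y}^{\pm}(D)$ for the number of positive/negative occurrences of $Y$ in a linear type $D$ (with $n^{\pm}(A):=n_{X}^{\pm}(A)$ as in the statement), the existence of $EXP_{A}(B):A\to A[B/X]$ amounts to
\[
n_{Y}^{+}(A)+n_{Y}^{-}(A[B/X]) \ = \ n_{Y}^{-}(A)+n_{Y}^{+}(A[B/X]) \qquad\text{for every variable }Y.
\]

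The second step is to evaluate $n_{Y}^{\pm}(A[B/X])$ by bookkeeping polarities. The facts I would use about $B$ are that its unique occurrence of $X$ is positive (being the rightmost, hence conclusion, variable) and that each $C_{i}$ occurs \emph{negatively} in $B$. Hence substituting $B$ for an occurrence of $X$ of polarity $\varepsilon$ in $A$ leaves a new $X$ of polarity $\varepsilon$ and turns each occurrence of $Y$ of polarity $\delta$ lying in some $C_{i}$ into an occurrence of polarity $-\varepsilon\delta$ in $A[B/X]$. Setting $m_{Y}^{\pm}:=\sum_{i}n_{Y}^{\pm}(C_{i})$, this gives, uniformly in $Y$,
\[
n_{Y}^{+}(A[B/X]) \ = \ n_{Y}^{+}(A)+n^{+}(A)\,m_{Y}^{-}+n^{-}(A)\,m_{Y}^{+},\qquad
n_{Y}^{-}(A[B/X]) \ = \ n_{Y}^{-}(A)+n^{+}(A)\,m_{Y}^{+}+n^{-}(A)\,m_{Y}^{-}.
\]
Plugging these into the condition above, the terms $n_{Y}^{\pm}(A)$ cancel and one is left with the single identity $(n^{+}(A)-n^{-}(A))(m_{Y}^{+}-m_{Y}^{-})=0$, which must hold for every $Y$ and every choice of the $C_{i}$.

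From this the equivalence is immediate. If $A$ is paired then $n^{+}(A)=n^{-}(A)$, so the identity holds trivially for all $Y$ and all $C_{i}$, hence a labeled graph $A\to A[B/X]$ always exists and $A$ is weakly linearly expansible. Conversely, if $A$ is weakly linearly expansible, take $p=1$ and $C_{1}=Z$ a single variable, so that $m_{Z}^{+}=1$ and $m_{Z}^{-}=0$; the identity for $Y=Z$ forces $n^{+}(A)-n^{-}(A)=0$, i.e. $A$ is paired. For the co‑expansible case I would simply note that a labeled graph is a bijection, hence invertible, and $f\mapsto f^{-1}$ is a bijection between labeled graphs $A\to A[B/X]$ and labeled graphs $A[B/X]\to A$; so $A$ is weakly co‑linearly expansible if and only if it is weakly linearly expansible, which settles the ``resp.'' half of the statement.

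I do not anticipate a real obstacle. The only delicate point is the polarity bookkeeping in the second step — specifically, remembering that each $C_{i}$ sits at a \emph{negative} position in $B=C_{1}\multimap\dots\multimap C_{p}\multimap X$, so that a positive occurrence of $X$ in $A$ injects the $C_{i}$ with \emph{reversed} internal polarities — together with the routine check that the resulting count does not depend on whether the tracked variable $Y$ coincides with $X$, which is exactly what lets the two displayed formulas be written in a single uniform form.
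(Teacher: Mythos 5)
Your proof is correct and takes essentially the same route as the paper: the converse direction (instantiate $B$ with a fresh variable and count its occurrences) is the paper's argument, and your polarity bookkeeping for $A[B/X]$ is sound, including the case $Y=X$. The only difference is cosmetic — the paper's forward direction explicitly builds the graph as $id_{A}$ together with edges induced by an $X$-pairing (a ``simple $B$-expansion'', reused later), whereas you infer existence of a label-preserving bijection directly from the matching occurrence counts; both amount to the same combinatorial observation.
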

\begin{proof}
If a type $A$ is paired, then let $p(A)$ be the set of its $X$-pairings, i.e. the set of all partitions of the occurrences of $X$ in $A$ into edges of occurrences of opposite polarity. Hence, for any $p\in p(A)$ and $B=B_{1}\multimap \dots \multimap B_{n}\multimap X$, we can define a simple $B$-expansion $f_{p}:A\to A[B/X]$ by joining $id_{A}$ with 
edges connecting occurrences of $B_{i}$ of opposite polarity corresponding to edges in $p$. One can similarly define a simple $B$-coexpansion $f_{p}:A[B/X]\to A$.

If $A$ is not paired then, by letting $Y$ be a variable not appearing in $A$ and $B=Y\multimap X$, there can be no graph $f: A\to A[B/X]$, since the number of occurrences of $Y$ in $A\multimap A[B/X]$ is odd.

\end{proof}

To investigate linearly expansible types we must take into consideration the correction criterion for simple expansion graphs. In the following subsections we will show that the paired types whose simple expansion graphs are always corrects are exactly the linear $gRP_{X}$ types.

\subsection{Linear generalized Russell-Prawitz types}

In the rest of this section, by a $RP_{X}$ (resp. $qRP_{X}$, $gRP_{X}$) we will indicate a linear Russell-Prawitz type in $X$ (resp. quasi Russell-Prawitz in $X$, generalized Russell-Prawitz in $X$).

If the linear types $A_{1},\dots, A_{n}$ are $qRP_X$ (resp. $qRP_{\CC X}$, $gRP_{X}$) and $Z\neq X$ (resp. $Z\notin \CC X$, $Z\neq X$), then the type $A_{1}\multimap \dots \multimap A_{n}\multimap Z$ is called \emph{$co-qRP_{X}$} (resp. \emph{$co-qRP_{\CC X}$}, \emph{$co-gRP_{X}$}).

We given now a different, inductive, definition of $gRP_X$ and $co-gRP_{X}$ types. The equivalence of definition \ref{grpx} and the one below is proved at the end of this section (proposition \ref{equivalences}). The inductive definition will allow us to obtain two different graphical characterizations of $gRP_X$ types.

\begin{definition}[linear $gRP_X$ type, inductive definition]\label{$gRP_X$}
We define by mutual induction the classes $ gRP_X^n, co-gRP_X^n\subseteq \CC L_{\multimap}$, for $n\in \BB N$, as follows:
\begin{enumerate}

\item if $A$ has no occurrence of $X$, then $A\in gRP_{X}^{0}, co-gRP_X^0$;

\item $X\in co-gRP^1_{X}$; 
\item if $B\in  gRP_X^p$ and $C\in co-gRP_X^{q}$, then $B\multimap C\in co-gRP_X^{p+q}$;
\item if $B\in co-gRP_X^p$ and $C\in  gRP_X^{q}$, then $B\multimap C\in  gRP_X^{p+q}$;

\item if $A_{1}\in co-gRP_X^{n_1},\dots, A_{p}\in co-gRP_X^{n_p}$ and for at least one $1\leq i\leq p$, $i\neq 0$, then 
$
A_{1}\multimap \dots  \multimap A_{p}\multimap X
$
is in $ gRP^X_{(\sum_{i}^{p}{}n_{i})-1}$. 

\end{enumerate}

We let $gRP_{X}:= gRP_X^0$ (resp. $co-gRP_{X}=co-gRP_X^0$).
\end{definition}
It can be verified by induction that
\begin{itemize}
\item if $A\in gRP_X^0$ or $A\in co-gRP_X^0$, then $n^{+}(A)=n^{-}(A)$;
\item if $A\in co-gRP_X^n$, then $n^{+}(A)=n^{-}(A)+n$;
\item if $A\in gRP_X^n$, then $n^{-}(A)=n^{+}(A)+n$.

\end{itemize}

Given a linear type $A\in \CC L_{\multimap}$, we indicate each occurrence of $X$ in $A$ by a distinct label $\alpha\in \BB N$. We introduce the notion of $(n,\epsilon)$-pairing:
\begin{definition}[pairing]
Let $A\in \CC L_{\multimap}$ be such that $n^{+}(A)=n^{-}(A)+q$ (resp. $n^{-}(A)=n^{+}(A)+q$). A 
\emph{$(q,+)$-pairing} (resp. \emph{$(q,-)$-pairing}) of $A$ is a pair
$(\F p, \F a)$ where $\F a=\{\alpha_{1},\dots,\alpha_{q}\}$ is a set containing labels of distinct positive (resp. negative) occurrences of $X$ in $A$ and $\F p$ is a pairing of the remaining $2\cdot n^{+}(A)$ occurrences of $X$ in $A$.
\end{definition}

For instance, $((X_{\alpha}^{+},X_{\beta}^{-}), \{X_{\gamma}^{+}\})$ is a $(1,+)$ pairing of the linear type $(X_{\alpha}\multimap X_{\beta})\multimap X_{\gamma}$ and $((X_{\gamma}^{+},X_{\beta}^{-}), \{X_{\alpha}^{-}\})$ is a $(1,-)$ pairing of the linear type $X_{\alpha}\multimap X_{\beta}\multimap X_{\gamma}$.

%
%
%
%

The definition below associates with any $gRP_X$ or $co-gRP_{X}$ type $A$ a set of {pairings} $P(A)$. 

\begin{definition}
To any $A\in  gRP_X^n$ (resp. $A\in co-gRP_X^n$) we associate a set of $(n,-)$-pairings (resp. $(n,+)$-pairings) $P_{n}(A)$ as follows;
\begin{enumerate}

\item[1.] if $A$ has no occurrence of $X$, then $P_{0}(A)=\{(\emptyset, \emptyset)\}$;

\item[2.] if $A=X_{\alpha}$, then $P_{1}(A)= \{ (\emptyset, \{\alpha\})\}$;

\item[3.] if $A=B\multimap C$, where $B\in  gRP_X^{n_1}, C\in co-gRP_X^{n_2}$, then $P_{d}(A)$, where $d=n_{1}+n_{2}$, contains all $(d,+)$-pairings of the form
$(\F p\cup \F q, \F a\cup \F b)$, where $(\F p,\F a)\in P_{n_{1}}(B)$, $(\F q, \F b)\in P_{n_{2}}(C)$;


\item[4.] if $A=B\multimap C$, where $B\in co-gRP_X^{n_1}, C\in  gRP_X^{n_2}$, then $P_{d}(A)$, where $d=n_{1}+n_{2}$, contains all $(d,-)$-pairings of the form
$(\F p\cup\F  q, \F a\cup \F b)$, where $(\F p,\F a)\in P_{n_{1}}(B)$, $(\F q,\F b)\in P_{n_{2}}(C)$;

%
%

\item[5.] if $A=A_{1}\multimap \dots  \multimap A_{k}\multimap X_{\alpha}$, with $A_{i}\in co-gRP_X^{n_i}$ and for some $1\leq k\leq i$, $k\neq 0$, then $P_{d}(A)$, where $d=(\sum_{i}^{p}{n_{i}})-1$ contains all $(d,-)$-pairings of the form
$(\F p_{1}\cup \dots \cup \F p_{k}\cup \F p \cup \{(\beta ,\alpha)\}, \F c)$, where $(\F p_{i}, \F a_{i})\in P_{n_{i}}(A_{i})$, $\beta\in \bigcup_{i}\F a_{i}$ and $\F c= \bigcup \F a_{i}-\{\beta\}$.

%

\end{enumerate}

If $A\in gRP_{X}$ or $A\in co-gRP_{X}$, then we let $P(A)= \{\F p\mid (\F p, \emptyset)\in P_{0}(A)\}$.
\end{definition}

We will now show that $gRP_X$ and $co-gRP_{X}$ types can be characterized by the properties of their pairings.

%
%

We introduce some terminology about correction graphs:
with abuse of notation, we will make no distinction between the type $A$ and the shape $S_{A}$. In particular, by a \emph{$\multimap^{+}$-node} (resp. \emph{$\multimap^{-}$-node}) in $A$ we indicate a positive (resp. negative) occurrence of a type $B\multimap C$ in $A$. By the \emph{positive branch} (illustrated in figure \ref{pos}) of a $\multimap^{+}$-node $l$ we indicate the list of all $\multimap^{+}$-node which are reachable in $S_{A}$ going upwards from $l$ plus the positive occurrence of a variable appearing at the end of this branch. 
Similarly, by a \emph{negative branch} (illustrated in figure \ref{neg}) of a $\multimap^{-}$-node we indicate the list of all $\multimap^{-}$-node from which $l$ is reachable in $S_{A}$ plus the negative occurrence of a variable from which all such nodes are reachable going downwards.

\begin{figure}
\begin{subfigure}{0.48\textwidth}
\begin{center}
\resizebox{0.4\textwidth}{!}{
\begin{tikzpicture}
\node(a) at (0,0) {$\multimap^{+}$};

\node(b) at (2,2) {$\multimap^{+}$};
\node(c) at (3,3) {$Y^{+}$};

\draw[thick,->] (a) to (b);
\draw[thick,->] (b) to (c);

\end{tikzpicture}}
\end{center}
\caption{Positive branch}
\label{pos}
\end{subfigure}
\begin{subfigure}{0.48\textwidth}
\begin{center}
\resizebox{0.4\textwidth}{!}{
\begin{tikzpicture}
\node(a) at (3,3) {$Y^{-}$};

\node(b) at (2,2) {$\multimap^{-}$};
\node(c) at (0,0) {$\multimap^{-}$};

\draw[thick,->] (a) to (b);
\draw[thick,->] (b) to (c);
\draw[thick,->] (b) to (1,3);
\draw[thick,->] (c) to (-1,1);

\end{tikzpicture}}
\end{center}
\caption{Negative branch}
\label{neg}
\end{subfigure}
\caption{Positive and negative branches}
\end{figure}
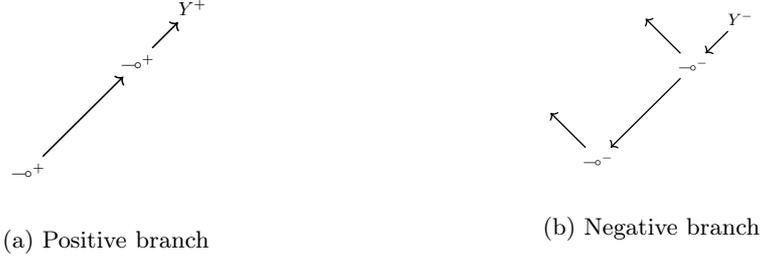

Also, by the \emph{tree of a node $\multimap^{\epsilon}$} we indicate the subtree having the node as root (and also the corresponding occurrence of a subtype of $A$). By the \emph{maximal tree of a $\multimap^{+}$ (resp. $\multimap^{-}$) node $l$ } we indicate the subtree having as root the first (resp. last) $\multimap^{+}$ (resp. $\multimap^{-}$) node $l'$ of the positive (resp. negative) branch of $l$.


Let $A$ be a type and $e=(X_{\alpha}^{+},X_{\beta}^{-})$ indicate an edge made, respectively, of a positive and a negative occurrence of $X$ in $A$. We will say that $e$ is a \emph{jump out} edge in $A$ if either  $X_{\alpha}$ is the conclusion of the positive branch starting in a $\multimap^{+}$-node and $X^{-}$ occurs outside of the maximal tree of this node, or $X_{\alpha}$ is lefthand premiss of a $\multimap^{-}$-node. 

The dual notion is that of a \emph{jump in} edge in $A$, that is an edge $e=(X^{+}_{\alpha},X^{-}_{\beta})$ such that either  $X^{-}_{\beta}$ is the start of the negative branch ending in a $\multimap^{-}$-node and $X^{+}$ occurs outside of the maximal tree of this node, or $X^{-}_{\beta}$ is lefthand premiss of a $\multimap^{+}$-node. 


An edge $e=(X^{+}_{\alpha},X^{-}_{\beta})$ which is not a jump out in $A$, i.e. such that $X^{+} _{\alpha}$ is the conclusion of the positive branch of a $\multimap^{+}$ node $l$ and $X^{-}_{\beta}$ occurs inside the maximal tree of $l$, is called an \emph{internal edge}.

For instance, in the type $((X^{-}_{\alpha}\multimap X^{+}_{\alpha'})\multimap X^{-}_{\beta})\multimap X^{+}_{\beta'}$ the edge $(X^{+}_{\beta'}, X^{-}_{\alpha})$ is internal, the edge $(X^{+}_{\alpha'},X^{-}_{\beta})$ is a jump out edge and the edge $(X^{+}_{\alpha'},X^{-}_{\alpha})$ is both internal and a jump in edge.

%
%
%
%
%
%

The proposition below characterizes $gRP_n^X$ and $co-gRP_n^X$ types by properties of their $(n,\epsilon)$-pairings. In particular, it characterizes $gRP_X$ types as those which have a pairing made of internal edges. 

\begin{proposition}\label{pairing}
For any $A\in \CC L_{\multimap}$ and $n\in \BB N$, 
\begin{itemize}
\item[$i.$] $A\in  gRP_X^n$ iff $A$ has a $(n,-)$-pairing with no jump out;
\item[$ii.$] $A\in co-gRP_X^n$ iff $A$ has a $(n,+)$-pairing with no jump in.
\end{itemize}
\end{proposition}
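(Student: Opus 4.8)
The plan is to prove items $i$ and $ii$ simultaneously, by induction on the structure of $A$, playing the inductive Definition \ref{$gRP_X$} of $gRP_X^n$ and $co-gRP_X^n$ off against the parallel definition of the pairing sets $P_n(A)$, and using throughout the parity identities recorded just after Definition \ref{$gRP_X$}. For the two ``only if'' directions I would prove the stronger claim that \emph{every} element of $P_n(A)$ is a $(n,-)$-pairing with no jump out when $A\in gRP_X^n$, and a $(n,+)$-pairing with no jump in when $A\in co-gRP_X^n$; since $P_n(A)$ is non-empty in either case, this yields the ``only if'' direction. The base clauses $1$ and $2$ are immediate (the empty pairing has no edges, and the pairing $(\emptyset,\{\alpha\})$ of $X_{\alpha}$ is vacuously without jump in). For clauses $3$ and $4$ one assembles the pairings of the two immediate subtypes, and for clause $5$ one further adds the single edge $(X_{\beta}^{+},X_{\alpha}^{-})$ joining a dangling occurrence to the head $X_{\alpha}$ of $A=A_{1}\multimap\dots\multimap A_{p}\multimap X_{\alpha}$. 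The one fact that makes all three inductive steps go through is a locality property: a positive or negative branch, and the maximal tree of any node lying on it, that begins inside a subtree $B$ of $A$ stays entirely inside $B$; hence an edge both of whose endpoints lie in $B$ is a jump out in $A$ exactly when it is a jump out in $B$ after reversing the polarities of $B$ --- and reversing polarities swaps ``jump out'' with ``jump in''. In clause $5$ one observes moreover that $X_{\alpha}^{+}$ is the conclusion of the positive branch of the root $\multimap^{+}$-node, whose maximal tree is all of $A$, so the new edge is internal.

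For the two ``if'' directions I would argue, again by mutual structural induction, on $A=A_{1}\multimap\dots\multimap A_{p}\multimap Z$, distinguishing whether $Z=X$. Suppose first $Z\neq X$ and $A$ carries a $(n,-)$-pairing $(\F p,\F a)$ with no jump out. Each edge of $\F p$ is internal, so its positive endpoint is the conclusion of the positive branch of some $\multimap^{+}$-node $l$; the positive branches of the root and of the spine nodes all end at the variable $Z$, which is not an occurrence of $X$, so $l$ lies strictly inside one of the $A_{i}$, and by locality so does the whole edge. The dangling labels, being negative occurrences of $X$ in $A$, also lie inside the $A_{i}$. Hence $(\F p,\F a)$ decomposes as a disjoint union of pairings $(\F p_{i},\F a_{i})$ of the $A_{i}$, and since each $A_{i}$ occurs in a negative position of $A$, reversal of polarities turns $(\F p_{i},\F a_{i})$ into a $(n_{i},+)$-pairing of $A_{i}$ with no jump in, so $A_{i}\in co-gRP_X^{n_i}$ by the induction hypothesis (item $ii$), with $A_{i}\in co-gRP_X^{0}$ trivially when $X$ does not occur in $A_{i}$. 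Using $Z\in gRP_X^{0}$ and iterating clause $4$ up the spine gives $A\in gRP_X^{\sum_{i}n_{i}}=gRP_X^{n}$. If instead $Z=X_{\alpha}$, then $\alpha$ is a positive occurrence, so it cannot be dangling in a $(n,-)$-pairing: it belongs to an edge $e_{0}=(X_{\alpha}^{+},X_{\beta}^{-})\in\F p$, which, being internal, imposes nothing beyond $X_{\beta}^{-}$ lying somewhere in $A$, hence inside some $A_{i_{0}}$. Every other edge of $\F p$ stays inside a single $A_{i}$ as above; restricting the pairing to each $A_{i}$ --- adding $\beta$ to the dangling set of the restriction to $A_{i_{0}}$ --- and applying the induction hypothesis gives $A_{i}\in co-gRP_X^{n_i}$ with $\sum_{i}n_{i}=|\F a|+1=n+1$, so clause $5$ produces $A\in gRP_X^{(\sum_{i}n_{i})-1}=gRP_X^{n}$. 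Item $ii$ is entirely dual, with clauses $2$ and $3$ in the roles of clauses $5$ and $4$, and the base case $A=X\in co-gRP_X^{1}$.

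The step I expect to require real care is the locality property invoked above: one has to make precise how the correction-graph structure of $S_{A}$ restricts to a subtree $S_{B}$, show that positive and negative branches and maximal trees cannot escape $S_{B}$, and pin down exactly which occurrences of $X$ are the spine occurrences reachable from the branches of the root. Once that book-keeping is in place, the decompositions of a pairing along the outermost connectives of $A$ are forced and the two inductions become routine; the one remaining delicacy is the off-by-one accounting hidden in clause $5$, where the exponent $(\sum_{i}n_{i})-1$ must be matched against $n=n^{-}(A)-n^{+}(A)=|\F a|$ using the parity identities after Definition \ref{$gRP_X$}.
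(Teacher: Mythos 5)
Your proposal is correct and follows essentially the same route as the paper's proof: the ``only if'' direction by showing every element of $P_{n}(A)$ has no jump out (resp.\ jump in) by induction on the clauses of Definition \ref{$gRP_X$}, and the ``if'' direction by structural induction on $A=A_{1}\multimap\dots\multimap A_{k}\multimap Z$ with the same case split on whether $Z=X$, restricting the pairing to the $A_{i}$ and invoking clauses 3--5. The only differences are presentational --- you make explicit the locality/polarity-reversal lemma that the paper uses tacitly, and you have a harmless polarity slip in writing the clause-5 edge as $(X_{\beta}^{+},X_{\alpha}^{-})$ rather than $(X_{\alpha}^{+},X_{\beta}^{-})$.
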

\begin{proof}
For one direction it suffices to verify that, if $A\in  gRP_X^n$ (resp. $A\in co-gRP_X^n$), then any $p\in P_{n}(A)$ has no jump out (resp. jump in). This can be done by induction on definition \ref{$gRP_X$}.

For the converse direction we argue by induction on $A$; if $A$ has no occurrence of $X$, then the claims are trivially true, so we will suppose that $A$ contains at least one occurrence of $X$. If $A= X$ claim $i.$ is trivially true (as $A$ has no $(n,-)$-pairing) and claim $ii.$ is true, as $A\in co-gRP_X^n$.

Let then $A=A_{1}\multimap \dots \multimap A_{k}\multimap Z$:
\begin{enumerate}
\item[a.] if $A$ admits a $(n,-)$-pairing $(\F p, \F a)$ with no jump out, then two cases arise:
	\begin{enumerate}
	\item[a1.] if $Z\neq X$, then $\F p$ splits into $\F p_{1}\cup \dots \cup \F p_{k}$, where the variables in $\F p_{i}$ occur in $A_{i}$, since any edge $(X^{+}_{\alpha},X^{-}_{\beta})$ with $X^{+}_{\alpha}\in A_{i}$ and $X^{-}_{\beta}\in A_{j}$ and $j\neq i$ is a jump out. 
	As the $\alpha_{1},\dots, \alpha_{n}\in \F a$ are distributed among the $A_{i}$, we deduce that 
	there exist integers $n_{1},\dots, n_{k}$ such that $\sum_{i}^{k}{n_{i}}=n$, a partition of $\F a_{1},\dots, \F a_{k}$ of $\F a$, where $\F a_{i}$ has cardinality $n_{i}$, and that 
	$(\F p_{i}, \F a_{i})$ is a $(n_{i},+)$-pairing of $A_{i}$ 
	 with no jump in. By induction hypothesis we get $A_{i}\in co-gRP_X^{n_i}$ and we conclude $A\in  gRP_X^n$ by clause 4.
	
	\item[a2.] if $Z=X_{\alpha}^{+}$, then $X_{\alpha}^{+}$ is paired in $p$ with a positive occurrence $X_{\beta}$ in some $A_{i}$, say $A_{c}$. 
	By reasoning as above, we obtain that $\F p$ splits into $\F p_{1}\cup \dots \cup \F p_{c-1}\cup \F p_{c}\cup \F p_{c+1}\cup \dots \cup \F p_{n}\cup\{ (X_{\alpha}, X_{\beta})\}$ where the variables in $\F p_{i}$ occur in $A_{c}$. As $\beta$ and the $\alpha_{1},\dots, \alpha_{n}$ in $\F a$ are distributed among the $A_{i}$, we deduce that 
	there exist integers $n_{1},\dots, n_{k}$ such that $\sum_{i}^{k}{n_{i}}=n+1$, a partition $\F a_{1},\dots, \F a_{n}$ of $\F a$, where $\F a_{i}$ has cardinality $n_{i}$, and that $(\F p_{i}, \F a_{i})$, for $i\neq c$ is a $(n_{i},+)$-pairing of $A_{i}^{-}$ with no jump in, while $(\F p_{c}, \F a_{c}\cup \{\beta\})$ is a $(n_{c},+)$-pairing of $A_{c}$ with no jump in.
	By applying the induction hypothesis we get $A_{i}\in co-gRP_X^{n_i}$ and we conclude $A\in  gRP_X^n$ by clause 5.

	\end{enumerate}

\item[b.] if $A$ admits a $(n,+)$-pairing $(\F p,\F a)$ with no jump in, then if $Z=X_{\alpha}^{+}$, $X_{\alpha}^{+}$ cannot occur in any edge in $\F p$, since any edge containing $X_{\alpha}^{+}$ and any $X_{\beta}^{-}$ in some $A_{i}$ is a jump in. Hence $\F p$ splits into $\F p_{1}\cup \dots \cup \F p_{k}$, where the variables in $\F p_{i}$ occur in $A_{i}$, since any edge $(X^{+}_{\alpha},X^{-}_{\beta})$ with $X^{+}_{\alpha}\in A_{i}$ and $X^{-}_{\beta}\in A_{j}$ and $j\neq i$ is a jump out. 
	As the $\alpha_{1},\dots, \alpha_{n}$ in $\F a$ are distributed among the $A_{i}$ and possibly $Z$, we deduce that there exist integers $n_{1},\dots, n_{k}$ such that either $\sum_{i}^{k}{n_{i}}=n$ (if $Z\neq X_{\alpha}$) or $\sum_{i}^{k}(n_{i})=n-1$ (if $Z=X$), a partition $\F a_{1},\dots, \F a_{n}$ of $\F a-\{X_{\alpha}^{+}\}$, where $\F a_{i}$ has cardinality $n_{i}$ and that $(\F p_{i},\F a_{i})$ is a $(n_{i},-)$-pairing of $A_{i}$ with no jump out. By the induction hypothesis we get $A_{i}\in  gRP_X^{n_i}$ and we conclude $A\in co-gRP_X^n$ by clause 3.

\end{enumerate}
\end{proof}

Proposition \ref{pairing} provides a decidable criterion to test whether a linear type is $gRP_{X}$: it suffices to check among its $X$-pairings whether there is one made of internal edges. For instance, the type $((X^{-}_{\alpha}\multimap X^{+}_{\alpha'})\multimap X^{-}_{\beta})\multimap X^{+}_{\beta'}$ has the $X$-pairing $\{(X^{+}_{\alpha'},X^{-}_{\alpha}), (X^{+}_{\beta'}, X^{-}_{\beta})\}$ made of internal edges, hence it is $gRP_{X}$. The type
$(X^{+}_{\alpha}\multimap X^{-}_{\alpha'})\multimap (X^{-}_{\beta}\multimap X^{+}_{\beta'})$ has no $X$-pairing made of internal edges (as any pair $(X^{+}_{\alpha},X^{-})$ is a jump out) so it is not $gRP_{X}$.

Moreover, proposition \ref{pairing} allows to prove the equivalence of the definitions \ref{grpx} and \ref{$gRP_X$} of $gRP_X$ types:

\begin{proposition}\label{equivalences}
Definitions \ref{grpx} and \ref{$gRP_X$} of linear $gRP_{X}$ types are equivalent.
\end{proposition}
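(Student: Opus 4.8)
The strategy is to derive both halves of the equivalence from the pairing characterisation of Proposition \ref{pairing} in the special case $n=0$. Write $(\dagger)$ for the property ``$A$ admits an $X$-pairing none of whose edges is a jump out'' (equivalently, an $X$-pairing made entirely of internal edges) and $(\ddagger)$ for ``$A$ admits an $X$-pairing none of whose edges is a jump in''. Proposition \ref{pairing} already gives $A\in gRP_X^0$ iff $(\dagger)$ and $A\in co-gRP_X^0$ iff $(\ddagger)$. So it suffices to prove that the classes of Definition \ref{grpx} obey the same descriptions, i.e.\ that $A$ is $gRP_X$ in the sense of Definition \ref{grpx} iff $(\dagger)$ holds, and $A$ is $co-gRP_X$ (as introduced just before the present statement) iff $(\ddagger)$ holds. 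The plan is to prove these two biconditionals together, by mutual induction on $A$, following the same case split already used in the proof of Proposition \ref{pairing}.

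For the direction from Definition \ref{grpx} to $(\dagger)$, assume $A=A'[X/X_1,\dots,X/X_p]$ with $A'$ a $qRP_{\{X_1,\dots,X_p\}}$ type. For each $i$ the decomposition $A'=A_0^{(i)}[R^{(i)}/X_i]$ confines every occurrence of $X_i$ to a single positive occurrence of the $RP_{X_i}$ subtype $R^{(i)}=B_1\multimap\dots\multimap B_{q_i}\multimap X_i$, whose rightmost leaf is the unique positive occurrence of $X_i$ in $A'$ and whose only other occurrence of $X_i$ sits (positively when read in isolation, hence negatively inside $R^{(i)}$) in the one $X_i$-carrying premiss. Pairing, for every $i$, that rightmost leaf with that occurrence, and taking the union over $i$, produces after relabelling an $X$-pairing of $A$; each of its edges is internal because, by construction, the negative partner lies in the maximal tree of the $\multimap^+$-node whose positive branch ends at the positive partner, and the union is a genuine pairing because the $X_i$ are pairwise distinct. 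Conversely, given an $X$-pairing $\F p$ of $A$ with no jump out, I would cut along $\F p$: introduce a fresh variable $Y_e$ for each edge $e\in\F p$, relabel the two endpoints of $e$ by $Y_e$, obtaining $A^{*}$ with $A=A^{*}[X/Y_e:e\in\F p]$. Fixing $e=(X_\alpha^+,X_\beta^-)\in\F p$, internality of $e$ means $X_\alpha$ is the conclusion of the positive branch of a $\multimap^+$-node $\ell$ and $X_\beta$ lies in the maximal tree of $\ell$; since these are the only two occurrences of $Y_e$ in $A^{*}$, the subtype carried by that maximal tree has the form $C_1\multimap\dots\multimap C_m\multimap Y_e$ with exactly one $C_j$ carrying $Y_e$ (once, positively when read alone) — i.e.\ an $RP_{Y_e}$ type — and the surrounding context is p-$Y_e$ with that subtype as its unique occurrence; hence $A^{*}$ is $qRP_{Y_e}$. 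As $e$ was arbitrary, $A^{*}$ is $qRP_{\{Y_e:e\in\F p\}}$ and so $A$ is $gRP_X$. The $co$-versions are handled identically, interchanging ``jump out''/``jump in'', positive/negative branches, and left/right premisses.

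The main obstacle I expect is the inductive bookkeeping common to both directions: one must check that the local $RP$-decompositions glue coherently — in particular, in the clause-5 situation $A=A_1\multimap\dots\multimap A_k\multimap X$ of Definition \ref{$gRP_X$}, that the distinguished edge from the rightmost $X$ to an occurrence inside some $A_j$ is precisely the edge realising an $A_0[R/X]$ presentation with $R$ an $RP_X$ type, and that the fresh variables produced recursively inside the $A_i$ stay pairwise distinct so that the assembled relation really is an $X$-pairing with no jump out (and, in the converse direction, that the recursively obtained $qRP$ presentations inside the $A_i$ combine into one for $A$). This is exactly the step already carried out, at the level of pairings, in the proof of Proposition \ref{pairing}, so I would either invoke the mutual induction hypothesis directly or isolate it as a short lemma rather than repeat the argument.
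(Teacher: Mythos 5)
Your proof is correct and follows essentially the same route as the paper's: both directions pass through the pairing characterization of Proposition \ref{pairing}, turning a $qRP_{\CC X}$ presentation into an $X$-pairing made of internal edges and, conversely, separating the variables of such a pairing to recover a $qRP_{\CC X}$ type. The only (harmless) organizational difference is that for the converse you verify the $qRP$ structure directly from the internality of each renamed edge, whereas the paper phrases this verification as an induction on Definition \ref{$gRP_X$}.
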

\begin{proof}
Suppose $A=A'[X/X_{1},\dots, X/X_{p}]$, where $A'$ is $qRP$ in $\CC X=\{X_{1},\dots, X_{p}\}$. Then, for any $X_{i}\in \CC X$, $A'$ has exactly two occurrences of $X_{i}$, one positive and one negative, which form an internal pair.  All such edges, for $1\leq i\leq p$, induce a $X$-pairing of $A$ made of internal edges. We conclude, by proposition \ref{pairing}, that $A\in  gRP_X$.

For the converse direction, it can be verified by induction on definition \ref{$gRP_X$} that, if $A\in  gRP_X^n$ (resp. $A\in co-gRP_X^n$), then given a $(n, -)$-pairing (resp. a $(n,+)$-pairing) $(\F p, \F a)$ of $A$, by renaming the edges in $\F p$ with distinct variables $X_{1},\dots, X_{p}$, we obtain a type $A'$ which is $qRP$ (resp. co-$qRP$) in $\CC X=\{X_{1},\dots, X_{p}\}$. Hence, if $A\in  gRP_X$, we obtain a type $A'$ with no occurrence of $X$ which is $qRP$ in some finite set $\CC X$.
\end{proof}

If $f:\Gamma\to A$ is a graph in $\CC A^{\multimap}$, then its correction graph being a directed acyclic graph, it induces a partial order relation $\prec_{f}$ over all variable occurrences in $\Gamma$ and $A$: $Y_{\alpha}\prec_{f} Z_{\beta}$ holds when the unique path in the correction graph of $f$ from the conclusion to $Z_{\beta}$ passes through $Y_{\alpha}$. 

The following lemma relates the edges in $\Gamma\to A$ and the order $\prec_{f}$ of any correct graph $f:\Gamma\to A$. 

\begin{lemma}\label{jumpout}
Let $f:A_{1},\dots, A_{n}\to A$ in $\CC A^{\multimap}$. Then, by letting $A'=A_{1}\multimap \dots \multimap A_{n}\multimap A$, for any edge $e=(X^{+}_{\alpha}, X^{-}_{\beta})$ in $A'$, the following hold:
\begin{itemize}
\item[$i.$] if $e$ is an internal edge, then $X^{+}_{\alpha}\prec_{f}X^{-}_{\beta}$;

\item[$ii.$] if $e$ is a jump out edge, then either $e\in f$ or $X^{+}_{\alpha}\not\prec_{f} X^{-}_{\beta}$.

\end{itemize}

\end{lemma}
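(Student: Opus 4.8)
The plan is to argue directly on the correction graph $\mathcal C=f\leftrightharpoons A'$, which by correctness is a connected acyclic directed graph, rooted at the conclusion $c(A')$ and satisfying functionality, and to read $\prec_f$ off it as the domination order ($Y_\gamma\prec_f Z_\delta$ iff every directed path of $\mathcal C$ from $c(A')$ to $Z_\delta$ passes through $Y_\gamma$). The single structural fact doing all the work is a \emph{determinism of branches}. Given the internal/jump-out dichotomy, take $l_0$ to be the lowest $\multimap^+$-node whose positive branch ends at $X^+_\alpha$, write that branch as $l_0,l_1,\dots,l_k,X^+_\alpha$, and let $T$ be the subtree of $A'$ rooted at $l_0$ — this is precisely the maximal tree attached to $X^+_\alpha$. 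Since $l_0$ is either the root or the left premiss of a $\multimap^-$-node, and each internal node carries no $f$-edge, inspection of the shape-edge pattern shows that each of $l_0,l_1,\dots,l_k,X^+_\alpha$ has a unique incoming edge (from the preceding node of the branch, or, for $l_0$, from its parent) and each $l_j$ a unique outgoing shape-edge; hence a directed path of $\mathcal C$ that reaches $l_0$ is forced to run up the whole branch to $X^+_\alpha$, the only exit from $T$ being the single $f$-edge issuing from $X^+_\alpha$. Symmetrically one analyses negative branches and $\multimap^-$-nodes; and if $X^+_\alpha$ is a left premiss of a $\multimap^-$-node, its unique outgoing edge \emph{is} its $f$-edge.

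For $(i.)$, assume $e=(X^+_\alpha,X^-_\beta)$ is internal, so $X^-_\beta\in T$. As $X^-_\beta$ is a negative leaf, its only incoming edge in $\mathcal C$ is its own $f$-edge; hence, by the determinism above, it suffices to prove the stronger claim that $l_0$ dominates \emph{all} of $T$ (a path to $X^-_\beta$ through $l_0$ then necessarily runs $l_0\to\cdots\to X^+_\alpha$). Because $T$ is a subtree of $A'$, the only shape-edge of $\mathcal C$ crossing into $T$ from the outside is the one into $l_0$; therefore a path into $T$ avoiding $l_0$ would have to enter $T$ along an $f$-edge, i.e.\ at a negative leaf of a left-subtree $S_i$ hanging off some spine node $l_i$, with $f$-partner $z$ outside $T$. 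Functionality at the $\multimap^+$-node $l_i$ routes every path reaching $S_i$ through $l_i$, hence through $l_0$; playing this against acyclicity shows that $z$ is itself dominated by $l_i$, so every path to $z$ — and thus to $X^-_\beta$ — passes through $l_0$. Making this last step precise — pinning down, from functionality and acyclicity alone, which $f$-edges may cross the boundary of $T$ and in which direction — is the delicate point of the proof; the rest is bookkeeping.

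For $(ii.)$, assume $e$ is a jump out edge with $e\notin f$; then either $X^+_\alpha$ is a left premiss of a $\multimap^-$-node, or $X^-_\beta$ lies outside the maximal tree $T$ of $X^+_\alpha$. In both cases the unique edge out of $X^+_\alpha$ is its $f$-edge, going to some leaf $z\ne X^-_\beta$, while the unique edge into $X^-_\beta$ comes from its own $f$-partner $w\ne X^+_\alpha$; so $X^+_\alpha\not\prec_f X^-_\beta$ reduces to exhibiting one directed path from $c(A')$ to $X^-_\beta$ that misses $X^+_\alpha$. Using the branch-determinism — after $X^+_\alpha$ one is swept out of $T$ and can only re-enter $T$, via $z$, through territory already traversed — acyclicity shows that forcing the path through $X^+_\alpha$ would create a cycle, so such a bypassing path exists; the residual work is a short case analysis on the location of $X^-_\beta$ (a sibling subtree, below $l_0$, or strictly above $X^+_\alpha$). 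The same argument can be recast via Theorem \ref{seque}, writing $f=\CC G(u)$ for a normal $\eta$-long $u$ and inducting on the head-normal structure of $u$; under either presentation the main obstacle is the interface analysis of the subtree $T$.
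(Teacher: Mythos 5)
There is a genuine gap, and it sits exactly where you flag ``the delicate point'': both load-bearing steps are deferred rather than proved. In part $(i.)$, your reduction to ``$l_0$ dominates all of $T$'' is fine, and branch-determinism along the spine is correct, but the justification ``functionality at the $\multimap^+$-node $l_i$ routes every path reaching $S_i$ through $l_i$'' misreads the functionality condition: it only constrains paths to the \emph{left premiss} $c\OV{S_i}$, i.e.\ the root of the left subtree, not to its internal nodes. A path enters $\OV{S_i}$ at a negative leaf via an $f$-edge from a positive partner that may sit anywhere in $A'$, without ever touching $c\OV{S_i}$ or $l_i$; showing that every path to that partner nevertheless passes through $l_0$ is the entire content of the lemma, and it does not follow from functionality plus acyclicity by ``bookkeeping''. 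The paper handles this by an induction on a nesting distance $d(Y^+_\alpha, Z^-_\beta)$: at distance $0$ the negative occurrence \emph{is} a left premiss, so functionality applies directly and branch-determinism finishes; at distance $k+1$ one establishes the chain $Y^+_\alpha\prec_f (Y')^-_{\alpha'}\prec_f (Z')^+_\gamma\prec_f Z^-_\beta$, descending one level of alternation between negative branches and positive sub-subtrees each time. You need some such induction; the single appeal to functionality at $l_i$ does not close the argument.

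Part $(ii.)$ has the same problem in sharper form. The sentence ``acyclicity shows that forcing the path through $X^+_\alpha$ would create a cycle'' is an assertion, not an argument: the unique path to $X^-_\beta$ could in principle pass through $X^+_\alpha$, leave along its $f$-edge to $z$, and later reach $X^-_\beta$ through fresh territory, with no cycle created. The jump-out hypothesis has to be exploited structurally, and the paper does so by sequentializing ($f=\CC G(u)$ for a normal $u$, theorem \ref{seque}) and inducting on $u$, with a case analysis on whether $u$ is a variable (sub-induction on $A$, four kinds of jump-out edge in $(B\multimap C)\multimap(B\multimap C)$), an abstraction, or an application $xu_1\dots u_p$ (where one separates edges living in a single argument type $B'_i$, handled by the induction hypothesis, from edges straddling $B'_i$ and $B'_j$ with $i\neq j$, handled by inspecting the split correction graph). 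You gesture at this route in your last sentence; it is the one that actually works, and it needs to be carried out, since the purely graph-theoretic shortcut you propose for both parts does not survive the observation that $f$-edges, not shape edges, are the only way into a left subtree and their other endpoints are a priori unconstrained.
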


\begin{proof}
To prove claim $i.$, for any positive occurrence of variable $Y_{\alpha}^{+}$ in $A'$, let $A_{Y}^{\alpha}$ be the type occurring positively in $A'$ whose rightmost variable is $Y_{\alpha}^{+}$. 
Let us define a distance $d(Y_{\alpha}^{+},Z_{\beta}^{-})$ between a variable $Y^{+}_{\alpha}$ positively occurring in $A'$ and a variable $Z_{\beta}^{-}$ negatively occurring in $A^{\alpha}_{Y}$: $d(Y_{\alpha}^{+},Z_{\beta}^{-})=0$ if $A_{Y}^{\alpha}=C_{1}\multimap \dots \multimap C_{i-1}\multimap Z_{\beta}\multimap C_{i+1}\multimap \dots \multimap  C_{p}\multimap Y_{\alpha}$; $d(Y_{\alpha}^{+},Z_{\beta}^{-})=k+1$ if 
$A_{Y}^{\alpha}=C_{1}\multimap \dots \multimap C_{i-1}\multimap D\multimap C_{i+1}\multimap \dots \multimap  C_{p}\multimap Y_{\alpha}$ , $D= D_{1}\multimap \dots \multimap D_{j-1}\multimap E\multimap D_{j+1}\multimap \dots \multimap D_{q}\multimap Y'_{\alpha'}$ and $E=E_{1}\multimap \dots \multimap E_{r}\multimap Z'_{\gamma}$, where 
$d((Z')_{\gamma}^{+}, Z_{\beta}^{-})=k$. 

We can argue then by induction on $d=d(Y_{\alpha}^{+},Z_{\beta}^{-})$ that if $Z_{\beta}^{-}$ occurs in the maximal tree of the $\multimap^{+}$-node at the root of $A_{Y}^{\alpha}$, then $Y_{\alpha}^{+}\prec_{f}Z_{\beta}^{-}$. If $d=0$, then this follows by functionality. If $d=k+1$, then by functionality, inspection of the correction graph (which contains a path from $(Y')_{\alpha'}^{-}$ to $(Z')_{\gamma}^{+}$) and the induction hypothesis we have $Y_{\alpha}^{+}\prec_{f} (Y')_{\alpha'}^{-}\prec_{f}(Z')_{\gamma}^{+}\prec_{f} Z_{\beta}^{-}$.

For point $ii.$, since any correct graph is sequentializable, we will argue by induction on a normal $\lambda$-term $u$ such that  $\Gamma\vdash u:A$, by relying on theorem \ref{seque}. We will suppose w.l.o.g. that the sequent $\Gamma\vdash A$ contains at least two occurrences of $X$ (the claim being trivially valid otherwise). 


\begin{enumerate}

\item if $u=x$, then $x:A\vdash x:A$, $\CC G(u)=id_{A}$ and we argue by a sub-induction on $A$:
	\begin{enumerate}
	\item[$1a.$] if $A=X$, then there is exactly one edge $e=(X^{-}_{\alpha},X^{+}_{\beta})\in f$;	
	\item[$1b.$] if $A=B\multimap C$, let $e=(X^{+}_{\alpha},X^{-}_{\beta})$ be a jump out edge in $A'=(B^{+}\multimap C^{-})\multimap (B^{-}\multimap C^{+})$. $e$ can be of four kinds: (1) a jump out edge in either $C^{+}$ or $B^{+}$, (2) a jump in  edge in either $B^{-}$ or $C^{-}$ or (3) a jump out edge with one occurrence in either $B^{+}$ or $B^{-}$ and one in either $C^{+}$ or $C^{-}$ or (4) a jump out edge between either $B^{+}$ and $B^{-}$ or between $C^{+}$ and $C^{-}$. Moreover, as all edges in $id_{A}$ are type I, it cannot be $e\in f$. The correction graph of $id_{A}$ is as illustrated in figure \ref{shape0}.
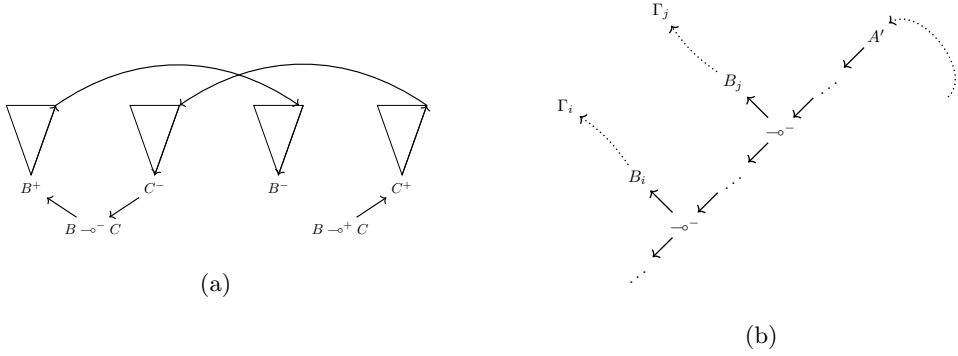
\begin{figure}
\begin{subfigure}{0.48\textwidth}
\begin{center}
\resizebox{0.8\textwidth}{!}{
\begin{tikzpicture}
\node(a) at (3,0) {$B\multimap^{+} C$};

\node(b) at (1.5,1) {$B^{-}$};
\node(c) at (4.5,1) {$C^{+}$};

\node(a') at (-3,0) {$B\multimap^{-} C$};

\node(b') at (-1.5,1) {$C^{-}$};
\node(c') at (-4.5,1) {$B^{+}$};

\draw (4.5,1.3) to (5.1,3) to (3.9,3) -- cycle; 

\draw (1.5,1.3) to (2.1,3) to (0.9,3) -- cycle;
\draw (-1.5,1.3) to (-2.1,3) to (-0.9,3) -- cycle; 
 \draw (-4.5,1.3) to (-5.1,3) to (-3.9,3) -- cycle;

\draw[->, thick] (a) to (c);

\draw[->, thick] (b') to (a');

\draw[->, thick] (a') to (c');

\draw[->, thick] (4.5,1.3) to (5.1,3);
\draw[->, thick] (-0.9,3) to (-1.5,1.3);

\draw[->, thick] (5.1,3) to [bend right=35] (-0.9,3);
\draw[->, thick] (-3.9,3) to [bend left=35] (2,3);
\draw[->, thick] (-4.5,1.3) to (-3.9,3);
\draw[->, thick] (2.1,3) to (1.5,1.3);

\end{tikzpicture}}
\end{center}
\caption{}
\label{shape0}
\end{subfigure}
\begin{subfigure}{0.48\textwidth}
\begin{center}
\resizebox{0.8\textwidth}{!}{
\begin{tikzpicture}

\node(a) at (0,0) {$\iddots$};
\node(b) at (-1,-1) {$\multimap^{-}$};

\node(bb) at (-2,0) {$B_{j}$};
\draw[->, thick] (b) to (bb);

\node(g) at (-3.5,1.5) {$\Gamma_{j}$};
\draw[->, thick, dotted] (bb) to [bend left=10] (g);

\draw[->, thick] (a) to (b);
\node(aa) at (1,1) {$A'$};
\draw[->, thick] (aa) to (a);

\node(dd) at (-2,-2) {$\iddots$};
\draw[->, thick] (b) to (dd);
\node(cc) at (-3,-3) {$\multimap^{-}$};
\draw[->, thick] (dd) to (cc);

\node(bb2) at (-4,-2) {$B_{i}$};
\draw[->, thick] (cc) to (bb2);

\node(gg) at (-5.5,-0.5) {$\Gamma_{i}$};
\draw[->, thick, dotted] (bb2) to [bend right=10] (gg);

\node(ddd) at (-4,-4) {$\iddots$};
\draw[->, thick] (cc) to (ddd);

\draw[->, thick, dotted] (2.5,-0.3) to [bend right=85] (aa);

\end{tikzpicture}}
\end{center}
\caption{}
\label{shape1}
\end{subfigure}
\caption{Correction graphs}
\end{figure}

We claim that there exists a path from the conclusion $A\multimap A$ to $X^{-}_{\beta}$ not passing through $X^{+}_{\alpha}$. In all four cases we can conclude by inspection of the correction graph and by the induction hypothesis.

	\end{enumerate}

\item if $u=\lambda x^{B}.u'$, then $ \AXC{$\Gamma, x:B\vdash u':C$}\UIC{$\Gamma\vdash \lambda x^{B}.u':B\multimap C$}\DP$, and then the claim immediately follows by the induction hypothesis, as the partial order
induced by the correction graph of $\CC G(u)$ is the same as the one induced by the correction graph of $\CC G(u')$.

\item if $u=xu_{1}\dots u_{p}$, then $x:B_{1}\multimap\dots \multimap B_{p}\multimap A$, $\Gamma$ is partitioned into $\Gamma_{1},\dots, \Gamma_{p},x:B_{1}\multimap \dots \multimap B_{p}\multimap A$ and, for $1\leq i\leq p$, $\Gamma_{i}\vdash u_{i}: B_{i}$. We let $B'_{i}$ be the type $C_{i_{1}}\multimap \dots \multimap C_{p_{i}}\multimap B_{i}$, where $\Gamma_{i}=x_{i_{1}}:C_{i_{1}},\dots, x_{i_{p_{i}}}:C_{i_{p_{i}}}$. To avoid confusion we call $A'$ the occurrence of $A$ in the declaration of $x$. Let then $e$ be a jump out edge in $A'$ and suppose $e\notin f$. Two subcases must be considered. First, if $e$ is a jump out edge in $A$ or an edge $(X^{\epsilon}_{\alpha},X_{\beta}^{\epsilon'})$, with $X_{\alpha}$ occurring in $A$ and $X_{\beta}$ occurring in $A'$, then the claim follows by remarking that the correction graph restricted to $A$ and $A'$ is the same as the correction graph of the identity graph $id_{A}:A\to A$, so we can argue as in case $1$. Second, if $e$ is a jump out edge $(X_{\alpha}^{+},X_{\beta}^{-})$, with $X_{\alpha}^{+}\in B'_{i}$ and $X_{\beta}^{-}\in  B'_{j}$, then, if $i=j$, we can apply the induction hypothesis to $u_{i}$; if $i\neq j$, then the claim follows by 
inspection of the correction graph (figure \ref{shape1}), by remarking that, for any variable $X_{\alpha}$ in $B'_{i}$ and $X_{\beta}$ in $B'_{j}$, $X_{\alpha}\not \prec_{f} X_{\beta}$.

%
%
%
%
%
%
%

\end{enumerate}

\end{proof}

Let $f:\Gamma\to A$ be a graph in $\CC A^{\multimap}$ and let $e=(X_{\alpha}^{+},X_{\beta}^{-})$ be an edge over $X$ in $A$. 
We say that $e$ is an \emph{expansible edge} if for any $B_{1},\dots, B_{n}\in \CC L_{\multimap}$, the graph $f':\Gamma\to A'$, where $A'$ is obtained from $A$ by replacing the two occurrences of $X$ in $e$ by $B=B_{1}\multimap\dots\multimap B_{n}\multimap X$ and $f'$ is obtained from $f$ by adding type III edges over the variables of $B_{1},\dots, B_{n}$ as in figure \ref{interfig2}, is correct.

By the functionality condition, if an edge $e=(X_{\alpha}^{+},X_{\beta}^{-})$ is expansible then it must be $X^{+}_{\alpha}\prec_{f} X_{\beta}^{-}$ (the dotted path in figure \ref{interfig2}) in the correction graph of $f$. It can easily be verified that the converse also holds. 
This leads to the following:

%
%

\begin{lemma}\label{inte2}
Let $f:\Gamma\to A$ be a graph in $\CC A^{\multimap}$ and let $e$ be an edge over $X$ in $A$.
Then $e$ is an expansible edge iff $e\in f$ or $e$ is internal.

\end{lemma}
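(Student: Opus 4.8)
The plan is to reduce the statement to Lemma~\ref{jumpout} by means of the criterion noted just before it: an edge $e = (X^{+}_{\alpha}, X^{-}_{\beta})$ over $X$ in $A$ is expansible precisely when $X^{+}_{\alpha} \prec_{f} X^{-}_{\beta}$ in the correction graph of $f$. So the first step is to pin down that criterion. For necessity one instantiates the definition of expansibility with $B = Y \multimap X$, $Y$ fresh: the expanded graph has a new $\multimap^{+}$-node $Y \multimap^{+} X$ whose left premiss $Y$ is linked, by the type~I edge added over $Y$, to the occurrence of $Y$ sitting where $X^{-}_{\beta}$ was, and functionality of the expanded correction graph then forces every path from the conclusion to $X^{-}_{\beta}$ through $X^{+}_{\alpha}$. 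For sufficiency, given $X^{+}_{\alpha} \prec_{f} X^{-}_{\beta}$ and any $B = B_{1} \multimap \dots \multimap B_{n} \multimap X$, one forms $f'$ by substituting copies of $B$ for the two occurrences of $X$ in $e$ and adding, over each $B_{i}$, the identity pairing between the positive and the negative copy (figure~\ref{interfig2}); one then checks that $f' \leftrightharpoons \Gamma \multimap A'$ is still a connected acyclic graph satisfying functionality, the point being that any new directed walk exploiting the added edges over the $B_{i}$ must enter the inserted material at $X^{+}_{\alpha}$ and leave it at $X^{-}_{\beta}$, so --- because $X^{+}_{\alpha} \prec_{f} X^{-}_{\beta}$ --- it closes no cycle and violates functionality at none of the new $\multimap^{+}$-nodes along the spine of the positive copy of $B$.

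With the criterion available, the argument is a dichotomy. I would first observe that, although Lemma~\ref{jumpout} is phrased for edges of $A_{1} \multimap \dots \multimap A_{n} \multimap A$, an edge over $X$ in $A$ has the same status (internal versus jump out) whether read in $A$ or in $\Gamma \multimap A$, since $A$ occurs there as the rightmost, positive subformula, so none of the branches, maximal trees, or premiss relations bearing on $e$ change. For the ``if'' direction: if $e$ is internal then $X^{+}_{\alpha} \prec_{f} X^{-}_{\beta}$ by claim $i.$ of Lemma~\ref{jumpout}, hence $e$ is expansible; and if $e \in f$ then, both endpoints lying in the conclusion $A$, $e$ is a type~III edge oriented $X^{+}_{\alpha} \to X^{-}_{\beta}$, and since every non-root node of a correction graph in $\CC A^{\multimap}$ has in-degree one --- a negative leaf being reached exactly through its unique incident $f$-edge --- the unique path from the conclusion to $X^{-}_{\beta}$ ends with that arc, so again $X^{+}_{\alpha} \prec_{f} X^{-}_{\beta}$ and $e$ is expansible. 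For the ``only if'' direction: if $e$ is expansible then $X^{+}_{\alpha} \prec_{f} X^{-}_{\beta}$; assuming also $e \notin f$, were $e$ not internal it would be a jump out edge (internal being defined as the negation of jump out for pairs $(X^{+}_{\alpha}, X^{-}_{\beta})$), and claim $ii.$ of Lemma~\ref{jumpout} would give $X^{+}_{\alpha} \not\prec_{f} X^{-}_{\beta}$, a contradiction; hence $e$ is internal.

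I expect the only real obstacle to be the sufficiency half of the expansibility criterion: one has to track precisely which new directed paths the inserted identity pairings create and argue, using $X^{+}_{\alpha} \prec_{f} X^{-}_{\beta}$, that acyclicity and functionality survive the substitution. The remaining ingredients --- that correction graphs in $\CC A^{\multimap}$ are trees, the internal/jump-out dichotomy, and the invariance of ``internal'' under reading $e$ in $A$ or in $\Gamma \multimap A$ --- are routine bookkeeping on shapes that I would dispatch quickly.
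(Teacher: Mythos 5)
Your proof is correct and follows the same route as the paper's: the paper's own proof of this lemma is literally ``by the remark above and lemma \ref{jumpout}'', where the remark is exactly your criterion that $e=(X^{+}_{\alpha},X^{-}_{\beta})$ is expansible iff $X^{+}_{\alpha}\prec_{f}X^{-}_{\beta}$, and the dichotomy via claims $i.$ and $ii.$ of lemma \ref{jumpout} is the intended argument. You in fact supply more detail than the paper does --- the in-degree-one argument for the case $e\in f$, the invariance of the internal/jump-out status in passing from $A$ to $\Gamma\multimap A$, and the correctness check for the expanded graph, which the paper dispatches with ``it can easily be verified that the converse also holds''.
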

\begin{proof}
By the remark above and lemma \ref{jumpout}.

%

\end{proof}

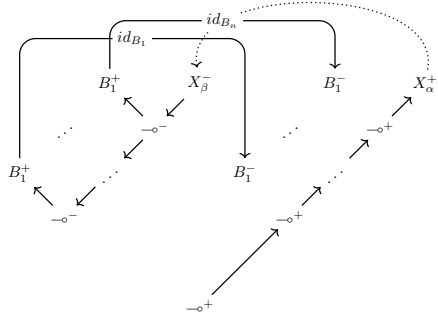
\begin{figure}
\begin{center}
%
%
%
%
\resizebox{0.4\textwidth}{!}{\begin{tikzpicture}[baseline=9ex, every node/.style={fill=white}]
\node(a) at (0,0) {$\multimap^{+}$};

\node(b) at (2,2) {$\multimap^{+}$};
\node(b1) at (1,3) {$B_{1}^-$};
\node(c) at (3,3) {$\iddots$};
\node(z) at (2,4) {$\iddots$};

\node(d) at (4,4) {$\multimap^{+}$};
\node(d1) at (3,5) {$B_{1}^{-}$};
\node(e) at (5,5) {$X^{+}_{\alpha}$};

\node(b') at (-3,2) {$\multimap^{-}$};
\node(b'1) at (-4,3) {$B_{1}^{+}$};
\node(c') at (-2,3) {$\iddots$};
\node(zz) at (-3,4) {$\iddots$};

\node(d') at (-1,4) {$\multimap^{-}$};
\node(d'1) at (-2,5) {$B_{1}^{+}$};
\node(e') at (0,5) {$X^{-}_{\beta}$};

\draw[->, thick] (d) to (e);
\draw[->, thick] (e') to (d');
\draw[->, thick] (b') to (b'1);
\draw[->, thick] (d') to (d'1);

\draw[->, thick] (b) to (c);
\draw[->, thick] (c) to (d);

\draw[->, thick] (c') to (b');
\draw[->, thick] (d') to (c');

\draw[->, thick] (a) to (b);
\draw[->, dotted, thick] (e) to [bend right=100] (e');

\draw[->, thick, rounded corners=10pt] (b'1) to (-4,6) to node {$id_{B_{1}}$} (1,6) to (b1);
\draw[->, thick, rounded corners=10pt] (d'1) to (-2,6.4) to node {$id_{B_{n}}$} (3,6.4) to (d1);

\end{tikzpicture}
}\end{center}
\caption{Expansion of the edge $(X_{\alpha}^{+},X_{\beta}^{-}$.}
\label{interfig2}
\end{figure}

We are now in a position to prove that the simple expansions of $gRP_X$ types are correct:
\begin{proposition}\label{drave}
If $A\in gRP_{X}$, then $A$ is linearly expansible.\end{proposition}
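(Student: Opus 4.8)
The plan is to realise the required expansion graph $EXP_A(B)$ by a sequence of single-edge expansions, using the internal pairing furnished by Proposition~\ref{pairing} together with the expansibility criterion of Lemma~\ref{inte2}.

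Since $A\in gRP_X=gRP_X^0$, the first claim of Proposition~\ref{pairing} (with $n=0$) produces a $(0,-)$-pairing $(\F p,\emptyset)$ of $A$; that is, an $X$-pairing $\F p=\{e_1,\dots,e_k\}$, with $k=n^{+}(A)=n^{-}(A)$, all of whose edges $e_i=(X^{+}_{\alpha_i},X^{-}_{\beta_i})$ are internal edges of $A$ (in particular $A$ is paired). Fix such a $\F p$ and arbitrary linear types $C_1,\dots,C_p\in\CC L_{\multimap}$, and put $B=C_1\multimap\dots\multimap C_p\multimap X$. Set $A^{(0)}=A$, and for $1\le i\le k$ let $A^{(i)}$ be obtained from $A$ by replacing, for each $j\le i$, the two occurrences of $X$ at the positions $\alpha_j,\beta_j$ by $B$; since the edges of $\F p$ are pairwise disjoint these substitutions do not interfere, and $A^{(k)}=A[B/X]$.

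I would then construct, by induction on $i$, allowable graphs $f_i:A\to A^{(i)}$ in $\CC A^{\multimap}$ with $f_0=id_A$ and $f_i$ obtained from $f_{i-1}$ by expanding the edge $e_i$ in the sense of the definition preceding Lemma~\ref{inte2}. This is legitimate provided that, at each stage, $e_i$ is an expansible edge of the graph $f_{i-1}:A\to A^{(i-1)}$; by Lemma~\ref{inte2} it is enough that $e_i$ be an internal edge of the conclusion $A^{(i-1)}$. The crux of the argument is that internality is preserved along the iteration: since $\F p$ is disjoint, $\alpha_i$ and $\beta_i$ are still un-expanded occurrences of $X$ in $A^{(i-1)}$; whether $X^{+}_{\alpha_i}$ is the conclusion of the positive branch of some $\multimap^{+}$-node $l$ depends only on the chain of $\multimap^{+}$-nodes lying below $X^{+}_{\alpha_i}$, which is untouched by substituting copies of $B$ at leaves other than $\alpha_i$; and the maximal tree of $l$ can only be enlarged by substituting copies of $B$ at the positions $\alpha_j,\beta_j\neq\beta_i$, so $X^{-}_{\beta_i}$ still occurs inside it. Hence each $e_i$ is internal in $A^{(i-1)}$, and by Lemma~\ref{inte2} every $f_i$ is allowable.

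Therefore $f_k:A\to A[B/X]$ is an allowable graph; it satisfies $id_A\subseteq f_k$ (the added edges $id_{C_1},\dots,id_{C_p}$ link, for each $e\in\F p$, the two copies of $B$), so it is one of the simple $B$-expansions of $A$, and we may set $EXP_A(B)=f_k$. As $C_1,\dots,C_p$ were arbitrary, $A$ is linearly expansible. The main obstacle is precisely the persistence-of-internality step — that performing some of the prescribed expansions can never turn a remaining edge of the pairing into a jump-out edge; once this is settled, the rest is routine bookkeeping (that $f_0=id_A$ is correct, that the substitutions commute, and that $f_k$ contains $id_A$). An alternative route would be to argue directly by induction on the inductive definition of $gRP_X$-types, proving simultaneously a dual statement for co-types (linear co-expansibility), but the single-edge-expansion argument is shorter given Proposition~\ref{pairing} and Lemma~\ref{inte2}.
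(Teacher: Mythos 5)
Your proof is correct and follows essentially the same route as the paper's: extract an $X$-pairing of $A$ made of internal edges via Proposition~\ref{pairing}, then turn it into a correct simple $B$-expansion of $id_A$ via Lemma~\ref{inte2}. The only difference is that the paper applies Lemma~\ref{inte2} to all edges of the pairing at once, whereas you iterate single-edge expansions and explicitly check that internality persists under the intermediate substitutions --- a bookkeeping step the paper leaves implicit.
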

\begin{proof}
If $A\in gRP_{X}$, then $A$ admits a $X$-pairing made of internal edges. Hence, by lemma \ref{inte2}, all such edges induce a correct simple $B$-expansion of $id_{A}$, for any $B\in \CC L_{\multimap}$. 

\end{proof}

We conclude this presentation of $gRP_X$ types by relating them to $qRP_{\CC X}$ types: any closed proof of a $gRP_X$ type is actually (up to variable renaming) a proof of a $qRP_{\CC X}$ type in some $\CC X\subset \CC V$. In other words, one can ``separate the variables'' following a pairing made of internal edges.

\begin{proposition}\label{renaming}
Let $f:\Gamma\to A$ in $\CC A^{\multimap}$, where $\Gamma=\{A_{1},\dots, A_{n}\}$ is a context made of $co-gRP_{X}$ types and $A$ is $gRP_X$. Then there exists a finite set $\CC X=\{X_{1},\dots, X_{n}\}\subset \CC V$, a context $\Gamma'=A'_{1},\dots, A'_{n}$ made of $co-qRP_{\CC X}$ types such that $A'_{i}[X/X_{1},\dots, X/X_{p}]=A_{i}$ and a type $A$ $qRP_{\CC X}$ such that $A'[X/X_{1},\dots, X/X_{p}]=A$ and $f:\Gamma'\to A'$. 

\end{proposition}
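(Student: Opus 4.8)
The plan is to argue by induction on the structure of a normal $\lambda$-term $u$ such that $\Gamma\vdash u:A$ is derivable in $\lambda_\multimap$ and $f=\CC G(u)$; this is available by sequentialization (theorem \ref{seque}). Throughout, I will use the inductive definition \ref{$gRP_X$} of $gRP_X$ and $co$-$gRP_X$ types, together with the characterization of proposition \ref{equivalences}, which tells me that choosing a pairing made of internal edges for a $gRP_X$ (resp. $co$-$gRP_X$) type is exactly what it means to ``separate variables'' into a family $\CC X$ and obtain a $qRP_{\CC X}$ (resp. $co$-$qRP_{\CC X}$) type. So the real content is: the internal pairing witnessing membership in $gRP_X$ can be chosen \emph{coherently} with the proof $f$, i.e. so that every paired edge actually occurs as a type I edge of $f$ (or is forced by functionality to be order-compatible), and hence the renaming is consistent across hypotheses and conclusion simultaneously.

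First I would handle the base case $u=x$, where $\Gamma=\{A\}$, $A$ is both $co$-$gRP_X$ and $gRP_X$, and $f=id_A$. Here the required pairing is the ``diagonal'' one matching each occurrence of $X$ in the hypothesis copy of $A$ with the homonymous occurrence in the conclusion copy; by lemma \ref{inte2} every edge of $id_A$ is expansible, and the renaming simply uses a fresh variable $X_i$ for the $i$-th occurrence of $X$ in $A$, applied identically to both copies. For the abstraction case $u=\lambda x^B.u'$, the correction graph of $\CC G(u)$ induces the same partial order as that of $\CC G(u')$ (as noted in the proof of lemma \ref{jumpout}), and the type $B\multimap C$ is $gRP_X$ precisely because $B\in co$-$gRP_X$ and $C\in gRP_X$ (clause 4 of definition \ref{$gRP_X$}) — so I just apply the induction hypothesis to $u':B,\Gamma\to C$ and transport the resulting renaming through the $\multimap$-formation. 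The application case $u=xu_1\dots u_p$ is the crux: here $x:B_1\multimap\dots\multimap B_p\multimap A$ is one of the hypotheses, $\Gamma$ splits as $\Gamma_1,\dots,\Gamma_p$ with $\Gamma_i\vdash u_i:B_i$, and I need to combine the renamings produced by the induction hypotheses on the $u_i$ with the internal pairing of the homonymous occurrences of $A$ in the $x$-declaration and in the conclusion. The point is that, since $A$ is $gRP_X$, its pairing is internal, and correctness of $f$ (via functionality and lemma \ref{jumpout}) guarantees that the edges chosen inside each $B_i$ — which come from the $\partial$-subgraphs of the $\CC G(u_i)$ — together with the $A$-to-$A'$ diagonal edges and whatever additional edges the $gRP_X$ clause 5 demands (the $(\beta,\alpha)$ pair linking the conclusion variable to a positive variable in some $B_i$) form a consistent global internal pairing. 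Freshness of variables across the $\Gamma_i$ is arranged by taking disjoint unions of the $\CC X$'s produced by the sub-inductions.

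The main obstacle I expect is bookkeeping in the application case: I must check that when clause 5 of definition \ref{$gRP_X$} forces an edge $(\beta,\alpha)$ between the conclusion occurrence $X_\alpha$ of $A$ and a positive occurrence $X_\beta$ inside some $B_c^- = C_{c_1}\multimap\dots\multimap C_{c_{p_c}}\multimap B_c$ (which is $co$-$gRP_X$), this edge is compatible — in the sense of $\prec_f$ — with the internal pairings already chosen inside $B_c$ by the induction hypothesis, so that the whole pairing remains internal and no conflict of variable names arises. This should follow from lemma \ref{jumpout}$(i)$ applied to $\CC G(u)$, since an internal edge satisfies $X^+_\alpha\prec_f X^-_\beta$, but one has to verify that the particular $\beta$ selected by the pairing-extraction procedure (definition following \ref{$gRP_X$}, clause 5) can indeed be chosen to be one whose $B_c$-internal pairing partner was never used, which is exactly the cardinality count $\sum_i n_i = n+1$ done in the proof of proposition \ref{pairing}. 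Thus the proposition reduces, essentially, to re-running the ``converse direction'' argument of proposition \ref{pairing} but now tracking the proof $f$ alongside the pairing and applying the renaming lemma-by-lemma; the earlier lemmas \ref{jumpout} and \ref{inte2} do all the heavy lifting, and what remains is careful but routine verification.
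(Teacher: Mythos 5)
Your overall strategy --- sequentialize $f$ to a normal term $u$ and induct on $u$ --- is the same as the paper's, but there are two concrete problems. First, the base case. If $u=x$ then $A$ coincides with its unique hypothesis, so $A$ is simultaneously $gRP_{X}$ and $co-gRP_{X}$; in the linear setting this already forces $X\notin FV(A)$ (any type whose head variable is $X$ lies in $co-gRP_X^n$ only for $n\geq 1$, and pushing the indices through clauses 3--4 of definition \ref{$gRP_X$} shows that a type in $gRP_X^0\cap co-gRP_X^0$ contains no occurrence of $X$), so the case is vacuous. Your ``diagonal'' renaming, by contrast, assigns each fresh $X_i$ exactly \emph{one} occurrence in the conclusion copy of $A$; since a linear $RP_{X_i}$ type has exactly two occurrences of $X_i$ (definition \ref{lrpx}), the resulting type cannot be $qRP_{\CC X}$. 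The pairing that must be renamed is an internal pairing of $A$ \emph{alone}, not the hypothesis-to-conclusion pairing induced by $id_A$.

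Second, the application case, which you correctly identify as the crux but leave as ``careful but routine verification'' via lemma \ref{jumpout} and the cardinality count of proposition \ref{pairing}. The decisive point is resolved in the paper with no graph-theoretic input at all: writing $u=\lambda x_{1}^{A_{1}}.\dots.\lambda x_{p}^{A_{p}}.yu_{1}\dots u_{q}$ with conclusion $A_{1}\multimap\dots\multimap A_{p}\multimap X$, clause 5 forces exactly one $A_{c}\in co-gRP_X^{1}$ with all other $A_{i}\in co-gRP_X^{0}$, and since no $co-gRP_X^{0}$ type can end in $X$, the head variable is necessarily $x_{c}$; the partner of the conclusion's head occurrence of $X$ is then the head occurrence of $X$ in $A_{c}$, both renamed to a single fresh $X'$, while each $[u_{j}]$ is $gRP_X^{0}$ and the subterms are handled by the induction hypothesis over disjoint fresh sets $\CC X_{j}$ and a matching partition of the context. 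Without pinning down \emph{which} occurrence is the clause-5 partner and why the typing forces it, your bookkeeping does not close: compatibility of the sub-renamings with the splitting of $\Gamma$ and of $\{1,\dots,p\}$ is exactly what remains to be checked. I would redo the induction purely on the typing, tracking the indices of $gRP_X^{n}$ and $co-gRP_X^{n}$ and treating the abstraction prefix together with the head application as one case; the pairing and expansibility lemmas are not needed here.
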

\begin{proof}
We will argue by induction on the number of applications in a normal linear $\lambda$-term $u$ in $\eta$-long normal form such that $\Gamma\vdash u:A$. If $u$ has no applications, then either $u=x_{i}$, where $[x_{i}]=A_{i}$, or $u=\lambda y^{Z}.y$. If $u=x_{i}$, since $[x_{i}]=A_{i}=A\in  gRP^X\cap co-gRP_X$, $X\notin FV(A)$, hence the claim trivially holds; if $u=\lambda y^{Z}.y$, then $\Gamma=\emptyset$ and $A=Z\multimap Z\in  gRP_X$. Then, either $Z\neq X$, so the claim trivially holds, or $X=Z$, then $A=X\multimap X\in qRP_X$.

Suppose now $u$ has $k+1$ applications, i.e. $u=\lambda x_{1}^{A_{1}}.\dots.\lambda x_{p}^{A_{p}}.yu_{1}\dots u_{q}$. Then $A=A_{1}\multimap \dots \multimap A_{p}\multimap Z$ and two cases arise:
\begin{description}

\item[($Z\neq X$)] it must be $A_{i}\in co-gRP_X$ for all $1\leq i\leq p$ and from $[y]\in co-gRP_X$ it follows that 
$[u_{j}]\in  gRP_X$ for all $1\leq j \leq q$. Suppose $y\neq x_{i}$, for all $1\leq i\leq p$. Then there exists a partition $\Gamma_{1},\dots, \Gamma_{q}$ of $\Gamma- \{[u_{j}]\}$ and a partition $(s_{1},\dots, s_{q})$ of $\{1,\dots,p\}$ such that, for each $1\leq j\leq q$, $\Gamma_{j}, \Delta_{j}\vdash u_{j}:[u_{j}]$, where $\Delta_{j}=\{x_{i_{1}}:A_{i_{1}},\dots, x_{i_{p_{j}}}:A_{i_{p_{j}}}\}$, $s_{j}=\{i_{1},\dots, i_{p_{j}}\}$; by the induction hypothesis, then, there exists sets $\CC X_{j}\subset \CC V$ contexts $\Gamma'_{j}$, $\Delta'_{j}$ of types $co-qRP_{\CC X_{j}}$ and a type $C_{j}$ $qRP_{\CC X_{j}}$ such that $\Gamma'_{j}, \Delta'_{j}\vdash u_{j}:C_{j}$. W.l.o.g. we can suppose all $\CC X_{j}$ disjoint. Let $\CC X=(\bigcup_{j} \CC X_{j})$. We have then
$\Gamma'_{1},\dots, \Gamma'_{q}, \Delta'_{1},\dots, \Delta'_{q}, y:C_{1}\multimap \dots \multimap C_{q}\multimap Z\vdash yu_{1}\dots u_{q}: Z$, so we can conclude. One can argue similarly if $y=x_{i}$ for some $1\leq i\leq p$.

\item[($Z=X$)] it must be $A_{i}\in co-gRP_X$ for all $1\leq i\leq p$ but for one $A_{c}\in co-gRP^X_{1}$. Observe that this forces $y=x_{c}$. We have that $A_{c}=[u_{1}]\multimap \dots \multimap [u_{q}]\multimap X$. Now, let $A'_{c}=[u_1]\multimap \dots \multimap [u_{q}]\multimap X'$, for some $X'$ not occurring elsewhere. Then $A'_{c}\in co-gRP_X$, and we deduce that $[u_{j}]\in  gRP_X$, for all $1\leq j \leq q$. Now, there exists a partition $\Gamma_{1},\dots, \Gamma_{q}$ of $\Gamma$ and a partition $(s_{1},\dots, s_{q})$ of $\{1,\dots,p\}- \{c\}$ such that, for each $1\leq j\leq q$, $\Gamma_{j}, \Delta_{j}\vdash u_{j}:[u_{j}]$, where $\Delta_{j}=\{x_{i_{1}}:A_{i_{1}},\dots, x_{i_{p_{j}}}:A_{i_{p_{j}}}\}$, $s_{j}=\{i_{1},\dots, i_{p_{j}}\}$; by the induction hypothesis, then, there exists sets $\CC X_{j}\subset \CC V$ contexts $\Gamma'_{j}$, $\Delta'_{j}$ of types $qRP_{\CC X_{j}}$ and a type $C_{j}$ $qRP_{\CC X_{j}}$ such that $\Gamma'_{j}, \Delta'_{j}\vdash u_{j}:C_{j}$. W.l.o.g. we can suppose all $\CC X_{j}$ disjoint and not containing $X'$. Let $\CC X=(\bigcup_{j} \CC X_{j})\cup \{X'\}$. We have then
$\Gamma'_{1},\dots, \Gamma'_{q}, \Delta'_{1},\dots, \Delta'_{q}, x_{c}:C_{1}\multimap \dots \multimap C_{q}\multimap X'\vdash x_{c}u_{1}\dots u_{q}: X'$, so we can conclude.

\end{description}

\end{proof}

\subsection{Characterization of linearly expansible types}\label{sec53}

We show that $gRP_{X}$ types are ``dense'' in the class of linearly expansible types in the following sense: any linearly expansible type is either a $gRP_X$ or it linearly collapses into a set of $gRP_X$ types. We also show that $gRP_X$ types are exactly those types for which all simple expansion graphs are correct.

Let $f:\Gamma\to A[B/X]$ in $\CC A^{\multimap}$, where $B=Y\multimap X$ and $Y\notin FV(\Gamma), FV(A)$. The edges over $Y$ of $f$ form then a pairing $f_{Y}\subset f$ made of type III edges. Moreover, $f-f_{Y}:\Gamma\to A$ is allowable: its correction graph is obtained from the correction graph of $f$ by the transformation illustrated in figure \ref{trans} (in which the dotted path is forced by functionality), which preserves correctness.

 \begin{figure}
 \begin{center}
 \resizebox{0.55\textwidth}{!}{
 \begin{tikzpicture}
 \node(y1) at (-2,0) {$Y^{+}$};
 \node(x1) at (-1,0) {$X^{-}$};
 
 \node(y2) at (1,0) {$Y^{-}$};
 \node(x2) at (2,0) {$X^{+}$};
 
 \node(m1) at (-1.5,-1) {$\multimap^{-}$};
 \node(m2) at (1.5,-1) {$\multimap^{+}$};

 \draw[->, thick] (y1) to [bend left=30] (y2);
 
 \draw[->, thick, dotted] (x2) to [bend right=70] (x1);
 
 \draw[->, thick] (x1) to (m1);
 
 \draw[->, thick] (m1) to (y1); 
 
 \draw[->, thick, dotted] (2.3,-1.7) to [bend left=35] (m2);
  \draw[->, thick, dotted] (m1) to [bend left=35] (-2.3,-1.7);

 \draw[->, thick] (m2) to (x2);
 
 \end{tikzpicture}$ \qquad \leadsto \qquad $
 \begin{tikzpicture}
 \node(x1) at (-1,0) {$X^{-}$};
 
 \node(x2) at (2,0) {$X^{+}$};
 

 
 \draw[->, thick, dotted] (x2) to [bend right=70] (x1);
 
%
 
 \draw[->, thick, dotted] (2.3,-0.7) to [bend left=35] (x2);
  \draw[->, thick, dotted] (x1) to [bend left=35] (-1.3,-0.7);
 
 
 \end{tikzpicture}
 }
 \end{center}
 \caption{Transformation from $f:\Gamma\to A[Y\multimap X/X]$ to $f':\Gamma\to A$.}
 \label{trans}
 \end{figure}

\begin{lemma}\label{regressum}
Let $f:\Gamma\to A[B/X]$ in $\CC A^{\multimap}$, where $B=Y\multimap X$ and $Y\notin FV(\Gamma), FV(A)$. Then the correction graph of $f$ contains no configuration of the form shown in figure \ref{confi}.

\end{lemma}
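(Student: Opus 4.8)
The plan is to derive a contradiction directly from the allowability of $f$, using the acyclicity and functionality clauses together with the very special shape of the codomain $A[B/X]$ when $B=Y\multimap X$ and $Y$ is fresh; I do not expect to need sequentialization, although it gives an alternative route.

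First I would spell out the local shape of the correction graph of $f$. Since $Y\notin FV(\Gamma),FV(A)$, the variable $Y$ occurs in $A[B/X]$ only inside the substituted copies of $B$: each positive occurrence of $X$ in $A$ becomes a positive occurrence of $Y\multimap X$, i.e. a $\multimap^{+}$-node $m$ whose left premiss is a leaf $Y^{-}$ and whose right premiss is a leaf $X^{+}$, with shape edge $m\to X^{+}$; each negative occurrence of $X$ in $A$ becomes a negative occurrence of $Y\multimap X$, i.e. a $\multimap^{-}$-node $n$ whose left premiss is a leaf $Y^{+}$ and whose right premiss is a leaf $X^{-}$, with shape edges $X^{-}\to n$ and $n\to Y^{+}$. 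Hence every edge of $f$ over $Y$ is a type III edge $(Y^{+},Y^{-})$ joining the left premiss $Y^{+}$ of some such $n$ to the left premiss $Y^{-}$ of some such $m$; moreover $Y^{-}$ has no incoming edge in the correction graph other than this $Y$-edge, and $Y^{+}$ has no incoming edge other than the shape edge $n\to Y^{+}$ (any type I, II or III edge into one of these leaves would be labelled $Y$ and would have to reach $\Gamma$, which is impossible).

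The key step is an application of functionality at the node $m$. Since the correction graph of $f$ is connected, there is a directed path from its conclusion to $Y^{-}$, and by functionality this path passes through $m$; by the previous paragraph it can only end with $\dots\to n\to Y^{+}\to Y^{-}$, so it contains a directed sub-path from $m$ to $n$. Thus \emph{for every $Y$-edge}, the $\multimap^{+}$-node $m$ (coming from a positive occurrence of $X$) strictly precedes in the correction graph the $\multimap^{-}$-node $n$ (coming from the matched negative occurrence). The configuration of figure \ref{confi} exhibits, for one such $Y$-edge, exactly the opposite situation — a directed path running from $n$ back to $m$ — and concatenating it with the path from $m$ to $n$ just produced yields a directed cycle in $f\leftrightharpoons\Gamma\multimap A[B/X]$, contradicting acyclicity. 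Hence no configuration of the form of figure \ref{confi} can occur.

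To keep the argument self-contained I would also verify explicitly that the contraction of figure \ref{trans} really does produce an allowable graph $f-f_{Y}:\Gamma\to A$: deleting each pair $m,n$ together with the vanishing leaves $Y^{\pm}$ keeps the graph connected precisely because of the path from $m$ to $n$ found above, while functionality can only lose constraints. A more syntactic variant runs the whole argument by induction on a normal linear $\lambda$-term $u$ with $\Gamma\vdash u:A[B/X]$ and $f=\CC G(u)$, using sequentialization (Theorem \ref{seque}) and mimicking the proof of Lemma \ref{jumpout}; and once $f-f_{Y}:\Gamma\to A$ is known allowable, Lemma \ref{jumpout} can be used to shortcut the contradiction via the $X$-edge lying ``parallel'' to the $Y$-edge. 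I expect the only genuinely delicate point to be bookkeeping: listing the handful of shapes a configuration as in figure \ref{confi} can take and checking in each case that the reverse directed path is actually present, so that the cycle is produced; the functionality argument above is the conceptual core.
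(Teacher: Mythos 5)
Your description of the local structure and your ``key step'' are both correct: for every $Y$-edge, functionality at the $\multimap^{+}$-node $m$ owning the target leaf $Y^{-}$, together with the fact that the only way into that leaf is through $n\to Y^{+}\to Y^{-}$, does yield a directed path from $m$ to $n$. The gap is in the last step: figure \ref{confi} does \emph{not} exhibit ``the opposite situation''. Following the arrows of the figure, the displayed chain runs from the rightmost $\multimap^{+}$-node (the one whose left premiss is $Y^{-}_{u}$) through $X^{+}_{\alpha}\to X^{-}_{\beta}$, the dotted path, and $X^{+}_{\alpha'}\to X^{-}_{\beta'}$, into the leftmost $\multimap^{-}$-node (the one whose left premiss is $Y^{+}_{v'}$), and only then along $Y^{+}_{v'}\to Y^{-}_{u}$. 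For the displayed $Y$-edge this is a path from $m$ to $n$ --- exactly the direction your key step predicts --- and since $Y^{-}_{u}$ is a sink (there is no shape edge from the left premiss of a $\multimap^{+}$-node back to the node), no cycle closes. A single application of your observation is therefore \emph{consistent} with the configuration and produces no contradiction.

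What refutes the configuration is an \emph{iteration} of your key step, which is the paper's argument (hence the name ``regressum''). Apply it to the other $Y$-edge, the one whose target is $Y^{-}_{u'}$, the left premiss of the middle $\multimap^{+}$-node: the path from the conclusion to $Y^{-}_{u'}$ must pass through that node, is then forced along its right premiss $X^{+}_{\alpha'}$, across the $X$-edge to $X^{-}_{\beta'}$ and into the leftmost $\multimap^{-}$-node; it cannot continue through $Y^{+}_{v'}$, whose unique outgoing edge dead-ends at $Y^{-}_{u}\neq Y^{-}_{u'}$, so it must leave that node elsewhere and can only reach $Y^{-}_{u'}$ through a fresh $\multimap^{+}$/$\multimap^{-}$ pair, reproducing configuration \ref{confi} one step further to the left (figure \ref{confi2}). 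Acyclicity forces each newly produced $\multimap^{+}$-node to be distinct from all earlier ones, so the process never stops, contradicting finiteness of the graph. Your key step is the right engine, but it must be run along this regress rather than closed off by a cycle that is not there; the fallback remarks about figure \ref{trans} and Lemma \ref{jumpout} at the end of your proposal are too sketchy to repair this, since the allowability of $f-f_{Y}$ is not what is in question.
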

\begin{proof}
If such a configuration exists, then, by functionality, there must be a path from the leftmost $\multimap^{-}$-link to $Y^{-}_{u'}$. As this path cannot pass through $Y^{+}_{v'}$, its existence implies a configuration as illustrated in figure \ref{confi2}, which contains two isomorphic copies of configuration \ref{confi}. There must be then a new path from the leftmost $\multimap^{-}$-link in \ref{confi2} to $Y^{-}_{u''}$. As all paths are finite (by acyclicity), we must conclude that this is impossible.   

\end{proof}

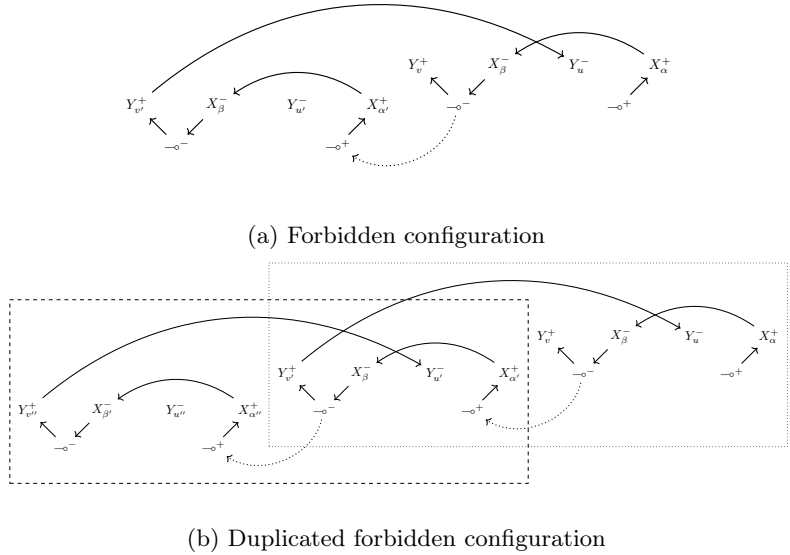
\begin{figure}[h]
\begin{subfigure}{\textwidth}
\begin{center}
\resizebox{0.5\textwidth}{!}{
\begin{tikzpicture}
\node(p1) at (0,0) {$\multimap^{+}$};
\node(x+1) at (1,1) {$X^{+}_{\alpha}$};
\node(y-1) at (-1,1) {$Y^{-}_{u}$};

\draw[->, thick] (p1) to (x+1);

\node(m1) at (-4,0) {$\multimap^{-}$};
\node(x-1) at (-3,1) {$X^{-}_{\beta}$};
\node(y+1) at (-5,1) {$Y^{+}_{v}$};

\draw[->, thick] (x-1) to (m1);
\draw[->, thick] (m1) to (y+1);

\draw[->, thick] (x+1) to [bend right=35] (x-1);

\node(p2) at (-7,-1) {$\multimap^{+}$};
\node(x+2) at (-6,0) {$X^{+}_{\alpha'}$};
\node(y-2) at (-8,0) {$Y^{-}_{u'}$};

\draw[->, thick] (p2) to (x+2);

\node(m2) at (-11,-1) {$\multimap^{-}$};
\node(x-2) at (-10,0) {$X^{-}_{\beta}$};
\node(y+2) at (-12,0) {$Y^{+}_{v'}$};

\draw[->, thick] (x+2) to [bend right=35] (x-2);

\draw[->, thick] (x-2) to (m2);
\draw[->, thick] (m2) to (y+2);

\draw[->, thick] (y+2) to [bend left=35] (y-1);
\draw[->, thick, dotted] (m1) to [bend left=55] (p2);

\end{tikzpicture}}
\end{center}
\caption{Forbidden configuration}
\label{confi}
\end{subfigure}
\begin{subfigure}{\textwidth}
\begin{center}
\resizebox{0.7\textwidth}{!}{
\begin{tikzpicture}
\node(p1) at (0,0) {$\multimap^{+}$};
\node(x+1) at (1,1) {$X^{+}_{\alpha}$};
\node(y-1) at (-1,1) {$Y^{-}_{u}$};

\draw[->, thick] (p1) to (x+1);

\node(m1) at (-4,0) {$\multimap^{-}$};
\node(x-1) at (-3,1) {$X^{-}_{\beta}$};
\node(y+1) at (-5,1) {$Y^{+}_{v}$};

\draw[->, thick] (x-1) to (m1);
\draw[->, thick] (m1) to (y+1);

\draw[->, thick] (x+1) to [bend right=35] (x-1);

\node(p2) at (-7,-1) {$\multimap^{+}$};
\node(x+2) at (-6,0) {$X^{+}_{\alpha'}$};
\node(y-2) at (-8,0) {$Y^{-}_{u'}$};

\draw[->, thick] (p2) to (x+2);

\node(m2) at (-11,-1) {$\multimap^{-}$};
\node(x-2) at (-10,0) {$X^{-}_{\beta}$};
\node(y+2) at (-12,0) {$Y^{+}_{v'}$};

\draw[->, thick] (x+2) to [bend right=35] (x-2);

\draw[->, thick] (x-2) to (m2);
\draw[->, thick] (m2) to (y+2);

\draw[->, thick] (y+2) to [bend left=35] (y-1);
\draw[->, thick, dotted] (m1) to [bend left=55] (p2);

\node(p3) at (-14,-2) {$\multimap^{+}$};
\node(x+3) at (-13,-1) {$X^{+}_{\alpha''}$};
\node(y-3) at (-15,-1) {$Y^{-}_{u''}$};

\draw[->, thick] (p3) to (x+3);

\node(m3) at (-18,-2) {$\multimap^{-}$};
\node(x-3) at (-17,-1) {$X^{-}_{\beta'}$};
\node(y+3) at (-19,-1) {$Y^{+}_{v''}$};

\draw[->, thick] (x+3) to [bend right=35] (x-3);

\draw[->, thick] (x-3) to (m3);
\draw[->, thick] (m3) to (y+3);

\draw[->, thick] (y+3) to [bend left=35] (y-2);
\draw[->, thick, dotted] (m2) to [bend left=55] (p3);

\draw[dotted] (1.5,3) to (1.5,-2) to (-12.5,-2) to (-12.5,3) -- cycle;
\draw[dashed] (-5.5,2) to (-5.5,-3) to (-19.5,-3) to (-19.5,2) -- cycle;

\end{tikzpicture}}
\end{center}
\caption{Duplicated forbidden configuration}
\label{confi2}
\end{subfigure}
\caption{Forbidden configurations}
\end{figure}

\begin{lemma}\label{regressum2}
Let $f:\Gamma\to A[B/X]$ in $\CC A^{\multimap}$, where $B=Y\multimap X$ and $Y\notin FV(\Gamma), FV(A)$. Then, for any type III edge $e_{X}$ over $X$ of $f$, there exists a type III edge $e_{Y}$ over $Y$ of $f$ ``next to $e_{X}$'', i.e. as in  figure \ref{confi3}.
\end{lemma}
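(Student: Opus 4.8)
The plan is to argue by contradiction, exploiting the fact that in a $\CC A^{\multimap}$ correction graph each occurrence of $Y\multimap X$ created by the substitution $B=Y\multimap X$ sits in a rigid position: a positive occurrence of $X$ in $A$ becomes a $\multimap^{+}$-node whose left premiss is a negative leaf over $Y$ and whose right premiss is the positive leaf over $X$, and dually a negative occurrence of $X$ becomes a $\multimap^{-}$-node with left premiss a positive leaf over $Y$ and right premiss a negative leaf over $X$. So first I would fix the given type III edge $e_{X}=(X^{+}_{\alpha},X^{-}_{\beta})$ over $X$ of $f$, let $l^{+}$ be the $\multimap^{+}$-node having $X^{+}_{\alpha}$ as right premiss and $Y^{-}_{u}$ as left premiss, and let $l^{-}$ be the $\multimap^{-}$-node having $X^{-}_{\beta}$ as right premiss and $Y^{+}_{v}$ as left premiss. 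Since $Y\notin FV(\Gamma)$, every occurrence of $Y$ lies in $A[B/X]$, and $f$ restricted to those leaves is a perfect matching of the negative $Y$-leaves with the positive ones, all of whose edges are type III; in particular $f(Y^{-}_{u})=Y^{+}_{w}$ for a unique positive leaf $Y^{+}_{w}$. The claim amounts to showing $w=v$, for then $e_{Y}:=(Y^{+}_{v},Y^{-}_{u})=(Y^{+}_{w},Y^{-}_{u})\in f$ is the required type III edge over $Y$ ``next to $e_{X}$'', in the configuration of figure \ref{confi3}.

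The heart of the argument is a path-tracing analysis using the functionality condition. Since the conclusion of the correction graph is its unique source and the graph is finite and acyclic, $Y^{-}_{u}$ is reachable from the conclusion; pick a path $P$ from the conclusion to $Y^{-}_{u}$. By functionality applied to $l^{+}$, $P$ passes through $l^{+}$. Now, inspecting the edges of a $\CC A^{\multimap}$ correction graph: $l^{+}$ has a single out-edge, to its right premiss $X^{+}_{\alpha}$ (its parent edge is incoming, and there is no edge to its left premiss); from $X^{+}_{\alpha}$ the only out-edge is $e_{X}$, to $X^{-}_{\beta}$; from $X^{-}_{\beta}$ the only out-edge is the shape edge to $l^{-}$. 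Hence after visiting $l^{+}$ the path $P$ is forced to run $l^{+}\to X^{+}_{\alpha}\to X^{-}_{\beta}\to l^{-}$. At $l^{-}$ there are at most two out-edges: one to its left premiss $Y^{+}_{v}$, and possibly one to its parent, which must then be a $\multimap^{-}$-node having $l^{-}$ as right premiss. If $P$ took the first of these, it would continue along the $Y$-matching edge out of $Y^{+}_{v}$ into a negative $Y$-leaf which is a sink; since $w\neq v$ this leaf is not $Y^{-}_{u}$, contradicting the choice of $P$. So $P$ must climb from $l^{-}$ up a branch of $\multimap^{-}$-nodes.

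Finally I would close the contradiction using Lemma \ref{regressum}. Since $P$ must reach $Y^{-}_{u}$, whose only in-edge is $Y^{+}_{w}\to Y^{-}_{u}$, and $Y^{+}_{w}$ is reached only through its $\multimap^{-}$-node $l^{-}_{w}$, hence through the right premiss $X^{-}_{\beta_{w}}$ of $l^{-}_{w}$, the forced portion of $P$ after $l^{-}$ must eventually arrive at $X^{-}_{\beta_{w}}$. If the edge of $f$ at $X^{-}_{\beta_{w}}$ is type III, say $X^{+}_{\alpha_{w}}\to X^{-}_{\beta_{w}}$ with $X^{+}_{\alpha_{w}}$ the right premiss of a $\multimap^{+}$-node $l^{+}_{w}$ (whose left premiss is $Y^{-}_{u_{w}}$), then $P$ yields a directed path from $l^{-}$ to $l^{+}_{w}$; together with the nodes $l^{+},l^{-},l^{+}_{w},l^{-}_{w}$, the edges $e_{X}$ and $X^{+}_{\alpha_{w}}\to X^{-}_{\beta_{w}}$, and the matching edge $Y^{+}_{w}\to Y^{-}_{u}$, this reproduces, up to relabelling, exactly the configuration of figure \ref{confi}, contradicting Lemma \ref{regressum}. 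If instead the edge of $f$ at $X^{-}_{\beta_{w}}$ is type I, linking $X^{-}_{\beta_{w}}$ to $\Gamma$, one argues similarly, using acyclicity to rule out the resulting regress into $\OV{\Gamma}$. Either way $w=v$, which gives the claimed type III edge $e_{Y}$ ``next to $e_{X}$''. The main obstacle is precisely this last reduction: one must keep careful track of which in- and out-edges each kind of node of a $\CC A^{\multimap}$ correction graph carries, and check that the forced part of $P$ together with the mandatory return to $Y^{-}_{u}$ always contains a relabelled copy of the forbidden configuration of Lemma \ref{regressum}.
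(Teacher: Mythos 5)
Your argument is correct and is essentially the paper's own proof: functionality for the $\multimap^{+}$-node above $X^{+}_{\alpha}$ forces any path from the conclusion to $Y^{-}_{u}$ through $e_{X}$ and into the $\multimap^{-}$-node, after which the branch through $Y^{+}_{v}$ either dead-ends at the wrong sink (closing the claim) or the other branch reproduces the forbidden configuration of Lemma \ref{regressum}. The one place you go beyond the paper is in flagging the subcase where the graph edge at $X^{-}_{\beta_{w}}$ is type I rather than type III — the paper's figure-based argument silently draws a type III edge there — but since you leave that subcase at ``one argues similarly'', the two proofs end up equally complete.
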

\begin{proof}
Suppose the correction graph of $f$ has a configuration as the one in figure \ref{confi4}. Then, by functionality for the rightmost $\multimap^{+}$-link, either there exists a path from the rightmost $\multimap^{-}$-link to the leftmost $\multimap^{+}$-link or there is a path from $Y^{+}_{v}$ to the leftmost $\multimap^{+}$-link. The first case is impossible by lemma \ref{regressum}; the second case forces $u=u'$ and $v=v'$, as from $Y^{+}_{v}$ one must get to the minor premiss $Y^{-}_{v''}$ of a $\multimap^{+}$-link where the path ends (since all negative occurrences of $Y$ are within positive occurrences of $Y\multimap X$). So it must be $v'=v$ and $u=u'$, hence the configuration is as in \ref{confi3}.

\end{proof}

\begin{figure}
\begin{subfigure}{0.4\textwidth}
\begin{center}
\resizebox{0.75\textwidth}{!}{
\begin{tikzpicture}
\node(p1) at (0,0) {$\multimap^{+}$};
\node(x+1) at (1,1) {$X^{+}_{\alpha}$};
\node(y-1) at (-1,1) {$Y^{-}_{u}$};

\draw[->, thick] (p1) to (x+1);

\node(m1) at (-4,0) {$\multimap^{-}$};
\node(x-1) at (-3,1) {$X^{-}_{\beta}$};
\node(y+1) at (-5,1) {$Y^{+}_{v}$};

\draw[->, thick] (x-1) to (m1);
\draw[->, thick] (m1) to (y+1);

\draw[->, thick] (x+1) to [bend right=35] node[above] {$e_{X}$} (x-1);
\draw[->, thick] (y+1) to [bend left=35] node[above] {$e_{Y}$} (y-1);

\end{tikzpicture}}
\end{center}
\caption{Coupled edges $e_{X}$ and $e_{Y}$}
\label{confi3}
\end{subfigure}
\begin{subfigure}{0.58\textwidth}
\begin{center}
\resizebox{\textwidth}{!}{
\begin{tikzpicture}
\node(p1) at (0,0) {$\multimap^{+}$};
\node(x+1) at (1,1) {$X^{+}_{\alpha}$};
\node(y-1) at (-1,1) {$Y^{-}_{u}$};

\draw[->, thick] (p1) to (x+1);

\node(m1) at (-4,0) {$\multimap^{-}$};
\node(x-1) at (-3,1) {$X^{-}_{\beta}$};
\node(y+1) at (-5,1) {$Y^{+}_{v}$};

\draw[->, thick] (x-1) to (m1);
\draw[->, thick] (m1) to (y+1);

\draw[->, thick] (x+1) to [bend right=35] node[above] {$e_{X}$} (x-1);

\node(p2) at (-7,-1) {$\multimap^{+}$};
\node(x+2) at (-6,0) {$X^{+}_{\alpha'}$};
\node(y-2) at (-8,0) {$Y^{-}_{u'}$};

\draw[->, thick] (p2) to (x+2);

\node(m2) at (-11,-1) {$\multimap^{-}$};
\node(x-2) at (-10,0) {$X^{-}_{\beta}$};
\node(y+2) at (-12,0) {$Y^{+}_{v'}$};

\draw[->, thick] (x+2) to [bend right=35] (x-2);

\draw[->, thick] (x-2) to (m2);
\draw[->, thick] (m2) to (y+2);

\draw[->, thick] (y+2) to [bend left=35] (y-1);

\end{tikzpicture}}
\end{center}
\caption{Forbidden cofiguration}
\label{confi4}
\end{subfigure}
\caption{}
\end{figure}

By lemma \ref{regressum2}, any type III edge $e_{X}$ over $X$ in $f:\Gamma\to A[B/X]$ is coupled with a type III edge $e_{Y}$ over $Y$ as in figure \ref{confi3}. We can deduce then that interpolation over $f-f_{Y}:\Gamma\to A$ induces interpolation over $f $ in the following sense:

\begin{lemma}\label{regressum3}
Let $f:\Gamma\to A[B/X]$ in $\CC A^{\multimap}$, where $B=Y\multimap X$ and $Y\notin FV(\Gamma), FV(A)$. Suppose moreover $f$ has a type III edge over $X$. 
Then, the splitting of $f'=f-f_{Y}:\Gamma\to A$ into $p$ graphs $f_{i}:\Gamma_{i}\to A_{i}$ obtained by weak interpolation induces a splitting of $f$ into $p$ graphs $g_{i}:\Gamma_{i}\to A_{i}[B/X]$, where $f_{i}=g_{i}-f_{Y}$.
\end{lemma}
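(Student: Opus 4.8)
The plan is to apply weak interpolation (Theorem~\ref{interpolation2}) to $f'=f-f_{Y}:\Gamma\to A$ and then to \emph{re-expand} the resulting decomposition, showing that the $Y$-edges of $f$ can be consistently put back into its pieces. The argument rests entirely on the rigidity of the coupling between the type III $X$-edges and the $Y$-edges of $f$ recorded in Lemma~\ref{regressum2}: since that coupling is local and governed by the sibling relation, once we reinstate the $Y\multimap X$ structure inside a piece of the decomposition the $Y$-edges to be added are forced, and it turns out that this forced choice is compatible with the partition produced by interpolation.

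First I would fix the combinatorics relating $f$ and $f'$. Collapsing every occurrence of $Y\multimap X$ in $A[B/X]$ to its $X$-leaf (the transformation of Figure~\ref{trans}) sets up a canonical bijection between the $X$-occurrences of $A[B/X]$ and those of $A$, under which the type III $X$-edges of $f$ correspond exactly to the type III $X$-edges of $f'$ lying inside $A$; in particular the hypothesis that $f$ has a type III $X$-edge guarantees that $f'$ has one as well. Then, by Lemma~\ref{regressum2} together with a symmetric argument (using functionality, acyclicity and Lemma~\ref{regressum}), I would record that the $Y$-edges of $f$ and the type III $X$-edges of $f$ are in bijective ``coupled'' correspondence: the $Y$-edge coupled to $(X^{+}_{\alpha},X^{-}_{\beta})$ joins the sibling $Y^{-}$ of $X^{+}_{\alpha}$ to the sibling $Y^{+}$ of $X^{-}_{\beta}$, exactly as in Figure~\ref{confi3}.

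Next I would run Theorem~\ref{interpolation2} on $f'$, obtaining a partition $\Gamma_{1},\dots,\Gamma_{p}$ of $\Gamma$, interpolants $A_{1},\dots,A_{p}$, graphs $f_{i}:\Gamma_{i}\to A_{i}$ carrying the type II edges of $f'$ together with part of its type I edges, and a graph $f_{0}:A_{1},\dots,A_{p}\to A$ carrying the type III edges of $f'$ together with the remaining type I edges --- so that in particular every type III $X$-edge of $f'$ belongs to $f_{0}$. I would then lift the decomposition: each $X$-occurrence of $A_{i}$ (respectively of $A$, for $f_{0}$) is re-expanded to $Y\multimap X$, and the new $Y$-leaves are joined to their partners by the $Y$-edges of $f_{Y}$ coupled, via the correspondence above, to the type III $X$-edges of $f$ that project to the relevant type III $X$-edges of $f'$ inside $A$. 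The point to check is that this assignment respects the partition: both endpoints of a type III $X$-edge of $f'$ lie inside $A$, hence inside the codomain of $f_{0}$, so a $Y$-edge attached to such an edge is put back inside the $g_{0}$ piece; an $X$-occurrence sitting in an interpolant $A_{i}$ is, through the chains of type I edges running through $f_{i}$ and $f_{0}$, tied to a definite type III $X$-edge inside $A$, so its companion $Y$-edge is allotted to one and only one piece. This produces graphs $g_{i}:\Gamma_{i}\to A_{i}[B/X]$ and $g_{0}:A_{1}[B/X],\dots,A_{p}[B/X]\to A[B/X]$ with $g_{i}\setminus f_{i}\subseteq f_{Y}$ and $\bigcup_{i}(g_{i}\setminus f_{i})=f_{Y}$, i.e. $f_{i}=g_{i}-f_{Y}$.

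Finally I would verify that every $g_{i}$ is allowable and that $g_{0}\circ(g_{1}\otimes\dots\otimes g_{p})=f$. Correctness is inherited from $f_{i}$: $g_{i}$ is obtained from the correct graph $f_{i}$ by performing, at each $X$-leaf of $A_{i}$, the inverse of the transformation of Figure~\ref{trans} together with the type III $Y$-edge sitting next to it, and Lemma~\ref{regressum2} says that this is precisely the local shape forced on any correct graph into the expanded type; hence acyclicity and functionality of $g_{i}$ reduce to those of $f_{i}$ just as in the passage $f\to f'$ read backwards, and the composite recovers $f=f'+f_{Y}$ because each $Y$-edge has been returned to the unique piece already containing its companion $X$-edge. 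I expect the main obstacle to be this last piece of bookkeeping --- proving that weak interpolation of $f'$ never tears a type III $X$-edge apart from its companion $Y$-edge, and that the positive and negative $X$-occurrences within each piece balance so that the re-expanded pieces are genuine graphs --- which is exactly where the rigidity of Lemmas~\ref{regressum} and~\ref{regressum2} is indispensable.
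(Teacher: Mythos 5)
Your overall strategy (interpolate $f'=f-f_{Y}$, then reattach the $Y$-edges to the resulting pieces) is the paper's, but the proposal rests on a false combinatorial claim: that the $Y$-edges of $f$ are in bijective correspondence with its type III $X$-edges. Lemma~\ref{regressum2} only gives one direction of a coupling: every type III $X$-edge has a $Y$-edge sitting next to it. The number of $Y$-edges, however, equals the number of positive occurrences of $X$ in $A$ (every positive $Y\multimap X$ contributes a $Y^{-}$ leaf that must be matched), which in general exceeds the number of type III $X$-edges as soon as $f$ has type I edges over $X$. So $f_{Y}$ really splits into two families: the edges coupled to a type III $X$-edge, which stay in the outer piece $A_{1}[B/X],\dots,A_{p}[B/X]\to A[B/X]$ and cause no trouble, and the edges whose two flanking $X$-occurrences both lie in \emph{type I} edges of $f$. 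The latter are the whole content of the lemma: weak interpolation cuts each type I edge into a chain passing through an $X$-occurrence of some interpolant $A_{i}$, and the corresponding $Y$-edge has to be transported onto $A_{i}$, i.e.\ into $g_{i}$. Under your bijection every $Y$-edge would be assigned to the outer piece, leaving the $Y$-leaves of $A_{i}[B/X]$ unmatched, and the sentence about an $X$-occurrence of $A_{i}$ being ``tied to a definite type III $X$-edge inside $A$'' conflates the two disjoint families (type I chains never meet type III edges).

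Even granting the correct bookkeeping, the step ``correctness is inherited from $f_{i}$'' still owes the two facts it silently assumes: (a) the two type I chains flanking such a $Y$-edge pass through the \emph{same} interpolant $A_{i}$, and (b) the pair of $X$-occurrences of $A_{i}$ they traverse forms an expansible edge in the sense of Lemma~\ref{inte2}. The inverse of the transformation of Figure~\ref{trans} preserves functionality only at expansible (internal or already present) edges, so (b) is exactly what must be proved before the $Y$-edge can be ``returned''; the paper establishes (a) and (b) by the functionality analysis of the interpolated correction graph in Figure~\ref{regre}. Without that argument the reattachment could break the correction criterion, so this is a genuine gap rather than routine bookkeeping.
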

\begin{proof}
By lemma \ref{regressum2}, the $Y$-pairing of $f$ can be partitioned in two sets $p_{1},p_{2}$, where $p_{1}$ contains those edges which are coupled with type III edges over $X$ and $p_{2}$ contains those edges which occur close to type I edges over $X$. The edges in $p_{2}$ induce then a pairing $p$ of the type I edges over $X$, with edges $(e,e')\in p$ when $e$ contains $X^{+}_{\alpha}$, $e'$ contains $X^{-}_{\beta}$ and the two occurrences $Y^{-}_{\alpha'}$ and $Y^{+}_{\beta'}$ occurring next to $X^{+}_{\alpha}$ and $X^{-}_{\beta}$, respectively, form a type III edge $e_{Y}$  in $p_{2}$. After weak interpolation is performed, the correction graph is as in figure \ref{regre}. Indeed, by functionality for the rightmost $\multimap^{+}$ link, there must be a path from $X^{-}_{\alpha'}$ to $X^{+}_{\alpha'}$, which can only be a path through a negative branch in the shape of $A_{i}^{\OV \epsilon}$. By duality, there is then a positive branch in the shape of $A_{i}^{\epsilon}$, hence, by functionality for the leftmost $\multimap^{+}$ link, there must be a path from $X^{-}_{\gamma}$ to $X^{+}_{\gamma}$.

Under these conditions, the edge $(X^{-}_{\beta'},^{+}_{\beta'})$ is expansible in $A_{i}$ (by lemma \ref{inte2}), so we can ``transport'' the edge $(Y^{+}_{v},Y^{-}_{u})$ onto $A_{i}$ as shown in figure \ref{regg}, preserving correctness. By arguing in this way for all edges in $p_{2}$ we obtain correct graphs $g_{i}:\Gamma_{i}\to A_{i}[Y\multimap X/X]$.
\end{proof}

\begin{figure}[h]
\begin{center}\resizebox{0.7\textwidth}{!}{
\begin{tikzpicture}[scale=1.3]

\draw[] (0,0) to (1,0) to (2,1) to (-1,1) -- cycle;
\draw[] (6.8,-1) to (8.3,1) to (5.5,1) -- cycle;

\node(g) at (0.5,-0.3) {$\Gamma_{i}$};
\node(a) at (6.8,-1.3) {$A$};

\node(y1) at (5.7,1.2) {\scriptsize$Y^{+}_{\alpha}$};
\node(x1) at (6.5,1.2) {\scriptsize$X^{-}_{\beta}$};

\node(y2) at (7.2,1.2) {\scriptsize$Y^{-}_{\beta}$};
\node(x2) at (8,1.2) {\scriptsize$X^{+}_{\alpha}$};

\node(x3) at (4.2,1.2) {\scriptsize$X^{+}_{\alpha'}$};
\node(x4) at (4.8,1.2) {\scriptsize$X^{-}_{\alpha'}$};

\node(x5) at (2.7,1.2) {\scriptsize$X^{-}_{\beta'}$};
\node(x6) at (3.3,1.2) {\scriptsize$X^{+}_{\beta'}$};

\node(x7) at (0.5,1.2) {\scriptsize$X^{-}_{\gamma}$};
\node(x8) at (1.5,1.2) {\scriptsize$X^{+}_{\gamma}$};

\node(t) at (6.1,0.5) {\scriptsize$\multimap^{-}$};
\node(p) at (7.6,0.5) {\scriptsize$\multimap^{+}$};

\draw[->] (x1) to (t);
\draw[->] (p) to (x2);
\draw[->] (t) to (y1);

\draw[->] (y1) to [bend left=45] node[above] {$e_{Y}$} (y2);
\draw[->] (x2) to [bend right=45] node[above] {$e$} (x4); 
\draw[->] (x3) to [bend left=45] node[above] {$e'$} (x1); 
\draw[<-] (x8) to [bend left=45] (x6); 
\draw[<-] (x5) to [bend right=45] (x7);

\draw[] (2.5,1) to (3.5,1) to (3,0) -- cycle;
\draw[] (4,1) to (5,1) to (4.5,0) -- cycle;
\node(i1) at (3,-0.3) {$A_{i}^{\epsilon}$};
\node(i2) at (4.5,-0.3) {$A_{i}^{\OV\epsilon}$};

\draw[] (i1) to [bend right=55] node[below] {$cut$} (i2);

\node(tt) at (3,0.5) {\scriptsize$\multimap^{+}$};
\draw[->] (tt) to (3.3,1);

\node(pp) at (4.5,0.5) {\scriptsize$\multimap^{-}$};
\draw[->] (4.8,1) to (pp);
\draw[->] (pp) to (4.2,1);
\draw[<-, thick, dotted] (0.5,0.9) to [bend right=55] (1.5,0.9);

\end{tikzpicture}}
\end{center}
\caption{Correction graph after weak interpolation}
\label{regre}
\end{figure}
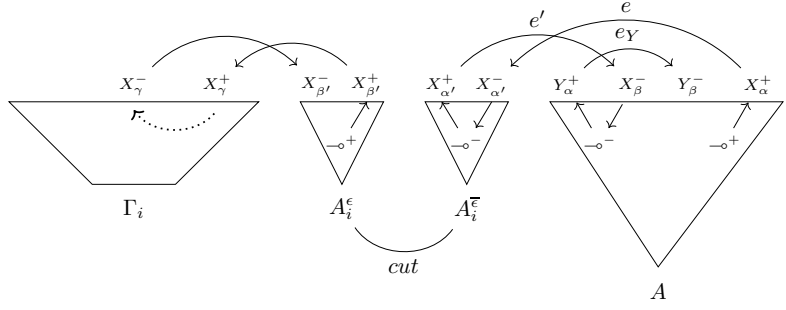
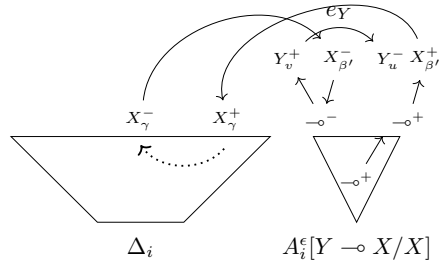
\begin{figure}
\begin{center}\resizebox{0.4\textwidth}{!}{
\begin{tikzpicture}[scale=1.3]

\draw[] (0,0) to (1,0) to (2,1) to (-1,1) -- cycle;

\node(g) at (0.5,-0.3) {$\Delta_{i}$};

\node(x5) at (2.8,1.9) {\scriptsize$X^{-}_{\beta'}$};
\node(x6) at (3.8,1.9) {\scriptsize$X^{+}_{\beta'}$};

\node(x7) at (0.5,1.2) {\scriptsize$X^{-}_{\gamma}$};
\node(x8) at (1.5,1.2) {\scriptsize$X^{+}_{\gamma}$};

\node(p) at (2.6,1.2) {\scriptsize$\multimap^{-}$};

\node(t) at (3.6,1.2) {\scriptsize$\multimap^{+}$};

\node(y1) at (2.2,1.9) {\scriptsize$Y^{+}_{v}$};
\node(y2) at (3.4,1.9) {\scriptsize$Y^{-}_{u}$};

\draw[->] (y1) to [bend left=45] node[above] {$e_{Y}$} (y2);

\draw[<-] (x8) to [bend left=85] (x6); 
\draw[<-] (x5) to [bend right=65] (x7); 

\draw[->] (x5) to (p);
\draw[->] (p) to (y1);
\draw[->] (t) to (x6);

\draw[] (2.5,1) to (3.5,1) to (3,0) -- cycle;
\node(i1) at (3,-0.3) {$A_{i}^{\epsilon}[Y\multimap X/X]$};

\node(tt) at (3,0.5) {\scriptsize$\multimap^{+}$};
\draw[->] (tt) to (3.3,1);

\draw[<-, thick, dotted] (0.5,0.9) to [bend right=55] (1.5,0.9);

\end{tikzpicture}}
\end{center}
\caption{Transport of edge $e_{Y}$}
\label{regg}
\end{figure}

By exploiting the previous lemmas, proposition \ref{pairing} as well as lemma \ref{pairing} we can prove the proposition below, which is the fundamental step to characterize linearly expansible types.

\begin{lemma}\label{trave0}
Let $f:\Gamma\to A[B/X]$ in $\CC A^{\multimap}$, where $B=Y\multimap X$ and $Y\notin FV(\Gamma), FV(A)$. Then either $A\in  gRP_X$ or $f-f_{Y}:\Gamma\to A$ has $gRP_{X}$ interpolants. 
\end{lemma}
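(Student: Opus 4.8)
The plan is to split on whether $f$ has a type III edge over $X$, the first case giving $A\in gRP_X$ directly and the second reducing to the first on the interpolation pieces. I would first fix the \emph{shadow} correspondence. Since $Y\notin FV(\Gamma),FV(A)$, every occurrence of $Y$ in $\Gamma\multimap A[B/X]$ lies inside one of the displayed sub-occurrences $Y\multimap X$ of $A[B/X]$, and these sit (with reversed polarity) in bijection with the occurrences of $X$ in $A$. Hence the $Y$-pairing $f_{Y}$ determines an $X$-pairing $\F p_{Y}$ of $A$: the edge of $f_{Y}$ joining the two $Y$'s that sit next to $X^{+}_{\alpha}$ (inside a positive $Y\multimap X$) and $X^{-}_{\beta}$ (inside a negative $Y\multimap X$) "shadows" the pair $(X^{+}_{\alpha},X^{-}_{\beta})$; as $f_{Y}$ is a perfect matching of the $Y$'s, $\F p_{Y}$ exhausts the occurrences of $X$ in $A$.

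\emph{Case 1: $f$ has no type III edge over $X$.} I would show that every shadow edge is internal in $A$, so that $\F p_{Y}$ is a pairing of the occurrences of $X$ in $A$ with no jump out, whence $A\in gRP_{X}$ by Proposition \ref{pairing} (with $n=0$); the subcase where $A$ contains no $X$ is immediate, since then $A\in gRP_{X}^{0}$ by definition and $f_{Y}=\emptyset$. Fix a shadow edge $(X^{+}_{\alpha},X^{-}_{\beta})$ coming from $(Y^{+}_{v},Y^{-}_{u})\in f$. Functionality of $f$ at the $\multimap^{+}$-node $Y^{-}_{u}\multimap X^{+}_{\alpha}$ forces the unique path from the conclusion to $Y^{-}_{u}$ — which has tail $\dots\to X^{-}_{\beta}\to(\multimap^{-})\to Y^{+}_{v}\to Y^{-}_{u}$ — to pass through that node, hence through its only successor $X^{+}_{\alpha}$; so $X^{+}_{\alpha}\prec_{f}X^{-}_{\beta}$. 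The correctness-preserving collapse $f\mapsto f-f_{Y}$ of figure \ref{trans} only erases the $Y$-nodes and the $\multimap$-nodes splayed out by the substitution, so $X^{+}_{\alpha}\prec_{f-f_{Y}}X^{-}_{\beta}$ still holds. By the remark preceding Lemma \ref{inte2}, the shadow edge is then expansible for $f-f_{Y}:\Gamma\to A$; it is not an edge of $f-f_{Y}$ (that would be a type III edge over $X$, excluded here), so by Lemma \ref{inte2} it is internal, i.e.\ not a jump out.

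\emph{Case 2: $f$ has a type III edge over $X$.} Then Lemma \ref{regressum3} applies: weak interpolation (Theorem \ref{interpolation2}) splits $f-f_{Y}$ into graphs $f_{i}:\Gamma_{i}\to A_{i}$ carrying only its type I and type II edges, together with a co-graph $v:A_{1},\dots,A_{p}\to A$ carrying all of its type III edges, and this lifts to a splitting of $f$ into correct graphs $g_{i}:\Gamma_{i}\to A_{i}[B/X]$ with $f_{i}=g_{i}-f_{Y}$. Each $A_{i}$ is an interpolant of $f-f_{Y}$, hence $A_{i}\hookrightarrow A$ and $Y\notin FV(A_{i})$, and also $Y\notin FV(\Gamma_{i})\subseteq FV(\Gamma)$; moreover every type III edge of $f$ lies in the co-part $v'$ and the transport step of Lemma \ref{regressum3} only introduces type III edges over $Y$, so each $g_{i}$ has no type III edge over $X$. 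Thus each $g_{i}:\Gamma_{i}\to A_{i}[B/X]$ meets the hypotheses of Case 1, which yields $A_{i}\in gRP_{X}$, and therefore $f-f_{Y}$ has the $gRP_{X}$ interpolants $A_{1},\dots,A_{p}$.

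The hard part will be the bookkeeping in the two lifting steps: checking precisely that $X^{+}_{\alpha}\prec_{f}X^{-}_{\beta}$ survives the collapse of figure \ref{trans}, and that in Case 2 the lifted pieces $g_{i}$ genuinely contain no type III edge over $X$ so that Case 1 applies to them. Both are local facts about the correction graphs, but they require reading the constructions behind figure \ref{trans} and Lemma \ref{regressum3} rather than using them as black boxes; the weak interpolation input itself is otherwise used only abstractly.
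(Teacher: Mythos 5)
Your proposal is correct and follows essentially the same route as the paper: the induced $X$-pairing of $A$ coming from $f_{Y}$, the functionality argument at the expanded $\multimap^{+}$-node forcing $X^{+}_{\alpha}\prec X^{-}_{\beta}$, Proposition \ref{pairing} together with Lemma \ref{jumpout}/\ref{inte2} to conclude internality, and Lemma \ref{regressum3} to transfer the conclusion to the interpolants. The only difference is presentational: the paper argues contrapositively (assuming $A\notin gRP_{X}$ and exhibiting a type III edge over $X$ at a jump-out position), whereas you split directly on the presence of a type III edge over $X$, which is the same argument read in the other direction.
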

\begin{proof}




Suppose $A\notin gRP_{X}$. By lemma \ref{paired} $A$ is paired and by lemma \ref{pairing}, any  $X$-pairing of $A$ contains a jump out edge. We deduce that for any $Y$-pairing of $A[B/X]$, one can find occurrences of $(Y_{\alpha}\multimap X_{\alpha})^{+},(Y_{\beta}\multimap X_{\beta})^{-}$ such that the edge $(X^{+}_{\alpha},X^{-}_{\beta})$ is a jump out edge. 

Let then $e=(X^{+}_{\alpha}, X^{-}_{\beta})$ be such a jump out edge and let us consider the correction graph of $f':\Gamma\to A$. 
By lemma \ref{jumpout} it follows that either $X_{\alpha}^{+}\not \prec_{f'} X_{\beta}^{-}$ or $e=(X_{\alpha}^{+},X_{\beta}^{-})\in f'$. 
The first case is impossible: as there is a path in the correction graph of $f'$ going from the conclusion to $X^{-}_{\beta}$ without passing through $X_{\alpha}^{+}$, the functionality condition must fail, as illustrated in figure \ref{shape2}, contradicting the hypothesis that $f'$ is correct.  

 \begin{figure}[h]
\begin{center}
\resizebox{0.4\textwidth}{!}{
\begin{tikzpicture}
\node(a) at (0,0) {$Y^{+}\multimap^{-}X^{-}_{\beta}$};
\node(b) at (-1,1) {$Y^{+}$};
\node(c) at (1,1) {$X^{-}_{\beta}$};

\draw[->, thick] (c) to (a);

\draw[->, thick] (a) to (b);

\node(a') at (5,0) {$Y^{-}\multimap^{+}X^{+}_{\alpha}$};
\node(b') at (4,1) {$Y^{-}$};
\node(c') at (6,1) {$X^{+}_{\alpha}$};

\draw[->, thick] (a') to (c');


\node(cc) at (4,-2) {$A\multimap A[B/X]$};

\draw[->, thick] (b) to [bend left=35] (b');

\draw[->, thick, dotted] (cc) to [bend right=55] (c);

\end{tikzpicture}
}
\end{center}
\caption{Failure of functionality}
\label{shape2}
 \end{figure}
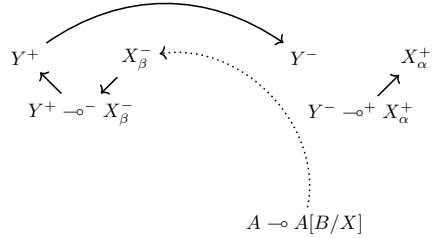

Hence $e=(X_{\alpha}^{+},X_{\beta}^{-})\in f'$. 
We can now apply weak interpolation to eliminate all type III edges from $f'$, included $e$. 
By lemma \ref{regressum3}, we conclude that there exists a partition $\Gamma_{1},\dots, \Gamma_{p}$ of $\Gamma$, types $A_{1},\dots, A_{p}\hookrightarrow A$, with $\sum_{i}n(A_{i})< n(A)$ and correct graphs $g_{1},\dots, g_{p}$, where $g_{i}:\Gamma_{i}\to A_{i}[B/X]$ and $g_{i}-f_{Y}:\Gamma_{i}\to A_{i}$, where the latter have no type III edge. 

Now it must be $A_{i}\in gRP_{X}$ as, by the same argument as above, if $A_{i}\notin gRP_{X}$, then $g_{i}-f_{Y}$ has a type III edge over $X$, which is impossible.

\end{proof}



By applying lemma \ref{trave0} with $\Gamma=\{A\}$, as well as remark \ref{unique}, we get:

\begin{theorem}\label{expa1}
$A\in \CC L_{\multimap}$ is linearly expansible iff either $A$ is $gRP_X$ or $A$ linearly collapses into a $gRP_X$ type.

\end{theorem}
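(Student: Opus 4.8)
The plan is to obtain the theorem directly from the two substantial results already in hand, Proposition~\ref{drave} (every $gRP_X$ type is linearly expansible) and Lemma~\ref{trave0} (the decomposition lemma), with Remark~\ref{unique} providing the passage from ``weak interpolants'' to a single interpolant. The sentence preceding the statement already sketches half of the argument; I would make both directions explicit.

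\emph{The ``if'' direction.} There are two cases matching the disjunction. If $A$ is $gRP_X$, Proposition~\ref{drave} gives linear expansibility and we are done. Suppose instead $A$ linearly collapses into a $gRP_X$ type $A'$. By definition of $\prec_I$ together with Remark~\ref{unique}, $A'$ is the unique interpolant of some allowable $f:A\to A$ in $\CC A^{\multimap}$, so $f$ factors as $f=h\circ g$ with $g=f_{II}\cup f_I:A\to A'$ and $h=f_I\cup f_{III}:A'\to A$ both allowable. Fix an arbitrary $B=C_1\multimap\dots\multimap C_p\multimap X$. Proposition~\ref{drave} provides an allowable graph $EXP_{A'}(B):A'\to A'[B/X]$. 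Substituting $B$ for $X$ uniformly in source and target of $h$ yields an allowable graph $h[B/X]:A'[B/X]\to A[B/X]$; this is a routine substitution fact (sequentialize $h$ to a normal term, re-type it with $X$ replaced by $B$, and take the $\eta$-long form — equivalently, the substitution merely $\eta$-expands each type~I edge over $X$ into the identity pattern on $B$, preserving acyclicity and functionality). The composite $h[B/X]\circ EXP_{A'}(B)\circ g:A\to A[B/X]$ is then allowable by closure of $\CC A^{\multimap}$ under composition; since $B$ was arbitrary, $A$ is linearly expansible.

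\emph{The ``only if'' direction.} Assume $A$ is linearly expansible and pick a variable $Y$ with $Y\notin FV(A)\cup\{X\}$. Instantiating the expansion property at $B=Y\multimap X$ yields an allowable graph $f:A\to A[B/X]$ in $\CC A^{\multimap}$. Apply Lemma~\ref{trave0} with $\Gamma=\{A\}$: either $A\in gRP_X$, and we are done, or the allowable graph $f-f_Y:A\to A$ has $gRP_X$ interpolants. Since $f-f_Y$ is a graph between two single types, Remark~\ref{unique} guarantees that its weak interpolation problem has a unique solution, so there is exactly one interpolant $I$, and it is $gRP_X$; by definition $I\prec_I A$, i.e.\ $A$ linearly collapses into the $gRP_X$ type $I$. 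In either case the stated disjunction holds.

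The assembly is short, and the only points needing attention are the two bits of bookkeeping above: that substitution of a formula for a variable preserves allowability, and that applying Lemma~\ref{trave0} to the degenerate context $\Gamma=\{A\}$ yields a single collapse target rather than a family — precisely what Remark~\ref{unique} secures. All the genuine geometric work (jump out versus internal edges, the forbidden configurations of Lemmas~\ref{regressum}--\ref{regressum3}, and the transport of $Y$-edges through weak interpolation) has already been carried out inside the proof of Lemma~\ref{trave0}, so I do not anticipate any further obstacle here.
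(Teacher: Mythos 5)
Your proposal is correct and follows essentially the same route as the paper: the ``only if'' direction is exactly the paper's application of Lemma~\ref{trave0} with $\Gamma=\{A\}$ combined with Remark~\ref{unique}, and the ``if'' direction via Proposition~\ref{drave} plus the composite $A\to A'\to A'[B/X]\to A[B/X]$ is precisely the decomposition the paper describes in the discussion immediately following the theorem. The two bookkeeping points you flag (preservation of allowability under type substitution, and uniqueness of the interpolant for a one-premise graph) are handled correctly.
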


\begin{figure}
\begin{center}
\resizebox{0.9\textwidth}{!}{
\begin{tikzpicture}
\node(a) at (0,0) {$(X\multimap X)\multimap (X\multimap X)$};

\draw[<-, thick] (1.4,-0.2) to [bend right=18] (3.6,-2.8);

\draw[->, thick] (-2.7,-2.8) to [bend right=18] (-1.4,-0.2);
\draw[->, thick] (0.6,-0.2) to [bend right=9] (1.6,-2.8);
\draw[->, thick] (-0.6,-0.2) to [bend left=9] (-0.6,-2.8);

\draw[->, thick] (-1.6,-2.8) to [bend left=35] (2.7,-2.8);

\draw[->, thick] (0.6,-2.8) to [bend right=35] (-3.6,-2.8);

\node(d) at (0,-3) {$((Y\multimap X)\multimap (Y\multimap X))\multimap ((Y\multimap X)\multimap (Y\multimap X))$};

\end{tikzpicture}
\begin{tikzpicture}
\node(a) at (0,0) {$(X\multimap X)\multimap (X\multimap X)$};

\draw[<-, thick] (1.4,-0.2) to [bend right=18] (3.6,-2.8);

\draw[->, thick] (-2.7,-2.8) to [bend right=18] (-1.4,-0.2);
\draw[->, thick] (0.6,-0.2) to [bend right=9] (1.6,-2.8);
\draw[->, thick] (-0.6,-0.2) to [bend left=9](-0.6,-2.8);

\draw[->, thick] (0.6,-2.8) to [bend left=55] (2.7,-2.8);

\draw[->, thick] (-1.6,-2.8) to [bend right=55] (-3.6,-2.8);

\node(d) at (0,-3) {$((Y\multimap X)\multimap (Y\multimap X))\multimap ((Y\multimap X)\multimap (Y\multimap X))$};

\end{tikzpicture}
}
\end{center}
\caption{Simple expansions of $(X\multimap X)\multimap (X\multimap X)$}
\label{collapsec1}
\end{figure}
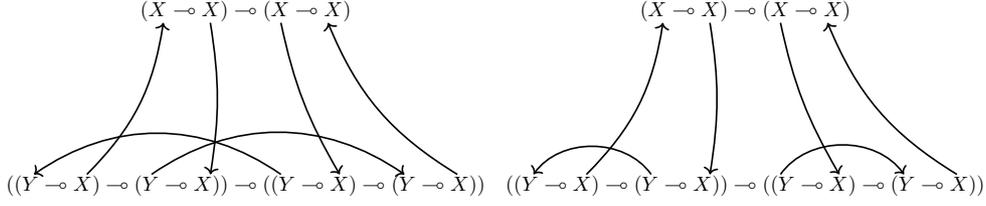

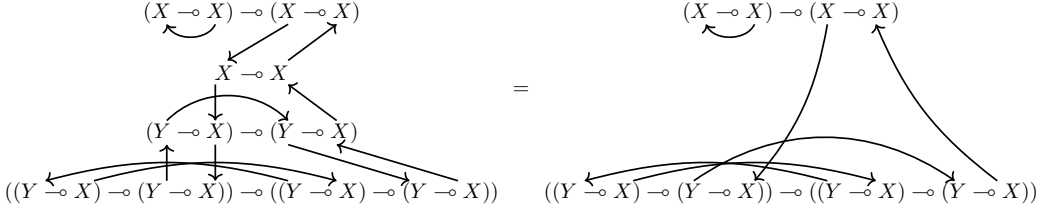
\begin{figure}
\begin{center}
\resizebox{0.95\textwidth}{!}{
\begin{tikzpicture}[baseline=-9ex]
\node(a) at (0,0) {$(X\multimap X)\multimap (X\multimap X)$};

\node(b) at (0,-1) {$X\multimap X$};

\draw[->, thick] (0.6,-0.2) to (-0.4,-0.8);
\draw[<-, thick] (1.4,-0.2) to (0.6,-0.8);
\draw[->, thick] (-0.6,-0.2) to [bend left=65] (-1.4,-0.2);

\node(c) at (0,-2) {$(Y\multimap X)\multimap (Y\multimap X)$};

\draw[<-, thick] (0.6,-1.2) to (1.4,-1.8);
\draw[->, thick] (-0.6,-1.2) to (-0.6,-1.8);
\draw[->, thick] (-1.4,-1.8) to [bend left=45] (0.6,-1.8);

\node(d) at (0,-3) {$((Y\multimap X)\multimap (Y\multimap X))\multimap ((Y\multimap X)\multimap (Y\multimap X))$};

\draw[<-, thick] (1.4,-2.2) to (3.4,-2.8);
\draw[->, thick] (0.6,-2.2) to (2.6,-2.8);

\draw[->, thick] (-0.6,-2.2) to (-0.6,-2.8);
\draw[<-, thick] (-1.4,-2.2) to (-1.4,-2.8);

\draw[<-, thick] (1.4,-2.8) to [bend right=15] (-2.6,-2.8);
\draw[<-, thick] (-3.4,-2.8) to [bend left=15]  (0.6,-2.8);

\end{tikzpicture}
$=$
\begin{tikzpicture}[baseline=-9ex]
\node(a) at (0,0) {$(X\multimap X)\multimap (X\multimap X)$};

\draw[->, thick] (-0.6,-0.2) to [bend left=65] (-1.4,-0.2);

\draw[->, thick] (-1.6,-2.8) to [bend left=35] (2.7,-2.8);

\node(d) at (0,-3) {$((Y\multimap X)\multimap (Y\multimap X))\multimap ((Y\multimap X)\multimap (Y\multimap X))$};

\draw[<-, thick] (1.4,-0.2) to [bend right=15] (3.4,-2.8);
\draw[->, thick] (0.6,-0.2) to [bend left=15] (-0.6,-2.8);

\draw[<-, thick] (1.4,-2.8) to [bend right=15] (-2.6,-2.8);
\draw[<-, thick] (-3.4,-2.8) to [bend left=15]  (0.6,-2.8);

\end{tikzpicture}
}
\end{center}
\caption{Correct expansion of $(X\multimap X)\multimap(X\multimap X)$}
\label{collapsec}
\end{figure}

By putting together lemma \ref{inte2}, proposition \ref{pairing} and lemma \ref{trave0}, we obtain a nice characterization of $gRP_X$ types:
\begin{proposition}\label{$gRP_X$2}
$A\in  gRP_X$ iff for any type $B$ there exists a correct simple $B$-expansion of $A$.
\end{proposition}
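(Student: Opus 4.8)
The plan is to prove the two implications separately, using Proposition~\ref{pairing} (the pairing characterisation of $gRP_X$ types), Lemma~\ref{inte2} (expansible edges are exactly the internal ones and those already in the graph) and Lemma~\ref{trave0} (the collapse/interpolation dichotomy for $Y\multimap X$-expansions). I may assume throughout that $A$ is paired: if not, $A$ has no $X$-pairing at all, hence is not $gRP_X$ (a $gRP_X$ type satisfies $n^{+}(A)=n^{-}(A)$), and by Proposition~\ref{paired} it admits no graph $A\to A[B/X]$ and a fortiori no simple $B$-expansion, so both sides of the equivalence are false.

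For the direction $A\in gRP_X\implies$ ``every simple $B$-expansion is realised by a correct graph'', I would argue exactly as in Proposition~\ref{drave}. By Proposition~\ref{pairing}$(i)$ with $n=0$, the hypothesis $A\in gRP_X$ supplies an $X$-pairing $\F p$ of $A$ all of whose edges are internal (``no jump out''). Given any $B=B_{1}\multimap\dots\multimap B_{n}\multimap X$, the simple $B$-expansion $f_{\F p}\colon A\to A[B/X]$ is obtained from $id_{A}$ by replacing the $X$ at the two endpoints of each edge of $\F p$ by $B$ and adjoining the corresponding $id_{B_{i}}$ edges, as in Figure~\ref{interfig2}. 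Since each edge of $\F p$ is internal, Lemma~\ref{inte2} tells us it is expansible; carrying out the expansions at all edges of $\F p$ (these sit at pairwise distinct occurrences of $X$, and internality of the remaining edges is unaffected by expanding a given one) produces precisely $f_{\F p}$, which is therefore correct.

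For the converse, assume that for every $B$ there is a correct simple $B$-expansion of $A$. Choose $Y\notin FV(A)$, put $B=Y\multimap X$, and let $f\colon A\to A[B/X]$ in $\CC A^{\multimap}$ be a correct simple $B$-expansion. Since $id_{A}\subseteq f$ and the only variable occurrences of $A\multimap A[B/X]$ not matched by $id_{A}$ are the occurrences of $Y$ in the codomain, $f=id_{A}\cup f_{Y}$, where $f_{Y}$ is a set of type III edges over $Y$ pairing those occurrences; reading off that pairing yields an $X$-pairing $\F p$ of $A$ with $f=f_{\F p}$. Via the transformation of Figure~\ref{trans} the graph $f-f_{Y}\colon A\to A$ is the pure graph $id_{A}$. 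Now apply Lemma~\ref{trave0} with $\Gamma=\{A\}$: either $A\in gRP_X$, and we are done, or $f-f_{Y}$ has $gRP_X$ interpolants. But $id_{A}$ has no type III (nor type II) edge, so by the linear collapse lemma~\ref{wcolla} $A$ is $\prec_{I}$-minimal, and the only interpolant of $id_{A}$ is $A$ itself; equivalently, the second alternative in the proof of Lemma~\ref{trave0} needs a type III edge in $f-f_{Y}$ to interpolate away and so cannot arise. Hence in every case $A\in gRP_X$.

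The delicate point is the converse step: one must verify that a correct simple $(Y\multimap X)$-expansion is necessarily an $f_{\F p}$ and that deleting its $Y$-edges returns the pure graph $id_{A}$, so that Lemma~\ref{trave0} cannot land in the genuine ``collapse'' branch. This is exactly what distinguishes $gRP_X$ types from the larger class of linearly expansible types of Theorem~\ref{expa1}: a type like $(X\multimap X)\multimap(X\multimap X)$ is linearly expansible only through a collapse onto $X\multimap X$ (Figures~\ref{collapsec1} and~\ref{collapsec}), so none of its simple expansions is correct — in agreement with the fact that it is not $gRP_X$.
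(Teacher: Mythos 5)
Your proof is correct and follows essentially the same route as the paper's: Proposition \ref{pairing} together with Lemma \ref{inte2} for the forward direction, and the jump-out argument inside the proof of Lemma \ref{trave0} applied to $f-f_{Y}=id_{A}$ (which is pure, so the type III jump-out edge that $A\notin gRP_{X}$ would force cannot exist) for the converse, stated directly where the paper argues contrapositively. The only blemish is the inference of $\prec_{I}$-minimality of $A$ from Lemma \ref{wcolla} (which only gives the other direction), but your ``equivalently'' clause supplies the correct justification, so nothing is lost.
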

\begin{proof}
By proposition \ref{pairing}, $A\in  gRP_X$ iff it admits a $X$-pairing $p$ with no jump out. Hence, if $A\in  gRP_X$, then, by lemma \ref{inte2}, for any $B$ the $p_{B}$-expansion of $id_{A}$ is correct. If $A\notin  gRP_X$, by reasoning as in the proof of lemma \ref{trave0} we can conclude that, by letting $B=Y\multimap X$, any correct graph $f:A\to A[B/X]$ contains a type III edge over $X$, hence it is not a simple $B$-expansion.

\end{proof}

Proposition \ref{$gRP_X$2} refines proposition \ref{drave}, as it shows that if $A\in  gRP_X$, then it is not only linearly expansible, but, for any type $B=B_{1}\multimap\dots\multimap B_{n}\multimap X$, the graph $EXP_{A}(B):A\to A[B/X]$ is a simple $B$-expansion graph. If a type $A$ is linearly expansible but not $gRP_{X}$, then, by theorem \ref{expa1}, $A$ linearly collapses into a $gRP_{X}$ type $A'$. By proposition \ref{$gRP_X$2}, the simple $B$-expansion graphs of $A$ are not correct. However, $A$ can be expanded by composing arrows $A \to A' \to A'[B/X] \to A[B/X]$. The resulting graph is not a simple expansion graph. Indeed, as the last arrow in the chain comes from interpolation, the graph has a type III edge over $X$.

For example, the type $C=(X\multimap X)\multimap (X\multimap X)$, which is not $gRP_X$, has two simple $Y\multimap X$-expansion graphs, shown in figure \ref{collapsec1}, both not correct. However, as $C$ linearly collapses into the $gRP_{X}$ type $X\multimap X$, $C$ is linearly expansible: a correct expansion of $D$ is obtained by collapsing it on $X\multimap X$, as shown in figure \ref{collapsec}.


\section{From the expansion property to instantiation overflow}\label{sec8}

In this section we establish our main result, that is, that a simple type $A$ has instantiation overflow if and only if $\forall XA$ is either derivable or logically equivalent to a product of $gRP$ types.

First, we consider generalized Russell-Prawitz types in $\lambda_{\To}$ and we prove that a simple type $A$ is expansible iff it is either derivable or logically equivalent to a product of $gRP_{X}$ types. The characterization is slightly different from the one given for $\lambda_{\multimap}$, as one must consider that derivable types are expansible in $\lambda_{\To}$, though not in $\lambda_{\multimap}$, and that weak interpolation in $\lambda_{\To}$ is significantly more complex than weak interpolation in $\lambda_{\multimap}$.

Then we consider the instantiation overflow problem for the types $\forall XA$, with $A\in \CC L_{\To}$. We suitably extend the expansion property and the notion of collapse to $F_{at}$ and we prove (1) that a type $A$ is $F_{at}$-expansible iff $\forall XA$ is either derivable or logically equivalent to a product of $gRP$ types and (2) that $A$ has instantiation overflow iff it is $F_{at}$-expansible.

\subsection{Expansible and generalized Russell-Prawitz types}

Similarly to the previous section, we fix a variable $X\in \CC V$. A type $A\in \CC L_{\To}$ will be called \emph{expansible} when for every simple types $C_{1},\dots, C_{p}\in \CC L_{\To}$, there exists an arrow $EXP_{A}(B): A \to A[B/X]$ in $\CC T$, where $B$ is 
$C_{1}\To \dots \To C_{p}\To X$.

Similarly to the last section, we let a simple type be $co-gRP_{X}$ when it is of the form $A_{1}\To \dots \To A_{n}\To Z$, where $Z\neq X$ and the $A_{i}$ are $gRP_{X}$.
We provide an equivalent inductive definition of $gRP_{X}$ and $co-gRP_{X}$ types in $\CC L_{\To}$: 

%

\begin{definition}\label{expo$gRP_X$}
We define by mutual induction the classes $ gRP_X^n, co-gRP_X^n$, for $n\in \BB N$:
\begin{enumerate}

\item if $A$ has no occurrence of $X$, then $A\in  gRP_X^0, co-gRP_X^0$;

\item $X\in  gRP_{X}^0,co-gRP_X^{1}$; 


%
\item if $B\in  gRP_X^p$ and $C\in co-gRP_X^{q}$, then $B\To C\in co-gRP_X^{p+q}$;

\item if $B\in co-gRP_X^p$ and $C\in  gRP_X^{q}$, then $B\To C\in  gRP_X^{p+q}$;



\item if $A_{1}\in co-gRP_X^{n_1},\dots, A_{p}\in co-gRP_X^{n_p}$, then the type
$
A_{1}\To \dots  \To A_{p}\To X
$
is in $ gRP_X^{q}$ for all $q< \sum_{i}^{p}{}n_{i}$.


\end{enumerate}

We let $ gRP_X:= gRP_X^0$ (resp. $co  gRP_X:=co-gRP_X^0$).
\end{definition}

We highlight some differences between definition \ref{$gRP_X$} and definition \ref{expo$gRP_X$}. The type $X$ is  $gRP_X$ but not linear $gRP_{X}$. This corresponds to the fact that $X$ is expansible in $\lambda_{\To}$ but not in $\lambda_{\multimap}$. 
As we already observed, the type $D=(A\To X)\To (B\To X) \To X$, where $X\notin FV(A),FV(B)$ is $gRP_X$, while the corresponding linear type $D^{\multimap}$ is not linear $gRP_X$.
%
%


$gRP_{X}$ types in $\CC L_{\multimap}$ and $\CC L_{\To}$ are related by the following facts, which are easily proved by induction on a type $B\in \CC L_{\To}$:

\begin{lemma}\label{enest}
For all types $A\in \CC L_{\multimap}, B\in \CC L_{\To}$ and $k\in \BB N$, the following hold:
\begin{itemize}
\item if $A\in \CC E^{+}(B)$ and $A\in  gRP_X^{k}$, then there exists $h\leq k$ such that $B\in gRP_{X^{h}}$;

\item if $A\in \CC E^{-}(B)$ and $A\in co-gRP_X^{k}$, then there exists $h\leq k$ such that $B\in co-gRP_{X}^{h}$.
\end{itemize}
\end{lemma}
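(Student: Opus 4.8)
The plan is to prove both statements simultaneously by induction on the structure of the type $B\in\CC L_{\To}$, mirroring the mutual induction in Definition~\ref{expo$gRP_X$}. Recall that linear expansions relate to $B$ via the arrows $u:B^{\multimap}\to A$ (for $A\in\CC E^{-}(B)$) and $u:A\to B^{\multimap}$ (for $A\in\CC E^{+}(B)$), so the combinatorial content of $B$ lives in its positive and negative linear expansions. The key observation, which I would establish as a preliminary remark, is how $\CC E^{+}$ and $\CC E^{-}$ interact with the clauses building $B$: if $B = B_{1}\To\dots\To B_{n}\To Z$, then every $A\in\CC E^{+}(B)$ has the shape $C_{11}\multimap\dots\multimap C_{nk_{n}}\multimap Z$ with the $C_{ij}\in\CC E^{-}(B_{i})$, and dually for $\CC E^{-}(B)$.

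\textbf{Key steps.} First I would treat the base cases: if $B$ has no occurrence of $X$ then $\CC E^{+}(B),\CC E^{-}(B)$ contain only $X$-free types, which lie in $gRP_X^{0}\cap co\text{-}gRP_X^{0}$ in $\CC L_{\multimap}$, and $B\in gRP_X^{0}\cap co\text{-}gRP_X^{0}$ in $\CC L_{\To}$, so we may take $h=0\le k$; if $B=X$ then $\CC E^{+}(B)=\CC E^{-}(B)=\{X\}$, and $X\in gRP_X^{0}$, $X\in co\text{-}gRP_X^{1}$ in both systems — here the asymmetry noted after Definition~\ref{expo$gRP_X$} ($X$ is $gRP_X$ in $\CC L_{\To}$ but $X\in co\text{-}gRP_X^{1}$ carries an index $1$) forces the $h\le k$ slack rather than $h=k$. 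For the inductive step with $B=B_{1}\To\dots\To B_{n}\To Z$, I would split on whether $Z=X$ or $Z\ne X$. In the case $A\in\CC E^{+}(B)$ with $A\in gRP_X^{k}$: if $Z\ne X$, then $A\in co\text{-}gRP_X^{m}$ forces (by the decomposition into the $C_{ij}$ and clause~4 of Definition~\ref{$gRP_X$}) that each $C_{ij}\in co\text{-}gRP_X^{m_{ij}}$, so by the induction hypothesis applied to $B_{i}\in\CC E^{-}$-position each $B_{i}\in co\text{-}gRP_X^{h_{i}}$ with $h_{i}\le \sum_{j}m_{ij}$; summing and applying clause~5 of Definition~\ref{expo$gRP_X$} gives $B\in gRP_X^{h}$ for a suitable $h\le k$. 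If $Z=X$, clause~5 of Definition~\ref{$gRP_X$} applies on the linear side (with the ``$-1$'' in the index) and clause~5 of Definition~\ref{expo$gRP_X$} on the simple side, which is exactly where the $\CC L_{\To}$ definition is weaker (``for all $q<\sum n_i$'' rather than a single value), so the bound $h\le k$ propagates with room to spare. The case $A\in\CC E^{-}(B)$, $A\in co\text{-}gRP_X^{k}$ is dual and simpler, using clause~3 of both definitions.

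\textbf{Main obstacle.} The routine part is the bookkeeping of indices; the delicate point is making sure the induction hypothesis is applied in the right polarity at each $B_{i}$ — a type $B_{i}$ sitting to the left of an implication contributes its $\CC E^{-}$-expansions to $\CC E^{+}(B)$ and vice versa, so one must be careful that ``$A\in\CC E^{+}(B)$'' triggers the $\CC E^{-}$-clause of the lemma at the subformulas, and conversely. A secondary subtlety is that a single $B_{i}$ may appear with multiplicity $k_{i}$ among the $C_{ij}$ (linear expansions may repeat a premise); one must check the index arithmetic still closes, which it does precisely because Definition~\ref{expo$gRP_X$} clause~5 only demands $q<\sum_i n_i$, absorbing any overcount. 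I expect the whole argument to be a straightforward, if slightly tedious, structural induction once the polarity conventions are pinned down, with no conceptual difficulty beyond that.
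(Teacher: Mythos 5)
Your strategy --- a mutual structural induction on $B$, applying the inductive hypothesis with reversed polarity at the premises --- is exactly the argument the paper has in mind (it offers nothing beyond ``easily proved by induction on $B$''), and the polarity bookkeeping you single out as the main obstacle is indeed handled correctly in your sketch. There is, however, a genuine gap at the point you file under ``secondary subtlety''. The dangerous multiplicity is not $k_i\geq 2$: there, when the head variable $Z$ of $B$ is not $X$, one simply keeps the index of \emph{one} copy, $h_i\leq m_{i1}\leq\sum_j m_{ij}$, and clause 4 of Definition \ref{expo$gRP_X$} closes the arithmetic with no need for the slack of clause 5 (which is in any case unavailable when $Z\neq X$). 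The dangerous multiplicity is $k_i=0$: the definition of $\CC E^{+}(B)$ lets a premise $B_i$ of $B=B_1\To\dots\To B_n\To Z$ be dropped entirely (the $k_i$ range over $\BB N$ including $0$, as they must for Theorem \ref{linear} to cover terms with unused bound variables). For such an $i$ the induction hypothesis tells you nothing about $B_i$, yet clause 4 still charges an index $h_i$ with $B_i\in co\text{-}gRP_X^{h_i}$, and that index may be forced to be strictly positive, destroying the bound $h\leq k$. (A minor separate slip: $X\in gRP_X^{0}$ holds only in $\CC L_{\To}$, not in $\CC L_{\multimap}$, where $X$ belongs to no class $gRP_X^{k}$; this is harmless, since it only makes the first bullet vacuous at $B=X$.)

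The gap is not merely a missing argument: on the literal reading of $\CC E^{+}$ the statement fails there. Take $B=X\To Z$ with $Z\neq X$ and $A=Z\in\CC E^{+}(B)$, choosing $k_1=0$. Then $A$ is $X$-free, so $A\in gRP_X^{0}$; but $X\in co\text{-}gRP_X^{1}$ and in no lower class, so clause 4 can only place $B$ in $gRP_X^{1}$, while clauses 1, 2 and 5 do not apply --- there is no $h\leq 0$ with $B\in gRP_X^{h}$. The defect propagates to the second bullet through the recursive call (e.g. $B=(X\To Z)\To W$ with $A=Z\multimap W\in\CC E^{-}(B)$). Note that the case $Z=X$ is immune: there one always lands in $gRP_X^{0}$, via clause 5 if some $B_i$ carries a positive $co\text{-}gRP_X$ index, and via clause 4 together with $X\in gRP_X^{0}$ otherwise. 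So your induction (and the lemma itself) only closes under a side condition ruling out dropped premises containing $X$ --- for instance requiring all $k_i\geq 1$, or restricting to the expansions actually produced by linearizing the good, $\eta$-long terms used in Lemma \ref{travexp}. You should make that restriction explicit rather than appeal to the slack in clause 5, which exists only when the conclusion of $B$ is $X$.
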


From lemma \ref{enest} we deduce that for all $A\in \CC L_{\multimap}, B\in \CC L_{\To}$:
 \begin{itemize}
 \item[-] if $A$ is $gRP_{X}$, then $A^{\To}$ is $gRP_{X}$;

 \item[-] if $A\in \CC E^{+}(B)$ and $A$ is $gRP_{X}$, then $B$ is $gRP_{X}$.
 \end{itemize}


The following proposition shows that $gRP_X$  types are expansible.

\begin{proposition}[simple expansions]\label{drave}
For all simple types $A, B=B_{1}\To \dots \To B_{k}\To X\in \CC L_{\To}$ and $k\in \BB N$, if $A\in gRP_{X}^{k}$ (resp. $A\in co-gRP_{X}^{k}$), there exist an arrow $EXP_{A}(B):\Delta, A\to A[B/X]$ (resp. $coEXP_{A}(B):\Delta,A[B/X]\to A$) in $m\CC T$, where 
 $\Delta=x_{1}:B_{1},\dots, x_{n}:B_{n}$ if $k\geq 1$ and $\Delta=\emptyset$ if $k=0$. 

In case $k=0$ we call $EXP_{A}(B)$ (resp. $coEXP_{A}(B)$) the \emph{simple $B$-expansion} (resp. \emph{simp $B$-coexpansion} of $A$. 
\end{proposition}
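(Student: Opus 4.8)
The statement is an extension of the linear simple-expansion construction (Proposition \ref{drave} in Section \ref{sec7}) to the simply typed setting, now taking into account the presence of weakening and the more liberal inductive definition \ref{expo$gRP_X$}. The plan is to argue by mutual induction on the structure of $A$ following exactly the five clauses of definition \ref{expo$gRP_X$}, building the arrows $EXP_A(B)$ and $coEXP_A(B)$ in $m\CC T$ as simply typed $\lambda$-terms (in $\eta$-long normal form) along the way. Since $\CC T$ has weakening, all the "extra" hypotheses one picks up in subcases can be discarded, which is what makes clause 2 ($X\in gRP_X^0$, i.e. $k=0$) work: the simple $B$-expansion is simply $\lambda x_1^{B_1}.\dots.\lambda x_k^{B_k}.x$ — wait, for $k=0$ the expansion must be an arrow $A\to A[B/X]$ with $A=X$, which is $x : X \to X$ composed with the inclusion $X \hookrightarrow B$; here we use that $X$ is the rightmost variable of $B$, so $B[Z/X]=B$ reading off the trivial coercion, and in fact for $A=X$ we take $EXP_A(B)=\lambda z_1^{B_1}.\dots.\lambda z_k^{B_k}.x$ seen dually as $coEXP$. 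The bookkeeping of which side ($gRP$ vs. $co$-$gRP$) produces an expansion and which produces a coexpansion is dictated by the polarity pattern, exactly as in the linear case.

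First I would dispose of clause 1 ($X\notin FV(A)$): then $A[B/X]=A$ and $EXP_A(B)=coEXP_A(B)=\lambda y^A.y$, with $\Delta=\emptyset$. For clause 2 with $A=X$: as a $co$-$gRP_X^1$ type we need $coEXP_X(B):x_1{:}B_1,\dots,x_n{:}B_n,X[B/X]\to X$, and since $X[B/X]=B=B_1\To\dots\To B_k\To X$ this is just the elimination term $zB\text{-applied}$, i.e. $(z)x_1\dots x_k$; as a $gRP_X^0$ type (the new clause absent from \ref{$gRP_X$}) we need $EXP_X(B):X\to X$ with $\Delta=\emptyset$, which is $\lambda z^X.z$. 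For clauses 3 and 4 (arrow types $B'\To C$): given the inductively supplied (co)expansions for the premise and the conclusion, one precomposes the premise part with its coexpansion and postcomposes the conclusion part with its expansion — this is the standard functorial-action construction of Section \ref{sec23}, and since $\lambda$-terms here may duplicate or erase variables, one must route the several copies of the fresh hypotheses $\Delta$ appropriately; weakening and contraction (derivable in $\lambda_\To$) handle the merging of the $\Delta$'s coming from premise and conclusion. For clause 5, $A=A_1\To\dots\To A_p\To X$ with each $A_i\in co\text{-}gRP_X^{n_i}$ and $q<\sum_i n_i$: here one abstracts $x_1^{A_1[B/X]},\dots,x_p^{A_p[B/X]}$, applies the inductively given $coEXP_{A_i}(B)$ to convert each $x_i$ back to something of type close to $A_i$, feeds these into the head variable $x$ (of type $A_1\To\dots\To A_p\To X$), and then — crucially, since the output is $X$ rather than $X[B/X]=B$ and we have the $\Delta$ hypotheses available precisely because $q<\sum n_i$ leaves at least one slack occurrence — one applies the resulting term of type $X$ to the $\Delta$-variables following the $B$-eliminator pattern, analogously to the term $Elim_B$ of Proposition \ref{dinatural}. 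The slack inequality $q<\sum_i n_i$ is exactly what guarantees there is at least one "consumed" positive occurrence of $X$ providing the head, so the construction typechecks.

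The main obstacle I expect is the precise accounting of the extra context $\Delta$ in the non-linear setting: in clauses 3–5 the inductive hypotheses hand back arrows with their own auxiliary contexts $\Delta_i$, and because a $\lambda_\To$ term need not use its variables linearly, one has to (i) take disjoint copies of these auxiliary contexts when a subterm is duplicated and (ii) contract or weaken them down to the single $\Delta=x_1{:}B_1,\dots,x_n{:}B_n$ demanded by the statement. Making this uniform — rather than ad hoc per clause — is where the proof has genuine content; in the linear case (Proposition \ref{drave}, Section \ref{sec7}) this is automatic because each occurrence of $X$ is used exactly once, but here one must commit to a convention (e.g. always share a single copy of $\Delta$ via the derived contraction rule, and weaken whenever a $gRP_X^0$ leaf is reached). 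Once that convention is fixed, the verification that the exhibited term has the claimed type is a routine induction matching the clauses of definition \ref{expo$gRP_X$}, and the $\beta\eta$-normal form of the term is manifestly an $\eta$-long normal term of the required type, so no separate normalization argument is needed.
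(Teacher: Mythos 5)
Your overall strategy --- induction on the five clauses of definition \ref{expo$gRP_X$}, with the arrow clauses handled by the contravariant/covariant functorial action and the auxiliary context $\Delta$ managed through the derivable weakening and contraction of $\lambda_{\To}$ --- is exactly the paper's. However, the two clauses where the real content lies are both described with the introduction/elimination roles of $B$ reversed, and as written they do not typecheck.

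In clause 2, for $A=X\in gRP_{X}^{0}$ the target of $EXP_{X}(B)$ is $X[B/X]=B$, not $X$: the required term is $\lambda z_{1}^{B_{1}}.\dots.\lambda z_{k}^{B_{k}}.x$ (i.e. $Intro_{B}\,x$, with the $z_{i}$ weakened in), which you in fact write down correctly in your opening paragraph but then replace by $\lambda z^{X}.z : X\to X$ in the clause-by-clause part. More seriously, in clause 5 the body obtained by feeding the coexpanded arguments to the head variable has type $X$, while the body required under the abstractions $\lambda x_{i}^{A_{i}[B/X]}$ must have type $B$; one must therefore \emph{introduce} $B$ by abstracting variables of types $B_{1},\dots,B_{k}$ (the term $Intro_{B}(\cdot)$, which is moreover what binds the $\Delta$-hypotheses consumed by the inner $coEXP_{A_{i}}(B)$'s), whereas you propose to \emph{apply} the $X$-typed term to the $\Delta$-variables ``following the $B$-eliminator pattern''. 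A term of atomic type $X$ cannot be applied to anything, so this step fails; the eliminator pattern belongs to the dual clause ($coEXP_{X}(B)=x\,x_{1}\dots x_{k}$, which you do get right). The rule you are missing is: in the expansion direction, positive occurrences of $X$ turning into $B$ call for $Intro_{B}$ (abstraction), negative ones for $Elim_{B}$ (application consuming $\Delta$); the head occurrence in clause 5 is positive.
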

\begin{proof}
Induction on clauses 1.-5.:
\begin{enumerate}
\item if $X\notin FV(A)$, then $EXP_{A}(B)= coEXP_{A}(B)= x$;
\item if $A=X$, then $EXP_{A}(B)= Intro_{B}x$ and $coEXP_{A}(B)=Elim_{B}x$;
\item if $A=A_{1}\To A_{2}$, then $coEXP_{A}(B)=
EXP_{A_{1}}(B)\To coEXP_{A_{2}}(B)$;

\item if $A=A_{1}\To A_{2}$, then $EXP_{A}(B)=
coEXP_{A_{1}}(B)\To EXP_{A_{2}}(B)$;

\item if $A=A_{1}\To \dots \to A_{p}\To X$, where $A_{i}\in co  gRP^X_{p_{i}}$ and for at least one $i$, say $i=c$, $p_{c}=k+1$, then 
$$EXP_{A}(B)=   \lambda x_{1}^{A_{1}[B/X]}.\dots.\lambda x_{p}^{A_{p}[B/X]}.  Intro_{B}\big ( x ( coEXP_{A_{1}}(B)[x_{1}/x])\dots (coEXP_{A_{p}}(B)[x_{p}/x])$$

\end{enumerate}

\end{proof}

We extend the notion of simple $B$-expansion in accordance with theorem \ref{linear}: given $A\in \CC L_{\multimap}, B\in \CC L_{\To}$, by a \emph{generalized simple $B$-expansion of $A$} we indicate a graph $f:A\to C$ such that $id_{A}\subseteq f$, where $C$ is obtained from $A$ by replacing positive (resp. negative) occurrences of $X$ by some $B'\in \CC E^{+}(B)$ (resp. $B'\in \CC E^{-}(B)$);
dually, a \emph{generalized simple $B$-expansion} is a graph $f:C\to A$ such that $id_{A}\subseteq f$, where $C$ is obtained from $A$ by replacing positive (resp. negative) occurrences of $X$ by some $B'\in \CC E^{-}(B)$ (resp. $B'\in \CC E^{+}(B)$).

The fact that $EXP_{A}(B), coEXP_{A}(B)$ are called simple expansions and coexpansions respectively comes from the following property:

\begin{proposition}\label{simplexpaTo}
Let $A\in \CC L_{\To}$ be $gRP_{X}$ (resp. $co-gRP_{X}$). Then, for any $B\in \CC L_{\To}$, the graph of $\partial EXP_{A}(B)$ (resp. $\partial coEXP_{A}(B)$) is a generalized simple expansion graph (resp. a generalized simple coexpansion graph).

\end{proposition}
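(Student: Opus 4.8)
The plan is to prove, by induction on the inductive characterization of Definition \ref{expo$gRP_X$} and following step by step the recursive construction of $EXP_A(B)$ and $coEXP_A(B)$ in Proposition \ref{drave}, the statement together with its dual for $co-gRP_X$ types and $coEXP$. At each step the shape of the graph of $\partial EXP_A(B)$ is read off from the explicit description of the linearization map supplied by the proof of Theorem \ref{linear}. The invariant I would carry along is slightly stronger than the bare claim: the graph of $\partial EXP_A(B)$ contains $id_A$, and every one of its remaining edges is an identity edge over one of the linear expansions of the components $C_1,\dots,C_p$ of $B$ — exactly the pattern displayed in figure \ref{IOgraph}. Observe that correctness of the resulting graph need never be checked: $\partial EXP_A(B)$ is a well-typed linear $\lambda$-term, so its graph is automatically allowable by Theorem \ref{seque}; only its shape has to be controlled.

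First I would settle the base cases. Clause $1$ is trivial: $X\notin FV(A)$, $EXP_A(B)=coEXP_A(B)=x$, hence $\partial EXP_A(B)=x$ and its graph is $id_A$. For clause $2$, $A=X$ and $EXP_A(B)=Intro_B\,x$, $coEXP_A(B)=Elim_B\,x$; unfolding $Intro_B=\Lambda\OV Y_1.\lambda x_1^{B_1}.\cdots$ and $Elim_B=x\,\OV Y_1 x_1\OV Y_2\cdots$ and running the linearization algorithm, the graph of $\partial(Intro_B\,x)\colon X\to B'$, with $B'\in\CC E^+(B)$, consists of the single edge over the displayed $X$ — which is $id_X$ — together with one identity edge over each linear expansion of each $C_j$, and dually for $Elim_B$. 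It pays to record this once and for all as an auxiliary lemma: the graph of $\partial(Intro_B\,u)$ (resp. $\partial(Elim_B\,u)$) is obtained from that of $\partial u$ by post- (resp. pre-) composing with a fixed graph $\iota_B$ which is the identity on the conclusion variable plus a ``$C_j$-backbone''. The inductive clauses $3$ and $4$ then follow from the compositional behaviour of $\partial$: the linearization of the term implementing $A_1\To A_2$ from the linearizations of its components $A_1,A_2$ is, up to reindexing, their ``$\multimap$''; so combining the generalized simple (co)expansions of $A_1$ and $A_2$ furnished by the induction hypothesis, in the variance dictated by the clause, produces a graph containing $id_{A_1}\cup id_{A_2}=id_{A_1\To A_2}$ whose only other edges are $C_j$-identity edges.

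The substantial case is clause $5$, where $A=A_1\To\cdots\To A_p\To X$ with $A_i\in co-gRP_X^{p_i}$ and
$$EXP_A(B)=\lambda x_1^{A_1[B/X]}.\cdots.\lambda x_p^{A_p[B/X]}.\,Intro_B\big(x\,(coEXP_{A_1}(B)[x_1/x])\cdots(coEXP_{A_p}(B)[x_p/x])\big).$$
Here I would follow the head-variable clause of Theorem \ref{linear} for $x\,u_1\cdots u_p$ with $x\colon A[B/X]=A_1[B/X]\To\cdots\To A_p[B/X]\To B$. The linearization assigns to $x$ a positive linear expansion of $A[B/X]$ whose $i$-th argument type is precisely the negative linear expansion of $A_i[B/X]$ that occurs as the domain of $\partial coEXP_{A_i}(B)$, and whose tail is a positive linear expansion of $B$ onto which the $Intro_B$-part — i.e. $\iota_B$ — is grafted. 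Plugging in, the graph of $\partial EXP_A(B)$ is the union of the graphs of the $\partial coEXP_{A_i}(B)$ — each, by the induction hypothesis, a generalized simple coexpansion of $A_i$ and hence containing $id_{A_i}$ — with the identity edges joining the $i$-th argument of $x$ to the copy of $A_i$ inside the conclusion, plus the edge from the result $X$ of $x$ to the conclusion $X$ provided by $\iota_B$. Consequently $id_{A_1}\cup\cdots\cup id_{A_p}\cup\{(\text{result }X,\text{conclusion }X)\}=id_A$ is contained in this graph, and every remaining edge is a $C_j$-identity edge coming from the $Elim_B$'s inside the $coEXP_{A_i}$'s and from the outer $Intro_B$. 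So the graph is a generalized simple $B$-expansion of $A$; the dual computation handles $co-gRP_X$ and $coEXP$.

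The hardest part will be the bookkeeping concentrated in clause $5$: one has to verify that the linear type which the algorithm of Theorem \ref{linear} attaches to the head variable $x$ matches, argument by argument, the linear expansions produced by the induction hypothesis for the $A_i$ — equivalently, that the ``$A^+$'' and ``$\Gamma^-$'' of the various sub-instances fit together — and that none of the new edges contributed by $\iota_B$ crosses the skeleton $id_A$. Isolating the auxiliary lemma on $\partial(Intro_B\,u)$ and $\partial(Elim_B\,u)$ at the outset keeps this verification local and essentially mechanical; once it is in place, the rest of the argument is a routine inspection of the three clauses in the proof of Theorem \ref{linear}.
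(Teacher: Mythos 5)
Your proposal is correct and follows essentially the same route as the paper: an induction on clauses 1--5 of the inductive definition, reducing the claim to showing $id_{A}$ is contained in the graph of $\partial EXP_{A}(B)$, handling clauses 3--4 by the fact that the graph of $u_{g}\To u_{h}$ is the disjoint union $g+h$ (so identities combine to $id_{A_{1}\To A_{2}}$), and handling clause 5 by observing that the graphs of the $\partial coEXP_{A_{i}}(B)$ together with $id_{X}$ sit inside the graph of $\partial EXP_{A}(B)$. The extra bookkeeping you propose (the auxiliary lemma on $\partial(Intro_{B}\,u)$ and $\partial(Elim_{B}\,u)$, and the stronger invariant on the residual edges) is sound, with the minor caveat that in the base case $EXP_{X}(B)=Intro_{B}\,x$ the abstracted variables are vacuous, so the residual $C_{j}$-edges may be absent rather than ``one each''.
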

\begin{proof}
Given $A\in  gRP^X_{k}$ (resp. $A\in co-gRP^X_{k}$), it suffices to show that the graph $\partial EXP_{A}(B)$ (resp. $\partial coEXP_{A}(B)$) contains $id_{A}$. We argue by induction on clauses 1.-5.. For clauses 1.-2. the claim is immediate. For clauses 3.-4. we argue as follows: 
first, given graphs $g:C_{1}\to D_{1}$, $h:D_{2}\to C_{2}$ corresponding to $\lambda$-terms $u_{g}$ and $u_{h}$, the graph of the $\lambda$-term $u_{g}\To  u_{h}$ is simply the graph $g\multimap h:=g+h: (D_{1}\multimap D_{2}) \to (C_{1}\multimap C_{2})$. Now the claim follows from the fact that, if $g$ is a simple co-expansion (resp. expansion) and $h$ a simple expansion (resp. co-expansion), then $g\multimap h$ is a simple expansion (resp. co-expansion). Indeed, if $id_{D_{1}}\subseteq g$ (resp $id_{C_{1}}\subseteq g$) and $id_{D_{2}}\subseteq h$ (resp. $id_{C_{2}}\subseteq h$), then $id_{D_{1}\multimap D_{2}}=id_{D_{1}}+id_{D_{1}}\subseteq g\multimap h$ (resp. $id_{C_{1}\multimap C_{2}}=id_{C_{1}}+id_{C_{2}}\subseteq g\multimap h$).

Finally, in case 5., that $id_{A}$ is contained in the graph $f$ of $\partial EXP_{A}(B)$ can be seen from the fact that the  graphs of the terms $\partial coEXP_{A_{i}}(B)$ are contained in $f$, where $A_{1}\multimap\dots\multimap A_{p}\multimap X$, as well as the graph $id_{X}$, as illustrated in figure \ref{expamega} (where we only drew the edges over $A$). We conclude that $id_{A}=id_{X}+\sum_{i}^{p}id_{A_{i}}\subseteq f$.

\end{proof}

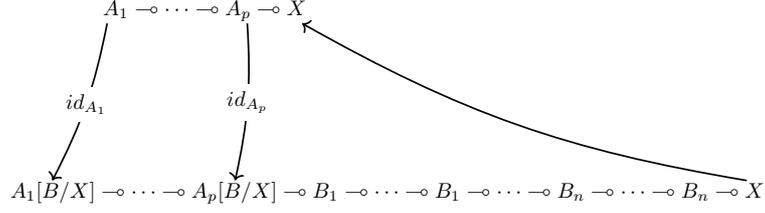
\begin{figure}
\begin{center}
\resizebox{0.7\textwidth}{!}{
\begin{tikzpicture}[every node/.style={fill=white}]
\node(a) at (-3,0) {$A_{1}\multimap \dots \multimap A_{p}\multimap X$};

\node(b) at (0,-3) {$A_{1}[B/X]\multimap \dots \multimap A_{p}[B/X]\multimap B_{1}\multimap \dots \multimap B_{1}\multimap \dots \multimap B_{n}\multimap \dots \multimap B_{n}\multimap X$};

\draw[->, thick] (5.9,-2.8) to [bend left=10] (-1.4,-0.2);

\draw[->, thick] (-4.6,-0.2) to [bend left=8] node {$id_{A_{1}}$} (-5.5,-2.8);

\draw[->, thick] (-2.3,-0.2) to [bend left=8] node {$id_{A_{p}}$} (-2.5,-2.8);

\end{tikzpicture}
}
\end{center}
\caption{Generalized simple expansion graph}
\label{expamega}
\end{figure}

As in the previous section, we label distinct occurrences of $X$ in a simple type with integers $\alpha\in \BB N$.
The notion of $(n,\epsilon)$-pairing is replaced, in this context, by the notion of $(n,\epsilon)$-tiling:
\begin{definition}
Let $A\in \CC L_{\To}$. A list $L=(X_{\alpha_{0}}^{\epsilon}, X_{\alpha_{1}}^{\OV\epsilon},\dots, X_{\alpha_{n}}^{\OV\epsilon})$ is a \emph{tile in $A$} (resp. a \emph{co-tile in $A$}) if $X_{\alpha_{0}}$ is a positive (resp. negative) occurrence of $X$ in $A$ and the $X_{\alpha_{i}}$, for $1\leq i\leq n$ are distinct negative (resp. positive) occurrences of $X$ in $A$. 

%



A \emph{$(n,\epsilon)$-tiling} (resp. a \emph{$(n,\epsilon)$-co-tiling}) of $A$ is a
pair $(\F p,\F a)$, where $\F a$ is a set containing labels of distinct positive (resp. negative) occurrences of $X$ in $A$ and $\F p$ is a tiling (resp. a co-tiling) of all remaining occurrences of $X$ in $A$.

%
%

We will call a $(0,\epsilon)$-tiling (resp. a $(0,\epsilon)$-co-tiling) simply a tiling (resp. a co-tiling).

\end{definition}

The notion of jump out is extended to the case of tiles: a tile $(X_{\alpha_{0}}^{+},X_{\alpha_{1}}^{-},\dots,X_{\alpha_{n}}^{-})$ is a jump out tile in $A$ if for some $1\leq i\leq n$, $(X_{\alpha_{0}}^{+},X_{\alpha_{i}}^{-})$ is a jump out edge in $A^{\multimap}$. Similarly, a co-tile $(X_{\alpha_{0}}^{-},X_{\alpha_{1}}^{+},\dots,X_{\alpha_{n}}^{+})$ is a jump in co-tile in $A$ if for some $1\leq i\leq n$, $(X_{\alpha_{0}}^{-},X_{\alpha_{i}}^{+})$ is a jump in edge in $A^{\multimap}$. A tile which is not a jump out will be called an internal tile.

The proposition below is the analogous, in this frame, of proposition \ref{pairing}. We omit the proof as the argument is similar to that of proposition \ref{pairing}.

\begin{proposition}\label{tiling}
For any type $A$ and $n\in \BB N$,
\begin{itemize}
\item $A\in  gRP_X^n$ iff $A$ has a $(n,-)$-tiling with no jump out;
\item $A\in co-gRP_X^n$ iff $A$ has a $(n,+)$-co-tiling with no jump in. 

\end{itemize}
\end{proposition}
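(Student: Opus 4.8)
The plan is to prove Proposition~\ref{tiling} by transcribing, with the evident changes, the argument already used for Proposition~\ref{pairing}, the only genuinely new ingredient being the bookkeeping forced by the \emph{fan} shape of tiles — one positive occurrence together with finitely many, possibly zero, negative partners — which replaces the one‑to‑one edges of the linear case and records contraction. Concretely, I would first attach to every $A\in gRP_X^n$ (resp. $A\in co-gRP_X^n$) a set $P_n(A)$ of $(n,-)$-tilings (resp. $(n,+)$-co-tilings), defined by recursion on Definition~\ref{expo$gRP_X$} in exact analogy with the set of pairings used in the proof of Proposition~\ref{pairing}: clause~1 assigns the pair whose tiling consists of the degenerate one‑element fans over the occurrences of $X$ (when $X$ does not occur, the empty pair); clause~2 puts, for $A=X$, the degenerate tiling $(\{(X^{+}_{\alpha})\},\emptyset)\in P_0(X)$ and the co‑tiling $(\emptyset,\{\alpha\})\in P_1(X)$, which is what makes the new membership $X\in gRP_X^0$ — absent in the linear setting — fit the pattern; clauses~3--4 take unions $(\F p\cup\F q,\ \F a\cup\F b)$ of tilings of the two immediate subtypes; and clause~5 takes a fan headed by the conclusion occurrence $X_{\alpha}$ whose negative partners are chosen among the marked positive occurrences $\bigcup_i\F a_i$ of the $co-gRP_X$ arguments, together with the remaining sub‑tilings.

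For the forward implication I would show, by induction on the clauses of Definition~\ref{expo$gRP_X$}, that every tiling in $P_n(A)$ is jump‑out‑free (resp. every co‑tiling is jump‑in‑free). Clauses~1--2 are immediate, and clauses~3--4 go through because, under the polarity reversal induced by passing to a negative argument, a jump‑in co‑tile of a subtype becomes a jump‑out‑free tile of the whole type exactly as in the linear proof (this is where the translation $A\mapsto A^{\multimap}$ and the maximal‑tree notion are used). The one clause needing new care is clause~5: the fan headed by the conclusion $X^{+}_{\alpha}$ of $A=A_1\To\dots\To A_p\To X$ pairs $X^{+}_{\alpha}$ with occurrences that are positive in some $A_i$, hence negative in $A$ and — crucially — all lying inside the maximal tree of the root $\multimap^{+}$‑node of $A^{\multimap}$, so none of the corresponding edges is a jump out; the sub‑tilings, being jump‑in‑free inside the $A_i$, remain jump‑out‑free in $A$.

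The converse is by induction on $A$, following the cases of the proof of Proposition~\ref{pairing} line by line. If $A$ is $X$‑free, or $A=X$, the two statements are trivial. Otherwise $A=A_1\To\dots\To A_k\To Z$, and one distinguishes $Z\neq X$ from $Z=X^{+}_{\alpha}$ and according to which of the two claims is being established: a jump‑out‑free $(n,-)$-tiling of $A$ decomposes as a disjoint union of tilings of the $A_i$, because a fan headed by a \emph{non‑conclusion} positive occurrence of $X$ reaching into a different argument would create a jump‑out edge; this produces jump‑in‑free $(n_i,+)$-co‑tilings of the $A_i$ (with $\sum_i n_i=n$, or $n+1$ when $Z=X$ and the conclusion fan is absorbed), and the induction hypothesis gives $A_i\in co-gRP_X^{n_i}$, whence $A\in gRP_X^n$ by clause~4 (resp. clause~5 when $Z=X$); the symmetric argument using clause~3 handles the $co-gRP_X$ statement. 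The step I expect to be the main obstacle — the only point beyond what Proposition~\ref{pairing} already requires — is that the conclusion fan, when $Z=X$, is permitted to collect its negative partners from \emph{several} distinct arguments $A_i$ at once (all of those edges being internal), so the decomposition of $\F p$ along the $A_i$ must be arranged so as to peel off this single cross‑cutting fan before invoking the induction hypothesis; once that fan bookkeeping is organised, the remaining case analysis is a routine rephrasing of the linear proof.
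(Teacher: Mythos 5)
Your proposal is correct and takes exactly the approach the paper intends: the paper's own proof of Proposition~\ref{tiling} consists of the single line ``Similar to the proof of proposition~\ref{pairing}'', and you carry out precisely that adaptation, correctly isolating the only genuinely new points (the degenerate tile for $X\in gRP_X^0$, the fan-shaped conclusion tile absorbing marked occurrences from several arguments at once, and the resulting use of clause~5 with $q<\sum_i n_i$ rather than $q=\sum_i n_i - 1$). The minor bookkeeping slips (e.g.\ the degenerate conclusion fan when $Z=X$ is handled by clauses~2 and~4 rather than clause~5, and the sum is $n+m$ for the fan's arity $m$ rather than $n+1$) are exactly the details you flag as needing organisation and do not affect the argument.
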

\begin{proof}
Similar to the proof of proposition \ref{pairing}. 
\end{proof}

Proposition \ref{tiling} allows to prove the equivalence of the two definitions of $gRP_X$ types:

\begin{proposition}\label{equivalences2}
$gRP_X$ types following definitions \ref{grpx} and \ref{expo$gRP_X$} coincide.
\end{proposition}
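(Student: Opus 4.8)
The plan is to mirror the argument already carried out for the linear case in Proposition \ref{equivalences}, using Proposition \ref{tiling} as the bridge between the combinatorial (tiling) characterization and the substitution-theoretic characterization of Definition \ref{grpx}. Recall that Definition \ref{grpx} says $A$ is $gRP_X$ when $A = A'[X/X_1,\dots,X/X_p]$ with $A'$ a $qRP_{\{X_1,\dots,X_p\}}$ type, and Definition \ref{expo$gRP_X$} gives the mutual inductive classes $gRP_X^n$, $co\text{-}gRP_X^n$, with $gRP_X := gRP_X^0$. So the two things to prove are: (i) every type satisfying Definition \ref{grpx} lies in $gRP_X^0$; and (ii) every type in $gRP_X^0$ satisfies Definition \ref{grpx}.

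For direction (i), I would start from $A' \in qRP_{\{X_1,\dots,X_p\}}$. Unwinding Definition \ref{grpx}, being $qRP_{X_i}$ for each $i$ means: after substituting an $RP_{X_i}$ type $R_i$ into a $p$-$X_i$ type with a unique occurrence of $X_i$, one obtains $A'$; in particular each $X_i$ occurs in $A'$ exactly twice — once positively and once negatively — and these two occurrences form an internal edge in (the shape of) $A'$, precisely because the $RP$-structure localizes them inside the scope dictated by the positivity of the ambient type. Collecting these $p$ internal edges and then performing the substitution $X_i \mapsto X$ for all $i$ merges them into a set of $p$ edges over $X$ in $A = A'[X/X_1,\dots,X/X_p]$, all of which are internal (an edge internal before a uniform renaming stays internal, since the shape is unchanged and the jump-out condition is a condition on the shape, not on variable names). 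Hence $A$ admits a $(0,-)$-tiling with no jump out — indeed a pairing-type tiling — and by Proposition \ref{tiling} we conclude $A \in gRP_X^0 = gRP_X$. (The one subtlety here, as in the linear case, is that $A'$ may have occurrences of $X$ itself that are not among the renamed $X_i$; but these are handled by clause 1 and the base cases of Definition \ref{expo$gRP_X$} exactly as the $n=0$ parts of the tiling account for them, so this causes no difficulty.)

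For direction (ii), I would argue by induction on the inductive definition \ref{expo$gRP_X$}, proving the sharper statement: if $A \in gRP_X^n$ (resp.\ $co\text{-}gRP_X^n$) then, given a $(n,-)$-tiling (resp.\ $(n,+)$-co-tiling) with no jump out (resp.\ no jump in) — which exists by Proposition \ref{tiling} — renaming the occurrences of $X$ matched by the tiles with fresh distinct variables $X_1,\dots,X_p$ yields a type $A'$ that is $qRP_{\{X_1,\dots,X_p\}}$ (resp.\ $co\text{-}qRP_{\{X_1,\dots,X_p\}}$), and $A = A'[X/X_1,\dots,X/X_p]$ by construction. The base cases (no occurrence of $X$; $A = X$) are immediate. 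For clauses 3 and 4 the tiling splits along the implication and one applies the induction hypothesis to each immediate subtype, taking the union of the fresh variable sets. For clause 5, $A = A_1 \To \dots \To A_p \To X$ with $A_i \in co\text{-}gRP_X^{n_i}$: here the internal tile through the final $X$ connects it to some positive occurrence inside one of the $A_i$, and renaming that tile's pair (plus applying the induction hypothesis to each $A_i$) produces, after substituting back, a type of exactly the shape required to be $qRP$ in the union of the variable sets. Specializing to $n=0$ gives: from $A \in gRP_X^0$ we obtain $A'$ which is $qRP_{\CC X}$ for a finite $\CC X$, with no occurrence of $X$ in $A'$ beyond what the base cases leave, and $A'[X/X_1,\dots,X/X_p] = A$, which is precisely Definition \ref{grpx}.

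The main obstacle I expect is not the induction itself but the bookkeeping of \emph{which} occurrences of $X$ in $A$ get renamed versus left as $X$, and the verification that "internal tile" translates, after de-identifying variables, into exactly the scoping constraint built into the definition of $RP_X$ (namely that in an $RP_X$ type $A_1 \To \dots \To A_q \To X$ the bound variable $X$ appears only as the stated conclusion and inside the $A_i$ which are themselves $p$-$X$). Making this correspondence airtight — that an internal edge is the shape-level shadow of the "one positive, one negative, correctly nested" structure of a $qRP_X$ substitution instance, and conversely — is the heart of the argument; everything else is a straightforward structural induction parallel to Proposition \ref{equivalences}. I would therefore lean on Proposition \ref{tiling} to do the combinatorial work and keep the present proof to the translation between tilings-with-no-jump-out and the substitution form, noting explicitly that the argument is "similar to the proof of Proposition \ref{equivalences}" wherever the steps are verbatim analogues of the linear case.
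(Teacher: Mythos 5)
Your proposal matches the paper's proof, which is given only as ``similar to the proof of Proposition \ref{equivalences}, with tilings in place of pairings'': direction (i) reads off a jump-out-free tiling from the $qRP_{\CC X}$ substitution form and applies Proposition \ref{tiling}, while direction (ii) renames the tiles of such a tiling with fresh distinct variables by induction on the inductive definition. One small correction to direction (i): a $qRP_{X_{i}}$ \emph{simple} type has exactly one positive but possibly several negative occurrences of $X_{i}$, so the occurrences of each $X_{i}$ in $A'$ form an internal \emph{tile} rather than an internal edge --- which is exactly the point where tilings must replace pairings relative to the linear case.
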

\begin{proof}
Similar to the proof of proposition \ref{equivalences}, with tilings in place of pairings.

\end{proof}

Also proposition \ref{renaming} can be straightforwardly extended to $\lambda_{\To}$:

\begin{proposition}\label{renaming2}
Suppose $u: \Gamma\to A$ in $\CC T$ is normal and $\eta$-long, where $\Gamma=x_{1}:A_{1},\dots,x_{n}: A_{n}$ is a context made of $co-gRP_{X}$ types and $A$ is $gRP_X$. Then there exists a finite set $\CC X=\{X_{1},\dots, X_{n}\}\subset \CC V$, a context $\Gamma'=A'_{1},\dots, A'_{n}$ made of $qRP_{\CC X}$ types such that $A'_{i}[X/X_{1},\dots, X/X_{p}]=A_{i}$ and a $qRP_{\CC X}$ type $A$ such that $A'[X/X_{1},\dots, X/X_{p}]=A$ and $u':\Gamma'\to A'$, where $u'$ is an appropriate renaming of the types appearing in $u$. 

\end{proposition}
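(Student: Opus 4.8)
The plan is to mimic the proof of Proposition~\ref{renaming}, which is the linear ($\lambda_\multimap$) counterpart of the present statement, transporting it along the linearization theorem~\ref{linear} and the correspondence between tilings and pairings established in Propositions~\ref{tiling} and~\ref{equivalences2}. First I would argue by induction on the number of applications in the normal $\eta$-long linear $\lambda$-term $u$ (which exists by adequacy/sequentialization, Theorem~\ref{seque}), exactly as in Proposition~\ref{renaming}. In the base case, $u$ is either a variable $x_i$ with $[x_i]=A_i=A\in gRP_X\cap co\text{-}gRP_X$, forcing $X\notin FV(A)$ so the renaming is vacuous, or $u=\lambda y^Z.y$, in which case $\Gamma=\emptyset$ and $A=Z\multimap Z$; if $Z\neq X$ the claim is trivial, and if $Z=X$ then $A=X\multimap X$ is already $qRP_X$ (taking $\CC X=\{X\}$ and $A'=X_1\multimap X_1$).

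For the inductive step $u=\lambda x_1^{A_1}.\dots.\lambda x_p^{A_p}.y u_1\dots u_q$, so $A=A_1\multimap\dots\multimap A_p\multimap Z$, and I would split into the cases $Z\neq X$ and $Z=X$ as in Proposition~\ref{renaming}. In the case $Z\neq X$, definition~\ref{expo$gRP_X$} forces $A_i\in co\text{-}gRP_X$ for all $i$ and $[u_j]\in gRP_X$ for all $j$; I would partition the context according to the subterms $u_j$, apply the induction hypothesis to each $u_j$ to obtain disjoint finite sets $\CC X_j$, renamed contexts of $qRP_{\CC X_j}$ types, and $qRP_{\CC X_j}$ conclusions $C_j$, then set $\CC X=\bigcup_j\CC X_j$ and reassemble, using that $y:C_1\multimap\dots\multimap C_q\multimap Z$ is $co\text{-}qRP_{\CC X}$. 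In the case $Z=X$, exactly one $A_c\in co\text{-}gRP^X_1$ (the rest in $co\text{-}gRP_X$), which forces $y=x_c$; I would introduce a fresh variable $X'$, replace the rightmost $X$ in $A_c$ by $X'$ to get $A'_c\in co\text{-}gRP_X$, apply the induction hypothesis to the $u_j$'s, and put $\CC X=(\bigcup_j\CC X_j)\cup\{X'\}$, reassembling so that $x_c:C_1\multimap\dots\multimap C_q\multimap X'$ is $qRP_{\CC X}$. Throughout, the required renaming $u'$ of the types in $u$ is read off from the renamings produced inductively together with the local choice $X\mapsto X'$.

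The one genuine novelty relative to the linear case is the absence of linearity: occurrences of the bound variables $x_i$ (and of $y$) may be duplicated or erased, so the partition of $\Gamma$ into the subcontexts feeding $u_1,\dots,u_q$ is no longer a literal partition of a multiset. The clean way around this is to phrase the induction over $\CC T$-arrows using the linearization theorem~\ref{linear}: rather than manipulating the $\lambda_\To$-term directly, I would pass to $\partial u:\Gamma^-\to A^+$ in $\CC A^\multimap$, where by Lemma~\ref{enest} each $A_i^-\in\CC E^-(A_i)$ is $co\text{-}gRP_X$ and $A^+\in\CC E^+(A)$ is $gRP_X$, apply Proposition~\ref{renaming} to $\partial u$ to separate variables at the linear level, and then push the renaming back through $LIN$ and $coLIN$; the compatibility of $\CC E^\pm$ with the $gRP_X$/$co\text{-}gRP_X$ classification (the two bulleted consequences of Lemma~\ref{enest}) guarantees that the separated linear types lift to $qRP_{\CC X}$ simple types. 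I expect the main obstacle to be bookkeeping: checking that the fresh-variable choices made inside the several $u_j$'s can be kept globally disjoint and that the reassembled type $A'$ really satisfies $A'[X/X_1,\dots,X/X_p]=A$ while remaining $qRP_{\CC X}$ — i.e. that no spurious identification of variables is forced when the premisses are recombined under the head variable. This is exactly the point handled in the proof of Proposition~\ref{renaming}, and the same disjointness-by-renaming argument applies verbatim once linearization has reduced the non-linear statement to the linear one.
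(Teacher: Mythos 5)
Your reconstruction of the base case and of the case split on the head variable is exactly the induction the paper intends (its own proof of this proposition is literally ``similar to the proof of Proposition~\ref{renaming}''), and you are right that the loss of linearity is the crux. But the repair you propose --- linearize, apply Proposition~\ref{renaming} to $\partial u$, and push the renaming back through $LIN$ and $coLIN$ --- has a genuine gap: it uses Lemma~\ref{enest} in the wrong direction. That lemma transfers membership in $gRP_{X}^{k}$ from a linear expansion \emph{to} the simple type it expands; you need the converse, namely that $A^{+}\in \CC E^{+}(A)$ is \emph{linear} $gRP_{X}$ whenever $A$ is simple $gRP_{X}$, and this is false. Already $X$ is simple $gRP_{X}$ but not linear $gRP_{X}$; less degenerately, $A=(X\To X)\To (X\To X)$ is simple $gRP_{X}$ (clause 5 of Definition~\ref{expo$gRP_X$} allows $q<\sum n_{i}$, unlike the linear clause, which fixes $q=(\sum n_{i})-1$), yet the linearization of $\lambda x.\lambda y.x(xy)$ has conclusion $(X\multimap X)\multimap (X\multimap X)\multimap X\multimap X$, which is not linear $gRP_{X}$ (its first two premisses lie in no linear $co-gRP_{X}^{n}$). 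So Proposition~\ref{renaming} simply does not apply to $\partial u$, and the detour collapses.

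Moreover, the step you defer as bookkeeping is where the real content lies, and it cannot be discharged ``verbatim''. Even if the linear separation went through, the copies $A_{i1}^{-},\dots,A_{id_{i}}^{-}$ of one declaration $x_{i}:A_{i}$ would receive pairwise disjoint fresh variables, while $LIN$ contracts them back to a single $A'_{i}$; the identifications this forces are exactly the sharing obstruction you flagged for the direct induction, relocated rather than removed. That the obstruction is substantive, and not an artefact of your strategy, can be seen from $u=\lambda x^{X\To X}.\lambda y^{X}.xy$ with $A=(X\To X)\To(X\To X)$: the hypotheses of the proposition are met, but the only renamings of $A$ at which $u$ is typable are $(X_{1}\To X_{2})\To(X_{1}\To X_{2})$ and its one-variable collapse, and neither is $qRP_{\CC X}$ (the positive and negative occurrences of $X_{1}$ sit in different subtrees and admit no common positive $RP_{X_{1}}$ subtype), whereas the only separation witnessing that $A$ is $gRP_{X}$ is $(X_{1}\To X_{2})\To(X_{2}\To X_{2})$. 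In the $\To$ setting more than one premiss of $A_{1}\To\dots\To A_{p}\To X$ can carry a positive $co-gRP_{X}$ index, so the case $Z=X$ of the induction of Proposition~\ref{renaming} does not transport; any correct argument has to confront this case directly, and your proposal (like the paper's one-line proof) does not.
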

\begin{proof}
Similar to the proof of proposition \ref{renaming}.

\end{proof}

\subsection{Characterization of expansible types}

We adapt to $\lambda_{\To}$ the argument in subsection \ref{sec53} that generalized Russell-Prawitz types are ``dense'' in the class of expansible types.

Let $u:\Gamma\to A[B/X]$ in $\CC T$ be normal and $\eta$-long, where $B=Y\To X$ and $Y\notin FV(A), FV(\Gamma)$. Let $v_{C}\in Subt(u)$ be a term such that $[v_{C}]=C[B/X]$, where $C$ is a type of the form $C_{1}\To \dots \To C_{p}\To X$ occurring positively in $A$. Then $v_{C}$ is of the form
$\lambda x_{1}^{C_{1}[B/X]}.\dots.\lambda x_{p}^{C_{p}[B/X]}.\lambda y^{Y}.zu_{1}\dots u_{q} $. 
We say that $u$ is \emph{good} when for all $v_{C}\in Subt(u)$ as above, whenever $q\geq 1$ and $[u_{q}]=Y$, then $u_{q}=y$.


If $u$ is good then $u$ is linear in all variables $y$ such that $[y]=Y$: if $y$ is one of such variables and $u$ has a subterm $v$ of the form $\lambda x_{1}^{C_{1}}.\dots.\lambda x_{p}^{C_{p}}.zu_{1}\dots u_{i-1}yu_{i+1}\dots u_{q}$, then $[z]= [u_{1}]\To \dots \To [u_{i-1}]\To [Y]\To [u_{i+1}]\To \dots \To [u_{q}]\To Z$ forces $q=i$ and $Z=X$, hence $v=v_{C}$ for some positive subtype of $A$ of the form $D_{1}\To \dots \To D_{p-1}\To X$, $C_{j}=D_{j}[B/X]$ for $1\leq j\leq p-1$, $C_{p}=Y$ and $x_{p}=y$.

Observe that we can always transform a normal $\eta$-long term $u:\Gamma\to A[B/X]$ into a good one by replacing $u_{q}$ by $y$ in any subterm $v_{C}$ of the appropriate form.

We can now prove the analogous of lemma \ref{trave0}:

\begin{lemma}\label{travexp}
If $A\in \CC L_{\To}$ and there is a term $u:A\to A[B/X]$, where $B=Y\To X$, with $Y\notin FV(A),FV(\Gamma)$, then either $A$ is derivable or $A$ is $gRP_{X}$ or $u$ has $gRP_{X}$ interpolants.
\end{lemma}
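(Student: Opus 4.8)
The statement is the $\lambda_{\To}$-analogue of Lemma~\ref{trave0}, and the proof should follow the same strategy, routed through linearization (Theorem~\ref{linear}) and weak interpolation for $\lambda_{\To}$ (Theorem~\ref{interpolationl}). First I would reduce to a \emph{good}, normal, $\eta$-long term $u:A\to A[B/X]$ (using the transformation noted right before the statement), so that $u$ is linear in every variable of type $Y$; this is exactly what makes the behavior of the $Y$-occurrences controllable. Next I would apply linearization to $u$, obtaining $\partial u:\Gamma^{-}\to A^{+}$ with $\Gamma^{-}$ made of linear expansions in $\CC E^{-}(A)$ of $A$-copies and $\CC E^{-}(B)$-expansions of $B$, and $A^{+}\in\CC E^{+}(A[B/X])$. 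Since $B=Y\multimap X$ is already essentially linear (its only expansions are $Y\multimap X$ itself), goodness guarantees that $\partial u$ is a graph $f:\Gamma'\to A'[B/X]$ where $A'\in\CC E^{+}(A)$ and $\Gamma'$ consists of types in $\CC E^{-}(A)$ — i.e.\ we land squarely in the linear setting of Section~\ref{sec7}, with a graph involving the fresh variable $Y$.

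The core of the argument is then a transfer of Lemma~\ref{trave0}. Suppose $A$ is neither derivable nor $gRP_{X}$. If $A$ is not derivable then $A'\in\CC E^{+}(A)$ is not derivable either, and by Lemma~\ref{enest}/Proposition~\ref{tiling} (the $\CC E$-transfer of the $gRP$ property) $A'$ is not linear $gRP_{X}$. By Proposition~\ref{pairing}, every $X$-pairing of $A'$ contains a jump-out edge, so — exactly as in the proof of Lemma~\ref{trave0} — for the graph $f':\Gamma'\to A'$ obtained by deleting $f_{Y}$, there is a jump-out edge $e=(X^{+}_{\alpha},X^{-}_{\beta})$ which, by Lemma~\ref{jumpout} and correctness (functionality), must in fact belong to $f'$. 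Applying weak interpolation for $\lambda_{\multimap}$ (Theorem~\ref{interpolation2}) together with Lemma~\ref{regressum3} to strip all type~III edges from $f'$ while keeping track of the $Y$-edges of $f$, I obtain a splitting of $f$ into correct graphs $g_{i}:\Gamma_{i}\to A_{i}[B/X]$ with $\sum_{i}n(A_{i})<n(A')$ and each $g_{i}-f_{Y}$ type~III-free, hence each $A_{i}$ linear $gRP_{X}$ by the same minimality argument. Pulling this back along the linearization square (Theorem~\ref{linear}) and re-non-linearizing via $(\cdot)^{\To}$ — using that $A_{i}$ linear $gRP_{X}$ implies $A_{i}^{\To}$ is $gRP_{X}$, and that an interpolant of $\partial u$ injected into $A^{+}$ yields an interpolant of $u$ with free variables in $FV(A)$, as in Theorem~\ref{interpolationl} — gives $gRP_{X}$ interpolants $A_{i}^{\To}$ of $u:A\to A[B/X]$. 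Finally the "derivable" escape hatch is needed precisely because, unlike the linear case, weak interpolation for $\lambda_{\To}$ can produce closed components $g_{i}:\emptyset\to A_{i}$ (Theorem~\ref{interpolationl}), which witness derivability of $A_i^{\To}$; if all components are closed then $A$ itself is derivable.

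The main obstacle I expect is the bookkeeping around \emph{goodness} and the non-linearity introduced by $\CC E^{+}/\CC E^{-}$: one must check that after linearization the fresh variable $Y$ still occurs in a way that mirrors the linear situation of Lemma~\ref{regressum}--\ref{regressum3} (each type~III edge over $X$ coupled to a type~III edge over $Y$ "next to it"), which relies on $B=Y\multimap X$ having trivial expansions and on $u$ being good so that no spurious $Y$-subterms arise. A secondary subtlety is that the number of components $p$ and the bound $\sum_{i}n(A_{i})<n(A')$ must be read back correctly through the square of Theorem~\ref{linear} so that the resulting $A_{i}^{\To}$ are genuine interpolants of $u$ in $\CC T$ (conditions 1.--3.\ of Theorem~\ref{interpolationl}); condition~2 (decrease in size, via Lemma~\ref{size}) is what ultimately guarantees the recursion underlying the density theorem terminates, though that is used in the next result rather than here. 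Everything else is a faithful replay of Section~\ref{sec7}'s geometry.
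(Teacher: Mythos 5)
Your proposal is correct and follows essentially the same route as the paper: pass to a good term, linearize via Theorem \ref{linear} so that goodness forces $\partial u:\Gamma'\to A'[Y\multimap X/X]$ with $A'\in\CC E^{+}(A)$ and $Y$ fresh, invoke the linear density argument of Lemma \ref{trave0} (which you partly re-derive rather than cite), and transfer the resulting linear $gRP_{X}$ interpolants back to $\CC L_{\To}$ through Lemma \ref{enest}. The paper's proof is just a more compressed version of this, citing Lemma \ref{trave0} as a black box and using non-derivability of $A$ only to ensure the linearized context is nonempty.
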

\begin{proof}
We can suppose w.l.o.g. that $u$ is good.
Then, for some $d\in \BB N$, 
$\partial u:A_{1}^{-}, \dots, A^{-}_{d}\to  C $ in $\CC A^{\multimap}$, where $C\in \CC E^{+}(A[B/X])$. Suppose $A$ is not derivable nor $gRP_{X}$; then $d\neq 0$ and, from the fact that $u$ is linear in all $y$ such that $[y]=Y$ we deduce that $C$ is of the form $ A^{+}[Y\multimap X/X]$, for some $A^{+}\in \CC E^{+}(A)$. Indeed, $C$ is obtained by replacing, in some type $A^{+}\in \CC L^{+}(A)$, all negative occurrences of $X$ by $Y\multimap X$ and all positive occurrences of $X$ by $\underbrace{Y\multimap \dots \multimap Y}_{p\text{ times}}\multimap X$, for some $p\in \BB N$ corresponding to the number of occurrences in $u$ of some $y$ such that $[y]=Y$.

Hence $\partial u: A^{-}_{1},\dots, A^{-}_{d}\to A^{+}[Y\multimap X/X]$, where $Y\notin FV(A_{j}^{-})$ and $Y\notin FV(A^{+})$ and we can apply lemma \ref{trave0}: 
either $A^{+}\in gRP_{X}$ or $\partial u$ has $gRP_{X}$ interpolants.

By proposition \ref{enest} $b.$, $A^{+}\notin gRP_{X}$. We conclude then that $\partial u$ has $gRP_{X}$ interpolants $C_{1},\dots, C_{p}$, whence, by proposition \ref{enest} $a$., $u$ has $gRP_{X}$ interpolants $C_{1}^{\To},\dots, C_{p}^{\To}$.

\end{proof}

We finally get:

\begin{proposition}\label{expoexpa2}
$A\in \CC L_{\To}$ is expansible iff either $A$ is derivable, $A$ is $gRP_{X}$ or $A$ collapses into a finite family of $gRP_{X}$ types.
\end{proposition}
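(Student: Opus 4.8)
The plan is to prove the two directions separately, mirroring the structure of the linear case (theorem \ref{expa1}) but accounting for the two new phenomena in $\lambda_\To$: derivable types are automatically expansible (by weakening), and weak interpolation produces a \emph{family} of interpolants rather than a single one. For the ``if'' direction, suppose first that $A$ is derivable: then there is a closed term of type $A$, and for every $B = C_1 \To \dots \To C_p \To X$ we obtain an arrow $EXP_A(B) : A \to A[B/X]$ simply by discarding the hypothesis and re-deriving $A[B/X]$ (which is again derivable, being a substitution instance of a derivable type). If $A$ is $gRP_X$, then proposition \ref{drave} (the ``simple expansions'' proposition) directly supplies the required $EXP_A(B)$ with $k=0$. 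Finally, if $A$ collapses into a finite family $I_1, \dots, I_m$ of $gRP_X$ types, then by definition of collapse there are arrows $v_j : A \to I_j$ in $\CC T$ and $w : I_1, \dots, I_m \to A$ with $w[v_j/x_j] \simeq_{\beta\eta} \mathrm{id}_A$; since each $I_j$ is $gRP_X$ it is expansible, so composing $A \xrightarrow{(v_j)} \prod_j I_j \xrightarrow{(EXP_{I_j}(B))} \prod_j I_j[B/X] \xrightarrow{w[B/X]} A[B/X]$ yields $EXP_A(B)$, using that $w$ is still well-typed after substituting $B$ for $X$ since the interpolants $I_j$ share only variables of $A$ (condition 1 of theorem \ref{interpolationl}) — one has to check $X \in FV(A)$ implies the substitution commutes appropriately, which is routine.

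For the ``only if'' direction, assume $A$ is expansible. Taking $B = Y \To X$ with $Y$ fresh, expansibility gives an arrow $u : A \to A[Y\To X/X]$ in $\CC T$; we may assume $u$ is in normal $\eta$-long form and, by the remark preceding lemma \ref{travexp}, that $u$ is \emph{good}. Now apply lemma \ref{travexp} with $\Gamma = \{A\}$: either $A$ is derivable, or $A$ is $gRP_X$, or $u$ has $gRP_X$ interpolants $C_1, \dots, C_p$. In the last case, by the definition of the weak interpolation order $\prec_{wI}$ and of collapse, $A$ collapses into the finite family $\{C_1, \dots, C_p\}$ of $gRP_X$ types — provided we know $u$ is an arrow $A \to A$, not merely $A \to A[Y\To X/X]$. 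This is handled exactly as in the proof of lemma \ref{travexp} / lemma \ref{trave0}: since $u$ is good it is linear in all variables of type $Y$, so $\partial u$ factors as $A^- \to A^+[Y\multimap X/X]$, and after removing the $Y$-edges $f_Y$ (lemma \ref{regressum3}) the interpolation of $f - f_Y : A^- \to A^+$ lifts back; renaming $Y \mapsto X$ on the type III side (as in the proof of proposition \ref{functor}) turns the codomain back into $A$, so the $C_j^\To$ are genuine interpolants of an arrow $A \to A$.

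The main obstacle I expect is the bookkeeping in the case analysis of lemma \ref{travexp} and its deployment here — specifically, ensuring that when $A$ is \emph{not} derivable and \emph{not} $gRP_X$, the expansion map $u$ really does force $\partial u$ to land in a type of the shape $A^+[Y\multimap X/X]$ (so that the linear lemma \ref{trave0} applies verbatim). This relies crucially on goodness of $u$ giving linearity in the $Y$-typed variables, and on proposition \ref{enest}$(b)$ to transfer $A^+ \notin gRP_X$ back to $A \notin gRP_X$; a subtle point is that a non-good $u$ could in principle duplicate $Y$ and destroy this shape, so the reduction to a good term must be justified carefully (it does not change the domain/codomain types and preserves the ``simple expansion'' structure). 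The rest — transferring between $\CC L_\multimap$ and $\CC L_\To$ via $(-)^\To$ and the linear-expansion sets $\CC E^\pm$ — is routine given lemma \ref{enest} and theorem \ref{linear}. I would close with a one-line remark that the three alternatives are not exclusive (e.g.\ $X$ is both derivable-free-but-$gRP_X$, and a derivable type like $X \To X$... ) but that the disjunction is what is asserted.
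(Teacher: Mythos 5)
Your proposal is correct and follows essentially the same route as the paper, which derives this proposition directly from lemma \ref{travexp} (for the ``only if'' direction, taking $B=Y\To X$ with $Y$ fresh and a good normal $\eta$-long expansion term) together with proposition \ref{drave}, weakening for derivable types, and composition through the collapse for the ``if'' direction. Your explicit treatment of the codomain issue (renaming $Y\mapsto X$ so that the interpolants are genuinely interpolants of an arrow $A\to A$, as required by the definition of collapse) is a point the paper leaves implicit, and it is handled correctly.
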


From theorem \ref{expoexpa2} we obtain immediately the following:
\begin{theorem}\label{expoexpa}
A simple type is expansible iff it is either derivable or logically equivalent to a product of $gRP_{X}$ simple types.
\end{theorem}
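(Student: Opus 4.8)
The plan is to derive Theorem~\ref{expoexpa} from Proposition~\ref{expoexpa2} essentially by elementary bookkeeping about products, so the only work to do is to reconcile the ``collapses into a finite family of $gRP_X$ types'' condition of Proposition~\ref{expoexpa2} with the ``logically equivalent to a product of $gRP_X$ types'' condition in the theorem, and then to reduce the statement about an arbitrary simple type to the one-variable case in which $X$ is fixed (which is what the previous subsections were phrased in). First I would note that for the direction ``$\Leftarrow$'' there is essentially nothing to prove: if $\forall XA$, er, rather if $A$ is derivable then by weakening it is trivially expansible (any closed term of type $A$ gives a term of type $A[B/X]$ from $A$), and if $A$ is logically equivalent to a product of $gRP_X$ types $B_1,\dots,B_n$ then Proposition~\ref{drave} gives expansions $EXP_{B_i}(B):B_i\to B_i[B/X]$, and one pre- and post-composes with the projections $A\to B_i$ and the pairing map $B_1[B/X],\dots,B_n[B/X]\to A[B/X]$ (this last exists because logical equivalence to a product is preserved under the substitution $[B/X]$, since each $B_i[B/X]\simeq B_i'[B/X]$ componentwise). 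So the substance is the direction ``$\Rightarrow$''.

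For ``$\Rightarrow$'' I would invoke Proposition~\ref{expoexpa2} to split into three cases: $A$ derivable (done), $A$ is $gRP_X$ (then $A$ is trivially logically equivalent to the one-element product consisting of itself), and $A$ collapses into a finite family $\{I_1,\dots,I_p\}$ of $gRP_X$ types. In the last case I would unpack the definition of collapse from subsection~\ref{sec5}: ``$A$ collapses into $\Gamma=\{I_1,\dots,I_p\}$'' means $I_1,\dots,I_p$ are the interpolants of some arrow $u:A\to A$ in $\CC T$, so weak interpolation (Theorem~\ref{interpolationl}) gives arrows $v_j:A\to I_j$ and $w:I_1,\dots,I_p\to A$ with $w[v_1/x_1,\dots,v_p/x_p]\simeq_{\beta\eta}u$. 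Taking $u$ to be the identity arrow $\mathrm{id}_A$ (which is legitimate: any type $A$ with a nonpure $\partial\mathrm{id}_A$ collapses, and if $\partial\mathrm{id}_A$ is pure then $A$ is $\prec_{wI}$-minimal and we are in the previous case), the arrows $v_j:A\to I_j$ and $w:I_1,\dots,I_p\to A$ exhibit exactly the data required by the definition of ``$A$ logically equivalent to the product of $I_1,\dots,I_p$'' — there is one subtlety, namely that ``logically equivalent to a product'' as defined in subsection~\ref{sec2} asks for closed terms witnessing $A\To I_j$ and $I_1\To\dots\To I_p\To A$, which are obtained by $\lambda$-abstracting the single free variable of $v_j$ and $w$ respectively. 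Hence $A$ is logically equivalent to the product of $I_1,\dots,I_p$, and these are $gRP_X$ simple types, which is what we want.

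The remaining (and genuinely routine, but necessary) point is the passage from ``for a fixed variable $X$'' to the statement of the theorem, which speaks of an expansible simple type without designating a variable. Here I would adopt the convention, implicit throughout section~\ref{sec8}, that ``$A$ is expansible'' means expansible in some fixed $X\in\CC V$, and correspondingly ``$gRP_X$ simple type'' in the theorem's statement is with respect to that same $X$ — so no real quantifier juggling is needed, and the theorem is literally a restatement of Proposition~\ref{expoexpa2} modulo the translation of ``collapses into a finite family'' into ``logically equivalent to a product'' carried out above. I do not expect a serious obstacle here; the only thing to be careful about is checking that logical equivalence to a product is stable under the substitution $[B/X]$ used in the ``$\Leftarrow$'' direction, which follows because the component equivalences $I_j\simeq I_j'$ (with $I_j'$ a suitable $qRP_{\CC X}$ type, via Proposition~\ref{renaming2}) are witnessed by closed terms whose types are themselves closed under substitution, so substituting $B$ for $X$ throughout preserves derivability of all the witnessing terms. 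Thus the proof is: handle ``$\Leftarrow$'' by weakening plus Proposition~\ref{drave} plus composition with projections; handle ``$\Rightarrow$'' by Proposition~\ref{expoexpa2} and the observation that collapse is definitionally the same as logical equivalence to the product of the interpolants.

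\begin{proof}
For the right-to-left direction, if $A$ is derivable, say $\vdash t:A$, then for any $B=C_1\To\dots\To C_p\To X$ the term $\lambda x^A.t$ has type $A\To A[B/X]$, so $A$ is expansible. If $A$ is logically equivalent to the product of $gRP_X$ types $B_1,\dots,B_n$, let $u_i:A\to B_i$ and $u:B_1,\dots,B_n\to A$ be the witnessing arrows in $\CC T$. Substituting $B$ for $X$ everywhere preserves derivability, so $u_i$ also has type $A[B/X]\to B_i[B/X]$ and $u$ has type $B_1[B/X],\dots,B_n[B/X]\to A[B/X]$. By Proposition~\ref{drave} there are simple $B$-expansions $EXP_{B_i}(B):B_i\to B_i[B/X]$ in $\CC T$. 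Composing, the arrow
$$
x_1:A,\dots \ \longmapsto \ u\big[EXP_{B_1}(B)[u_1(x)]/x_1,\dots,EXP_{B_n}(B)[u_n(x)]/x_n\big]
$$
is an arrow $A\to A[B/X]$ in $\CC T$; hence $A$ is expansible.

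For the left-to-right direction, suppose $A$ is expansible. By Proposition~\ref{expoexpa2}, either $A$ is derivable, in which case we are done, or $A$ is $gRP_X$, in which case $A$ is trivially logically equivalent to the product of the single $gRP_X$ type $A$, or $A$ collapses into a finite family $I_1,\dots,I_p$ of $gRP_X$ types. In this last case, by definition of $\prec_{wI}$ and collapse (subsection~\ref{sec5}), $I_1,\dots,I_p$ are the interpolants of some arrow $u:A\to A$ in $\CC T$, so by Theorem~\ref{interpolationl} there exist arrows $v_j:A\to I_j$ and $w:I_1,\dots,I_p\to A$ in $\CC T$ with $w[v_1/x_1,\dots,v_p/x_p]\simeq_{\beta\eta}u$. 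Writing $z$ for the unique free variable of $v_j$ and of $w$, the closed terms $\lambda z^A.v_j:A\To I_j$ and $\lambda z_1^{I_1}.\dots.\lambda z_p^{I_p}.w:I_1\To\dots\To I_p\To A$ witness that $A$ is logically equivalent to the product of $I_1,\dots,I_p$. Since each $I_j$ is a $gRP_X$ simple type, this completes the proof.
\end{proof}
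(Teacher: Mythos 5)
Your proof is correct and follows the same route as the paper, which derives the theorem directly from Proposition \ref{expoexpa2}; you have simply spelled out the routine translation between ``collapses into a finite family of $gRP_{X}$ types'' and ``logically equivalent to a product of $gRP_{X}$ types'' that the paper leaves implicit (the backward direction via composing the projections with the simple expansions of Proposition \ref{drave}, the forward direction via the interpolation data $v_{j}, w$). The only slip is cosmetic: in the derivable case you should abstract over a closed term of type $A[B/X]$, not of type $A$, to obtain $A\To A[B/X]$ --- the same abuse of notation the paper itself commits.
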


Observe that, differently from the case of linear expansible terms, derivable types are always expansible: if there is a closed term $u$ of type $A$, then one can form a closed term $u'$ of type $A[B/X]$, for all $B\in \CC L_{\To}$, hence $\lambda x^{A}.u:A\to A[B/X] $, for all $x\notin FV(A)$. That this does not hold for linear expansible term is shown by the linear type $C=(X\multimap Y)\multimap (X\multimap Y)$ which is derivable but has a unique edge over $X$, which is a jump out. As $C$ is clearly $\prec_{I}$-minimal, by proposition \ref{jumpout}, $C$ is not expansible.

\subsection{Characterization of instantation overflow for simple types}\label{9}

%
%
%
%

In this section we introduce the $F_{at}$-expansion property and we prove (1) that a $F_{at}$-expansible simple type $A$ is either derivable or such that $\forall XA$ is equivalent in $F_{at}$ to the conjunction of a family of $gRP$ types and (2) that for a simple type $A$, instantiation overflow for $\forall XA$ coincides with the $F_{at}$-expansibility of $A$. 

Let $A\in \CC L_{\To}$ and $u:\forall XA\to \forall XA$ in $\CC F_{at}$. By deleting second order constructs, we obtain an arrow $u_{0}:A[Z_{1}/X],\dots, A[Z_{n}/X]\to A$ in $\CC T$, for some variables $Z_{1},\dots, Z_{n}$. We first show that we can suppose $Z_{1},\dots, Z_{n}\in FV(A)$:

\begin{lemma}\label{interpofat}
Let $A\in \CC L_{\To}$ and $u:\forall XA\to \forall XA$ in $\CC F_{at}$. Then there exists 
$u^{*}:A[Y_{1}/X],\dots, A[Y_{n}/X]\to A$ in $\CC T$, where $FV(A)=\{Y_{1},\dots, Y_{n}\}$.

\end{lemma}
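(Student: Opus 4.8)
The plan is to start from the given arrow $u:\forall XA\to\forall XA$ in $\CC F_{at}$, pass to its normal $\eta$-long form, and analyze its shape. Since $u$ is typed in $F_{at}$, every type abstraction $\Lambda X$ inside $u$ is eliminated only against type variables (never arbitrary types), so the $\lambda^2$-term $u$ is, after stripping $\Lambda$'s and type-applications, an honest simply typed term of $\lambda_{\To}$: deleting the second-order constructs yields an arrow $u_0:A[Z_1/X],\dots,A[Z_n/X]\to A$ in $\CC T$ for some variables $Z_1,\dots,Z_n$ (the $Z_i$ are the variables to which the free variable, of type $\forall XA$, gets instantiated in the various $\forall E$/$F_{at}E$ steps, together with the variable bound by the outermost $\Lambda X$ of the conclusion, renamed to $X$ itself, which we may assume is the rightmost position). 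The content of the lemma is that we may assume each $Z_i$ lies in $FV(A)$.

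First I would isolate those indices $i$ with $Z_i\notin FV(A)$. For such an $i$, the hypothesis $A[Z_i/X]$ contains occurrences of the fresh variable $Z_i$ that occur \emph{nowhere} in the conclusion $A$ and (after renaming to keep the $Z_i$ distinct) nowhere in the other hypotheses. The key observation is that in a normal $\eta$-long term realizing $u_0$, the functorial/geometric structure (via the linearization $\partial u_0$ of Theorem~\ref{linear} and the essential-net correctness of $\CC G(\partial u_0)$) forces every positive occurrence of $Z_i$ in the hypothesis $A[Z_i/X]$ to be connected by a type~I edge to a negative occurrence of $Z_i$ in the same hypothesis: there is simply no other leaf labelled $Z_i$ to connect to, and a closed essential net has no free type~III/type~II edges over a variable absent from the conclusion (this is exactly the reasoning used in the proof of Proposition~\ref{functor}, where edges over a variable not in the conclusion are necessarily type~II and get contracted). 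Consequently the information carried by the instance $A[Z_i/X]$ is vacuous: one can substitute $Z_i:=X$ throughout (equivalently, precompose with the identity-like arrow $A[X/X]\to A[Z_i/X]$ coming from $X\mapsto Z_i$, which is available because $Z_i$ was fresh and plays a purely formal role), obtaining a term of the same shape but now using the hypothesis $A[X/X]=A$ in place of $A[Z_i/X]$. After doing this for every such $i$, all remaining instantiating variables lie in $FV(A)$, giving the desired $u^*:A[Y_1/X],\dots,A[Y_n/X]\to A$ with $FV(A)=\{Y_1,\dots,Y_n\}$ (we may also pad with dummy copies of $A=A[Y_i/X]$ if needed so that \emph{every} $Y_i\in FV(A)$ appears, by weakening, which is admissible in $\lambda_{\To}$).

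More concretely, I would argue by induction on the normal $\eta$-long form of $u$, tracking each subterm whose type is $A$ instantiated at some substitution $[W/X]$. At a variable $x_j$ of type $A[Z_j/X]$ with $Z_j\notin FV(A)$, observe that in $\eta$-long form every negative leaf of $A[Z_j/X]$ labelled $Z_j$ must be ``used'' and, by correctness of $\CC G(u)$ (Theorem~\ref{seque}), paired with a positive $Z_j$-leaf within the same occurrence; since $Z_j$ does not occur in the ambient sequent, no $Z_j$-edge can leave this hypothesis. Hence replacing $Z_j$ uniformly by $X$ everywhere in the derivation (consistently relabelling the leaves) still yields a correct net, i.e.\ a valid $\lambda_{\To}$-derivation, now with hypothesis $A$. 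Finally, reintroducing the $\Lambda/{}\cdot$ constructs mechanically (each $F_{at}E$ step instantiating at a variable in $FV(A)$, the outer $\Lambda X$ abstracting $X$) produces an $F_{at}$-derivation witnessing the statement, but for the lemma only the $\CC T$-arrow $u^*$ is needed.

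The main obstacle is making precise the claim that an occurrence of a fresh variable ``carries no information'' — i.e.\ that $Z_j$-edges cannot escape the hypothesis $A[Z_j/X]$ and that the uniform relabelling $Z_j\mapsto X$ preserves correctness and $\beta\eta$-equivalence type-checking. This is exactly the kind of geometric argument already carried out in the proof of Proposition~\ref{functor} (contraction of edges over a variable absent from the conclusion) and in Proposition~\ref{renaming2} (consistent variable separation/renaming preserving derivability), so I expect it to go through by the same method; the care needed is purely in bookkeeping the correspondence between the $\lambda^2$-term $u$, its $\lambda_{\To}$-shadow $u_0$, and the linearization $\partial u_0$, and in checking that padding with dummy hypotheses $A$ (to realize \emph{all} of $FV(A)$) is harmless, which is immediate since $\lambda_{\To}$ admits weakening.
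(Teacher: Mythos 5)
Your proposal matches the paper's proof: normalize $u$, strip the second-order constructs to get $u_0$ with extra hypotheses $A[Z_j/X]$ for fresh $Z_j$, observe via the (linearized) proof net that all edges over such $Z_j$ are type II and hence confined to the hypotheses, rename them uniformly to a variable of $A$ (the paper uses $Y_1$, you use $X$ — equivalent here), and then contract/pad the hypotheses using admissible weakening and contraction in $\lambda_{\To}$. The only blemish is the one-sentence slip calling the edges between two hypothesis occurrences of $Z_i$ ``type I'' (they are type II, as your own parenthetical correctly states), which does not affect the argument.
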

\begin{proof}
We can suppose w.l.o.g. $u$ normal. By deleting second order constructs, we obtain a term $u_{0}$ such that $y_{1}:A[Y_{1}/X],\dots, y_{n}:A[Y_{n}/X], z_{1}:A[Z_{1}/X],\dots, z_{p}:A[Z_{p}/X]\vdash u_{0}: A$ is derivable in $\lambda_{\To}$, where the variables $Z_{1},\dots, Z_{p}$ do not occur in $A$. If we consider the graph $\CC G(\partial u)$, it is clear that all edges over $Z_{j}$, for $1\leq j\leq p$, are of type II. By renaming all such edges as $Y_{1}$, we obtain the graph $\CC G(\partial u')$ of a term $u'$ such that $y_{1}:A[Y_{1}/X],\dots, y_{n}:A[Y_{n}/X], z_{1}:A[Y_{1}/X],\dots, z_{p}:A[Y_{1}/X]\vdash u_{0}: A$. We can thus put $u^{*}=u'[y_{1}/z_{1},\dots, y_{1}/z_{p}]$.

\end{proof}

From lemma \ref{interpofat} we deduce that the interpolants of an arrow $u:\forall XA\to \forall XA$ in $\CC F_{at}$ are formulas whose free variables are included in those of $A$. This leads to the following definition:

\begin{definition}[$F_{at}$-collapse]
Let $A,B_{1},\dots, B_{p}\in \CC L_{\To}$. $A$ \emph{$F_{at}$-collapses into} $B_{1},\dots, B_{p}$ if $FV(B_{j})\subseteq FV(A)$ for all $1\leq j\leq p$ and there exist arrows $u_{j}:A[Y_{1}/X],\dots, A[Y_{n}/X]\to B_{j}$ and 
$u:B_{1},\dots, B_{p}\to A$, where $FV(A)=\{Y_{1},\dots, Y_{n}\}$.


\end{definition}
If $A$ $F_{at}$-collapses into $B_{1},\dots, B_{p}$, then $\forall XA$ is logically equivalent to the product of the $\forall X B_{1},\dots, \forall X B_{p}$. 
$F_{at}$-collapse is not equivalent to collapse: for instance the type $(Y\To X)\To X$ $F_{at}$-collapses into $Y$, but $(Y\To X)\To X$ does not collapse into $Y$.

We introduce $F_{at}$-expansible types:

\begin{definition}[$F_{at}$-expansible type]
A type $A\in \CC L_{\To}$ is \emph{$F_{at}$-expansible} if for every types $C_{1},\dots, C_{p}\in \CC L_{\To}$ there exists an arrow $EXP_{A}(B): A[Y_{1}/X],\dots, A[Y_{n}/x]\to A[B/X]$, where $B=C_{1}\To \dots \To C_{p}\To X$ and $FV(A)=\{Y_{1},\dots, Y_{n}\}$.

\end{definition}

Similarly to lemma \ref{travexp}, we can characterize $F_{at}$-expansible types as those which $F_{at}$-collapse into $gRP_{X}$ types:

\begin{proposition}\label{travefor}
$A\in \CC L_{\To}$ is $F_{at}$-expansible iff either $A$ is derivable or $A$ is $gRP_{X}$ or $A$ $F_{at}$-collapses into a family of $gRP_{X}$ types.
\end{proposition}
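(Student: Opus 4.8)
The plan is to mimic closely the structure of the linear case (lemma \ref{trave0} and proposition \ref{expoexpa2}), but carrying the second‑order data through the linearization and collapse machinery. First I would fix $A\in\CC L_{\To}$ and suppose that $A$ is $F_{at}$‑expansible; if $A$ is derivable or already $gRP_{X}$ we are done, so assume neither. By definition of $F_{at}$‑expansibility applied with $B=Y\To X$ (for a fresh variable $Y$), and using lemma \ref{interpofat} to rename all the spurious instantiation variables $Z_{j}$ to the free variables $Y_{i}$ of $A$, we obtain an arrow $u:A[Y_{1}/X],\dots,A[Y_{n}/X]\to A[B/X]$ in $\CC T$ with $FV(A)=\{Y_{1},\dots,Y_{n}\}$ and $Y\notin FV(A)$. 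Putting $\Gamma=\{A[Y_{i}/X]\mid 1\le i\le n\}$ this is exactly an arrow $u:\Gamma\to A[B/X]$ of the shape handled by lemma \ref{travexp}, except that now the context is nonempty; I would therefore first state and prove the obvious generalization of lemma \ref{travexp} allowing an arbitrary context $\Gamma$ (with $Y\notin FV(\Gamma)$), whose proof is the same: after making $u$ good, pass to $\partial u$ via theorem \ref{linear}, observe that goodness forces the conclusion linear type to be of the form $A^{+}[Y\multimap X/X]$ with $A^{+}\in\CC E^{+}(A)$, apply lemma \ref{trave0}, and use proposition \ref{enest} to transfer the $gRP_{X}$ property between $\CC L_{\multimap}$ and $\CC L_{\To}$. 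The conclusion is that either $A^{+}$ (hence $A$) is $gRP_{X}$, contradicting our assumption, or $\partial u$ — and therefore $u$ — has $gRP_{X}$ interpolants $C_{1},\dots,C_{p}$.

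Next I would feed these interpolants back into the definition of $F_{at}$‑collapse. Weak interpolation for $\CC T$ (theorem \ref{interpolationl}, in the contextual form obtained from theorem \ref{interpolation2} through linearization) splits $u:\Gamma\to A[B/X]$ as arrows $v_{j}:\Gamma_{j}\to C_{j}$ and $w:C_{1},\dots,C_{p}\to A[B/X]$ with $FV(C_{j})\subseteq FV(A)\cap FV(\Gamma)\subseteq FV(A)$. The last free‑variable condition is exactly what the definition of $F_{at}$‑collapse requires. For the arrow $w$ I would use that $Y\notin FV(C_{j})$ (the $C_{j}$ are $gRP_{X}$ built from the types in $\Gamma$ and $A$, none of which contains $Y$) together with lemma \ref{regressum2}/\ref{regressum3}‑style reasoning — more simply, since the $Y$‑edges of $\partial u$ are type III over $Y$ and sit next to the $X$‑edges, a renaming $Y\mapsto X$ turns $w$ into an arrow $C_{1},\dots,C_{p}\to A$ in $\CC T$ (because $A[B/X][X/Y]=A$ after this renaming, exactly as in the final paragraph of the proof of proposition \ref{functor}). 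Together with the $v_{j}$ (which give $C_{j}$ as a $gRP_{X}$ type into which $A$ expands, i.e. arrows $A[Y_{i}/X]\to C_{j}$ after the same cleanup), this witnesses that $A$ $F_{at}$‑collapses into the family $C_{1},\dots,C_{p}$ of $gRP_{X}$ types, proving the left‑to‑right implication.

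For the converse direction I would argue directly: if $A$ is derivable then, as remarked after theorem \ref{expoexpa}, for any closed term $u$ of type $A$ and any $B$ one has $\lambda x^{A}.u:A\to A[B/X]$ (with $x$ not free in $u$), and a fortiori an arrow $A[Y_{1}/X],\dots,A[Y_{n}/X]\to A[B/X]$ by precomposing with a projection, so $A$ is $F_{at}$‑expansible. If $A$ is $gRP_{X}$, proposition \ref{drave} (the $\lambda_{\To}$ version) gives the simple $B$‑expansion $EXP_{A}(B):A\to A[B/X]$ directly, and again a projection yields the required arrow out of the context $A[Y_{1}/X],\dots,A[Y_{n}/X]$. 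Finally, if $A$ $F_{at}$‑collapses into $gRP_{X}$ types $B_{1},\dots,B_{p}$ via $u_{j}:A[Y_{1}/X],\dots,A[Y_{n}/X]\to B_{j}$ and $u:B_{1},\dots,B_{p}\to A$, then for any $B$ I compose: apply each $EXP_{B_{j}}(B):B_{j}\to B_{j}[B/X]$ (from proposition \ref{drave}, since $B_{j}$ is $gRP_{X}$), apply the functorial action $A(u)$ of the type $A$ to the arrow $u$ — more precisely, since $FV(B_{j})\subseteq FV(A)$ and the $B_{j}$ are $gRP_{X}$, one builds $A[B/X]$ from the $B_{j}[B/X]$ by applying the $co-gRP$ part of $A$ — obtaining an arrow $A[Y_{1}/X],\dots,A[Y_{n}/X]\to A[B/X]$, which is the desired $EXP_{A}(B)$. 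The main obstacle I anticipate is the bookkeeping in this last composition: one must check that combining the expansions $B_{j}\to B_{j}[B/X]$ along the arrow $u:B_{1},\dots,B_{p}\to A$ genuinely produces an arrow into $A[B/X]$ and not merely into some other substitution instance; this is where the precise inductive definition \ref{expo$gRP_X$} of $gRP_{X}$ and $co-gRP_{X}$ types, together with proposition \ref{tiling} and the functoriality of $p$-$X$ subtypes, has to be used carefully, exactly as in the construction of $IO_{A}(B)$ for $qRP$ and $gRP$ types sketched in section \ref{sec6}.
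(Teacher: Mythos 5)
Your proof follows essentially the same route as the paper: the forward direction is precisely the paper's reduction to the argument of lemma \ref{travexp} (make the term good, linearize, apply lemma \ref{trave0}, and transfer the $gRP_{X}$ property back through proposition \ref{enest}), and the converse direction, which the paper leaves implicit, is filled in correctly. One simplification for the step you flag as the main obstacle: the arrow $u:B_{1},\dots,B_{p}\to A$ witnessing the collapse can simply be instantiated by type substitution to $u[B/X]:B_{1}[B/X],\dots,B_{p}[B/X]\to A[B/X]$ and then precomposed with the arrows $EXP_{B_{j}}(B)\circ u_{j}$, so no appeal to the functorial action of $A$ or to the inductive structure of $gRP_{X}$ types is needed there.
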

\begin{proof}

We can argue similarly to lemma \ref{travexp}. Suppose $A$ is neither derivable nor $gRP_{X}$, let $Y\notin FV(A)$ and suppose there is an arrow $u:A[Y_{1}/X], \dots, A[Y_{n}/X]\to A[Y\To X/X]$, where $FV(A)=\{Y_{1},\dots, Y_{n}\}$.  Then we obtain $gRP_{X}$ interpolants $A_{1},\dots, A_{p}$ of $u$, into which $A$ $F_{at}$-collapses.

%
%
\end{proof}


Similarly to the previous subsection, if $\Gamma,\Delta,\Sigma$ are contexts, $Y$ a variable not occurring in any of them nor in a simple type $A$ and $u:\Gamma,\Delta[Y\To X/X], \Sigma[Y/X] \to A[B/X]$ in $\CC T$ is normal and $\eta$-long, where $B=Y\To X$, we can consider all  $v_{C}\in Subt(u)$ such that $[v_{C}]=C[B/X]$, where $C$ is a type of the form $C_{1}\To \dots \To C_{p}\To X$ occurring positively in $A$. Then $v_{C}$ is of the form
$\lambda x_{1}^{C_{1}[B/X]}.\dots.\lambda x_{p}^{C_{p}[B/X]}.\lambda y^{Y}.zu_{1}\dots u_{q} $. 
We say that $u$ is \emph{good} when for all $v_{C}\in Subt(u)$ as above, whenever $q\geq 1$ and $[u_{q}]=Y$, then $u_{q}=y$. We will suppose that all such arrows are good.

To prove our final result, i.e. the equivalence of instantiation overflow and the $F_{at}$-expansion property, we need the lemma below.

\begin{lemma}\label{svorta}
Let $\Gamma, \Delta, \Sigma$ be contexts, $A\in \CC L_{\To}$ and $Y$ be a variable not occurring in $\Gamma,\Delta,\Sigma,A$.
\begin{description}
\item[$a.$] for any arrow $u: \Gamma, \Delta[Y\To X/X], \Sigma[Y/X]\to A$ there exists an arrow
$u^{*}: \Gamma, \Delta[Y\To X/X], \Sigma\to A$;
\item[$b.$] for any arrow $u: \Gamma, \Delta[Y\To X/X], \Sigma[Y/X]\to A[Y\To X/X]$ there exists a arrow
$u^{*}: \Gamma, \Delta[Y\To X/X], \Sigma\to A[Y\To X/X]$;
\item[$c.$] for any arrow $u: \Gamma, \Delta[Y\To X/X], \Sigma[Y/X]\to A[Y/X]$ there exists an arrow
$u^{*}: \Gamma, \Delta[Y\To X/X], \Sigma\to A$.

\end{description}
\end{lemma}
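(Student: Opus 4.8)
The plan is to prove the three statements $a.$, $b.$, $c.$ of Lemma~\ref{svorta} simultaneously, by a single induction on the structure of a normal, $\eta$-long term $u$ witnessing the given arrow, in the style of the proof of Lemma~\ref{interpofat}. The common idea is that, since $Y$ does not occur in $\Gamma,\Delta,\Sigma,A$, every occurrence of $Y$ in the typing sequent is confined to the ``expanded'' parts $\Delta[Y\To X/X]$ and $\Sigma[Y/X]$; in case $a.$ the target $A$ has no $Y$ at all, in cases $b.$ and $c.$ the only occurrences of $Y$ in the target come from the substitution $[Y\To X/X]$, respectively $[Y/X]$. Passing to the linearization $\partial u$ via Theorem~\ref{linear}, the edges over $Y$ in $\CC G(\partial u)$ are therefore heavily constrained: an occurrence of $Y$ coming from $\Sigma[Y/X]$ is a ``bare'' leaf, while an occurrence coming from $\Delta[Y\To X/X]$ or from a target substitution $[Y\To X/X]$ sits as the left premiss of an arrow node. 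The goal of the operation $u\mapsto u^*$ is to eliminate the occurrences of $Y$ coming from $\Sigma$ (replacing $\Sigma[Y/X]$ by $\Sigma$), which amounts, in the $\lambda$-calculus, to substituting for each variable $y$ of type $Y$ the appropriate ``$\eta$-expansion'' term built from the context.

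Concretely, I would proceed as follows. First reduce to the case where $u$ is normal and $\eta$-long, and -- using the remark just before the lemma, as in the analysis of \emph{good} terms -- assume $u$ is good, so that every subterm $v_C$ of type $C[Y\To X/X]$ with $C=C_1\To\dots\To C_p\To X$ positive in $A$ has the shape $\lambda x_1\dots\lambda x_p.\lambda y^Y.z u_1\dots u_q$ with $u_q=y$ whenever $[u_q]=Y$. Then the crucial structural fact is that $u$ is linear in every variable $y$ with $[y]=Y$ (exactly the argument given after the definition of \emph{good}), so each such $y$ occurs exactly once, in applied position $z u_1\dots u_{q-1} y$, and this application forces the conclusion type to be $X$. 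For case $c.$: each variable $x\in\Sigma$ has type $D[Y/X]$ for some $D$; I replace it by a fresh variable $x'$ of type $D$ and insert, at the unique place where a $Y$-typed descendant of $x$ is used, the appropriate coercion -- this is precisely the ``transport'' move used in Lemma~\ref{regressum3}, only now carried out at the level of terms. Tracking these replacements through the induction on $u$ (variable case, $\lambda$-abstraction case, application case $u=zu_1\dots u_q$, split according to whether $z$ lies in $\Gamma$, $\Delta$, $\Sigma$, or is $\lambda$-bound) gives the three constructions; cases $a.$ and $b.$ are the same construction with a simpler target, and they feed into case $c.$ at the inductive step where an application lands in $X$ and the last argument is a $Y$-typed variable.

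The expected main obstacle is bookkeeping in the application case $u=z u_1\dots u_q$ when $z\in\Delta$, i.e.\ $[z]=E[Y\To X/X]$: here a $Y\To X$ argument slot of $z$ must be filled by one of the $u_i$, and if that $u_i$ in turn contains (in its body) a free variable from $\Sigma[Y/X]$ whose $Y$ is being eliminated, the inductive hypothesis must be applied to a sub-arrow whose context is a \emph{partition} of the original $\Gamma,\Delta,\Sigma$, and one must check that the ``good'' hypothesis is preserved on all the pieces and that the re-assembled term still types correctly. This is the analogue of the partition argument in the proof of Proposition~\ref{renaming}, and, as there, the only real care needed is to keep $\Delta$-part variables and $\Sigma$-part variables from being confused when they flow into the same subterm; the $\eta$-long, good normal form makes the confinement of $Y$ explicit enough that no genuinely new idea is required beyond what is already used for Lemmas~\ref{interpofat} and~\ref{travexp}. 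Once $a.$, $b.$, $c.$ are established, the intended application is immediate: combined with Lemma~\ref{travexp} (resp.\ Proposition~\ref{travefor}) and Lemma~\ref{interpofat}, it lets one reduce an arbitrary $F_{at}$-expansion witness for $A$ to one of the restricted ``good'' form analysed there, yielding the final characterization.
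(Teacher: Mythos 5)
Your overall skeleton is the same as the paper's: a mutual induction on $a.$, $b.$, $c.$ over a normal, $\eta$-long, good term, with a case split on where the head variable of $u=\lambda x_{1}\dots\lambda x_{p}.yu_{1}\dots u_{q}$ is declared ($\Gamma$ or bound, $\Delta[Y\To X/X]$, or $\Sigma[Y/X]$), and with the three clauses feeding into one another exactly as you anticipate. However, the mechanism you propose for the critical case is not the one that works, and as described it fails.

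The crux is the case where the head variable $y$ comes from $\Delta[Y\To X/X]$ with $[y]=C[Y\To X/X]$, $C=C_{1}\To\dots\To C_{q'}\To X$: then $y$ demands a final argument of type $Y$, and in $u$ that argument $u_{q}$ is typically built from $\Sigma[Y/X]$. After replacing $\Sigma[Y/X]$ by $\Sigma$ there is nothing of type $Y$ left in the context (in clauses $a.$ and $c.$ the target has no $Y$ either), and there is no coercion between the distinct atoms $X$ and $Y$; so your plan to ``insert, at the unique place where a $Y$-typed descendant is used, the appropriate coercion'' cannot be carried out. Consider $z:(Y\To X),\,x:Y\vdash zx:X$ as an instance of clause $a.$ with $\Delta=\{X\}$, $\Sigma=\{X\}$, $A=X$: you cannot keep the application of $z$ at all. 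The paper's move is to \emph{discard} the head application $yu_{1}\dots u_{q'}$ entirely and return $u_{q}^{*}$ alone, where $u_{q}^{*}$ is obtained from $u_{q}:\dots\to Y=X[Y/X]$ by clause $c.$ of the induction hypothesis, hence has type $X$ --- legitimate because the result type of the discarded application was $X$ and $\lambda_{\To}$ has weakening. This discard-and-promote step is the one genuinely new idea of the lemma, and it is absent from your proposal. (Only in clause $b.$, where the target $A[Y\To X/X]$ still contains $Y$'s bound by $\lambda y^{Y}$, does goodness supply a $Y$-typed variable $x_{p}$ to re-apply, and that is the paper's case $b2$.)

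Two smaller points: the linearity of $Y$-typed variables you invoke is a consequence of goodness only for the variables bound by the $\lambda y^{Y}$ inside subterms $v_{C}$, not for variables of $\Sigma[Y/X]$ whose type happens to be $Y$ --- those may occur any number of times; and the partition-of-context argument you import from Proposition \ref{renaming} belongs to the linear setting and is neither available nor needed here, since the paper's induction passes the whole context to every argument and lets weakening absorb the discarded material.
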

\begin{proof}
For any variable $y$ and context $\Sigma$, we let $y\in \Sigma$ indicate that the variable $y$ is declared in $\Sigma$.
We argue by induction on the number of applications of $u$. If $u=\lambda x_{1}^{A_{1}}.\dots.\lambda x_{p}^{A_{p}}.y$, then
\begin{description}
\item[$a.$] if $y\in \Gamma$, $u^{*}=u$; if $y\in \Delta[Y\To X/X]$ or $y\in \Sigma[Y/X]$, then $X\notin FV([y])$, so $u^{*}=u$;

\item[$b.$] if $y\in \Delta[Y\To X/X] $, $u^{*}=u$; if $y\in \Gamma$ or $y\in \Sigma[Y/X]$, then $X\notin FV([y])$, so $u^{*}=u$;

\item[$c.$] if $y\in \Sigma[Y/X] $, $u^{*}=u$; if $y\in \Gamma$ or $y\in \Delta[Y\To X/X]$, then $X\notin FV([y])$, so $u^{*}=u$.

\end{description}

If $u=\lambda x_{1}^{A_{1}}.\dots.\lambda x_{p}^{A_{p}}.yu_{1}\dots u_{q}$, for some $q\geq 1$, then we must consider nine cases, three for each case $a,b,c$. We only consider case $a1-a3$. and case $b2$. as all other cases can be treated similalry:

 \begin{itemize}
\item[$a1.$] if $y\in \Gamma$ or $y=x_{i}$, for some $1\leq i\leq p$, then $[u_{j}]= C_{j}$ has no occurrence of $Y$, for $1\leq j\leq q$; by the induction hypothesis $a.$, we deduce $u_{j}^{*}:\Gamma, A_{1},\dots, A_{p},\Delta[Y\To X/X], \Sigma\to C_{j}$, so we put
$u^{*} =\lambda x_{1}^{A_{1}}.\dots.\lambda x_{p}^{A_{p}}.yu_{1}^{*}\dots u_{p}^{*}$;

\item[$a2.$] if $y\in \Delta[Y\To X/X]$, then $[y]=C[Y\To X/X]$ for some $C=C_{1}\To \dots \To C_{q'}\To Z$. If $Z\neq X$, then 
 $[u_{j}]=C_{j}[Y\To X/X]$, for $1\leq j\leq q$; by the induction hypothesis $b.$, we deduce $u_{j}^{*}:\Gamma,A_{1},\dots, A_{p},\Delta[Y\To X/X], \Sigma\to C_{j}[Y\To X/X]$, so we put
$u^{*} =\lambda x_{1}^{A_{1}}.\dots.\lambda x_{p}^{A_{p}}.yu_{1}^{*}\dots u_{p}^{*}$.
If $Z= X$, then $q=q'+1$, 
 $[u_{j}]=C_{j}[Y\To X/X]$, for $1\leq j\leq q'$ and $[u_{q}]=Y$; by the induction hypothesis $c.$ there exists then an arrow $u_{q}^{*}:\Gamma,A_{1},\dots, A_{p}, \Delta[Y\To X/X],\Sigma\to X$, so we put $u^{*} =\lambda x_{1}^{A_{1}}.\dots.\lambda x_{p}^{A_{p}}.u_{q}^{*}$;
 
%

\item[$a3.$] if $y\in \Sigma[Y/X]$, then $[u_{j}]=C_{j}[Y/X]$, for $1\leq j\leq q$; by the induction hypothesis $c.$, we deduce $u_{j}^{*}:\Gamma,A_{1},\dots, A_{p},\Delta[Y\To X/X], \Sigma\to C_{j}$, so we put
$u^{*} =\lambda x_{1}^{A_{1}}.\dots.\lambda x_{p}^{A_{p}}.yu_{1}^{*}\dots u_{p}^{*}$;

\item[$b2.$] if $y\in \Delta[Y\To X/X]$, then $[y]=C[Y\To X/X]$ for some $C=C_{1}\To \dots \To C_{q'}\To Z$. If $Z\neq X$, then 
 $[u_{j}]=C_{j}[Y\To X/X]$, for $1\leq j\leq q$; by the induction hypothesis $b.$, we deduce $u_{j}^{*}:\Gamma,\Delta[Y\To X/X],A_{1},\dots, A_{p}, \Sigma\to C_{j}[Y\To X/X]$, so we put
$u^{*} =\lambda x_{1}^{A_{1}}.\dots.\lambda x_{p}^{A_{p}}.yu_{1}^{*}\dots u_{p}^{*}$.
If $Z= X$, then $q=q'+1$, $A_{p}=Y$, $u_{q}=x_{p}$ (as we supposed $u$ is good) and for $1\leq j\leq 1'$,  
 $[u_{j}]=C_{j}[Y\To X/X]$, and $x_{p}\notin FV(u_{j})$;  by the induction hypothesis $b.$ we deduce then $u_{j}^{*}:\Gamma,\Delta[Y\To X/X], A_{1},\dots, A_{p-1},\Sigma\to C_{j}[Y\To X/X]$, so we can put
$u^{*} =\lambda x_{1}^{A_{1}}.\dots.\lambda x_{p}^{A_{p}}.yu_{1}^{*}\dots u_{q}^{*}x_{p}$.

\end{itemize}

\end{proof}

From lemma \ref{svorta} we deduce that $IO$ is equivalent to the $F_{at}$-expansion property:

\begin{proposition}\label{IO2}
For all $A\in \CC L_{\To}$, $\forall XA$ has $IO$ iff $A$ is $F_{at}$-expansible.

\end{proposition}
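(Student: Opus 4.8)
The plan is to prove the two implications separately, using Lemma \ref{svorta} to move between the $F_{at}$-expansion setting (where the hypotheses of a derivation only mention variables of $A$, together with fresh variables introduced by the expansion) and the instantiation overflow setting (where we must produce, for an arbitrary second order type $B$, an arrow $IO_A(B)\colon\forall XA\to A[B/X]$ in $\CC F_{at}$).

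For the direction ``$IO$ implies $F_{at}$-expansible'', suppose $\forall XA$ has $IO$. Given types $C_1,\dots,C_p\in\CC L_{\To}$, set $B=C_1\To\dots\To C_p\To X$; then there is an arrow $IO_A(B)\colon\forall XA\to A[B/X]$ in $\CC F_{at}$. Composing with the $F_{at}$ introduction rule applied to the evident hypotheses $A[Y_1/X],\dots,A[Y_n/X]$ (where $FV(A)=\{Y_1,\dots,Y_n\}$) — more precisely, the term $\Lambda X.$ applied to a suitable $F_{at}$-term that uses the hypotheses to produce a proof of $A$, followed by $IO_A(B)$ — yields an arrow $A[Y_1/X],\dots,A[Y_n/X]\to A[B/X]$ in $\CC T$ after deleting second order constructs. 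The only subtlety is that $B$ itself may contain the variable $X$; one chooses $B$ with the rightmost variable renamed to a fresh $Z$, or one simply observes that after erasure of the quantifier the term obtained is an arrow in $\CC T$ of the required shape, since the $C_i$ are simple types. This gives $F_{at}$-expansibility.

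For the converse ``$F_{at}$-expansible implies $IO$'', the idea is to build $IO_A(B)$ for an arbitrary $B\in\CC L_{\To,\forall}$ out of the $\CC T$-arrow $EXP_A(B')$ provided by $F_{at}$-expansibility for a suitable simple type $B'$, together with the functorial/elimination machinery $Elim_B$, $Intro_B$ already used in the proof that $RP$ types have $IO$. Concretely, writing $B=\forall\OV Y_1(B_1\To\forall\OV Y_2(B_2\To\dots\To\forall\OV Y_{n+1}Z)\dots)$, one takes $B'=B_1\To\dots\To B_n\To Z$ (a simple type in an extended set of variables), applies $F_{at}$-expansibility to obtain $EXP_A(B')\colon A[Y_1/X],\dots,A[Y_n/X]\to A[B'/X]$ in $\CC T$, and then pre-composes the hypotheses with $Elim_B$-like terms and post-composes with $Intro_B$ to turn the occurrences of $B'$ back into $B$ and the hypotheses $A[Y_i/X]$ (which by $F_{at} E$ are all available from $\forall XA$) into copies obtained by atomic instantiation. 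Here is where Lemma \ref{svorta} does the real work: it lets one take a derivation whose context mentions the mixed substitutions $\Delta[Y\To X/X]$ and $\Sigma[Y/X]$ and strip the $Y$'s off the $\Sigma$-part, which is exactly what is needed to replace the spurious free variable $Y$ introduced during expansion by an honest use of atomic $\forall$-elimination on $\forall XA$.

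The main obstacle I expect is bookkeeping rather than conceptual: carefully tracking which occurrences of $X$ in $A$ get replaced by $Y\To X$ versus $Y$ (the positive/negative split governed by the $\CC E^+,\CC E^-$ linear-expansion vocabulary and the ``good term'' normalization), and verifying that after applying Lemma \ref{svorta} the resulting $\CC T$-term really does lift to an $\CC F_{at}$-term of type $\forall XA\to A[B/X]$ — i.e. that every hypothesis used is of the form $A[W/X]$ for some variable $W$, so that it is discharged by a single application of $F_{at} E$ to the premiss $\forall XA$. Once the substitution pattern is pinned down, the construction of $IO_A(B)$ is a direct assembly of $EXP_A$, $Elim_B$, $Intro_B$ exactly as in Proposition on $RP$ types, and the verification that it has the right type is routine.
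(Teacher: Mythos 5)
Your overall two-way plan is sound, and your direction ``$F_{at}$-expansible implies $IO$'' is essentially the paper's: one substitutes the atomic extractions $xY_i:A[Y_i/X]$ for the hypotheses of the expansion term (the paper writes $IO_A(B)[x]=EXP_A(B)[xZ/x]$), with $Intro_B$ and $Elim_B$ absorbing the quantifiers of a general second-order $B$. Two caveats: Lemma \ref{svorta} is not where the work happens in this direction, since the hypotheses of $EXP_A(B)$ already involve only variables of $A$ and are each discharged by a single atomic extraction; and your step ``post-compose with $Intro_B$ to turn $A[B'/X]$ back into $A[B/X]$'' would require a functorial action of $A$ on the interderivability $B'\leftrightarrow B$, which an arbitrary simple type does not have --- the quantifiers must instead be reinstated locally at the occurrences of $X$, via the inductive clauses of the expansion terms for $gRP_X$ types (where $Intro_B$, $Elim_B$ appear only at the atom $X$).

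The genuine gap is in your direction ``$IO$ implies $F_{at}$-expansible''. You propose to derive $A$ from the hypotheses $A[Y_1/X],\dots,A[Y_n/X]$, apply $\Lambda X$ to obtain $\forall XA$, and post-compose with $IO_A(B)$. This step fails: in general $A$ is not derivable from these hypotheses except by using the hypothesis $A[X/X]=A$ itself (present exactly when $X\in FV(A)$), and then the $\forall$-introduction is blocked because $X$ occurs free in an undischarged hypothesis. For instance, with $A=X$ one cannot derive $\forall XX$ from the hypothesis $X$, although both $IO$ and $F_{at}$-expansibility do hold for this type. The correct move goes in the opposite direction: take the \emph{given} derivation $IO_A(Y\To X):\forall XA\to A[Y\To X/X]$ and erase its second-order constructs; since every use of the premiss $x:\forall XA$ is an atomic extraction $xW$ of type $A[W/X]$, erasure yields an arrow in $\CC T$ whose context consists of instances $A[W/X]$. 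One then renames the spurious fresh variables as in Lemma \ref{interpofat}, applies Lemma \ref{svorta}$(b)$ to convert the residual hypothesis $A[Y/X]$ into $A$, and concludes by Proposition \ref{travefor} that $A$ is derivable, $gRP_{X}$, or $F_{at}$-collapses into $gRP_{X}$ types --- each case yielding $F_{at}$-expansibility for all $B$ at once. So Lemma \ref{svorta} does its real work here, not in the direction where you placed it.
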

\begin{proof}

For one direction, for all $B\in \CC L_{\To,\forall}$, we can define $IO_{A}(B)[x]=EXP_{A}(B)[xZ/x]$, where $Z$ is the rightmost variable of $B$.

%

For the converse direction, let $Y\notin FV(A)$ and $IO_{A}(Y\To X): \forall XA\to A[Y\To X/X]$. By deleting second order constructs in $IO_{A}(Y\To X)$ we obtain an arrow $u: A[Z_{1}/X_{1}],\dots, A[Z_{p}/X_{p}]$, $A[Y_{1}/X_{1}],\dots, A[Y_{q}/X_{q}]\to A[Y\To X/X]$, where $Z_{1},\dots, Z_{p}\in FV(A)$ and $Y_{1},\dots, Y_{q}\notin FV(A)$. By reasoning as in the proof of lemma \ref{interpofat} we can suppose w.l.o.g. $q=1$ and $Y_{1}=Y$, i.e.  $u: A[Z_{1}/X], \dots,  A[Z_{r}/X], A[Y/X]\to A[Y\To X/X]$. By applying lemma \ref{svorta} $b.$ and possibly identifying some variables we obtain an arrow $u^{*}:A[Z_{1}/X],\dots, A[Z_{n}/X], A \to A[Y\To X/X]$. Now we can argue as in proposition \ref{travefor}: $A$ is either derivable, or $gRP_{X}$ or $F_{at}$-collapses to some $gRP_{X}$ types. In all such cases $A$ is $F_{at}$-expansible.

\end{proof}

We then finally get:
\begin{theorem}\label{IO}
Let $A$ be a simple type. Then $\forall XA$ has instantiation overflow iff it is either derivable or logically equivalent to a product of $gRP$ types.
\end{theorem}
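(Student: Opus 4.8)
The plan is to derive Theorem \ref{IO} by combining Proposition \ref{IO2} (the equivalence of $IO$ and $F_{at}$-expansibility) with Proposition \ref{travefor} (the characterization of $F_{at}$-expansible types), and then translating the resulting ``$F_{at}$-collapse into $gRP_X$ types'' condition into the language of ``logical equivalence to a product of $gRP$ types''. First I would note that by Proposition \ref{IO2}, $\forall XA$ has $IO$ if and only if $A$ is $F_{at}$-expansible. Then by Proposition \ref{travefor}, this holds if and only if one of three cases occurs: $A$ is derivable, $A$ is $gRP_X$, or $A$ $F_{at}$-collapses into a finite family of $gRP_X$ types $B_1,\dots,B_p$.

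The core of the argument is then a case analysis. If $A$ is derivable, then $\forall XA$ is derivable (one closes the derivation of $A$ under $\forall I$, which is legitimate since the context is empty), so the first disjunct of the statement holds. If $A$ is $gRP_X$, then $\forall XA$ is by definition a $gRP$ type, which is logically equivalent to the product of the single family $\{\forall XA\}$, so we are in the second disjunct. If $A$ $F_{at}$-collapses into $B_1,\dots,B_p$ with each $B_j$ being $gRP_X$, then, as remarked just after the definition of $F_{at}$-collapse, $\forall XA$ is logically equivalent to the product of $\forall XB_1,\dots,\forall XB_p$; and since each $B_j$ is $gRP_X$, each $\forall XB_j$ is by definition a $gRP$ type, so $\forall XA$ is logically equivalent to a product of $gRP$ types. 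This establishes the forward direction.

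For the converse direction I would argue: if $\forall XA$ is derivable, then taking the closed term $u$ of type $\forall XA$, the term $\lambda x^A.uX$ witnesses $A\to A[B/X]$ after applying the $F_{at}E$ rule appropriately — more directly, from derivability of $\forall XA$ one gets derivability of $A$ (instantiate), hence $A$ is $F_{at}$-expansible as observed right after Proposition \ref{expoexpa2} (derivable simple types are always expansible, and the same weakening argument gives $F_{at}$-expansibility), hence $\forall XA$ has $IO$ by Proposition \ref{IO2}. If $\forall XA$ is logically equivalent to a product of $gRP$ types $\forall XB_1,\dots,\forall XB_p$, where each $B_j$ is $gRP_X$, then each $\forall XB_j$ has $IO$ (the expansion terms for $gRP$ types were constructed via Proposition \ref{drave} and Proposition \ref{IO2}), and one must check that $IO$ is preserved under logical equivalence to a product: given arrows $u_j:\forall XA\to \forall XB_j$ and $u:\forall XB_1\To\dots\To\forall XB_p\To\forall XA$, and $IO_{B_j}(C):\forall XB_j\to B_j[C/X]$, one assembles these into an arrow $\forall XA\to A[C/X]$ — this last step requires the product structure to be compatible with substitution of $C$ for $X$, which is where a little care is needed, though it follows the pattern of the ``logically equivalent to the product'' machinery. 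The cleanest route is to go through $F_{at}$-collapse: show that if $\forall XA$ is logically equivalent to a product of $gRP$ types then $A$ $F_{at}$-collapses into a family of $gRP_X$ types, and conclude by Propositions \ref{travefor} and \ref{IO2}.

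I expect the main obstacle to be the converse direction, specifically showing that ``$\forall XA$ logically equivalent to a product of $gRP$ types'' implies ``$A$ $F_{at}$-expansible'' (equivalently, $F_{at}$-collapses into $gRP_X$ types). The forward direction is a nearly immediate repackaging of Propositions \ref{IO2} and \ref{travefor}. For the converse, the subtlety is that logical equivalence to a product of types $\forall XB_j$ does not obviously let us rename/separate variables so that the $B_j$ become the components of an $F_{at}$-collapse of $A$; one needs to unfold the definitions and use that each $\forall XB_j$ has $IO$ (hence each $B_j$ is $F_{at}$-expansible) together with the composition of the equivalence arrows, being careful that the instantiation $A[B/X]$ commutes with the product decomposition. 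Once the bookkeeping with free variables of $A$ (via Lemma \ref{interpofat}) and the substitution of $B$ for $X$ in each component is handled, the result drops out, and the theorem follows.
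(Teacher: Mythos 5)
Your proposal follows essentially the same route as the paper: Theorem \ref{IO} is stated there as an immediate consequence of Proposition \ref{IO2} (instantiation overflow coincides with $F_{at}$-expansibility) and Proposition \ref{travefor} ($F_{at}$-expansibility means derivable, $gRP_{X}$, or $F_{at}$-collapsing into a family of $gRP_{X}$ types), combined with the remark that $F_{at}$-collapse yields logical equivalence of $\forall XA$ with the product of the corresponding $gRP$ types --- exactly the chain you describe for the forward direction. The converse-direction subtlety you flag, namely that ``logically equivalent to a product of $gRP$ types'' must be brought back to $F_{at}$-collapse (or at least to $F_{at}$-expansibility) so that the equivalence data can be instantiated at an arbitrary $B$, is a genuine point, but the paper leaves it equally implicit, so your account is faithful to the paper's own level of detail.
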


\section{Some open questions}\label{sec9}

By exploiting ideas coming from linear logic and functorial polymorphism, we provided the first general investigation of the instantiation overflow phenomenon, providing a characterization of the types $\forall XA$, with $A$ a simple type, satisfying this property, as well as a characterization of the linear and simple types satisfying the related expansion property. As it can be expected, there are many questions which naturally arise and are left open by the results contained in this paper. In the following lines we mention some of them.

First, it is not clear how to extend our characterization of instantiation overflow in terms of Russell-Prawitz types to all System $F$ types, as the following example shows: let $E$ be the second order type 
$\forall X(X\To \forall ZZ\To D)$, where $X\notin FV(D)$; then we can define, for $B=\forall \OV Y_{1} (B_{1}\To \dots \To \forall Y_{p}(B_{p}\To \forall Y_{p+1}Z)\dots )$, an expansion term $IO_{D}(B):\forall XE\to E[B/X]$ in $\CC F_{at}$ as follows
$$IO_{D}(B)= \lambda y^{B}. \lambda z^{\forall ZZ}.
xZ Elim_{B}\big [ EXP_{Z}(B_{1})[z/x]/x_{1},\dots, EXP_{Z}(B_{p})[z/x]/x_{p}\big ] z$$
Hence $E$ has instantiation overflow, though it does not seem to be logically equivalent to any $gRP_{X}$ type. 

A second important question is about the decidability of the instantiation overflow property. For a simple type $A$, to decide whether $\forall XA$ enjoys $IO$ one must check (1) whether $A$ is derivable, (2) whether $A$ is $gRP_{X}$ and (3) whether $\forall XA$ is logically equivalent to (better, $F_{at}$-collapses into) the product of a family of $gRP_{X}$ types. Problems (1) and (2) are surely decidable for the restricted case considered in this paper (the property of being $gRP_X$ for a simple type can be decided by checking its tilings). Problem (3) is surely decidable in the case of linear types, as one must only consider types with smaller logical complexity than $A$ and variables included in those of $A$. The decidability of (3) remains open in the case of simple types and $F_{at}$.

In the case of full System $F$ types, the instantiation overflow property is most likely to be undecidable, as (1) is undecidable in System $F$. Worse, (1) remains undecidable even if one restricts derivability to $F_{at}$, as this system, in spite of its weak expressive power, is undecidable. This follows from the fact that $F_{at}$ is equivalent to $\mathit{IFOL_{mon}^{1}}$ the $\To,\forall$-fragment of \emph{first-order} intuitionistic logic
over a language with one monadic predicate $p(x)$ and no function symbol, which is known to be undecidable (see \cite{Gabbay1981}, Th. 1, p. 234). The equivalence results from the bijective translation below
$$
X_{i}^{*} \ = \  p(x_{i}) \qquad (A\To B)^{*} \ = \ A^{*}\To B^{*} \qquad (\forall X_{i}A)^{*} \ = \ \forall x_{i}A^{*}
$$  
which can be extended into a bijective translation of the rules of $F_{at}$ and $\mathit{IFOL_{mon}^{1}}$.
%


Another natural question concerns the relationship between $gRP$ types and logical connectives: if $RP$ types correspond to the translation of logical connectives, what about $gRP$ types? 
Proposition \ref{renaming2} shows that $gRP_X$ types can be seen as $qRP_{\CC X}$ types whose variables have been identified. This suggests that $gRP$ types can be seen as types translating connectives obtained by composing different basic connectives. For instance, the $gRP_{X}$ type $((((A\To B\To X)\To X)\To X)\To (C\To X))\To X$ can be renamed as a $((((A\To B\To X_{2})\To X_{2})\To (C\To X_{1}))\To X_{1}$, which is $qRP_{\{X_{1},X_{2}\}}$ and yields a translation of the composed connective $(A\land B)\lor C$.

Finally, the problem of extending proposition \ref{dinatural}, i.e. the equivalence of instantiation overflow and full extraction modulo dinaturality, to general Russell-Prawitz types should be considered, as the result does not seem to scale in a striaghtforward way to $gRP$ types.

\bibliographystyle{alpha}
\bibliography{Victims}

\appendix

\section{Weak interpolation for $\lambda_{\multimap}$ nets}\label{appA}

We show that every weak interpolation problem (definition \ref{interpo2}) admits a solution in $\lambda_{\multimap}$. The arguments adapts the proof in \cite{deGroote1996} of interpolation for $IMLL^{-}$ nets.

We first reformulate the splitting lemma for essential nets (see \cite{Ong}):
\begin{lemma}[splitting]\label{splitting}
Let $f:\Gamma\to A$ in $\CC A^{\multimap}$, where $\Gamma=A_{1},\dots, A_{m}$ and $A=A_{m+1}\multimap \dots \multimap A_{m+n}\multimap Z$. 
Then, for some $1\leq i\leq m+n$, where
$A_{i}=B_{1}\multimap\dots \multimap B_{p}\to Z$,
 there exists a partition $\D P=\{ \F d_{1},\dots, \F d_{p}\}$  of $\{1,\dots, i-1,i+1,\dots,m+n\}$ in $p$ sets $\F d_{1}=\{ i_{11},\dots, i_{1k_{1}}\},\dots, \F d_{p}=\{i_{p1},\dots, i_{pk_{p}}\}$, where $k_{j}$ denotes the cardinality of the $j$-th element of the partition, such that $f=g_{1}\cup \dots \cup g_{p}\cup \{( Z^{-},Z^{+}\}$, where 
 $g_{j}: \Gamma_{j}\to B_{j}$ is a correct net, for $\Gamma_{j}=A_{i_{j1}},\dots, A_{i_{jk_{j}}}$, for $1\leq j\leq p$.

\end{lemma}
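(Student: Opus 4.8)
\textbf{Proof plan for Lemma \ref{splitting}.}

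The plan is to adapt the standard splitting argument for essential nets/proof nets to the essential-net correctness criterion used in this paper (acyclicity together with the functionality condition on $\multimap^{+}$-nodes). The key object to track is a \emph{terminal} or \emph{splitting} formula in the sequent $\Gamma \vdash A$: I want to find an index $i$ such that the $\multimap^{+}$-node (if $A_{i}$ occurs negatively, i.e. $i \le m$, this is the conclusion $\multimap^{-}$-node of $A_{i}$; if $i = m+j$ for the conclusion part, it is a $\multimap^{+}$-node on the right) whose removal disconnects the correction graph into exactly $p$ components, one for each right premiss $B_{1},\dots,B_{p}$ of $A_{i} = B_{1}\multimap\dots\multimap B_{p}\multimap Z$, with the single type-I edge $(Z^{-},Z^{+})$ linking the two occurrences of $Z$. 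The first step is to show such an $i$ exists: I would argue by induction on the size of the correction graph, or more directly, I would use the acyclicity of the correction graph of $f \leftrightharpoons \Gamma \multimap A$ to locate a node that is ``maximal'' in an appropriate sense — concretely, follow the unique path from the conclusion upward and use functionality to pin down a $\multimap$-node all of whose ancestors above it (in the switching/correction graph) form a connected region that touches the rest of the net only at that node.

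First I would set up notation: the correction graph $f \leftrightharpoons S_{A_1}\otimes\dots\otimes S_{A_m} \multimap A$ is a connected DAG with root $c(\Gamma\multimap A)$, and functionality holds at every $\multimap^{+}$-node. I would then invoke (or re-prove in this notation) the standard fact that in such a graph there is a $\multimap^{+}$-node $\ell$ — the ``splitting'' node — such that the subgraph ``above'' $\ell$ consisting of $\ell$, its left premiss subtree, and everything reachable by the non-functional edges, is attached to the rest of the net only via $\ell$ and via type-I/type-II edges landing in the leaves. The node $\ell$ corresponds to one of the $A_i$: either a negative formula in $\Gamma$ whose topmost connective got switched, or the conclusion being split from the right. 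Once $\ell$ and hence $i$ are identified, removing the $\multimap$-link at $\ell$ together with the type-I edge $(Z^{-},Z^{+})$ pairing the two occurrences of the rightmost variable $Z$ of $A_i$ and of $A$ splits $f$ into pieces $g_1,\dots,g_p$, where $g_j$ collects the edges landing in (the leaves of) $B_j$ and in the hypotheses wired to $B_j$. This induces the partition $\D P = \{\F d_1,\dots,\F d_p\}$ of $\{1,\dots,i-1,i+1,\dots,m+n\}$ by assigning each remaining index to the unique block $\F d_j$ whose component contains it — well-definedness here is exactly connectedness of each $g_j \leftrightharpoons \Gamma_j \multimap B_j$, which I would get from the acyclicity of $f$ (no edge can cross between two different $B_j$-components without creating a cycle through $\ell$) and from the functionality at $\ell$ forcing all of $\Gamma$ to be distributed among the right-premiss subtrees rather than connected to the left premiss.

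The last step is to check that each $g_j : \Gamma_j \to B_j$ is itself correct, i.e. that its correction graph is a connected DAG and satisfies functionality. Connectedness and acyclicity are inherited because $g_j \leftrightharpoons \Gamma_j \multimap B_j$ embeds as a subgraph of the component of $f \leftrightharpoons \Gamma\multimap A$ sitting above $\ell$ on the $j$-th branch, with the conclusion of $g_j$ being $cB_j$; functionality is local to $\multimap^{+}$-nodes and every such node of $g_j$ is also a $\multimap^{+}$-node of $f$ with the same set of paths reaching its left premiss (removing $\ell$ cannot create a new path, only delete some). I expect the main obstacle to be the \emph{existence} of the splitting node — making precise, in the rooted-DAG presentation of essential nets used here, the Danos–Regnier-style ``splitting par'' argument: one must carefully use the functionality condition (which plays the role of the switching condition) to rule out the case where every $\multimap^{+}$-node has an ancestor forcing a longer path, i.e. to get the induction off the ground. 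I would handle this by an induction on the number of $\multimap$-nodes, peeling off the rightmost connective of $A$ when $A$ is not atomic (reducing to the case where $A = Z$ is a variable and the splitting formula must come from $\Gamma$), and in the base case using acyclicity plus connectedness to find a hypothesis $A_i$ whose conclusion node dominates a maximal ``full'' subnet. This is precisely the structure of the argument in \cite{deGroote1996} and \cite{Ong}, transported to the present correction criterion.
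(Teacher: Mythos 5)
The paper does not prove this lemma at all: it is stated as a reformulation, in the present notation, of the splitting lemma for essential nets and is imported from \cite{Ong} (cf.\ also \cite{Lamarche}). So there is no in-paper argument to match; your proposal is an outline of the standard direct proof, and in outline it is the right argument. Note, however, that a much shorter derivation is available from material already in the paper: by Theorem \ref{seque}, $f=\CC G(u)$ for some normal $u$ with $\Gamma\vdash u:A$; taking $u$ in $\eta$-long normal form, $u=\lambda x_{m+1}.\dots.\lambda x_{m+n}.x_{i}u_{1}\dots u_{p}$ with $[x_{i}]=B_{1}\multimap\dots\multimap B_{p}\multimap Z$, and the application clause in the inductive definition of $\CC G$ together with the linearity of the typing yields exactly the partition $\D P$ and the decomposition $f=\CC G(u_{1})\cup\dots\cup\CC G(u_{p})\cup\{(Z^{-},Z^{+})\}$, each $\CC G(u_{j}):\Gamma_{j}\to B_{j}$ being allowable by adequacy. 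The direct graph-theoretic route you describe essentially re-proves sequentialization.

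If you do carry out the direct route, two load-bearing steps in your sketch rest on the wrong part of the correctness criterion. First, the splitting formula $A_{i}$ occurs negatively in $\Gamma\multimap A$, so the node you isolate is the $\multimap^{-}$-root of $\OV{S_{A_{i}}}$, not a $\multimap^{+}$-node; and what actually identifies it is functionality, not acyclicity: the unique $f$-edge entering the conclusion occurrence $Z^{+}$ of $A$ must come from the rightmost variable of some top-level $A_{i}$, because any other negative occurrence of $Z$ heads a directed path into the left premiss of some $\multimap^{+}$-node, which the path from the root would then reach without crossing that node. Second, the claim that an $f$-edge connecting a hypothesis to two distinct $B_{j}$-regions would ``create a cycle through $\ell$'' is not correct: such a crossing need not close any directed cycle. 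The no-crossing property is precisely Proposition \ref{splitprop}, which the paper proves by induction \emph{from} the present lemma, so it cannot be appealed to here; the distribution of the remaining $A_{k}$ among the $B_{j}$ has to be extracted directly from functionality (each $\OV{S_{A_{k}}}$ is reachable from the root only through exactly one $cB_{j}^{+}$), and this is where the real content of the lemma sits.
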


A graph $f:\Gamma\to A$ with $n$ splitting cuts (see subsection \ref{sec32}) is \emph{reduced} when all cuts are positive and any $\Gamma_{i}$ is of the form $\Delta_{i},cut$, i.e. with a unique cut link.  

Let $f:\Gamma\to A$ in $\CC A$.
Given a directed path $\gamma$ in the correction graph and a type $B$ occurring in either $A$ or $\Gamma$, we say the $\gamma$ \emph{crosses} $B$ if $\gamma$ passes through some node in $vS_{B}$.

We prove the following \emph{splitting property} of correct graphs:
\begin{proposition}[splitting property]\label{splitprop}
Let $f:\Gamma\to A$ in $\CC A^{\multimap}$, where $\Gamma=A_{1},\dots, A_{m}$ and $A=A_{m+1}\multimap \dots \multimap A_{m+n}\multimap Z$. 
Let $n_{1},n_{2}$ be two nodes in the correction graph of $f$ such that there exist two paths from the conclusion to $n_{1}$ and to $n_{2}$, respectively. Then, for no path $\gamma_{1}$ starting from $n_{1}$ and path $\gamma_{2}$ starting from $n_{2}$, $\gamma_{1}$ and $\gamma_{2}$ cross the same type $A_{i}$, for all $1\leq i\leq m+n$.\end{proposition}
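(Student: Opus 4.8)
The statement is a geometric "splitting" property of essential nets in $\CC A^{\multimap}$: two distinct nodes lying below the conclusion cannot be such that paths emanating from them both dip into the same premiss $A_i$. I would prove this by induction on the structure of a normal, $\eta$-long linear $\lambda$-term $u$ with $\Gamma\vdash u:A$ such that $f=\CC G(u)$, which exists by sequentialization (theorem \ref{seque}), rather than by reasoning directly about the graph. The payoff of sequentialization here is that the top-level structure of $u$ tells us exactly how the correction graph of $f$ is assembled from those of its immediate sub-nets, and the splitting lemma \ref{splitting} makes precise which $A_i$ receives which sub-net.

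First I would handle the base cases. If $u=x$, then $f=id_A:A\to A$; here the correction graph is the one pictured in figure \ref{shape0} (the identity shape), and one checks directly, by a sub-induction on $A$, that any two nodes below the conclusion are "separated" in the sense that the subtrees reachable from them are disjoint, so no two continuation paths can cross the same $A_i$ --- essentially because in $id_A$ the only edges are type I and they connect the $i$-th variable occurrence on the left with the $i$-th occurrence on the right, matching up the copies of $A$. If $u=\lambda x^B.u'$, the correction graph of $\CC G(u)$ is literally that of $\CC G(u')$ with the conclusion relabelled (the $\multimap$-introduction only changes the root node and shuffles a premiss from the context side to the conclusion side), and the induced partial order $\prec_f$ is unchanged, so the claim transfers verbatim from the inductive hypothesis applied to $u'$.

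The main case is $u=y u_1\dots u_q$, where $[y]=B_1\multimap\dots\multimap B_q\multimap A$ and $\Gamma$ partitions as $\Gamma_1,\dots,\Gamma_q,\{y:[y]\}$ with $\Gamma_j\vdash u_j:B_j$. Here the correction graph of $f$ is the disjoint union of the correction graphs of the $\CC G(u_j)$, glued to the declaration tree of $y$ and to the occurrence of $A$ in $[y]$ via type I edges, together with the identity matching between the occurrence of $A$ in $[y]$ and the conclusion $A$ (this is the shape in figure \ref{shape1}). The key structural observations are: (i) any path from the conclusion reaching a node inside some $B_j'$-part (where $B_j'=C_{j1}\multimap\dots\multimap C_{jp_j}\multimap B_j$ absorbs $\Gamma_j$) must pass through the $\multimap$-nodes of $[y]$ and then into the sub-net $\CC G(u_j)$, and cannot cross from the $j$-part to the $j'$-part for $j\neq j'$ (there is no edge connecting distinct $\Gamma_j,\Gamma_{j'}$, a fact essentially already used in the proof of lemma \ref{jumpout}, case $3$); and (ii) the portion of the correction graph spanning $A$ (in $[y]$) and $A$ (the conclusion) is isomorphic to $id_A:A\to A$, so the base-case analysis applies there. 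Given two nodes $n_1,n_2$ below the conclusion and continuation paths $\gamma_1,\gamma_2$: if both land in the same $B_j'$ we apply the inductive hypothesis to $u_j$; if they land in different $B_j', B_{j'}'$ then (i) already forbids either path from crossing an $A_i$ living in the other part; if one or both stay within the $A$-region we invoke (ii) / the identity case. Assembling these cases, no choice of $\gamma_1,\gamma_2$ crosses a common $A_i$.

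\textbf{Expected main obstacle.} The delicate point is bookkeeping the identification of "the $A_i$ that a path crosses" across the re-association performed by the splitting lemma: in the inductive step the relevant premiss list is not $A_1,\dots,A_{m+n}$ but the enriched list $B_1',\dots,B_q'$ (each $B_j'$ packaging a sub-context $\Gamma_j$ together with a $B_j$), and one must check that "crossing $A_i$ for the outer problem" factors cleanly through "crossing the corresponding $C_{jk}$ or $B_j$ for the inner problem", with no spurious crossing created by the gluing edges. Handling this requires being careful that the gluing type I edges between $[y]$ and the conclusion, and between $[y]$ and the $\CC G(u_j)$, are traversed in only one direction by any path issuing from a node strictly below the conclusion --- which is exactly the content of functionality plus acyclicity, but needs to be stated precisely so the induction goes through. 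I expect the rest (the two base cases and the abstraction case) to be routine once the statement is set up with the right notion of "$\gamma$ crosses $B$" relative to the sub-net decomposition.
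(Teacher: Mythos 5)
Your proof follows essentially the same route as the paper's: the paper inducts on the number of nodes of the correction graph and decomposes $f$ via the splitting lemma \ref{splitting} into sub-nets $f_{j}:\Gamma_{j}\to B_{j}$ with pairwise disjoint contexts, concluding immediately when $n_{1},n_{2}$ fall into distinct sub-nets and by the induction hypothesis when they fall into the same one --- which is exactly your application case read through sequentialization. The identity and abstraction cases you treat separately are absorbed in the paper by the trivial base case ($k=2$ nodes) and by the fact that the splitting premiss $A_{i}$ in lemma \ref{splitting} may be taken among $A_{m+1},\dots,A_{m+n}$ as well as in $\Gamma$, so no separate currying step is needed.
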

\begin{proof}
By induction on the number $k$ of nodes in the correction graph. If $k=2$, then the claim is trivially true. 
Otherwise, by the splitting lemma \ref{splitting}, $f$ splits into $f_{1}:\Gamma_{1}\to B_{1},\dots, f_{p}:\Gamma_{p}\to B_{p}$, where, for some $1\leq i\leq m+n$, $A_{i}=B_{1}\multimap \dots \multimap B_{p}\multimap Z$. Hence, the path starting from $A$ reaches $Z^{+}$, then passes through a negative occurrence of $A_{i}$ and splits at all the $\multimap^{-}$-nodes in it. If $n_{1}$ and $n_{2}$ belong to the correction graphs of $f_{i},f_{j}$ for $i\neq j$, then we are done. If $n_{1},n_{2}$ belong to the correction graphs of the same $f_{i}$, then we can apply the induction hypothesis as all paths $\gamma_{1},\gamma_{2}$ belong to the correction graph of $f_{i}$ and $f_{j}$, respectively.
\end{proof}

The following lemma is the analogous of lemmas 7.1 and 7.2 in \cite{deGroote1996} and provides an algorithm to construct the weak interpolants of an allowable graph.

\begin{lemma}\label{algo}
Any graph with splitting cuts can be transformed into a reduced one.
\end{lemma}
\begin{proof}
We define an algorithm to transform a graph with splitting cuts into a reduced one. 
\begin{enumerate}
\item For any $cut^{-}$, check if there is some $cut^{+}$ which is connected to $cut^{-}$ by a path which never gets out of the subtree of $A$:
$$
\resizebox{0.6\textwidth}{!}{\begin{tikzpicture}

\draw[] (0,0) to (1,0) to (2,1) to (-1,1) -- cycle;
\draw[] (8,-2.5) to (10.5,-1.5) to (5.5,-1.5) -- cycle;

\node(g) at (0.5,-0.3) {$\Delta_{i}$};
\node(a) at (8,-2.8) {$A$};

\draw (2.5,0.5) to (2.9,0.5) to (2.7,0.1) -- cycle;
\draw (3,0.5) to (3.4,0.5) to (3.2,0.1) -- cycle;
\draw[->, thick] (2.7,0) to [bend right=55] node[below] {\scriptsize$cut^{-}$} (3.2,0);

\draw[->, thick] (1.3,1.2) to [bend left=95] (2.7,0.6);

\draw[->, thick] (3.2,0.6) to [bend left=65] (6.3,-1.3);
\draw[->, thick, dotted] (6.3,-1.6) to [bend right=65] (7.5,-1.6);

\draw[] (0,-3) to (1,-3) to (2,-2) to (-1,-2) -- cycle;

\node(g1) at (0.5,-3.3) {$\Delta_{j}$};

\draw (2.5,-2.5) to (2.9,-2.5) to (2.7,-2.9) -- cycle;
\draw (3,-2.5) to (3.4,-2.5) to (3.2,-2.9) -- cycle;
\draw[<-,thick] (2.7,-3) to [bend right=55] node[below] {\scriptsize$cut^{+}$} (3.2,-3);

\draw[<-, thick] (1.3,-1.8) to [bend left=95] (2.7,-2.4);

\draw[<-, thick] (3.2,-2.4) to [bend left=95] (7.5,-1.3);

\end{tikzpicture}}
$$

In this case then transform the graph as follows:

$$
\resizebox{0.6\textwidth}{!}{\begin{tikzpicture}

\draw[] (0,0) to (1,0) to (2,1) to (-1,1) -- cycle;
\draw[] (8,-2.5) to (10.5,-1.5) to (5.5,-1.5) -- cycle;

\node(g) at (0.5,-0.3) {$\Delta_{i}$};
\node(a) at (8,-2.8) {$A$};

\draw (2.5,-0.5) to (2.9,-0.5) to (2.7,-0.9) -- cycle;
\draw (3,-0.5) to (3.4,-0.5) to (3.2,-0.9) -- cycle;
\node(t) at (2.95,-1.5) {\scriptsize$\multimap^{-}$};

\draw (3.5,-0.5) to (3.9,-0.5) to (3.7,-0.9) -- cycle;
\draw (4,-0.5) to (4.4,-0.5) to (4.2,-0.9) -- cycle;
\node(t1) at (3.95,-1.5) {\scriptsize$\multimap^{+}$};

\draw[<-, thick] (t1) to [bend left=55] node[below] {\scriptsize$cut^{-}$} (t);
\draw[->, thick] (t1) to (4.2,-1);
\draw[->, thick] (3.2,-1) to (t);
\draw[->, thick] (t) to (2.7,-1);

\draw[->, thick] (1.3,1.2) to [bend left=95] (3.2,-0.4);

\draw[->, thick] (4.2,-0.4) to [bend left=65] (6.3,-1.3);
\draw[->, thick, dotted] (6.3,-1.6) to [bend right=65] (7.5,-1.6);

\draw[] (0,-3) to (1,-3) to (2,-2) to (-1,-2) -- cycle;

\node(g1) at (0.5,-3.3) {$\Delta_{j}$};

\draw[<-, thick] (1.3,-1.8) to [bend left=95] (2.7,-0.4);

\draw[<-, thick] (3.7,-0.4) to [bend left=95] (7.5,-1.3);

\end{tikzpicture}}
$$
It can be easily verified that the new correction graph is still acyclic and functional. 
Once all such transformations are done or if there is no negative cut, go to step 2.

\item Choose one $cut^{-}$. If there is none go to step 3. Then there is a $cut^{+}$ which is connected to $cut^{-}$ by a path which never gets out of the part of the graph over $\Delta_{i}$: 
$$
\resizebox{0.6\textwidth}{!}{\begin{tikzpicture}

\draw[] (0,0) to (1,0) to (2,1) to (-1,1) -- cycle;
\draw[] (8,0) to (10.5,1) to (5.5,1) -- cycle;

\node(g) at (0.5,-0.3) {$\Delta_{i}$};
\node(a) at (8,-0.3) {$A$};

\draw (2.5,0.5) to (2.9,0.5) to (2.7,0.1) -- cycle;
\draw (3,0.5) to (3.4,0.5) to (3.2,0.1) -- cycle;
\draw[<-, thick] (2.7,0) to [bend right=55] node[below] {\scriptsize$cut^{+}$} (3.2,0);

\draw (4,0.5) to (4.4,0.5) to (4.2,0.1) -- cycle;
\draw (4.5,0.5) to (4.9,0.5) to (4.7,0.1) -- cycle;
\draw[->, thick] (4.2,0) to [bend right=55] node[below] {\scriptsize$cut^{-}$} (4.7,0);

\draw[<-,thick] (1,1.2) to [bend left=95] (2.7,0.6);
\draw[->, thick] (1.7,1.2) to [bend left=95] (4.2,0.6);

\draw[<-, thick] (3.2,0.6) to [bend left=95] (5.8,1.2);
\draw[->, thick] (4.7,0.6) to [bend left=95] (7,1.2);

\draw[->, thick, dotted] (1,0.9) to [bend right=85] (1.7,0.9);
\draw[->, thick, dotted] (8,0) to [bend right=25] (6.6,0.6) to [bend left=25] (5.9,0.9);

\end{tikzpicture}}
$$
This follows from the existence of a unique path from $A$ to $cut^{-}$. In this case then transform the graph as follows:
$$
\resizebox{0.6\textwidth}{!}{\begin{tikzpicture}

\draw[] (0,0) to (1,0) to (2,1) to (-1,1) -- cycle;
\draw[] (8,0) to (10.5,1) to (5.5,1) -- cycle;

\node(g) at (0.5,-0.3) {$\Delta_{i}$};
\node(a) at (8,-0.3) {$A$};

\draw (2.5,0.5) to (2.9,0.5) to (2.7,0.1) -- cycle;
\draw (3,0.5) to (3.4,0.5) to (3.2,0.1) -- cycle;

\node(p) at (2.95,-0.5) {\scriptsize$\multimap^{+}$};

\draw (4,0.5) to (4.4,0.5) to (4.2,0.1) -- cycle;
\draw (4.5,0.5) to (4.9,0.5) to (4.7,0.1) -- cycle;

\node(p1) at (4.45,-0.5) {\scriptsize$\multimap^{-}$};

\draw[<-, thick] (p) to [bend right=55] node[below] {\scriptsize$cut^{+}$} (p1);

\draw[->, thick] (p) to (3.2,0); 
\draw[->, thick] (p1) to (4.2,0); 
\draw[->, thick] (4.7,0) to (p1);

\draw[<-,thick] (1,1.2) to [bend left=95] (3.2,0.6); 
\draw[->, thick] (1.7,1.2) to [bend left=95] (2.7,0.6); 

\draw[<-, thick] (4.7,0.6) to [bend left=95] (5.8,1.2); 
\draw[->, thick] (4.2,0.6) to [bend left=95] (7,1.2); 

\draw[->, thick, dotted] (1,0.9) to [bend right=85] (1.7,0.9);
\draw[->, thick, dotted] (8,0) to [bend right=25] (6.6,0.6) to [bend left=25] (5.9,0.9);

\end{tikzpicture}}
$$
It can be easily verified that the new correction graph is still acyclic and functional. 
Observe that the number of negative cuts decreases by one. If there is still one negative cut go back to step 1. Otherwise go to step 3.

\item Now all cuts are positive. If the graph is not yet reduced, then there exists two positive cuts as below:
$$
\resizebox{0.6\textwidth}{!}{\begin{tikzpicture}

\draw[] (0,0) to (1,0) to (2,1) to (-1,1) -- cycle;
\draw[] (8,0) to (10.5,1) to (5.5,1) -- cycle;

\node(g) at (0.5,-0.3) {$\Delta_{i}$};
\node(a) at (8,-0.3) {$A$};

\draw (2.5,0.5) to (2.9,0.5) to (2.7,0.1) -- cycle;
\draw (3,0.5) to (3.4,0.5) to (3.2,0.1) -- cycle;
\draw[<-, thick] (2.7,0) to [bend right=55] node[below] {\scriptsize$cut^{+}$} (3.2,0);

\draw (4,0.5) to (4.4,0.5) to (4.2,0.1) -- cycle;
\draw (4.5,0.5) to (4.9,0.5) to (4.7,0.1) -- cycle;
\draw[<-, thick] (4.2,0) to [bend right=55] node[below] {\scriptsize$cut^{+}$} (4.7,0);

\draw[<-,thick] (1,1.2) to [bend left=95] (2.7,0.6);
\draw[<-, thick] (1.7,1.2) to [bend left=95] (4.2,0.6);

\draw[<-, thick] (3.2,0.6) to [bend left=95] (5.8,1.2);
\draw[<-, thick] (4.7,0.6) to [bend left=95] (7,1.2);

\draw[->, thick, dotted] (8,0) to [bend right=15] (7,0.9);
\draw[->, thick, dotted] (8,0) to [bend right=25] (6.6,0.6) to [bend left=25] (5.9,0.9);

\end{tikzpicture}}
$$
By the splitting property, $\Delta_{i}$ splits then into two contexts $\Delta_{i}^{1}$ and $\Delta_{i}^{2}$:
$$
\resizebox{0.6\textwidth}{!}{\begin{tikzpicture}

\draw[] (1,0) to (1.5,0) to (2,1) to (0.5,1) -- cycle;
\draw[] (8,0) to (10.5,1) to (5.5,1) -- cycle;
\draw[] (-1.5,0) to (-1,0) to (-0.5,1) to (-2,1) -- cycle;

\node(g) at (1.25,-0.3) {$\Delta_{i}^{2}$};
\node(a) at (8,-0.3) {$A$};
\node(g) at (-1.25,-0.3) {$\Delta_{i}^{1}$};

\draw (2.5,0.5) to (2.9,0.5) to (2.7,0.1) -- cycle;
\draw (3,0.5) to (3.4,0.5) to (3.2,0.1) -- cycle;
\draw[<-, thick] (2.7,0) to [bend right=55] node[below] {\scriptsize$cut^{+}$} (3.2,0);

\draw (4,0.5) to (4.4,0.5) to (4.2,0.1) -- cycle;
\draw (4.5,0.5) to (4.9,0.5) to (4.7,0.1) -- cycle;
\draw[<-, thick] (4.2,0) to [bend right=55] node[below] {\scriptsize$cut^{+}$} (4.7,0);

\draw[<-,thick] (-1,1.2) to [bend left=95] (2.7,0.6);
\draw[<-, thick] (1.7,1.2) to [bend left=95] (4.2,0.6);

\draw[<-, thick] (3.2,0.6) to [bend left=95] (5.8,1.2);
\draw[<-, thick] (4.7,0.6) to [bend left=95] (7,1.2);

\draw[->, thick, dotted] (8,0) to [bend right=15] (7,0.9);
\draw[->, thick, dotted] (8,0) to [bend right=25] (6.6,0.6) to [bend left=25] (5.9,0.9);

\end{tikzpicture}}
$$
We can conclude that the graph is now reduced.
\end{enumerate}

\end{proof}

We can now prove the weak interpolation theorem.

\begin{proof}[Proof of theorem \ref{interpolation2}]
Given $f:\Gamma\to A$, transform the graph in \ref{fig1} into a graph with cuts, by replacing type I edges by either positive or negative cut links:
$$
\resizebox{0.6\textwidth}{!}{
\begin{tikzpicture}

\draw[] (0,0) to (1,0) to (2,1) to (-1,1) -- cycle;
\draw[] (7,0) to (8.5,1) to (5.5,1) -- cycle;

\node(g) at (0.5,-0.3) {$\Gamma$};
\node(a) at (7,-0.3) {$A$};

\draw[thick] (-1,1.2) to [bend left=55] (0.2,1.2);
\draw[thick] (-0.4,1.2) to [bend left=55] (0.8,1.2);
\node(d1) at (-0.65,1.2) {\scriptsize$\dots$};
\node(d2) at (0.55,1.2) {\scriptsize$\dots$};

\draw[thick] (6.5,1.2) to [bend left=55] (7.7,1.2);
\draw[thick] (7.1,1.2) to [bend left=55] (8.3,1.2);
\node(d1) at (6.85,1.2) {\scriptsize$\dots$};
\node(d2) at (8.05,1.2) {\scriptsize$\dots$};

\draw[thick] (1.2,1.2) to [bend left=55] (2.5,1.2);
\draw[thick] (1.8,1.2) to [bend left=55] (3.1,1.2);
\node(d1) at (1.55,1.2) {\scriptsize$\dots$};
\node(d2) at (2.85,1.2) {\scriptsize$\dots$};

\draw[thick] (4.2,1.2) to [bend left=55] (5.7,1.2);
\draw[thick] (4.8,1.2) to [bend left=55] (6.3,1.2);
\node(d1) at (4.55,1.2) {\scriptsize$\dots$};
\node(d2) at (6.05,1.2) {\scriptsize$\dots$};

\draw[thick] (2.5,1.2) to [bend right=55] node[below] {\scriptsize$cut$} (4.2,1.2);
\draw[thick] (3.1,1.2) to [bend right=55] node[below] {\scriptsize$cut$} (4.8,1.2);

\end{tikzpicture}}
$$
Now apply lemma \ref{algo} to obtain a reduced graph with splitting cuts. The cut types $I_{1},\dots, I_{p}$ are then the interpolants of $f$.

\end{proof}

\end{document}